\numberwithin{equation}{section}
\definecolor{plum}  {rgb}{.4,0,.4}
\definecolor{BrickRed} {rgb}{0.6,0,0}
\def\cA{{\mathcal A}}
\def\cB{{\mathcal B}}
\def\cC{{\mathcal C}}
\def\cE{{\mathcal E}}
\def\cF{{\mathcal F}}
\def\cH{{\mathcal H}}
\def\cM{{\mathcal M}}
\def\cP{{\mathcal P}}
\def\sA{{\mathsf A}}
\def\sB{{\mathsf B}}
\def\sE{{\mathsf E}}
\def\sM{{\mathsf M}}
\def\sP{{\mathsf P}}
\def\sU{{\mathsf U}}
\def\sV{{\mathsf V}}
\def\sX{{\mathsf X}}
\def\sY{{\mathsf Y}}
\def\sZ{{\mathsf Z}}
\def\E{\mathbb{E}}
\def\MM{\mathbb{M}}
\def\PP{\mathbb{P}}
\def\QQ{\mathbb{Q}}
\def\Reals{\mathbb{R}}
\def\Naturals{\mathbb{N}}
\def\argmin{\operatornamewithlimits{arg\,min}}
\def\Ent{\operatorname{Ent}}
\def\Var{\operatorname{Var}}
\def\Cov{\operatorname{Cov}}
\def\MMSE{\operatorname{MMSE}}
\def\LSI{\operatorname{LSI}}
\def\SDPI{\operatorname{SDPI}}
\def\PI{\operatorname{PI}}
\def\Bernoulli{\mathrm{Bern}}
\def\BSC{\mathrm{BSC}}
\def\se{\mathsf{e}} 
\def\LC{\operatorname{LC}}
\def\TV{{\rm TV}}
\def\Prob{{\mathscr P}}
\def\PProb{\Prob_*}
\def\Func{{\mathscr F}}
\def\PFunc{\Func^0_*}
\def\SPFunc{\Func_*}
\def\Chan{{\mathscr M}}
\def\1{\mathbf{1}}
\def\id{\operatorname{id}}
\def\d{{\text {\rm d}}}
\def\diag{\operatorname{diag}}
\def\deq{\triangleq}
\def\ave#1{\langle #1 \rangle}
\def\eps{\varepsilon}
\newtheorem{definition}{Definition}[section]
\newtheorem{theorem}{Theorem}[section]
\newtheorem{lemma}{Lemma}[section]
\newtheorem{proposition}{Proposition}[section]
\newtheorem{corollary}{Corollary}[section]
\newtheorem{remark}{Remark}[section]
\newtheorem{example}{Example}[section]
\begin{document}

\title{\bf Strong Data Processing Inequalities\\
and $\Phi$-Sobolev Inequalities for Discrete Channels}

\author{Maxim Raginsky%
\thanks{The author is with the Department of Electrical and Computer Engineering and the Coordinated Science Laboratory, University of Illinois, Urbana, IL 61801, USA. E-mail: maxim@illinois.edu.}
\thanks{This work was supported in part by the NSF under CAREER award no.\ CCF-1254041 and by the Center for Science of Information (CSoI), an NSF Science and Technology Center, under grant agreement CCF-0939370. The material in this paper was presented in part at the 2013 IEEE International Symposium on Information Theory.}}
\date{March 30, 2016}

\maketitle
\thispagestyle{empty}


\begin{abstract} The noisiness of a channel can be measured by comparing suitable 
functionals of the input and output distributions. For instance, the worst-case ratio of output relative
entropy to input relative entropy for all possible pairs of input distributions is bounded from above by unity, by the data processing theorem. However, for a fixed reference input
distribution, this quantity may be strictly smaller than one, giving so-called
strong data processing inequalities (SDPIs). The same considerations apply to an arbitrary $\Phi$-divergence. This paper presents a systematic study of optimal constants in SDPIs for discrete channels, including their variational characterizations, upper and lower bounds, structural results for channels on product probability spaces, and the relationship between SDPIs and so-called $\Phi$-Sobolev inequalities (another class of inequalities that can be used to quantify the noisiness of a channel by controlling entropy-like functionals of the input distribution by suitable measures of input-output correlation). Several applications to  information theory, discrete probability, and statistical physics are discussed.
\end{abstract}

\tableofcontents

\section{Introduction}

The well-known data processing inequality for the relative entropy states that, for any two probability distributions $\mu,\nu$ over an alphabet $\sX$ and for any stochastic transformation (channel) $K$ with input alphabet $\sX$ and output alphabet $\sY$,
\begin{align*}
	D(\nu K \| \mu K) \le D(\nu \| \mu),
\end{align*}
where $\mu K$ denotes the distribution at the output of $K$ when the input has distribution $\mu$ (and similarly for $\nu K$). However, if we fix the \textit{reference distribution} $\mu$ and vary only $\nu$, then in many cases it is possible to show that $D(\nu K \| \mu K)$ is {\em strictly} smaller than $D(\nu \| \mu)$ unless $\nu \equiv \mu$. To capture this effect, we define the quantity
\begin{align*}
	\eta(\mu,K) \deq \sup_{\nu \neq \mu} \frac{D(\nu K \| \mu K)}{D(\nu \| \mu)},
\end{align*}
and we say that the channel $K$ satisfies a \textit{strong data processing inequality} (SDPI) at input distribution $\mu$ if $\eta(\mu,K) < 1$. In a remarkable paper \cite{Ahlswede_Gacs_hypercont}, Ahlswede and G\'acs have uncovered deep relationships between $\eta(\mu,K)$ and several other quantities, such as the maximal correlation (see \cite{Witsenhausen_correlation} and references therein) and so-called hypercontractivity constants of a certain Markov operator associated to the pair $(\mu,K)$.
For example, they have shown that if $\sX = \sY = \{0,1\}$, $\mu = \Bernoulli(1/2)$, and $K = \BSC(\eps)$, then $\eta(\mu,K) = (1-2\eps)^2$, which is also equal to the squared maximal correlation in the joint distribution $P_{XY}$ with $P_X = P_Y = \Bernoulli(1/2)$ and $P_{Y|X} = \BSC(\eps)$, the so-called doubly symmetric binary source (DSBS) with parameter $\eps$ \cite{Wyner_common_info}.

After the pioneering work of Ahlswede and G\'acs, the contraction properties of relative entropy (and other $\Phi$-divergences \cite{Csiszar_divergence,LieseVajda06}) under the action of stochastic transformations have been studied by several other authors \cite{Cohen_etal_dataproc,Choi_Ruskai_Seneta,Miclo_hypercontractive,Cohen_etal_book,DelMoral_contraction}. In particular, Cohen et al.~\cite{Cohen_etal_dataproc}, who were the first ones to take up this subject after \cite{Ahlswede_Gacs_hypercont}, showed that the SDPI constant of any channel $K$ with respect to any $\Phi$-divergence is always upper-bounded by the so-called \textit{Dobrushin contraction coefficient} of $K$ \cite{Dobrushin_CLT_MC_1,Dobrushin_CLT_MC_2}, another well-known numerical measure of the amount of noise introduced by a channel. (This result of Cohen et al.\ was rediscovered five years later in the machine learning community \cite{BoyenKoller}.) In the last couple of years, strong data processing inequalities became the subject of intense interest in the information theory community \cite{Kamath_Anantharam_hypercont,Anantharam_etal_HGR,Courtade_SDPI,Raginsky_SDPI_ISIT,YP_YW_dissipation,Anantharam_etal_hypercont_2,Liu_etal_key_capacity,PW_BayesSDPI,MakurZheng_contraction} due to their apparent usefulness for establishing various converse results.

In this paper, we revisit the problem of characterizing the strong data processing constant $\eta(\mu,K)$ [and its generalizations for arbitrary $\Phi$-divergence] and establish a number of new upper and lower bounds, as well as new structural results on SDPI constants in product probability spaces. We also address the relationship between strong data processing inequalities and so-called \textit{$\Phi$-Sobolev inequalities} \cite{Chafai_entropy}. These inequalities also quantify the noisiness of a Markov operator (probability transition kernel) by relating certain ``entropy-like" functionals of the input to the rate of increase of suitable ``energy-like'' quantities from the input to the output. (Logarithmic Sobolev inequalities, widely studied in the theory of probability and Markov chains \cite{Bakry_logsob_notes,Diaconis_Saloff_logsob,Miclo_hypercontractive,Bobkov_Tetali_logsob,Mossel_reverse_hypercont}, are a special case.) In particular, we show that the optimal constants in $\Phi$-Sobolev inequalities for a reversible Markov chain can be related to SDPI constants of certain factorizations of the transition kernel of the chain as a product of a forward channel and a backward channel. Such factorizations correspond to all possible realizations of the one-step transition of the chain as a two-component Gibbs sampler  \cite{Diaconis_stoch_alt_proj}, which is a standard technique in Markov chain Monte Carlo \cite{Gilks_etal_MCMC,Robert_Casella_MCMC}. Conversely, for a fixed input distribution $\mu$ on $\sX$, the SDPI constants of a given channel $K$ with input in $\sX$ and output in $\sY$ are related to $\Phi$-Sobolev constants of the reversible Markov chain on $\sX$ obtained by composing the forward channel $K$ with the backward channel $K^*$ determined via Bayes' rule. To keep things simple, we focus on the discrete case, when both $\sX$ and $\sY$ are finite, although some of our results generalize easily to the case of arbitrary Polish alphabets (see, e.g., \cite{PW_BayesSDPI}).

The remainder of the paper is organized as follows. After giving some necessary background  on $\Phi$-entropies and $\Phi$-divergences in Section~\ref{sec:entropies}, we proceed to the study of strong data processing inequalities in Section~\ref{sec:SDPI}. Next, in Section~\ref{sec:phisob}, we define the $\Phi$-Sobolev inequalities and characterize their relation with SDPIs. Several examples of applications are given in Section~\ref{sec:applications}. Section~\ref{sec:summary} provides a summary of key contributions. A number of auxiliary technical results are stated and proved in the Appendices.

\subsection{Notation}
\label{ssec:notation}

We will denote by $\Prob(\sX)$ the set of all probability distributions on an alphabet $\sX$ and by $\PProb(\sX)$ the subset of $\Prob(\sX)$ consisting of all strictly positive distributions. The set of all real-valued functions on $\sX$ is denoted by $\Func(\sX)$; $\SPFunc(\sX)$ and $\PFunc(\sX)$ are the subsets of $\Func(\sX)$ consisting of all strictly positive and nonnegative functions, respectively. Any channel\footnote{We will also use the terms ``stochastic transformation" or ``Markov kernel."} with input alphabet $\sX$, output alphabet $\sY$, and transition probabilities $\{K(y|x) : x \in \sX,y \in \sY\}$ acts on probability distributions $\mu \in \Prob(\sX)$ from the right by
\begin{align*}
	\mu K(y) &= \sum_{x \in \sX} \mu(x)K(y|x), \qquad y \in \sY
\end{align*}
or on functions $f \in \Func(\sY)$ from the left by
\begin{align*}
	Kf(x) &= \sum_{y \in \sY} K(y|x)f(y), \qquad x \in \sX.
\end{align*}
The set of all such channels will be denoted by $\Chan(\sY|\sX)$. The affine map $\mu \mapsto \mu K$  naturally extends to a linear map on the signed measures on $\sX$, since any such measure $\nu$ can be uniquely represented as $\alpha_1 \mu_1 - \alpha_2 \mu_2$ for some constants $\alpha_1,\alpha_2 \ge 0$ and some $\mu_1,\mu_2 \in \Prob(\sX)$; thus, we set $\nu K = \alpha_1 \mu_1 K - \alpha_2 \mu_2 K$. The linear map $f \mapsto Kf$ is positive [i.e., $K(\PFunc(\sY)) \subseteq \PFunc(\sX)$], and unital [i.e., $K1=1$, where $1$ denotes the constant function that takes the value $1$ everywhere on its domain]. If $\mu \otimes K \in \Prob(\sX \times \sY)$ denotes the distribution of a random pair $(X,Y) \in \sX \times \sY$ with $P_X = \mu$ and $P_{Y|X} = K$, then $K f(x) = \E[f(Y)|X=x]$ for any $f \in \Func(\sY)$ and $x \in \sX$.

We will say that a pair $(\mu,K) \in \Prob(\sX) \times \Chan(\sY|\sX)$ is \textit{admissible} if $\mu \in \PProb(\sX)$ and $\mu K \in \PProb(\sY)$. For any such pair, there exists a unique channel $K^* \in \Chan(\sX|\sY)$ with the property that
\begin{align}\label{eq:backward_1}
	\E[g(Y) Kf(Y)] = \E[K^* g(X)f(X)]
\end{align}
for all $g \in \Func(\sY), f \in \Func(\sX)$. This \textit{backward} or \textit{adjoint} channel can be specified explicitly via the transition probabilities
\begin{align}\label{eq:backward_2}
	K^*(x|y) = \frac{K(y|x)\mu(x)}{\mu K(y)}, \qquad  (x,y) \in \sX \times \sY
\end{align}
(this is simply an application of Bayes' rule). If $(X,Y) \sim \mu \otimes K$, then $K^* = P_{X|Y}$, so in particular $K^*f(y) = \E[f(X)|Y=y]$ for any $f \in \Func(\sX)$ and $y \in \sY$. Strictly speaking, $K^*$ depends on both $\mu$ and $K$, and we may occasionally indicate this fact by writing $K^*_\mu$ instead of $K^*$. 

Given a number $p \in [0,1]$, we will often write $\bar{p}$ for $1-p$. For $p,q \in [0,1]$, we let $p \star q \deq p \bar{q} + \bar{p} q$. Thus, if $X \sim \Bernoulli(p)$ and $Z \sim \Bernoulli(q)$ are independent random variables, then $Y = X \oplus Z$ has distribution $\Bernoulli(p \star q)$. For $a,b \in \Reals$, we let $a \vee b \deq \max\{a,b\}$ and $a \wedge b \deq \min\{a,b\}$. Other notation and definitions will be introduced in the sequel as needed.

\section{Background on $\Phi$-entropies and $\Phi$-divergences}
\label{sec:entropies}

Let $\cF$ denote the set of all convex functions $\Phi \colon \Reals^+ \to \Reals$. For any $\Phi \in \cF$,  the \textit{$\Phi$-entropy} of a nonnegative real-valued random variable $U$ is defined by
\begin{align}\label{eq:phi_entropy}
	\Ent_\Phi[U] \deq \E[\Phi(U)] - \Phi(\E U),
\end{align}
provided $\E[\Phi(U)]< \infty$ (see \cite{Chafai_entropy} and \cite[Chap.~14]{Boucheron_etal_concentration_book}).  For example, if $\Phi(u)=u^2$, then $\Ent_\Phi[U] = \Var[U]$; if $\Phi(u) = u \log u$, then
\begin{align*}
	\Ent_\Phi[U] = \E[U \log U] - \E[U] \log \E[U].
\end{align*}
The $\Phi$-entropy is nonnegative by Jensen's inequality.

The $\Phi$-divergences\footnote{We use the term ``$\Phi$-divergence'' instead of the more common ``$f$-divergence'' because we reserve $f$ for real-valued functions on $\sX$.} between probability distributions \cite{Csiszar_divergence,LieseVajda06} arise as a special case of the above definition. Fix some $\mu \in \PProb(\sX)$ (this restriction is sufficient for our purposes, and helps avoid certain technicalities involving division by zero). Then, for any $\Phi \in \cF$, the $\Phi$-divergence between an arbitrary probability distribution $\nu \in \Prob(\sX)$ and $\mu$ is defined as
\begin{align*}
	D_\Phi(\nu \| \mu) \deq \E_\mu\left[ \Phi\left(\frac{\d \nu}{\d \mu}\right) \right] - \Phi(1).
\end{align*}
Note that this differs from the usual definition by the subtraction of $\Phi(1)$. There are two reasons behind this modification: (a) $D_\Phi(\mu \| \mu) = 0$ for any $\mu$,\footnote{However, unless $u \mapsto \Phi(u)$ is strictly convex at $1$, $D_\Phi(\nu \| \mu) = 0$ does not necessarily imply that $\nu = \mu$.} and (b) any two $\Phi,\Phi'$ such that $\Phi - \Phi'$ is affine determine the same divergence. If we now consider a random variable $X \in \sX$ with distribution $\mu$ and let $f = \d \nu/ \d \mu$, then
\begin{align*}
	D_\Phi(\nu \| \mu) = \Ent_\Phi\left[f(X)\right].
\end{align*}
Moreover, if $\Phi(1) = 0$, we can write $D_\Phi(\nu \| \mu) = \E_\mu[\Phi \circ f]$ since  $\E[f(X)]=1$. Here are some important examples of $\Phi$-divergences \cite{LieseVajda06}:
\begin{enumerate}
	\item The relative entropy
	\begin{align*}
		D(\nu \| \mu) = \E_\nu\left[ \log \frac{\d \nu}{\d \mu}\right] = \E_\mu \left[ \frac{\d \nu}{\d \mu}\log \frac{\d \nu}{\d \mu}\right]
	\end{align*}
	is a $\Phi$-divergence with $\Phi(u) = u \log u$.
	\item The total variation distance 
	\begin{align*}
		\| \nu - \mu \|_\TV = \frac{1}{2}\E_\mu \left| \frac{\d \nu}{\d \mu} - 1 \right|
	\end{align*}
	is a $\Phi$-divergence with $\Phi(u) = \frac{1}{2}|u-1|$.
	\item The $\chi^2$-divergence
	\begin{align*}
		\chi^2(\nu\|\mu) = \E_\mu \left[ \left( \frac{\d \nu}{\d \mu} -1 \right)^2\right]
	\end{align*}
	is a $\Phi$-divergence with $\Phi(u) = (u-1)^2$ or $\Phi(u) = u^2-1$. This is a particular instance of the fact that any two $\Phi,\Phi' \in \cF$ that differ by an affine function determine the same divergence.
	\item The squared Hellinger distance
	\begin{align*}
		H^2(\nu,\mu) = \E_\mu \left[ \left(\sqrt{\frac{\d \nu}{\d \mu}}-1\right)^2\right] 
	\end{align*}
	is a $\Phi$-divergence with $\Phi(u) = (\sqrt{u}-1)^2$ or $\Phi(u) = 2-2\sqrt{u}$.
\end{enumerate}
An important class of $\Phi$-divergences arises in the context of Bayesian estimation. Given a parameter $\lambda \in (0,1)$, consider a random pair $(\Theta,X)$ with
\begin{align*}
	\Theta \sim \Bernoulli(\lambda) \quad \text{and} \quad P_{X|\Theta = \theta} = \begin{cases} \mu, & \text{if $\theta=0$} \\
	\nu, &\text{if $\theta = 1$}\end{cases}.
\end{align*}
Fix an action space $\sA$ and a loss function $\ell : \{0,1\} \times \sA \to \Reals$ --- in other words, if $\Theta = \theta$ and an action $a \in \sA$ is selected, then we incur the loss of $\ell(\theta,a)$. Consider the problem of selecting an action in $\sA$ based on some observation $Z$ related to $(\Theta,X)$ via the Markov chain $\Theta \to X \to Z$ --- i.e., $Z$ and $\Theta$ are conditionally independent given $X$. If $A = \gamma(Z)$ for some function $\gamma$, then we incur the average loss
\begin{align*}
	 \E[\ell(\Theta,\gamma(Z))] &= \bar{\lambda}\, \E[\ell(0,\gamma(Z))] + \lambda\, \E[\ell(1,\gamma(Z))].
\end{align*}
The goal is to pick $\gamma$ to minimize this expected loss for a given observation channel $P_{Z|X}$. In the extreme case when $Z$ is independent of $X$, the best we can do is to take
\begin{align*}
	a^* = \argmin_{a \in \sA} \left[\bar{\lambda}\ell(0,a) + \lambda \ell(1,a)\right],
\end{align*}
giving us the average loss of
\begin{align*}
	L^*_\lambda &\deq \inf_{a \in \sA} \left[\bar{\lambda}\ell(0,a) + \lambda \ell(1,a)\right].
\end{align*}	
On the other hand, if $Z=X$, then we can attain the {\em minimum Bayes risk}
\begin{align*}
	 L^*_\lambda(\nu,\mu) &\deq \inf_{\gamma} \E[\ell(\Theta,\gamma(X))] \\
	&= \inf_{\gamma} \left\{ \bar{\lambda} \int_\sX \ell(0,\gamma(x)) \nu(\d x) + \lambda \int_\sX \ell(1,\gamma(x)) \mu(\d x)\right\},
\end{align*}
where the infimum is over all measurable functions $\gamma : \sX \to \sA$. The following result is well-known (see, e.g., \cite[p.~882]{Nguyen_etal_surrogate_loss}), but the proof is so simple that we give it here:
\begin{proposition} The quantity
	\begin{align*}
		D_{\ell,\lambda}(\nu \| \mu) \deq L^*_\lambda - L^*_\lambda(\nu,\mu)
	\end{align*}
	is a $\Phi$-divergence.
\end{proposition}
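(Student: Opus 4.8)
The plan is to perform the minimization over $\gamma$ \emph{pointwise} and recognize the resulting optimal value as a $\Phi$-divergence with an explicitly concave generator. Fix an admissible pair and write $f = \d\nu/\d\mu$. Substituting $\d\nu = f\,\d\mu$ into the expression defining $L^*_\lambda(\nu,\mu)$, the objective becomes $\int_\sX \big[\bar\lambda\, f(x)\,\ell(0,\gamma(x)) + \lambda\,\ell(1,\gamma(x))\big]\,\mu(\d x)$, i.e.\ an integral of a quantity that depends on $\gamma$ only through its value at each point $x$. Hence the infimum over (measurable) $\gamma : \sX \to \sA$ equals the integral of the pointwise infimum, as long as a measurable minimizing selector exists --- automatic in the discrete setting we focus on. This motivates defining, for $t \ge 0$,
\[
	\phi_\lambda(t) \deq \inf_{a \in \sA} \big[\bar\lambda\, t\, \ell(0,a) + \lambda\, \ell(1,a)\big],
\]
so that $L^*_\lambda(\nu,\mu) = \E_\mu[\phi_\lambda(f)]$, and, specializing to $t = 1$ (since $f \equiv 1$ corresponds to $\nu = \mu$, i.e.\ $Z$ carrying no information), $L^*_\lambda = \phi_\lambda(1)$.

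The second step is the only structural observation needed: $\phi_\lambda$ is a pointwise infimum of the affine functions $t \mapsto \bar\lambda\,\ell(0,a)\,t + \lambda\,\ell(1,a)$ indexed by $a \in \sA$, hence \emph{concave} on $\Reals^+$. (We assume throughout that the Bayes risks are finite, i.e.\ $\phi_\lambda(t) > -\infty$ for every $t \ge 0$; otherwise the statement is vacuous.) Consequently $\Phi \deq -\phi_\lambda$ belongs to $\cF$, and we can assemble
\[
	D_{\ell,\lambda}(\nu \| \mu) = L^*_\lambda - L^*_\lambda(\nu,\mu) = \phi_\lambda(1) - \E_\mu[\phi_\lambda(f)] = \E_\mu[\Phi(f)] - \Phi(1) = D_\Phi(\nu \| \mu),
\]
which is precisely the desired representation.

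There is no genuinely hard step here; the proof is essentially a change of order of infimum and integral. The only points requiring a word of care are (i) the existence of a measurable minimizing selection $x \mapsto \argmin_{a} [\bar\lambda f(x)\ell(0,a) + \lambda \ell(1,a)]$, which is trivial when $\sX$ and $\sA$ are finite and holds more generally under mild regularity of $\ell$, and (ii) ensuring $\Phi$ is real-valued on $\Reals^+$, i.e.\ finiteness of $L^*_\lambda(\nu,\mu)$. One may additionally remark --- though it is not needed for the stated proposition --- that the construction is essentially reversible: given any $\Phi \in \cF$ of the relevant form, one recovers a loss function generating it by reading off the supporting lines of $\Phi$, which is the usual surrogate-loss correspondence and explains why every $\Phi$-divergence arises this way.
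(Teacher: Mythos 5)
Your proof is correct and is essentially the paper's argument: both rest on exchanging the infimum over decision rules with the integral (pointwise optimization) and on recognizing the resulting generator as a pointwise infimum of affine functions, hence concave, so that its negative lies in $\cF$. The only cosmetic differences are that the paper builds the constant $L^*_\lambda$ into $\Phi_{\ell,\lambda}$ so that $\Phi_{\ell,\lambda}(1)=0$ (which is immaterial since $D_\Phi$ subtracts $\Phi(1)$), and that you attach the density $f$ to $\ell(0,\cdot)$ rather than $\ell(1,\cdot)$ --- a labeling convention on which the paper itself is not entirely consistent and which does not affect the conclusion.
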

\begin{proof} Define the function
	\begin{align*}
		\Phi_{\ell,\lambda}(u) \deq \sup_{a \in \sA} \left[L^*_\lambda - \bar{\lambda}\ell(0,a) - \lambda \ell(1,a)u \right], \qquad u \ge 0.
	\end{align*}
	Being a pointwise supremum of affine functions of $u$, it is convex. Moreover, $\Phi_{\ell,\lambda}(1)=0$. With this, we can write
	\begin{align*}
		L^*_\lambda - L^*_\lambda(\nu,\mu) &= \sup_\gamma \left(L^*_\lambda - \int_\sX \mu(\d x) \left[\bar{\lambda}\ell(0,\gamma(x)) + \lambda \frac{\d \nu}{\d \mu}(x) \ell(1,\gamma(x))\right] \right)\\
		&=\int_\sX \mu(\d x)\sup_{a \in \sA}\left[L^*_\lambda - \bar{\lambda}\ell(0,a) - \lambda \ell(1,a) \frac{\d \nu}{\d \mu}(x)\right] \\
		&= \E_\mu\left[ \Phi_{\ell,\lambda}\left(\frac{\d \nu}{\d \mu}\right)\right].
	\end{align*}
\end{proof}
We consider two particular cases:
\begin{itemize}
	\item $\sA = \{0,1\}$, $\ell(\theta,a) = \1_{\{\theta \neq a\}}$. An easy calculation shows that $L^*_\lambda = \lambda \wedge \bar{\lambda}$ and
	\begin{align*}
		\Phi_{\ell,\lambda}(u) &= \left[\lambda \wedge \bar{\lambda} - \bar\lambda u\right] \vee \left[\lambda \wedge \bar{\lambda} - \bar\lambda \right] \\
		&= \lambda \wedge \bar{\lambda} - (\lambda u) \wedge \bar{\lambda}.
	\end{align*}
Alternatively, we can write
	\begin{align*}
		L^*_\lambda = \frac{1}{2} - \frac{1}{2} \| \Bernoulli(\lambda) - \Bernoulli(\bar{\lambda}) \|_\TV = \frac{1}{2} - \frac{1}{2}|1-2\lambda|	
	\end{align*}
	and
	\begin{align*}
		L^*_\lambda(\nu,\mu) = \frac{1}{2} - \frac{1}{2} \| \lambda \nu - \bar{\lambda} \mu \|_\TV,
	\end{align*}
	where the total variation norm $\| \nu \|_\TV$ of a signed measure $\nu$ on $\sX$ is given by
	\begin{align*}
		\| \nu \|_\TV = \frac{1}{2}\sum_{x \in\sX}|\nu(x)|.
	\end{align*}
The optimal decision function is
	\begin{align*}
		\gamma^*(x) = \1_{\left\{ \lambda \frac{\d \nu}{\d \mu}(x) \le \bar{\lambda}\right\}}.
	\end{align*}
	The resulting divergence is known as the {\em Bayes} or {\em statistical information} \cite{DeGroot62}
	\begin{align*}
		B_\lambda(\nu \| \mu) = \frac{1}{2} \| \lambda \nu - \bar{\lambda} \mu \|_\TV - \frac{1}{2}|1-2\lambda|.
	\end{align*}
	In fact, any $\Phi$-divergence can be expressed as an integral of statistical informations \cite[Thm.~11]{LieseVajda06}: for any $\Phi \in \cF$, there exists a unique Borel measure $\sM_\Phi$ on $[0,1]$, such that
	\begin{align}\label{eq:information_integral}
		D_\Phi(\nu \| \mu) = \int_{[0,1]} B_\lambda(\nu\|\mu) \sM_\Phi(\d\lambda).
	\end{align}
	
	\item $\sA = \Reals$, $\ell(\theta,a) = (a-\theta)^2$. Then $L^*_\lambda = \lambda \bar{\lambda}$ and
	\begin{align*}
		\Phi_{\ell,\lambda}(u) = \lambda\bar{\lambda}\left(1 - \frac{u}{\lambda u + \bar{\lambda}}\right),
	\end{align*}
	which gives
	\begin{align*}
		L^*_\lambda(\nu,\mu) = \lambda \bar{\lambda}\, \E_\mu\left[ \frac{\d \nu/\d \mu}{\lambda \d \nu/\d \mu + \bar{\lambda}}\right],
	\end{align*}
with the optimum decision function
	\begin{align*}
		\gamma^*(x) &= \frac{\lambda \frac{\d \nu}{\d \mu}(x)}{\lambda \frac{\d \nu}{\d \mu}(x) + \bar{\lambda}}.
	\end{align*}
The corresponding divergence is then given by
\begin{align*}
	D_{\ell,\lambda}(\nu\|\mu) &= \lambda\bar{\lambda}\left(1 - \E_\mu \left[ \frac{\d \nu/\d \mu}{\lambda\, \d \nu/ \d \mu + \bar{\lambda}}\right]\right) \\
	&= (\lambda\bar{\lambda})^2 \E_\mu \left[ \frac{(\d \nu/\d \mu-1)^2}{\lambda\, \d \nu/\d \mu + \bar{\lambda}}\right],
\end{align*}
where the second expression follows after some algebraic manipulations. Note that the functions $u \mapsto \frac{(u-1)^2}{\lambda u + \bar{\lambda}}$ for $0 < \lambda < 1$ also belong to $\cF$. The divergences generated by these functions (modulo multiplicative constants) have appeared throughout the statistical literature \cite{LeCamBook,GyorfiVajda}. In particular, Le Cam \cite{LeCamBook} considers the case $\lambda = 1/2$ with the above Bayesian hypothesis testing interpretation, while Gy\"orfi and Vajda \cite{GyorfiVajda} look at arbitrary $\lambda$ (including the endpoints $0$ and $1$). For our purposes, it will be convenient to work with the function $u \mapsto \lambda \bar{\lambda}\frac{(u-1)^2}{\lambda u +\bar{\lambda}}$, which gives the \textit{Le Cam divergence} with parameter $\lambda \in (0,1)$:
\begin{align}\label{eq:LC}
	\LC_\lambda(\nu \| \mu) &\deq \lambda\bar{\lambda}\, \E_\mu\left[ \frac{(\d \nu/\d \mu-1)^2}{\lambda \d \nu/ \d \mu + \bar{\lambda}}\right] \equiv \frac{1}{\lambda\bar{\lambda}}D_{\ell,\lambda}(\nu \| \mu).
\end{align}
The Le Cam divergences $\LC_0(\cdot \| \cdot)$ and $\LC_1(\cdot \| \cdot)$ are also well-defined and are identically zero.
\end{itemize}
More examples of $\Phi$-divergences, as well as a wide variety of inequalities between them, can be found in \cite{Sason_Verdu_fdiv}.

From now on, when dealing with quantities indexed by $\Phi$, we will often substitute $\Phi$ with some mnemonic notation related to the corresponding $\Phi$-divergence, e.g., $\TV$, $\chi^2$, etc. Moreover, for the case of the relative entropy we will often omit the index $\Phi$ altogether and write $\Ent(\cdot)$, $D(\cdot\|\cdot)$, etc.

\subsection{Subadditivity of $\Phi$-entropies}

Let $U$ and $Y$ be jointly distributed random variables, where $U$ takes nonnegative real vaues and $Y$ is arbitrary. Given a  function $\Phi\in\cF$, define the \textit{conditional $\Phi$-entropy} of $U$ given $Y$:
\begin{align}\label{eq:conditional_phi_entropy}
	\Ent_\Phi[U|Y] &\deq  \E[\Phi(U)|Y] - \Phi(\E[U|Y]).
\end{align}
This is a random variable, since it depends on $Y$. Combining \eqref{eq:conditional_phi_entropy} with \eqref{eq:phi_entropy} gives the following generalization of the law of total variance:
\begin{align}\label{eq:total_entropy}
	\Ent_\Phi[U] = \E\left[\Ent_\Phi[U|Y]\right] + \Ent_\Phi[\E[U|Y]]
\end{align}
(see \cite[pp.~351--352]{Chafai_entropy}).

\begin{remark} {\em We may think of
	\begin{align*}
		J_\Phi(U|Y) \deq \E\left[\Ent_\Phi[U|Y]\right]
	\end{align*}
as a kind of ``Fisher $\Phi$-information'' about $U$ contained in $Y$.\footnote{We are grateful to P.~Tetali for suggesting this interpretation.} Indeed, let us consider the following special case: let $(Y,Y')$ be an exchangeable pair on some space $\sY$ (i.e., $P_{Y,Y'}(y,y') = P_{Y,Y'}(y',y)$ for all $y,y'$), and let $U = f(Y)$ for some $f : \sY \to \Reals^+$. Let $K$ be the stochastic transformation $P_{Y'|Y}$. Then $\E[U|Y'] = \E[f(Y)|Y'] = K^*f(Y')$ has the same distribution as $\E[f(Y')|Y] = K^*f(Y)$, and
	\begin{align*}
		J_\Phi(U|Y) &= \Ent_\Phi[U] - \Ent_\Phi[\E[U|Y]] \\
		&= \Ent_\Phi[f(Y)] - \Ent_\Phi[K^*f(Y)].
	\end{align*}
By convexity of $\Phi$,
\begin{align*}
	\Phi(u+v) \ge \Phi(u) + v\Phi'(u).
\end{align*}
If we write $K^* = \id + L$, where $\id$ is the identity operator on $\Func(\sY)$, then
\begin{align*}
	J_\Phi(U|Y) &= \Ent_\Phi[f(Y)] - \Ent_\Phi[f(Y) + Lf(Y)] \\
	&\le -\E[\Phi'(f(Y))Lf(Y)].
\end{align*}
Moreover, if we have a continuous-time reversible Markov chain on $\sY$ with stationary distribution $P_Y$ and with infinitesimal generator $L$, then $(Y_0,Y_t)$ is an exchangeable pair for each $t$, and
\begin{align*}
	J_\Phi(f(Y_0)|Y_t) &= \Ent_\Phi[f(Y_0)] - \Ent_\Phi[K^*_tf(Y_0)] \\
	&= -t\, \E[\Phi'(f(Y_0))L f(Y_0)] + o(t)
\end{align*}
Dividing both sides by $t$ and taking the limit as $t \to 0$, we get
\begin{align*}
	\frac{\d}{\d t} J_\Phi(f(Y_0)|Y_t)\Big|_{t=0} = \lim_{t \to 0} \frac{J_\Phi(f(Y_0)|Y_t)}{t} = - \E[\Phi'(f(Y_0)) Lf(Y_0)],
\end{align*}
which coincides with the $\Phi$-Fisher information functional of Chafa\"i \cite[Eq.~(1.14)]{Chafai_entropy}.\hfill$\diamond$}
\end{remark}

We say that the $\Phi$-entropy is \textit{subadditive} if the inequality
\begin{align}\label{eq:subadditivity}
	\Ent_\Phi\left[f(X^n)\right] \le \sum^n_{i=1} \E\left[\Ent_\Phi\big[f(X^n)\big|X^{\backslash i}\big]\right]
\end{align}
holds for any tuple $X^n = (X_1,\ldots,X_n)$ of independent random variables taking values in some spaces $\sX_1,\ldots,\sX_n$ and for any function $f \colon \sX_1 \times \ldots \times \sX_n \to \Reals^+$, such that $\Ent_\Phi[f(X^n)] < + \infty$. Here, $X^{\backslash i}$ denotes the $(n-1)$-tuple $(X_1,\ldots,X_{i-1},X_{i+1},\ldots,X_n)$ obtained by deleting $X_i$ from $X^n$. We are interested in the following question: what conditions on $\Phi$ ensure that this subadditivity property holds?

For example, if $\Phi(u)=u^2$, then $\Ent_\Phi[U]=\Var[U]$, and in this case the subadditivity property \eqref{eq:subadditivity} is the well-known \textit{Efron--Stein--Steele inequality} \cite{Efron_Stein,Steele}
\begin{align*}
	\Var[U] &\le \sum^n_{i=1} \E\left[\Var[U|X^{\backslash i}]\right], \qquad U = f(X^n).
\end{align*}
It is also not hard to show that the ``ordinary'' entropy $\Ent[U]$ [i.e., the $\Phi$-entropy with $\Phi(u) = u \log u$] is subadditive. In general, an induction argument can be used to show that subadditivity is equivalent to the following convexity property \cite{LO_Sobolev_Poincare}: for any two probability spaces $(\sX_1,\nu_1)$ and $(\sX_2,\nu_2)$ and any function $f \colon \sX_1 \times \sX_2 \to \Reals^+$,
\begin{align}\label{eq:phi_convexity}
	\Ent_\Phi\left[ \int_{\sX_2} f(X_1,x_2)\nu_2(\d x_2)\right] & \le \int_{\sX_2} \Ent_\Phi\big[f(X_1,x_2)\big]\nu_2(\d x_2),
\end{align}
where $X_1 \sim \nu_1$. The following criterion for subadditivity is useful \cite{LO_Sobolev_Poincare,Boucheron_etal_moment_inequalities}:
\begin{proposition}\label{prop:LO} Let $\cC$ be the class of all convex  functions $\Phi \colon \Reals^+ \to \Reals$ that are twice differentiable on $(0,\infty)$, and such that either $\Phi$ is affine or $\Phi'' > 0$ and $1/\Phi''$ is concave. Then the $\Phi$-entropy is subadditive for all $\Phi \in \cC$. Conversely, if $\Phi$ is twice differentiable with $\Phi'' > 0$ and the $\Phi$-entropy is subadditive, then $1/\Phi''$ is concave.
\end{proposition}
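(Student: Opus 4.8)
The plan is to work entirely with the two-space convexity property \eqref{eq:phi_convexity}, which the discussion preceding the proposition already identifies with the subadditivity \eqref{eq:subadditivity}. I would first note that \eqref{eq:phi_convexity} is nothing but the assertion that, for every probability space $(\sX_1,\nu_1)$, the functional $g\mapsto\Ent_\Phi[g(X_1)]=\E[\Phi(g(X_1))]-\Phi(\E g(X_1))$ with $X_1\sim\nu_1$ is convex on $\PFunc(\sX_1)$: writing $g_{x_2}:=f(\cdot,x_2)$, \eqref{eq:phi_convexity} is exactly Jensen's inequality $\Ent_\Phi[\int g_{x_2}\,\nu_2(\d x_2)]\le\int\Ent_\Phi[g_{x_2}]\,\nu_2(\d x_2)$ for this functional, and a functional obeys Jensen's inequality against every $\nu_2$ iff it is convex (for finitely supported $\nu_2$ this is just finite convexity; the general case is a routine approximation). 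So it suffices to characterize when $g\mapsto\Ent_\Phi[g]$ is convex.

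For the ``if'' direction the affine case is trivial, since then $\Ent_\Phi\equiv0$; so assume $\Phi''>0$ on $(0,\infty)$ with $\psi:=1/\Phi''$ concave. By truncation and approximation it is enough to verify convexity of $g\mapsto\Ent_\Phi[g]$ when $\sX_1$ is finite and $g$ takes values in a compact subinterval of $(0,\infty)$; on such a domain $\Ent_\Phi$ is $C^2$, so convexity is equivalent to nonnegativity of its Hessian, and a direct calculation gives
\[
D^2\Ent_\Phi[g](h,h)=\E\big[\Phi''(g(X_1))\,h(X_1)^2\big]-\Phi''(\E g(X_1))\,(\E h(X_1))^2.
\]
Applying Cauchy--Schwarz to the pair $\sqrt{\Phi''(g)}\,h$ and $1/\sqrt{\Phi''(g)}$ gives $(\E h)^2\le\E[\Phi''(g)h^2]\cdot\E[1/\Phi''(g)]$, and Jensen's inequality for the concave $\psi$ gives $\E[1/\Phi''(g)]=\E[\psi(g)]\le\psi(\E g)=1/\Phi''(\E g)$; multiplying these two bounds yields $\E[\Phi''(g)h^2]\ge\Phi''(\E g)(\E h)^2$, i.e.\ the Hessian is positive semidefinite. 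Removing the truncation (replace $g$ by $(g\vee\delta)\wedge M$ and let $\delta\downarrow0$, $M\uparrow\infty$, using continuity of $\Phi$ and monotone/dominated convergence under the standing finiteness assumption on the $\Phi$-entropies) and exhausting a general $\sX_1$ by finite subsets recovers convexity of $g\mapsto\Ent_\Phi[g]$, hence \eqref{eq:phi_convexity}, hence subadditivity.

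For the ``only if'' direction, assume $\Phi$ is $C^2$ with $\Phi''>0$ and $\Ent_\Phi$ subadditive, so that $g\mapsto\Ent_\Phi[g]$ is convex. Specialize to $\sX_1=\{1,2\}$ with $\nu_1$ uniform: then $\Ent_\Phi$ is a $C^2$ convex function on $(0,\infty)^2$, hence its Hessian is positive semidefinite, which by the formula above reads
\[
\tfrac12\Phi''(a)h_1^2+\tfrac12\Phi''(b)h_2^2\ \ge\ \Phi''\!\Big(\tfrac{a+b}{2}\Big)\Big(\tfrac{h_1+h_2}{2}\Big)^2\qquad\text{for all }a,b>0,\ h_1,h_2\in\Reals.
\]
Minimizing the left-hand side over $\{h_1+h_2=2\}$ (a one-line Lagrange computation, with optimum at $h_i$ proportional to $\psi$ evaluated at the $i$-th point) gives the value $2/(\psi(a)+\psi(b))$ with $\psi=1/\Phi''$, so the inequality forces $2/(\psi(a)+\psi(b))\ge\Phi''(\tfrac{a+b}{2})=1/\psi(\tfrac{a+b}{2})$, i.e.\ $\psi(\tfrac{a+b}{2})\ge\tfrac12(\psi(a)+\psi(b))$. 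Thus $\psi=1/\Phi''$ is midpoint-concave, and being continuous (as $\Phi''>0$ is continuous) it is concave, as claimed.

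The conceptual skeleton --- second variation, then Cauchy--Schwarz against $1/\sqrt{\Phi''}$, then Jensen for the concave $1/\Phi''$ --- is short; I expect the real work to be the technical regularization in the sufficiency direction: legitimizing the Hessian computation for nonnegative $g$ that may vanish or be unbounded while $\Phi''$ degenerates or blows up at $0$ or $\infty$ (e.g.\ $\Phi(u)=u\log u$, where $1/\Phi''(u)=u$), handling the possibility $\Ent_\Phi[g]=+\infty$, and justifying that \eqref{eq:phi_convexity} on all finite product spaces together with a continuity argument yields the statement for arbitrary alphabets $\sX_1,\dots,\sX_n$. A secondary point worth spelling out is the equivalence used at the outset between \eqref{eq:phi_convexity} and convexity of the functional $g\mapsto\Ent_\Phi[g]$ tested against non-finitely-supported $\nu_2$.
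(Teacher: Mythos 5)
Your argument is sound in its essentials, but note that the paper does not actually prove Proposition~\ref{prop:LO}: it is quoted from Lata{\l}a--Oleszkiewicz and Boucheron et al.\ \cite{LO_Sobolev_Poincare,Boucheron_etal_moment_inequalities}, so there is no in-paper proof to compare against. What you have reconstructed is precisely the standard proof from those references: reduce subadditivity to the convexity property \eqref{eq:phi_convexity} (equivalently, convexity of $g\mapsto\Ent_\Phi[g]$), compute the second variation $\E[\Phi''(g)h^2]-\Phi''(\E g)(\E h)^2$, and close the ``if'' direction with Cauchy--Schwarz against $1/\sqrt{\Phi''(g)}$ followed by Jensen for the concave $1/\Phi''$; the converse via the two-point space and the constrained minimization yielding the harmonic-mean inequality $2/(\psi(a)+\psi(b))\ge 1/\psi(\tfrac{a+b}{2})$ is also the classical route. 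Two small points to tighten. First, in the converse you assert that $\Phi''$ is continuous; ``twice differentiable'' alone does not give this. Either read the hypothesis as $C^2$ (as the cited sources effectively do), or note that $\Phi''$, being a pointwise limit of difference quotients of $\Phi'$, is Baire class one and hence measurable, and a measurable midpoint-concave function is concave, which closes the gap without continuity. Second, the boundary regularization you defer (functions vanishing or unbounded while $\Phi''$ degenerates at $0$ or $\infty$, and the passage from finite to general alphabets) is genuinely where the technical work lives, and also where one should note that a finite convex $\Phi$ on $\Reals^+$ is automatically continuous on $(0,\infty)$ with $\limsup_{u\downarrow 0}\Phi(u)\le\Phi(0)$, which is what makes the truncation limits behave; since you only need the convexity \emph{inequality} to survive the limit, upper semicontinuity at $0$ in the right places suffices, but this deserves to be spelled out rather than waved at.
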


\section{Strong data processing inequalities}
\label{sec:SDPI}

We now turn to the main subject of the paper: strong data processing inequalities.

\begin{definition} Given an admissible pair $(\mu,K) \in \PProb(\sX) \times \Chan(\sY|\sX)$ and a function $\Phi \in \cF$, we say that $K$ satisfies a {\em $\Phi$-type strong data processing inequality (SDPI) at $\mu$ with constant $c \in [0,1)$}, or $\SDPI_\Phi(\mu,c)$ for short, if
	\begin{align}\label{eq:f_SDPI}
		D_\Phi(\nu K \| \mu K) \le c D_\Phi(\nu \| \mu)
	\end{align}
	for all $\nu \in \Prob(\sX)$. We say that $K$ satisfies $\SDPI_\Phi(c)$ if it satisfies $\SDPI_\Phi(\mu,c)$ for all $\mu \in \PProb(\sX)$.
\end{definition}
\noindent We are interested in the tightest constants in SDPIs; with that in mind, we define
\begin{align*}
	\eta_\Phi(\mu,K) &\deq \sup_{\nu \neq \mu} \frac{D_\Phi(\nu K \| \mu K)}{D_\Phi(\nu \| \mu)}, \\
	\eta_\Phi(K) &\deq \sup_{\mu \in \PProb(\sX)} \eta_\Phi(\mu,K).
\end{align*}
For future reference, we record the following straightforward results:
\begin{proposition}[Functional form of SDPI]\label{prop:functional_SDPI} Fix an admissible pair $(\mu,K)$ and let $(X,Y)$ be a random pair with probability law $\mu \otimes K$. Then $\eta_\Phi(\mu,K) \le c$ if and only if the inequality
\begin{align}\label{eq:entropy_production_inequality}
\Ent_\Phi[f(X)] \le \frac{1}{1-c} \E\left[ \Ent_\Phi[f(X)|Y]\right]
\end{align}
holds for all nonconstant $f \in \PFunc(\sX)$ with $\E[f(X)]=1$. Consequently,
	\begin{align}\label{eq:functional_SDPI}
		\eta_\Phi(\mu,K) &= \sup\left\{ \frac{\Ent_\Phi\left[K^* f(Y)\right]}{\Ent_\Phi\left[f(X)\right]} : f \in \PFunc(\sX), \, f \neq {\rm const},\, \E[f(X)]=1\right\} \\
		&= 1 - \inf\left\{ \frac{\E\left[ \Ent_\Phi[f(X)|Y]\right]}{\Ent_\Phi[f(X)]} :  f \in \PFunc(\sX), \, f \neq {\rm const},\, \E[f(X)]=1\right\}.
	\end{align}
\end{proposition}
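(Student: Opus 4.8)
The plan is to pass from probability distributions $\nu$ to their $\mu$-densities $f = \d\nu/\d\mu$ and then to invoke the total $\Phi$-entropy identity \eqref{eq:total_entropy}. Since $\mu \in \PProb(\sX)$, the assignment $\nu \mapsto f \deq \d\nu/\d\mu$ is a bijection from $\{\nu \in \Prob(\sX) : \nu \neq \mu\}$ onto the set of nonconstant $f \in \PFunc(\sX)$ with $\E_\mu[f] = 1$: given such an $f$, the measure $x \mapsto f(x)\mu(x)$ is a probability distribution different from $\mu$, and conversely every $\nu \neq \mu$ has a density $f = \d\nu/\d\mu$ that is nonnegative, has $\mu$-mean $1$, and is nonconstant. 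Under this correspondence, the definitions of Section~\ref{sec:entropies} give $D_\Phi(\nu \| \mu) = \Ent_\Phi[f(X)]$, where $(X,Y) \sim \mu \otimes K$ (so $X \sim \mu$ and $Y \sim \mu K$).

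Next I would identify the density of $\nu K$ with respect to $\mu K$. Using the Bayes formula \eqref{eq:backward_2}, for every $y \in \sY$,
\begin{align*}
	\nu K(y) = \sum_{x \in \sX} f(x)\mu(x) K(y|x) = \mu K(y) \sum_{x \in \sX} K^*(x|y) f(x) = \mu K(y)\, K^* f(y),
\end{align*}
so $\d(\nu K)/\d(\mu K) = K^* f$, and therefore $D_\Phi(\nu K \| \mu K) = \Ent_\Phi[K^* f(Y)]$. Because $K^* f(Y) = \E[f(X) \mid Y]$, applying \eqref{eq:total_entropy} to the nonnegative variable $U = f(X)$ yields
\begin{align*}
	\Ent_\Phi[K^* f(Y)] = \Ent_\Phi[f(X)] - \E\big[\Ent_\Phi[f(X) \mid Y]\big].
\end{align*}
From here the proposition is immediate: dividing by $\Ent_\Phi[f(X)]$ and taking the supremum over admissible $f$ gives the two displayed formulas for $\eta_\Phi(\mu,K)$ (using $\sup(1-a) = 1 - \inf a$ for the second), while $\eta_\Phi(\mu,K) \le c$ amounts to $\Ent_\Phi[K^*f(Y)] \le c\,\Ent_\Phi[f(X)]$ for all such $f$, which rearranges into \eqref{eq:entropy_production_inequality}.

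The only point needing care — and, I expect, the only place where the argument is more than bookkeeping — is the degenerate case $\Ent_\Phi[f(X)] = 0$, which can occur when $\Phi$ is not strictly convex. There the identity above forces $\Ent_\Phi[K^* f(Y)] = \E[\Ent_\Phi[f(X)\mid Y]] = 0$ as well (both summands on the right are nonnegative and sum to $0$), so the associated ratio is an indeterminate $0/0$ that can simply be excluded from the supremum and infimum, and \eqref{eq:entropy_production_inequality} reduces to $0 \le 0$; hence such $f$ do not influence either side of the stated equivalence.
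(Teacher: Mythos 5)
Your argument is correct and follows essentially the same route as the paper's proof: pass to densities $f=\d\nu/\d\mu$, identify $\d(\nu K)/\d(\mu K)=K^*f$ (the paper cites Lemma~\ref{lm:density_update} for this, which you reprove inline), and apply the law of total $\Phi$-entropy \eqref{eq:total_entropy}. Your closing remark on the degenerate case $\Ent_\Phi[f(X)]=0$ is a sensible extra precaution that the paper's proof leaves implicit, but it does not change the substance of the argument.
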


\begin{proof} Fix a probability distribution $\nu \neq \mu$ and let $f = \d \nu/\d \mu$. Then $f \neq \text{const}$, $\E[f(X)]=1$, and
	\begin{align*}
		\frac{\d (\nu K)}{\d (\mu K)} = K^* f
	\end{align*}
		 by Lemma~\ref{lm:density_update} in the Appendix. Therefore,
	\begin{align*}
		D_\Phi(\nu \| \mu) = \Ent_\Phi\left[\frac{\d \nu}{\d \mu}(X)\right]  \qquad \text{and} \qquad D_\Phi(\nu K \| \mu K) = \Ent_\Phi\left[ \frac{\d(\nu K)}{\d(\mu K)}(Y)\right].
	\end{align*}
Conversely, for any nonconstant $f \in \PFunc(\sX)$ with $\E[f(X)]=1$ there exists a probability distribution $\nu \in \Prob(\sX)$ such that $\nu \neq \mu$ and $f = \d \nu/\d \mu$. In that case, the above formulas for the $\Phi$-entropies hold as well.

Now, if $c=1$, then \eqref{eq:entropy_production_inequality} holds trivially, so assume $c < 1$. In that case, the result follows from Eq.~\eqref{eq:entropy_production_inequality} and the law of total $\Phi$-entropy, Eq.~\eqref{eq:total_entropy}.
\end{proof}

\begin{definition}\label{def:gen_hom} We say that the $\Phi$-entropy $\Ent_\Phi[\cdot]$ is {\em homogeneous} if there exists some function $\kappa \colon (0,\infty) \to (0,\infty)$, such that the equality
\begin{align}\label{eq:gen_hom}
	\Ent_\Phi[c U] = \kappa(c)\Ent_\Phi[U]
\end{align}
holds for any nonnegative random variable $U$ such that $\Ent_\Phi[U] < + \infty$ and for any positive real number $c$.
\end{definition}
\noindent For example, $\Phi(u) = u \log u$ satisfies \eqref{eq:gen_hom} with $\kappa(c) = c$, while $\Phi(u) = \frac{u^\alpha-1}{\alpha-1}$, $\alpha > 1$, satisfies \eqref{eq:gen_hom} with $\kappa(c) = c^\alpha$.

\begin{proposition}\label{prop:functional_SDPI_gen_hom} Suppose that \eqref{eq:gen_hom} holds. Then
	\begin{align}
		\eta_\Phi(\mu,K) &= \sup\left\{ \frac{\Ent_\Phi\left[K^* f(Y)\right]}{\Ent_\Phi\left[f(X)\right]} :  f \in \PFunc(\sX), \, f \neq {\rm const}\right\}\nonumber \\
				&= 1 - \inf\left\{ \frac{\E\left[ \Ent_\Phi[f(X)|Y]\right]}{\Ent_\Phi[f(X)]} :  f \in \PFunc(\sX), \, f \neq {\rm const}\right\}. \label{eq:functional_SDPI_gen_hom}
	\end{align}
Moreover, if $\kappa$ is an invertible function, then
\begin{align}\label{eq:SDPI_levelset}
	\eta_\Phi(\mu,K) = \eta_\Phi(\mu,K,t) \deq \sup \left\{ \frac{\Ent_\Phi[K^*f (Y)]}{\Ent_\Phi[f(X)]} : f \in \PFunc(\sX), \Ent_\Phi[f(X)] \le t\right\}, \qquad \forall t > 0.
\end{align}
Again, $(X,Y)$ is a random pair with law $\mu \otimes K$.
\end{proposition}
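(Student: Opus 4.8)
The plan is to deduce both displayed formulas from the functional characterization of Proposition~\ref{prop:functional_SDPI} by exploiting the homogeneity \eqref{eq:gen_hom} to discard the normalization $\E[f(X)]=1$, and then to pass to the level-set form \eqref{eq:SDPI_levelset} by one further rescaling. Throughout, the suprema and infima range only over $f$ with $\Ent_\Phi[f(X)]>0$, consistent with the definition of $\eta_\Phi(\mu,K)$; recall also that $\Ent_\Phi[f(X)]$ is automatically finite here, since $\sX$ is finite and $\Phi$ is finite-valued, and that $\Ent_\Phi[f(X)]>0$ forces $f$ to be nonconstant.

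For the first identity in \eqref{eq:functional_SDPI_gen_hom}, the inequality ``$\le$'' is immediate from Proposition~\ref{prop:functional_SDPI}, because the new supremum is over a larger family (the constraint $\E[f(X)]=1$ has been dropped). For ``$\ge$'', take any nonconstant $f \in \PFunc(\sX)$; since $\mu \in \PProb(\sX)$ and $f$ is nonnegative and not identically $0$, we have $m \deq \E[f(X)] > 0$, so $\tilde f \deq f/m$ lies in $\PFunc(\sX)$, is nonconstant, and satisfies $\E[\tilde f(X)]=1$. As $K^*$ is linear, $K^*\tilde f = K^* f/m$, and applying \eqref{eq:gen_hom} with $c = 1/m$ gives $\Ent_\Phi[\tilde f(X)] = \kappa(1/m)\Ent_\Phi[f(X)]$ and $\Ent_\Phi[K^*\tilde f(Y)] = \kappa(1/m)\Ent_\Phi[K^* f(Y)]$; since $\kappa(1/m) \in (0,\infty)$, the ratio $\Ent_\Phi[K^* g(Y)]/\Ent_\Phi[g(X)]$ takes the same value at $g = f$ and at $g = \tilde f$. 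Thus every value of the ratio over the enlarged family is already attained over the normalized family, giving ``$\ge$''. The second identity in \eqref{eq:functional_SDPI_gen_hom} then follows termwise from the law of total $\Phi$-entropy \eqref{eq:total_entropy}: with $(X,Y)\sim\mu\otimes K$ one has $\E[f(X)|Y] = K^* f(Y)$, hence $\Ent_\Phi[K^* f(Y)] = \Ent_\Phi[f(X)] - \E[\Ent_\Phi[f(X)|Y]]$, and dividing by $\Ent_\Phi[f(X)]$ turns a supremum of ratios into $1$ minus an infimum.

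To establish \eqref{eq:SDPI_levelset}, fix $t > 0$. Since the constrained supremum $\eta_\Phi(\mu,K,t)$ is over a subfamily of $\{f \in \PFunc(\sX) : \Ent_\Phi[f(X)] > 0\}$, we get $\eta_\Phi(\mu,K,t) \le \eta_\Phi(\mu,K)$ from the identity just proved. For the reverse, given nonconstant $f \in \PFunc(\sX)$ with $e \deq \Ent_\Phi[f(X)] > 0$, invertibility of $\kappa$ as a map $(0,\infty)\to(0,\infty)$ lets us pick $c > 0$ with $\kappa(c) = t/e$; then $cf \in \PFunc(\sX)$ with $\Ent_\Phi[(cf)(X)] = \kappa(c)\,e = t \le t$, and exactly as before $f \mapsto cf$ preserves $\Ent_\Phi[K^* f(Y)]/\Ent_\Phi[f(X)]$. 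Hence every ratio from the unconstrained family is matched within the level set, so $\eta_\Phi(\mu,K,t) \ge \eta_\Phi(\mu,K)$, and equality holds for all $t$.

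The one point needing a little care is this last step: one must know that an invertible $\kappa$ can bring $\Ent_\Phi[f(X)]$ down to an arbitrary positive level. Reading ``invertible'' as ``bijective onto $(0,\infty)$'' this is immediate, as used above; if one assumes only injectivity, note that \eqref{eq:gen_hom}, applied twice, forces the multiplicative Cauchy equation $\kappa(c_1c_2) = \kappa(c_1)\kappa(c_2)$ with $\kappa(1)=1$, so the range of $\kappa$ is a subgroup of $((0,\infty),\cdot)$ that injectivity makes nontrivial; passing to logarithms, a nontrivial additive subgroup of $\Reals$ is unbounded below, so $\inf_{c>0}\kappa(c) = 0$, which is all the argument actually requires.
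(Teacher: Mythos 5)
Your proof is correct and follows essentially the same route as the paper's: the first two identities are dismissed "by homogeneity" (which you spell out via the rescaling $f \mapsto f/\E[f(X)]$ and the law of total $\Phi$-entropy), and \eqref{eq:SDPI_levelset} is obtained by rescaling $f$ by $c$ with $\kappa(c) = t/\Ent_\Phi[f(X)]$, exactly as in the paper. Your closing observation --- that \eqref{eq:gen_hom} forces $\kappa$ to be multiplicative, so that mere injectivity already yields $\inf_{c>0}\kappa(c)=0$ and hence suffices for the level-set argument --- is a genuine refinement of a point the paper's proof glosses over (it implicitly assumes $t/\Ent_\Phi[f(X)]$ lies in the range of $\kappa$).
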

\begin{proof} Eq.~\eqref{eq:functional_SDPI_gen_hom} is obvious from homogeneity. To prove \eqref{eq:SDPI_levelset}, pick an arbitrary nonconstant $f \in \PFunc(\sX)$ and let
	$$
	c = \kappa^{-1}\left( \frac{t}{\Ent_\Phi[f(X)]}\right).
	$$
Let $g = cf$. Then $\Ent_\Phi[g(X)] = \Ent_\Phi[cf(X)] = \kappa(c) \Ent_\Phi[f(X)] = t$. Therefore,
\begin{align*}
	\Ent_\Phi[K^*g(Y)] &\le \eta_\Phi(\mu,K,t) \Ent_\Phi[g(X)].
\end{align*}
Since $\Ent_\Phi[K^* g(Y)] = \Ent_\Phi[cK^* g(Y)] = \kappa(c) \Ent_\Phi[K^* g(Y)]$, and since $c > 0$ by the properties of $\kappa$, we conclude that $\Ent_\Phi[K^*f(Y)] \le \eta_\Phi(\mu,K,t)\Ent_\Phi[f(X)]$, which implies that $\eta_\Phi(\mu,K) \le \eta_\Phi(\mu,K,t)$. The reverse inequality, $\eta_\Phi(\mu,K) \ge \eta_\Phi(\mu,K,t)$, is obvious. 
\end{proof}

\begin{proposition}[Convexity in the kernel]\label{prop:SDPI_convexity} For a given choice of $\sX$, $\sY$, and $\mu \in \PProb(\sX)$, the SDPI constants $\eta_\Phi(\mu,K)$ and $\eta_\Phi(K)$ are convex in $K \in \Chan(\sY|\sX)$.
\end{proposition}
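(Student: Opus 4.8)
The plan is to read off convexity directly from the definition $\eta_\Phi(\mu,K) = \sup_{\nu \neq \mu} D_\Phi(\nu K \| \mu K)/D_\Phi(\nu\|\mu)$, using three standard facts: a $\Phi$-divergence is jointly convex in its pair of arguments; the map $K \mapsto (\nu K, \mu K)$ is affine (as recorded in Section~\ref{ssec:notation}); and a pointwise supremum of convex functions is convex. The functional characterizations of Propositions~\ref{prop:functional_SDPI}--\ref{prop:functional_SDPI_gen_hom} are not needed here, since there $K^*$ enters through Bayes' rule in a non-affine way; the raw definition is cleaner.

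First I would fix $\mu \in \PProb(\sX)$ and restrict attention to the set $\cK_\mu$ of channels $K \in \Chan(\sY|\sX)$ for which $(\mu,K)$ is admissible. This set is convex: if $(\mu,K_0)$ and $(\mu,K_1)$ are admissible and $\alpha\in[0,1]$, then $\mu(\alpha K_1 + \bar\alpha K_0) = \alpha\,\mu K_1 + \bar\alpha\,\mu K_0$ is a convex combination of strictly positive distributions, hence strictly positive. Next, for a fixed $\nu$ with $D_\Phi(\nu\|\mu)>0$, writing $D_\Phi(\rho\|\sigma) = \sum_y \sigma(y)\,\Phi(\rho(y)/\sigma(y)) - \Phi(1)$ exhibits the divergence as a sum of values of the perspective function $(a,b)\mapsto b\,\Phi(a/b)$ of the convex function $\Phi$, which is jointly convex on $\Reals^+\times(0,\infty)$; hence $(\rho,\sigma)\mapsto D_\Phi(\rho\|\sigma)$ is jointly convex, and on $\cK_\mu$ the second argument $\mu K$ always lies in the strict-positivity regime where this applies. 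Composing with the affine map $K\mapsto(\nu K,\mu K)$ shows $K\mapsto D_\Phi(\nu K\|\mu K)$ is convex on $\cK_\mu$, and dividing by the $K$-independent positive constant $D_\Phi(\nu\|\mu)$ preserves convexity.

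It then remains only to take suprema. By the ordinary data processing inequality, $D_\Phi(\nu K\|\mu K)\le D_\Phi(\nu\|\mu)$, so any $\nu$ with $D_\Phi(\nu\|\mu)=0$ contributes a ratio that is identically zero in $K$ and may be discarded; thus $\eta_\Phi(\mu,\cdot)$ equals the supremum over $\{\nu\in\Prob(\sX) : D_\Phi(\nu\|\mu)>0\}$ of functions each convex on $\cK_\mu$, hence is convex there. Finally $\eta_\Phi(K) = \sup_{\mu\in\PProb(\sX)}\eta_\Phi(\mu,K)$ is a supremum, over those $\mu$ with $(\mu,K)$ admissible, of functions convex in $K$, and is therefore convex. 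The argument is essentially routine; the only points requiring a little care are the joint convexity of $D_\Phi$ via the perspective-function fact and the book-keeping ensuring the domain of $K$ is convex and that the degenerate $\nu$'s with $D_\Phi(\nu\|\mu)=0$ cause no trouble in the ratio — no single step is a genuine obstacle.
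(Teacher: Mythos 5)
Your proposal is correct and follows essentially the same route as the paper: joint convexity of $D_\Phi(\cdot\|\cdot)$ via the perspective function, composition with the affine map $K\mapsto(\nu K,\mu K)$, and the fact that a pointwise supremum of convex functions is convex. The extra bookkeeping you supply (convexity of the admissible domain, discarding $\nu$ with $D_\Phi(\nu\|\mu)=0$) is sound and only fills in details the paper leaves implicit.
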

\begin{proof} For fixed $\nu,\mu \in \Prob(\sX)$, the functional $K \mapsto \frac{D_\Phi(\nu K \| \mu K)}{D_\Phi(\nu \| \mu)}$ is convex because of the joint convexity of $D_\Phi(\cdot \| \cdot)$ \cite[Lemma~4.1]{Csiszar_Shields_FnT}.\footnote{Joint convexity of $D_\Phi(\cdot\|\cdot)$ follows from the fact that, for any convex function $\Phi : \Reals^+ \to \Reals$, the \textit{perspective function} $(p,q) \mapsto q \Phi(p/q)$ is jointly convex in $(p,q) \in \Reals^+ \times \Reals^+$ \cite[Prop.~2.2.1]{Hiriart_book}.} Now,
	\begin{align*}
		\eta_\Phi(\mu,K) &= \sup_\nu \frac{D_\Phi(\nu K \| \mu K)}{D_\Phi(\nu \| \mu)} \qquad \text{and} \qquad \eta_\Phi(K) = \sup_{\mu} \sup_{\nu} \frac{D_\Phi(\nu K \| \mu K)}{D_\Phi(\nu \| \mu)}
	\end{align*}
are pointwise suprema of convex functionals of $K$, and therefore are convex in $K$.
\end{proof}

\subsection{A universal upper bound via Markov contraction}

A universal upper bound on $\eta_\Phi(K)$ was originally obtained by Cohen et al.~\cite{Cohen_etal_dataproc} in the discrete case and subsequently extended by Del Moral et al.~\cite{DelMoral_contraction} to the general case. We state this bound and give a proof which is more information-theoretic in nature:
\begin{theorem}\label{thm:Markov_contraction_bound} Define the {\em Dobrushin contraction coefficient \cite{Dobrushin_CLT_MC_1,Dobrushin_CLT_MC_2}} of a channel $K \in \Chan(\sY|\sX)$ by
	\begin{align}\label{eq:Dobrushin}
		\vartheta(K) \deq \max_{x,x' \in \sX} \| K(\cdot|x) - K(\cdot|x') \|_\TV.
	\end{align}
	Then for any $\Phi \in \cF$ we have
	\begin{align}\label{eq:universal_SDPI_bound}
		\eta_\Phi(K) \le  \vartheta(K).
	\end{align}
	Moreover, $\eta_\TV(K) \equiv \vartheta(K)$.
\end{theorem}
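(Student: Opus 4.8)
The plan is to prove the two parts separately. For the inequality $\eta_\Phi(K)\le\vartheta(K)$, the cleanest route uses the integral representation \eqref{eq:information_integral}, which writes every $\Phi$-divergence as a mixture of statistical informations $B_\lambda(\cdot\|\cdot)$ against a fixed measure $\sM_\Phi$. Since $B_\lambda(\nu\|\mu)=\frac12\|\lambda\nu-\bar\lambda\mu\|_\TV-\frac12|1-2\lambda|$, and the channel $K$ acts linearly on the signed measure $\lambda\nu-\bar\lambda\mu$, the key input is the Dobrushin bound for the total variation norm of signed measures: $\|\rho K\|_\TV\le\vartheta(K)\,\|\rho\|_\TV$ whenever $\rho$ has zero total mass. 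I would apply this with $\rho=\lambda\nu-\lambda\mu=\lambda(\nu-\mu)$ — note $(\lambda\nu-\bar\lambda\mu)-(\lambda\mu-\bar\lambda\mu)=\lambda(\nu-\mu)$ has zero mass — to get $\|\lambda(\nu K)-\bar\lambda(\mu K)\|_\TV-|1-2\lambda|\cdot\tfrac12$-type control, i.e. $B_\lambda(\nu K\|\mu K)\le\vartheta(K)\,B_\lambda(\nu\|\mu)$ for every $\lambda$. Integrating this against $\sM_\Phi(\d\lambda)$ and using \eqref{eq:information_integral} for both $D_\Phi(\nu K\|\mu K)$ and $D_\Phi(\nu\|\mu)$ yields $D_\Phi(\nu K\|\mu K)\le\vartheta(K)\,D_\Phi(\nu\|\mu)$, hence $\eta_\Phi(K)\le\vartheta(K)$ after taking the supremum over $\nu\neq\mu$ and $\mu\in\PProb(\sX)$.

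For the lower bound in the TV case, $\eta_\TV(K)\ge\vartheta(K)$, I would exhibit near-extremal inputs. The definition $\vartheta(K)=\max_{x,x'}\|K(\cdot|x)-K(\cdot|x')\|_\TV$ is attained at some pair $x_0,x_0'$. Take $\mu$ to be (close to) a convex combination supported on $\{x_0,x_0'\}$ — or, to stay inside $\PProb(\sX)$, perturb slightly — and take $\nu$ to move mass between $x_0$ and $x_0'$. Concretely, with $\mu=\tfrac12\delta_{x_0}+\tfrac12\delta_{x_0'}$ and $\nu=\delta_{x_0}$ (or symmetric variants), one has $\|\nu-\mu\|_\TV=\tfrac12$ and $\nu K-\mu K=\tfrac12\big(K(\cdot|x_0)-K(\cdot|x_0')\big)$, so $\|\nu K-\mu K\|_\TV=\tfrac12\vartheta(K)$, giving the ratio exactly $\vartheta(K)$. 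Handling the admissibility constraint $\mu,\mu K\in\PProb$ requires a limiting argument: replace $\mu$ and $\nu$ by $(1-\epsilon)$ times these measures plus $\epsilon$ times a fixed fully-supported distribution, check the ratio tends to $\vartheta(K)$ as $\epsilon\to0$, and use that $\eta_\TV(\mu,K)$ is a supremum. Combined with the upper bound $\eta_\TV(K)\le\vartheta(K)$ already established, this gives $\eta_\TV(K)=\vartheta(K)$.

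The main obstacle I anticipate is the contraction estimate $\|\rho K\|_\TV\le\vartheta(K)\|\rho\|_\TV$ for zero-mass signed measures $\rho$ — this is the classical Dobrushin lemma and is the heart of the argument; everything else is bookkeeping. One proves it by writing $\rho=\rho_+-\rho_-$ with $\rho_+,\rho_-\ge0$ of equal total mass $m=\|\rho\|_\TV$ (taking $\|\cdot\|_\TV$ normalized so that $\|\rho\|_\TV=\tfrac12\sum_x|\rho(x)|$, so the mass is $\|\rho\|_\TV$ under the convention in the paper, up to the factor already built in), normalizing to probability measures $\rho_\pm/m$, and then using $\|\rho_+K/m-\rho_-K/m\|_\TV\le\vartheta(K)$, which follows from $\mu_1K-\mu_2K=\sum_{x,x'}\mu_1(x)\mu_2(x')\big(K(\cdot|x)-K(\cdot|x')\big)$ — a convex combination of the extremal differences — and the triangle inequality for $\|\cdot\|_\TV$. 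A secondary nuisance is keeping the $|1-2\lambda|$ constant terms straight when integrating $B_\lambda$; since the subtracted constant $\tfrac12|1-2\lambda|$ does not depend on $\nu$, it cancels cleanly in the inequality $B_\lambda(\nu K\|\mu K)\le\vartheta(K)B_\lambda(\nu\|\mu)$ once one checks the endpoint cases $\lambda\in\{0,1\}$ (where $B_\lambda\equiv0$) pose no issue.
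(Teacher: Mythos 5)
Your overall architecture matches the paper's: reduce to the statistical informations $B_\lambda$ via the integral representation \eqref{eq:information_integral}, prove a per-$\lambda$ contraction $B_\lambda(\nu K\|\mu K)\le\vartheta(K)B_\lambda(\nu\|\mu)$, and exhibit near-extremal two-point inputs for the total-variation equality (your lower-bound construction is correct and essentially the one used in the paper). The gap is in the middle step. You propose to peel off the zero-mass signed measure $\lambda(\nu-\mu)$ from $\lambda\nu-\bar{\lambda}\mu$ and apply the Dobrushin contraction $\|\rho K\|_\TV\le\vartheta(K)\|\rho\|_\TV$ (valid for zero-mass $\rho$) to that piece. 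But after the triangle inequality this yields at best
\begin{align*}
B_\lambda(\nu K\|\mu K)\;\le\;\vartheta(K)\,\lambda\,\|\nu-\mu\|_\TV,
\end{align*}
and $\lambda\|\nu-\mu\|_\TV$ is \emph{not} dominated by $B_\lambda(\nu\|\mu)$: take $\sX=\{0,1\}$, $\mu=\Bernoulli(1/2)$, $\nu=\delta_0$, $\lambda=3/4$; then $\lambda\|\nu-\mu\|_\TV=3/8$, while $B_{3/4}(\nu\|\mu)=\lambda\wedge\bar{\lambda}-\inf_{\sA}\left[\bar{\lambda}\mu(\sA)+\lambda\nu(\sA^c)\right]=1/4-1/8=1/8$. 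So the chain of inequalities does not close; the loss comes from bounding $\|(2\lambda-1)\mu K\|_\TV$ by $\tfrac12|2\lambda-1|$ via the triangle inequality, which throws away the cancellation between the two pieces of your decomposition.

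What is actually needed is the strong Markov contraction lemma for signed measures of arbitrary mass,
\begin{align*}
\|\tilde{\nu}K\|_\TV\;\le\;\vartheta(K)\,\|\tilde{\nu}\|_\TV+\frac{1-\vartheta(K)}{2}\,\big|\tilde{\nu}(\sX)\big|,
\end{align*}
applied directly to $\tilde{\nu}=\lambda\nu-\bar{\lambda}\mu$, whose mass is $2\lambda-1$; the factor $1-\vartheta(K)$ multiplying the mass term is exactly what makes the constants in $B_\lambda(\nu K\|\mu K)\le\vartheta(K)B_\lambda(\nu\|\mu)$ work out. The good news is that this lemma follows from the very argument you sketch for the zero-mass case, provided you apply it to the \emph{Jordan} decomposition of $\tilde{\nu}$: assuming $m_+=\tilde{\nu}_+(\sX)\ge m_-=\tilde{\nu}_-(\sX)$, write $\tilde{\nu}=m_-(P-Q)+(m_+-m_-)P$ with $P=\tilde{\nu}_+/m_+$ and $Q=\tilde{\nu}_-/m_-$ mutually singular probability measures, apply $\|PK-QK\|_\TV\le\vartheta(K)\|P-Q\|_\TV=\vartheta(K)$ to the first piece and $\|PK\|_\TV=\tfrac12$ to the second, and check that $m_-\vartheta(K)+\tfrac12(m_+-m_-)$ equals the right-hand side above. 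The decomposition $\lambda(\nu-\mu)+(2\lambda-1)\mu$ is not the Jordan decomposition of $\lambda\nu-\bar{\lambda}\mu$, which is precisely why it is lossy.
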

\begin{proof} By the integral representation \eqref{eq:information_integral}, it suffices to show that \eqref{eq:universal_SDPI_bound} holds for the statistical informations $B_\lambda(\cdot\|\cdot)$, $0 \le \lambda \le 1$. For that, we need the following {\em strong Markov contraction lemma} \cite[Lemma~3.2]{Cohen_etal_dataproc}: for any signed measure $\tilde{\nu}$ on $\sX$ and any Markov kernel $K \in \Chan(\sY|\sX)$,
	\begin{align}\label{eq:strong_Markov_contraction}
		\| \tilde{\nu} K \|_\TV \le \vartheta(K) \| \tilde{\nu} \|_\TV + \frac{1-\vartheta(K)}{2} |\tilde{\nu}(\sX)|.
	\end{align}
Let $\tilde{\nu} = \lambda \nu - \bar{\lambda} \mu$. Then $\tilde{\nu} K = \lambda \nu K - \bar{\lambda} \mu K$ and $\tilde{\nu}(\sX) = 2\lambda - 1$. Thus, using \eqref{eq:strong_Markov_contraction}, we get
\begin{align*}
	\| \lambda \nu K - \bar{\lambda} \mu K \|_\TV  \le \vartheta(K) \| \lambda \nu - \bar{\lambda} \mu \|_\TV + \frac{1-\vartheta(K)}{2}|1-2\lambda|.
\end{align*}
Therefore,
\begin{align*}
	B_\lambda(\nu K\| \mu K) &= \frac{1}{2} \| \lambda \nu K - \bar{\lambda} \mu K \|_\TV - \frac{1}{2} |1-2\lambda| \\
	&\le \vartheta(K) \cdot \left(\frac{1}{2} \| \lambda \nu - \bar{\lambda} \mu \|_\TV - \frac{1}{2}|1-2\lambda|\right) \\
	&= \vartheta(K) \cdot B_\lambda(\nu\|\mu).
\end{align*}
This establishes the bound \eqref{eq:universal_SDPI_bound}. It remains to show that this bound is achieved for $\| \cdot \|_{\TV}$.

To that end, let us first assume that $|\sX| > 2$. Let $x_0,x_1 \in \sX$ achieve the maximum in \eqref{eq:Dobrushin}, pick some $\eps_1,\eps_2,\eps \in (0,1)$ such that $\eps_1 \neq \eps_2$, $\eps_1+\eps < 1$, $\eps_2 + \eps < 1$, and consider the following probability distributions:
\begin{itemize}
	\item $\nu$ that puts the mass $1-\eps_1-\eps$ on $x_0$, $\eps_1$ on $x_1$, and distributes the remaining mass of $\eps$ evenly among the set $\sX \backslash \{x_0,x_1\}$;
	\item $\mu$ that puts the mass $1-\eps_2-\eps$ on $x_0$, $\eps_2$ on $x_1$, and distributes the remaining mass of $\eps$ evenly among the set $\sX \backslash \{x_0,x_1\}$.
\end{itemize}
Then a simple calculation gives
\begin{align*}
	\| \nu - \mu \|_\TV &= |\eps_1 - \eps_2|  \\
	\| \nu K - \mu K\|_\TV &= |\eps_1 - \eps_2| \cdot \| K(\cdot|x_0) - K(\cdot|x_1) \|_\TV \\
	&= \vartheta(K) \cdot \| \nu - \mu \|_\TV.
\end{align*}
For $|\sX|=2$, the idea is the same, except that there is no need for the extra slack $\eps$.
\end{proof}

\begin{remark} {\em Theorem~\ref{thm:Markov_contraction_bound} says that any channel $K$ with $\vartheta(K) < 1$ satisfies an SDPI for any $\Phi \in \cF$ at any reference input distribution $\mu \in \Prob(\sX)$. However, the bounds it gives  are generally loose. For example, for $K = \BSC(\eps)$ with $\eps \in (0,1)$, we have $\vartheta(K) = |1-2\eps| < 1$, so by Theorem~\ref{thm:Markov_contraction_bound}
\begin{align*}
	\eta_\Phi\left(\Bernoulli(p),\BSC(\eps)\right) \le |1-2\eps| < 1
\end{align*}
for all $\Phi \in \cF$ and all $p \in [0,1]$. However, as we know from \cite{Ahlswede_Gacs_hypercont},
\begin{align*}
	\eta\left(\Bernoulli(1/2),\BSC(\eps)\right) = (1-2\eps)^2 < |1-2\eps|.
\end{align*}
Later on, we will develop tighter bounds on SDPI constants for a broad class of $\Phi$-entropies.\hfill$\diamond$}
\end{remark}
	
\begin{remark}{\em Suppose that the channel $K \in \Chan(\sY|\sX)$ has the following property: There exist a constant $0 < \alpha \le 1$ and a probability distribution $\tilde{\mu} \in \Prob(\sY)$, such that
\begin{align}\label{eq:Doeblin}
	K(y|x) \ge \alpha \tilde{\mu}(y)
\end{align}
for all $x \in \sX$ and $y \in \sY$ (in Markov chain theory, this is known as a Doeblin minorization condition \cite[Sec.~4.3.3]{Cappe_etal_HMM}). Then $\eta_\Phi(K) \le 1-\alpha$. This bound can be proved using a nice operational argument. Indeed, if \eqref{eq:Doeblin} holds, then
\begin{align*}
	\tilde{K}(y|x) \deq \frac{K(y|x)-\alpha \tilde{\mu}(y)}{1-\alpha}, \qquad (x,y) \in \sX \times \sY
\end{align*}
defines a channel from $\sX$ to $\sY$. Let $\se$ be a special erasure symbol, and let $E_\alpha \in \Chan(\sX \cup \{\se\} | \sX)$ denote the symmetric erasure channel on $\sX$ with erasure probability $\alpha$: any input symbol $x \in \sX$ is erased with probability $\alpha$ and reproduced exactly with probability $\bar{\alpha}$. Then a simple calculation shows that $K = T \circ E_\alpha$,  where the channel $T \in \Chan(\sY|\sX \cup \{\se\})$ is defined by
\begin{align*}
	T(\cdot|x) &= \tilde{K}(\cdot|x), \qquad x \in \sX \\
	T(\cdot|\se) &= \tilde{\mu}(\cdot).
\end{align*}
In that case, for any $\mu,\nu \in \Prob(\sX)$,
\begin{align*}
	D_\Phi(\nu K \| \mu K) &= D_\Phi((\nu E_\alpha)T \| (\mu E_\alpha) T ) \\
	&\le D_\Phi(\nu E_\alpha \| \mu E_\alpha) \\
	&= D_\Phi(\bar{\alpha} \nu + \alpha \delta_\se \| \bar{\alpha} \mu + \alpha \delta_\se) \\
	&\le \bar{\alpha} D_\Phi(\nu \| \mu),
\end{align*}
where the first inequality is by the usual data processing inequality, while the second inequality is by convexity. It is not hard to show that if \eqref{eq:Doeblin} holds, then $\vartheta(K) \le 1-\alpha$.\hfill$\diamond$}
\end{remark}

\subsection{Bounds via maximal correlation}
\label{ssec:maxcorr}

For any pair $(\mu,K) \in \Prob(\sX) \times \Chan(\sY|\sX)$, the {\em maximal correlation}  is defined as
\begin{align*}
	S(\mu,K) &\deq \sup_{f,g} \E[f(X)g(Y)],
\end{align*}
where $(X,Y) \sim \mu \otimes K$, and the supremum is over all $f \in \Func(\sX)$, $g \in \Func(\sY)$ satisfying $\E[f(X)] = \E[g(Y)]=0$ and $\E[f^2(X)] = \E[g^2(Y)]=1$ (see \cite{Witsenhausen_correlation} and the references therein). The square of $S(\mu,K)$ is the SDPI constant of the pair $(\mu,K)$ for the $\chi^2$-divergence:
\begin{theorem}\label{thm:maxcorr} Consider the $\chi^2$-divergence
	\begin{align*}
		\chi^2(\nu \| \mu) = \E_\mu\left[\left(\frac{\d \nu}{\d \mu}-1\right)^2\right] \equiv \Var_\mu\left[ \frac{\d \nu}{\d \mu}\right].
	\end{align*}
Then, for any $(\mu,K) \in \Prob(\sX) \times \Chan(\sY|\sX)$,
	\begin{align*}
		\eta_{\chi^2}(\mu,K) = S^2(\mu,K).
	\end{align*}
\end{theorem}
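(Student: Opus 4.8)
The plan is to use the functional characterization of the SDPI constant from Proposition~\ref{prop:functional_SDPI} together with the explicit form of the $\chi^2$-divergence, reducing the claim to the classical variational description of maximal correlation as the second largest singular value of the conditional-expectation operator. Specializing \eqref{eq:functional_SDPI} to $\Phi(u) = (u-1)^2$ (so that $\Ent_\Phi[f(X)] = \Var[f(X)]$), and using the density-update identity $\d(\nu K)/\d(\mu K) = K^* f$, we have
\begin{align*}
	\eta_{\chi^2}(\mu,K) = \sup\left\{ \frac{\Var[K^*f(X)]}{\Var[f(X)]} \colon f \in \Func(\sX),\ \Var[f(X)] > 0 \right\},
\end{align*}
where I first note that the constraints $f \in \PFunc(\sX)$ and $\E[f(X)] = 1$ can be dropped here: the ratio is invariant under $f \mapsto af + b$ for $a \neq 0$, so one may translate and scale any nonconstant $f$ into a strictly positive density. (I would invoke homogeneity of $\Ent_{\chi^2}$ via Proposition~\ref{prop:functional_SDPI_gen_hom} with $\kappa(c) = c^2$, plus the affine-invariance just noted, to justify replacing $\PFunc(\sX)$ by all of $\Func(\sX)$.)

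Next I would rewrite the right-hand side in Hilbert-space terms. Let $T \colon L^2(\mu K) \to L^2(\mu)$ be the operator $g \mapsto Kg$, whose adjoint with respect to the inner products weighted by $\mu$ and $\mu K$ is exactly $f \mapsto K^* f$ by the defining relation \eqref{eq:backward_1}. Restricting attention to the mean-zero subspaces (which carry the variances), the quantity
\begin{align*}
	\sup_{f \colon \E[f(X)] = 0,\ f \neq 0} \frac{\E[(K^* f(Y))^2]}{\E[f(X)^2]} = \| K^* \|^2_{L^2_0(\mu) \to L^2_0(\mu K)}
\end{align*}
is the squared operator norm of $K^*$ restricted to mean-zero functions, which equals the squared operator norm of $T = Kg$ restricted to mean-zero functions (adjoints have equal norm). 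That squared norm is in turn, by the variational (Courant–Fischer) characterization, precisely $S^2(\mu, K)$: indeed $S(\mu,K) = \sup\{ \E[f(X)g(Y)] \colon \E f(X) = \E g(Y) = 0,\ \E f^2(X) = \E g^2(Y) = 1\} = \sup_{g} \| \E[g(Y)|X]\|_{L^2(\mu)} / \|g\|_{L^2(\mu K)}$ over mean-zero $g$, which is $\|T\|_{L^2_0 \to L^2_0}$.

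The one place requiring genuine care — and what I expect to be the main obstacle — is the interchange between the $\nu$-over-probability-distributions formulation and the $f$-over-all-functions formulation: one must check that enlarging the feasible set from densities $\d\nu/\d\mu$ (nonnegative, mean one) to arbitrary nonconstant $f \in \Func(\sX)$ does not change the supremum. For the $\chi^2$ case this is clean because of the exact affine invariance of the Rayleigh quotient $\Var[K^*f]/\Var[f]$ noted above, so any candidate $f$ can be replaced by $\epsilon f + (1 - \epsilon \E f)$, which is a valid density for small $\epsilon > 0$ and gives the same ratio; but I would spell this out explicitly since for general $\Phi$ this step fails. Everything else is bookkeeping: assembling the chain of equalities $\eta_{\chi^2}(\mu,K) = \|K^*\|^2_{L^2_0 \to L^2_0} = \|T\|^2_{L^2_0 \to L^2_0} = S^2(\mu,K)$ and noting the supremum is attained since $\sX,\sY$ are finite.
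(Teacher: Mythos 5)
Your proposal is correct and follows essentially the same route as the paper: reduce to the Rayleigh quotient $\Var[K^*f(Y)]/\Var[f(X)]$ via the density-update lemma, exploit affine invariance to pass to the mean-zero subspace, and identify the resulting operator norm of $K^*$ (equivalently, the top eigenvalue of $KK^*$ on mean-zero functions) with $S^2(\mu,K)$. The only differences are cosmetic — you spell out the densities-to-all-functions enlargement that the paper handles implicitly, and there is a harmless typo ($\Var[K^*f(X)]$ should read $\Var[K^*f(Y)]$).
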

\begin{remark}{\em This result has appeared in the literature in different forms (see, e.g., \cite{DelMoral_contraction}). We give a short proof for completeness.\hfill$\diamond$}
\end{remark}
\begin{proof} For this proof, it is convenient to use operator-theoretic ideas, following Witsenhausen \cite{Witsenhausen_correlation} (see also \cite{Sarmanov_maxcorr}). If we equip the space $\Func(\sX)$ with the inner product
	\begin{align*}
		\ave{f,g}_{\mu} \deq \E[f(X)g(X)], \qquad \text{where } X \sim \mu
	\end{align*}
then it becomes the Hilbert space $L^2(\sX,\mu)$; the Hilbert space $L^2(\sY,\mu K)$ is constructed in the same way. Moreover, the channels $K$ and $K^*$ become mutually adjoint linear operators $K : L^2(\sY,\mu K) \to L^2(\sX,\mu)$ and $K^* : L^2(\sX,\mu) \to L^2(\sY,\mu K)$, i.e.,
\begin{align*}
	\ave{f,Kg}_{\mu} = \ave{K^* f,g}_{\mu K}, \quad \forall f \in L^2(\sX,\mu), g \in L^2(\sY,\mu K).
\end{align*}
For $f = \d \nu/\d \mu$, we have $\chi^2(\nu \| \mu) = \Var\big[f(X)\big]$ and $\chi^2(\nu K \| \mu K) = \Var\big[K^* f(Y)\big]$. Using this together with the fact that $\Var[U+c] = \Var[U]$ for any $c \in \Reals$ and that $\E[K^* f(Y)] = \E[f(X)]$, we can write
	\begin{align*}
		\eta_{\chi^2}(\nu,K) &= \sup_{\nu \neq \mu} \frac{\chi^2(\nu K \| \mu K)}{\chi^2(\nu \| \mu)}\\ &= \sup_{f \in \cH_0(\sX)} \frac{\Var\big[K^* f(Y)\big]}{\Var\big[f(X)\big]},
	\end{align*}
	where $\cH_0(\sX)$ is the closed linear subspace of $L^2(\sX,\mu)$ consisting of all $f$ satisfying $\ave{f,1}_\mu=0$, i.e., $\E[f(X)]=0$. For any $f \in \cH_0(\sX)$,
	\begin{align*}
		\Var \big[f(X)\big] = \| f \|^2_\mu, \qquad \Var\big[K^* f(Y)\big] = \| K^* f \|^2_{\mu K}.
	\end{align*}
Since $K$ and $K^*$ are adjoint operators, we have
	\begin{align*}
		\| K^* f \|^2_{\mu K} &= \ave{K^* f, K^* f}_{\mu K} = \ave{f, KK^* f}_\mu,
	\end{align*}
which gives
	\begin{align*}
		\eta_{\chi^2}(\mu,K) = \sup_{f \in \cH_0(\sX)} \frac{\ave{f,KK^* f}_{\mu}}{\ave{f,f}_\mu}.
	\end{align*}
Moreover, $K^*$ maps $\cH_0(\sX)$ into $\cH_0(\sY)$, and $K$ maps $\cH_0(\sY)$ into $\cH_0(\sX)$. Thus, by the Courant--Fischer--Weyl minimax principle \cite{Bhatia_Matrix_Analysis}, $\eta_{\chi^2}(\nu,\mu)$ is the largest eigenvalue of the operator $KK^* : \cH_0(\sX) \to \cH_0(\sX)$. The square root of this largest eigenvalue is the largest singular value of the operator $K^* : \cH_0(\sX) \to \cH_0(\sY)$, so, by definition,
\begin{align*}
	\sqrt{\eta_{\chi^2}(\nu,\mu)} &= \sup_{f,g} \ave{K^* f,g} \\
	&= \sup_{f,g} \E[\E[f(X)|Y]g(Y)] \\
	&= \sup_{f,g} \E[f(X)g(Y)],
\end{align*}
where the supremum is over all $f \in \cH_0(\sX)$ and $g \in \cH_0(\sY)$ with $\| f \|_\mu = \| g \|_{\mu K} = 1$. This is precisely the maximal correlation $S(\mu,K)$.\end{proof}

\begin{remark}[Maximum correlation and the spectral gap]\label{rem:spectral_gap}{\em In the literature on Markov chains (see, e.g., \cite{Diaconis_Saloff_logsob,Montenegro_Tetali,Levin_Peres_Wilmer}), one often sees the following definition: given a pair $(\mu,K) \in \Prob(\sX) \times \Chan(\sX|\sX)$ such that $\mu$ is invariant w.r.t.\ $K$, i.e., $\mu = \mu K$, the {\em (absolute) spectral gap} of $K$ is equal to
	\begin{align*}
		\gamma_*(\mu,K) \deq 1-\sup_{f \in \cH_0(\sX)} \frac{\| K^* f \|_\mu}{\| f \|_\mu}
	\end{align*}
(here we are using the Hilbert space notation from the proof above). Thus, the spectral gap and the maximal correlation are related by $\gamma_*(\mu, K) = 1-S(\mu,K) = 1-\sqrt{\eta_{\chi^2}(\mu,K)}$.\hfill$\diamond$}
\end{remark}

The maximal correlation $S^2(\mu,K)$ also provides a \textit{lower bound} on the SDPI constants $\eta_\Phi(\mu,K)$ for a certain subset of $\cF$:
\begin{theorem}\label{thm:chi_2_lower_bound} For any $\Phi \in \cF$ which is three times differentiable and has $\Phi''(1) > 0$, we have
	\begin{align}
		\eta_\Phi(\mu,K) &\ge S^2(\mu,K), \label{eq:chi2_pointwise} \\
		\eta_\Phi(K) &\ge S^2(K), \label{eq:chi2_global}
	\end{align}
	where $S^2(K) \deq \sup_{\mu \in \Prob(\sX)}S^2(\mu,K)$.
\end{theorem}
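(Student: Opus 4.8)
The plan is to reduce \eqref{eq:chi2_pointwise} to the $\chi^2$-divergence, for which Theorem~\ref{thm:maxcorr} already identifies the SDPI constant as $S^2(\mu,K)$, by means of a local second-order perturbation of the reference distribution. The heuristic is that near $u=1$ every $\Phi\in\cF$ with $\Phi''(1)>0$ behaves like the positive multiple $\tfrac12\Phi''(1)(u-1)^2$ of the generator of $\chi^2$, so that the SDPI ratio along a vanishingly small perturbation of $\mu$ converges to the corresponding $\chi^2$ ratio. Concretely, I would fix an admissible pair $(\mu,K)$, let $(X,Y)\sim\mu\otimes K$, and take any nonconstant $f\in\Func(\sX)$ with $\E[f(X)]=0$ and $\Var[f(X)]=1$. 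Since $\sX$ is finite and $f$ bounded, for all small enough $\eps>0$ the function $f_\eps\deq 1+\eps f$ is strictly positive, the measure $\nu_\eps$ with $\d\nu_\eps/\d\mu=f_\eps$ lies in $\PProb(\sX)$ and differs from $\mu$, and by Lemma~\ref{lm:density_update} together with unitality of $K^*$ one has $\d(\nu_\eps K)/\d(\mu K)=K^*f_\eps=1+\eps K^*f$, with $\E[K^*f(Y)]=\E[f(X)]=0$.

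Next I would Taylor-expand both divergences about $u=1$. Using $\Phi(1+h)=\Phi(1)+\Phi'(1)h+\tfrac12\Phi''(1)h^2+o(h^2)$ as $h\to0$ (legitimate as soon as $\Phi''(1)$ exists; the remainder is uniform because $f$ and $K^*f$ take only finitely many values) together with $\E[f(X)]=0$, $\Var[f(X)]=1$, this yields
\begin{align*}
	D_\Phi(\nu_\eps\|\mu) &= \E_\mu\big[\Phi(1+\eps f(X))\big]-\Phi(1) = \tfrac12\Phi''(1)\,\eps^2 + o(\eps^2), \\
	D_\Phi(\nu_\eps K\|\mu K) &= \E_{\mu K}\big[\Phi(1+\eps K^*f(Y))\big]-\Phi(1) = \tfrac12\Phi''(1)\,\eps^2\,\Var\big[K^*f(Y)\big] + o(\eps^2).
\end{align*}
Because $\Phi''(1)>0$, the first expression is strictly positive for all small $\eps>0$, so $\eta_\Phi(\mu,K)\ge D_\Phi(\nu_\eps K\|\mu K)/D_\Phi(\nu_\eps\|\mu)=\big(\Var[K^*f(Y)]+o(1)\big)/\big(1+o(1)\big)$; letting $\eps\to0$ gives $\eta_\Phi(\mu,K)\ge\Var[K^*f(Y)]$. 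Taking the supremum over all such $f$ and invoking the variational identity $\sup_f\Var[K^*f(Y)]=\eta_{\chi^2}(\mu,K)=S^2(\mu,K)$ from the proof of Theorem~\ref{thm:maxcorr} then establishes \eqref{eq:chi2_pointwise}.

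For the global bound \eqref{eq:chi2_global} I would apply \eqref{eq:chi2_pointwise} at every $\mu\in\PProb(\sX)$ and take suprema, obtaining $\eta_\Phi(K)\ge\sup_{\mu\in\PProb(\sX)}S^2(\mu,K)$, and then pass from $\PProb(\sX)$ to $\Prob(\sX)$ by noting that $S^2(\cdot,K)$ is lower semicontinuous on $\Prob(\sX)$: it is a pointwise supremum, over nonconstant $f\in\Func(\sX)$ and $g\in\Func(\sY)$, of the correlation ratios $\Cov_\mu(f(X),g(Y)) / \sqrt{\Var_\mu[f(X)]\,\Var_{\mu K}[g(Y)]}$, each continuous in $\mu$ wherever the denominator is positive, so strictly positive distributions already attain $\sup_{\mu\in\Prob(\sX)}S^2(\mu,K)=S^2(K)$.

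I expect no serious obstacle here; the only points needing a little care are keeping the Taylor remainders uniform (immediate since $\sX$ is finite, so $f$ and $K^*f$ are bounded and take finitely many values) and the boundary passage in the last step. The hypothesis $\Phi''(1)>0$ is used precisely to prevent the limiting ratio from being an indeterminate $0/0$; three-times-differentiability is only a convenient sufficient condition for the expansion, and second-order differentiability at $u=1$ would already suffice.
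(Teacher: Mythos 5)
Your proof is correct and follows essentially the same route as the paper: a second-order Taylor expansion of $\Phi$ at $u=1$ along the perturbation $1+\eps f$ (the paper writes it as the mixture $\bar{\eps}\mu+\eps\nu$, which is the same thing with $f=\d\nu/\d\mu-1$), reducing the SDPI ratio in the limit $\eps\to 0$ to the $\chi^2$ ratio and hence to $S^2(\mu,K)$ via Theorem~\ref{thm:maxcorr}. The only differences are cosmetic; your added care about passing from $\PProb(\sX)$ to $\Prob(\sX)$ in the global supremum is a detail the paper elides.
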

\begin{remark} {\em The second bound, Eq.~\eqref{eq:chi2_global}, was proved by Cohen et al.~\cite{Cohen_etal_dataproc}, generalizing the results of Ahlswede and G\'acs \cite{Ahlswede_Gacs_hypercont} for $\Phi(u)=u\log u$. However, more or less the same proof technique also gives the distribution-dependent bound \eqref{eq:chi2_pointwise}. A recent paper of Polyanskiy and Wu \cite{PW_BayesSDPI} presents an extension of Theorem~\ref{thm:chi_2_lower_bound} to abstract alphabets.}
\end{remark}
\begin{proof} Without loss of generality, we assume that $\Phi(1) = 0$. Let us expand $\Phi$ in a Taylor series around $u=1$:
	\begin{align*}
		\Phi(u) &= \Phi(1) + \Phi'(1)(u-1)  + \frac{1}{2}\Phi''(1)(u-1)^2 + o\big((u-1)^2\big) \\
		&=  \Phi'(1)(u-1) + \frac{1}{2}\Phi''(1)(u-1)^2 + o\big((u-1)^2\big),
	\end{align*}
	where the second step uses the fact that $\Phi(1)=0$. Therefore, for any bounded real-valued random variable $U$ and any $\eps > 0$ such that $1+\eps U \ge 0$ a.s., we have
	\begin{align*}
		\Ent_\Phi[1+\eps U] = \frac{\Phi''(1)}{2}\eps^2 \Var[U] + O(\eps^3).
	\end{align*}
	Now, fix an admissible pair $(\mu,K)$. For any $\nu \neq \mu$, consider the mixture $\nu_\eps \deq \bar{\eps}\mu + \eps \nu$. Let $f = \d \nu/\d \mu - 1$. Then
	\begin{align*}
		D_\Phi(\nu_\eps \| \mu) &= \Ent_\Phi[1+\eps f(X)] \\
		&= \frac{\Phi''(1)}{2}\eps^2 \Var[f(X)] + o(\eps^2) \\
		&= \frac{\Phi''(1)}{2}\eps^2 \chi^2(\nu \| \mu) + o(\eps^2)
	\end{align*}
	and
	\begin{align*}
		D_\Phi(\nu_\eps K \| \mu K) &= \Ent_\Phi[1 + \eps K^* f(Y)] \\
		&= \frac{\Phi''(1)}{2}\eps^2 \Var[K^* f(Y)] + o(\eps^2) \\
		&= \frac{\Phi''(1)}{2}\eps^2 \chi^2(\nu K \| \mu K) + o(\eps^2),
	\end{align*}
	where in the first step we have used Lemma~\ref{lm:density_update} in the Appendix and the linearity of $K^*$. Using the fact that $\Phi''(1) > 0$, for any $\eps > 0$ we have
	\begin{align*}
		\eta_\Phi(\mu,K) &\ge \sup_{\nu \neq \mu} \frac{D_\Phi(\nu_\eps K \| \mu K)}{D_\Phi(\nu_\eps\|\mu)} \\
		&= \sup_{\nu \neq \mu} \frac{\chi^2(\nu K \| \mu K) + o(\eps)}{\chi^2(\nu \| \mu) + o(\eps)}.
	\end{align*}
	Taking the limit as $\eps \searrow 0$, we get
	\begin{align*}
		\eta_\Phi(\mu,K) \ge \sup_{\nu \neq \mu} \frac{\chi^2(\nu K \| \mu K)}{\chi^2( \nu \| \mu )} = \eta_{\chi^2}(\mu,K).
	\end{align*}
	This proves \eqref{eq:chi2_pointwise}, and \eqref{eq:chi2_global} follows after taking the supremum over all $\mu$.
\end{proof}
\noindent For example, the function $\Phi(u) = u \log u$ that induces the usual relative entropy satisfies the conditions of Theorem~\ref{thm:chi_2_lower_bound}, as does the function $\Phi(u)=(\sqrt{u}-1)^2$ that gives rise to the squared Hellinger distance.

Under additional regularity conditions on $\Phi$, we can obtain an upper bound on $\eta_\Phi$ which is proportional to the maximal correlation $S^2(\mu,K)$:

\begin{theorem}\label{thm:maxcorr_UB} Suppose that $\Phi \in \cF$ is twice differentiable, strictly convex, has a nonincreasing second derivative, and the function
\begin{align}\label{eq:first_diff_0}
	\Psi(u) \deq \frac{\Phi(u) - \Phi(0)}{u}
\end{align}
is concave. Then, for any admissible pair $(\mu,K)$,
	\begin{align}\label{eq:maxcorr_UB}
		\eta_\Phi(\mu,K) \le \frac{2\Psi'(1)}{\Phi''(1/\mu_*)} S^2(\mu,K),
	\end{align}
	where $\mu_* \deq \min_{x \in \sX} \mu(x)$ is the smallest mass of $\mu$.
\end{theorem}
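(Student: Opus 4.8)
The plan is to sandwich the two $\Phi$-entropies appearing in $\eta_\Phi(\mu,K)$ between multiples of the corresponding variances, and then to invoke Theorem~\ref{thm:maxcorr}, which identifies $\eta_{\chi^2}(\mu,K)$ with $S^2(\mu,K)$. By the functional form of the SDPI (Proposition~\ref{prop:functional_SDPI}) it suffices to prove
$$\Ent_\Phi[K^*f(Y)] \;\le\; \frac{2\Psi'(1)}{\Phi''(1/\mu_*)}\,S^2(\mu,K)\,\Ent_\Phi[f(X)]$$
for every nonconstant $f \in \PFunc(\sX)$ with $\E[f(X)]=1$, where $(X,Y)\sim\mu\otimes K$. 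I would obtain this from two one-sided variance comparisons: an \emph{upper} bound $\Ent_\Phi[U] \le \Psi'(1)\,\Var[U]$ valid for every nonnegative $U$ with $\E[U]=1$, and a \emph{lower} bound $\Ent_\Phi[f(X)] \ge \tfrac12\,\Phi''(1/\mu_*)\,\Var[f(X)]$ that uses the fact that the density $f=\d\nu/\d\mu$ is pointwise bounded by $1/\mu_*$ (since $\nu(x)=f(x)\mu(x)\le1$).

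For the upper bound, write $\Phi(u)=\Phi(0)+u\Psi(u)$ (the definition of $\Psi$), so that $\Ent_\Phi[U]=\E[\Phi(U)]-\Phi(1)=\E[U\Psi(U)]-\Psi(1)=\E\big[U(\Psi(U)-\Psi(1))\big]$, using $\E[U]=1$. Concavity of $\Psi$ gives the tangent-line estimate $\Psi(U)-\Psi(1)\le\Psi'(1)(U-1)$; multiplying by $U\ge0$ and taking expectations yields $\Ent_\Phi[U]\le\Psi'(1)\,\E[U(U-1)]=\Psi'(1)\,\Var[U]$, since $\E[U^2]-1=\Var[U]$. Applying this with $U=K^*f(Y)=\E[f(X)\mid Y]\ge0$, which has mean $1$, gives $\Ent_\Phi[K^*f(Y)]\le\Psi'(1)\,\Var[K^*f(Y)]$. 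This step — extracting exactly the factor $\Psi'(1)$ via concavity of $\Psi$ — is the one genuinely clever ingredient.

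For the lower bound, put $V:=f(X)\in[0,1/\mu_*]$ and $g(v):=\Phi(v)-\tfrac12\Phi''(1/\mu_*)(v-1)^2$. On $[0,1/\mu_*]$ one has $g''(v)=\Phi''(v)-\Phi''(1/\mu_*)\ge0$, because $\Phi''$ is nonincreasing and $\Phi''>0$ throughout $(0,\infty)$ (a zero of the nonincreasing $\Phi''$ would make $\Phi$ affine on a half-line, contradicting strict convexity), so $\Phi''(1/\mu_*)$ is finite and strictly positive. Hence $g$ is convex on $[0,1/\mu_*]$, and Jensen's inequality gives $\E[g(V)]\ge g(\E[V])=g(1)=\Phi(1)$, i.e.\ $\Ent_\Phi[f(X)]=\E[\Phi(V)]-\Phi(1)\ge\tfrac12\Phi''(1/\mu_*)\,\Var[f(X)]$.

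It then remains to chain the three facts: $\Ent_\Phi[K^*f(Y)]\le\Psi'(1)\,\Var[K^*f(Y)]\le\Psi'(1)\,S^2(\mu,K)\,\Var[f(X)]\le\frac{2\Psi'(1)}{\Phi''(1/\mu_*)}S^2(\mu,K)\,\Ent_\Phi[f(X)]$, where the middle step is Theorem~\ref{thm:maxcorr} applied to $f-1\in\cH_0(\sX)$ (variance is unchanged by adding a constant, and $K^*1=1$, so $\Var[K^*f(Y)]\le S^2(\mu,K)\,\Var[f(X)]$ holds without the mean-zero normalization), and the last step is the lower bound just proved; taking the supremum over $f$ via Proposition~\ref{prop:functional_SDPI} yields the claim. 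I expect the only real obstacle to be analytic bookkeeping rather than any further idea: checking that $\Psi$ is differentiable and concave throughout $(0,\infty)$ so the tangent-line bound applies at all values taken by $U$ (extending to $u=0$ by continuity of $\Phi$ at $0$), that $\Phi''(1/\mu_*)$ is finite and positive, and that $\Ent_\Phi[f(X)]>0$ for nonconstant $f$ by strict convexity of $\Phi$, so the denominators in $\eta_\Phi(\mu,K)$ never vanish.
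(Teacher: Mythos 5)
Your proof is correct and follows essentially the same route as the paper's: the identical three-step chain $\Ent_\Phi[K^*f(Y)]\le\Psi'(1)\Var[K^*f(Y)]\le\Psi'(1)S^2(\mu,K)\Var[f(X)]\le\frac{2\Psi'(1)}{\Phi''(1/\mu_*)}S^2(\mu,K)\Ent_\Phi[f(X)]$, combined with $\|f\|_\infty\le 1/\mu_*$. The two variance comparisons you prove inline are exactly Lemmas~\ref{lm:f_entropy_UB} and~\ref{lm:f_entropy_LB} in the Appendix (your tangent-line and convexity-of-$g$ arguments are minor, equivalent variants of the paper's change-of-measure-plus-Jensen and Taylor-remainder arguments).
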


\begin{remark}{\em It can be shown (see, e.g., \cite[Lm.~14.5]{Boucheron_etal_concentration_book}) that if $\Phi \in \cF \cap \cC$, where the function class $\cC$ is defined in Proposition~\ref{prop:LO}, then the function $\Psi$ defined in \eqref{eq:first_diff_0} is concave. For example, if $\Phi(u) = u \log u$, then $\Psi(u) = \log u$.\hfill$\diamond$}
\end{remark}

\begin{proof} Let $(X,Y)$ be a random pair with law $\mu \otimes K$. Fix any probability distribution $\nu \neq \mu$ and let $f = {\d\nu}/{\d\mu}$. Then we have the following chain of estimates:
	\begin{align}
		\Ent_\Phi[K^* f(Y)] &\le \Psi'(1) \Var[K^* f(Y)] \label{eq:maxcorr_UB_1} \\
		&\le \Psi'(1) S^2(\mu,K) \Var[f(X)] \label{eq:maxcorr_UB_2} \\
		&\le \frac{2 \Psi'(1)}{\Phi''(\| f(X) \|_\infty)} S^2(\mu,K) \Ent_\Phi[f(X)], \label{eq:maxcorr_UB_3}
	\end{align}
	where \eqref{eq:maxcorr_UB_1} is by Lemma~\ref{lm:f_entropy_UB} in Appendix~\ref{app:lemmas}, \eqref{eq:maxcorr_UB_2} is by Theorem~\ref{thm:maxcorr}, and \eqref{eq:maxcorr_UB_3} is by Lemma~\ref{lm:f_entropy_LB} in Appendix~\ref{app:lemmas}. Now, since
	\begin{align*}
		 \| f(X) \|_\infty = \left\| \frac{\d\nu}{\d\mu} \right\|_\infty \le \frac{1}{\mu_*},
	\end{align*}
we have $\Phi''(\| f(X) \|_\infty) \ge \Phi''(1/\mu_*)$. By the arbitrariness of $\nu$ (and hence $f$), we obtain \eqref{eq:maxcorr_UB}.
\end{proof}
For example, functions of the form $\Phi_p(u) = \frac{u^p-1}{p-1}$ for $1 < p \le 2$ satisfy the conditions of the theorem with $\Psi_p(u) = \frac{u^{p-1}}{p-1}$ and $\Phi''_p(u) = pu^{p-2}$. This gives the bound
\begin{align}\label{eq:p_divergence}
	\eta_{\Phi_p}(\mu,K) \le \frac{2\mu_*^{p-2}}{p} S^2(\mu,K).
\end{align}
Note that $\Phi_2(u) = u^2-1$ induces the $\chi^2$-divergence, so $\eta_{\Phi_2}(\mu,K) = \eta_{\chi^2}(\mu,K) = S^2(\mu,K)$, and in that case the bound \eqref{eq:p_divergence} holds with equality. Moreover, as $p \searrow 1$, we have $\Ent_{\Phi_p}[U] \to \Ent[U]$, and in that limit \eqref{eq:p_divergence} becomes
\begin{align}\label{eq:1_divergence}
	\eta(\mu,K) \le \frac{2}{\mu_*}S^2(\mu,K).
\end{align}
Of course, the bound \eqref{eq:p_divergence} is nontrivial only if $S^2(\mu,K) < \frac{p}{2\mu_*^{p-2}}$; similarly, the bound \eqref{eq:1_divergence} is nontrivial only if $S^2(\mu,K) < \frac{\mu_*}{2}$. As recently shown by Makur and Zheng \cite{MakurZheng_contraction}, the constant $2$ in \eqref{eq:1_divergence} can be reduced to $1$, but it is not clear how to extend their techniques to $\Phi(u) \neq u \log u$.

\subsection{Upper bounds for operator convex $\Phi$}
\label{ssec:opconv}

Theorem~\ref{thm:maxcorr_UB} gives an upper bound on the SDPI constant $\eta_\Phi(\mu,K)$ in terms of the squared maximal correlation $S^2(\mu,K)$, but this bound has a multiplicative constant that depends on $\mu$. Given the lower bound of Theorem~\ref{thm:chi_2_lower_bound}, it is natural to ask whether there is a matching upper bound without such a multiplicative constant. A partial result in this direction was obtained by Choi et al.~\cite{Choi_Ruskai_Seneta}, who showed that the equality $\eta_\Phi(K) = \eta_{\chi^2}(K) = S^2(K)$ holds for all functions $\Phi \in \cF$ that are \textit{operator convex} (see below for definitions). In this section, we will derive a \textit{distribution-dependent} upper bound on $\eta_\Phi(\mu,K)$ that implies the result of Choi et al.

In preparation for this result, we first need some facts from matrix analysis \cite{Bhatia_Matrix_Analysis}. Let $H_n$ denote the space of all $n \times n$ Hermitian matrices, and let $H_n(I)$ denote the subset of $H_n$ consisting of all matrices whose eigenvalues lie in a given finite or infinite interval $I$ of the real line. Any function $\Phi : I \to \Reals$ can be extended to a matrix-valued function $\Phi : H_n(I) \to H_n$ as follows:
\begin{itemize}
	\item if $A \in H_n(I)$ is diagonal, i.e., $A = \diag(a_1,\ldots,a_n)$ for some $a_1,\ldots,a_n \in I$, then we let
	\begin{align*}
		\Phi(A) \deq \diag \left(\Phi(a_1),\ldots,\Phi(a_n)\right).
	\end{align*}
	\item if $A \in H_n(I)$ can be diagonalized as $A = U\Lambda U^*$, where $U$ is a unitary $n\times n$ matrix and $\Lambda \in H_n(I)$ is diagonal, then we let
	\begin{align*}
		\Phi(A) \deq U \Phi(\Lambda) U^*.
	\end{align*}
\end{itemize}
We introduce the following partial order on $H_n$: given any two $A,B \in H_n$, we write $A \preceq B$ if $B - A$ is positive semidefinite. We say that a  function $\Phi : I \to \Reals$ is \textit{$n$-convex} if
\begin{align*}
	\Phi(\lambda A + (1-\lambda)B) \preceq \lambda \Phi(A) + (1-\lambda) \Phi(B)
\end{align*}
for all $A,B \in H_n(I)$ and all $\lambda \in [0,1]$. If $\Phi$ is $n$-convex for all $n \in \Naturals$, then we say that it is \textit{operator convex}. By definition, any operator convex function is \textit{a fortiori} convex in the ordinary sense, but the converse is generally not true. We are particularly interested in functions $\Phi : \Reals^+ \to \Reals$ that are operator convex; here are some examples and counterexamples \cite[Ch.~V]{Bhatia_Matrix_Analysis}:
\begin{itemize}
	\item $\Phi(u) = u \log u$ is operator convex;
	\item $\Phi(u) = u^p$ is operator convex if and only if $p \in [-1,0] \cup [1,2]$.
	\item $\Phi(u) = -u^p$ is operator convex for $0 \le p \le 1$.
\end{itemize}
In general, it is not easy to determine whether a given function is operator convex. However, there is a deep result known as \textit{Loewner's theorem} \cite{Hansen_Loewner_thm}, which shows that operator convex functions possess very special integral representations: 

\begin{theorem}\label{thm:Loewner} A function $\Phi : \Reals^+ \to \Reals$ with $\Phi(0) = 0$ is operator convex if and only if there exist some constants $\alpha \in \Reals,\beta \ge 0$ and a positive measure $\upsilon$ on $\Reals^+$ satisfying $\int^\infty_0 (1+t^2)^{-1}\upsilon(\d t) < \infty$, such that
	\begin{align}\label{eq:Loewner}
		\Phi(u) = \alpha u + \beta u^2 + \int^\infty_0 \left( \frac{tu}{1+t^2} - \frac{u}{u+t}\right) \upsilon(\d t).
	\end{align}
\end{theorem}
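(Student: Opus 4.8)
The plan is to derive Theorem~\ref{thm:Loewner} from two classical ingredients: the Bendat--Sherman correspondence between operator convexity and operator monotonicity, and the Nevanlinna (Herglotz) integral representation of Pick functions; the underlying operator theory is developed in \cite{Bhatia_Matrix_Analysis,Hansen_Loewner_thm}. The first and main step is to reduce the statement to a monotonicity question. I would show that, for $\Phi : \Reals^+ \to \Reals$ with $\Phi(0) = 0$, $\Phi$ is operator convex if and only if the function $g(u) \deq \Phi(u)/u$ is operator monotone on $(0,\infty)$. For the substantive direction one invokes the Hansen--Pedersen--Jensen inequality: a function $f$ with $f(0) \le 0$ is operator convex if and only if $f(C^* A C) \preceq C^* f(A) C$ for every Hermitian $A$ with spectrum in the domain and every contraction $C$. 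Feeding in block $2 \times 2$ contractions acting on $A \oplus B$ and using the identity $\Phi(A) = A^{1/2} g(A) A^{1/2}$ for $A \succ 0$ converts a monotonicity inequality for $g$ into a convexity inequality for $\Phi$ and conversely. This step also fixes the boundary bookkeeping: since $\Phi(0) = 0$, the measure produced below can carry no atom at $t = 0$ (an atom there would contribute a nonzero constant to $\Phi$), which is precisely why the integral in \eqref{eq:Loewner} runs over $(0,\infty)$ rather than $[0,\infty)$.

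Next I would apply the classical representation of operator monotone functions on $(0,\infty)$: such a $g$ extends to a Pick function (analytic on $\mathbb{C} \setminus (-\infty,0]$ with nonnegative imaginary part in the upper half-plane), so by the Nevanlinna representation theorem there exist $\alpha \in \Reals$, $\beta \ge 0$, and a positive measure $\upsilon$ on $(0,\infty)$ with $\int_0^\infty (1+t^2)^{-1}\upsilon(\d t) < \infty$ such that
\begin{align*}
	g(u) = \alpha + \beta u + \int_0^\infty \left( \frac{t}{1+t^2} - \frac{1}{t+u}\right) \upsilon(\d t).
\end{align*}
Multiplying through by $u$ yields $\Phi(u) = u g(u) = \alpha u + \beta u^2 + \int_0^\infty \big( \frac{tu}{1+t^2} - \frac{u}{t+u}\big)\upsilon(\d t)$, which is exactly \eqref{eq:Loewner}, with $\beta \ge 0$ and the same integrability condition carrying over verbatim.

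For the converse, I would verify directly that every $\Phi$ of the form \eqref{eq:Loewner} is operator convex: $u \mapsto u$ is operator affine; $u \mapsto u^2$ is operator convex and $\beta \ge 0$; and for each fixed $t > 0$ the kernel $u \mapsto \frac{tu}{1+t^2} - \frac{u}{u+t} = \frac{tu}{1+t^2} - 1 + \frac{t}{u+t}$ is operator convex, because $u \mapsto (u+t)^{-1}$ is operator convex (being $u^{-1}$ precomposed with an affine map), so adding a positive multiple of it to affine terms preserves operator convexity. Since operator convexity is closed under nonnegative linear combinations and pointwise limits, it passes to the integral $\int_0^\infty$ by approximating with finite Riemann sums; the condition $\int (1+t^2)^{-1}\upsilon(\d t) < \infty$ guarantees the integral is finite and the limiting argument is legitimate.

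The genuine difficulty is concentrated entirely in the first step --- the equivalence of operator convexity of $\Phi$ with operator monotonicity of $\Phi(u)/u$ --- which carries all the real operator-theoretic content and where one has to be careful about the behaviour at the endpoint $u = 0$. Once that reduction and the Nevanlinna representation are in hand, the remaining manipulations (multiplying by $u$, checking the building-block kernels, and the closure properties of operator convexity) are routine complex-analytic and measure-theoretic bookkeeping.
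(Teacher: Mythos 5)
The paper does not prove Theorem~\ref{thm:Loewner}; it is quoted as a known deep result with a citation to \cite{Hansen_Loewner_thm} (see also \cite[Ch.~V]{Bhatia_Matrix_Analysis}), so there is no in-paper argument to compare against. Your outline is the standard proof from exactly those sources --- the Bendat--Sherman/Hansen--Pedersen reduction of operator convexity of $\Phi$ with $\Phi(0)=0$ to operator monotonicity of $\Phi(u)/u$, followed by the Nevanlinna representation of the associated Pick function and multiplication by $u$ --- and the converse via the operator convexity of $u \mapsto (u+t)^{-1}$ is correctly assembled, so the proposal is sound (modulo the classical ingredients it invokes, which is appropriate for a result the paper itself treats as a black box).
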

\noindent For example, the operator convex function $\Phi(u)=u\log u$ can be represented in the form \eqref{eq:Loewner} with $\alpha = \beta = 0$ and with $\upsilon$ given by the restriction of the Lebesgue measure to $\Reals^+$ \cite[Example~V.4.18]{Bhatia_Matrix_Analysis}; the operator convex function $\Phi(u) = u^p$, $1 < p < 2$, can be represented in the form \eqref{eq:Loewner} with
$$
\alpha = \cos \frac{\pi p}{2}, \qquad \beta = 0, \qquad \upsilon(\d t) = \frac{\sin (\pi p)}{\pi} t^p \d t
$$
\cite[Example~V.4.19]{Bhatia_Matrix_Analysis}.

We also recall the definition of the Le Cam divergence with parameter $\lambda \in (0,1)$, cf.~Eq.~\eqref{eq:LC}:
\begin{align*}
	\LC_\lambda(\nu \| \mu) &= \lambda \bar{\lambda}\, \E_\mu \left[ \frac{(\d \nu/\d \mu-1)^2}{\lambda \d \nu/\d \mu +\bar{\lambda}}\right] = 1 - \E_\mu \left[ \frac{\d \nu/\d \mu}{\lambda \d \nu/\d \mu + \bar{\lambda}}\right],
\end{align*}
which is a $\Phi$-divergence with
\begin{align*}
	\Phi(u) = 1 - \frac{u}{\lambda u +\bar{\lambda}}.
\end{align*}
Note that $\LC_0(\cdot \| \cdot) = \LC_1(\cdot \| \cdot) = 0$. For $\lambda \in (0,1)$, consider the SDPI constant
\begin{align*}
	\eta_{\LC_\lambda}(\mu,K) = \sup_{\nu \neq \mu} \frac{\LC_\lambda(\nu K \| \mu K)}{\LC_\lambda(\nu\|\mu)}.
\end{align*}
Now we are in a position to state our result:

\begin{theorem}\label{thm:opconv_bounds} Suppose that $\Phi \in \cF$ is operator convex. Then
	\begin{align}\label{eq:opconv_bounds}
S^2(\mu,K) \le \eta_\Phi(\mu,K) \le \max\left(S^2(\mu,K),\sup_{0 < \lambda < 1} \eta_{\LC_\lambda}(\mu,K)\right).
	\end{align}
\end{theorem}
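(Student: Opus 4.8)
The lower bound $S^2(\mu,K) \le \eta_\Phi(\mu,K)$ is already available: any operator convex $\Phi \colon \Reals^+ \to \Reals$ is three times differentiable on $(0,\infty)$ with $\Phi''(1) > 0$ (this follows from the Loewner representation \eqref{eq:Loewner}, since the integrand $tu/(1+t^2) - u/(u+t)$ and the polynomial part are smooth and strictly convex at $u=1$ unless $\Phi$ is affine, which is excluded by operator convexity being nontrivial here), so Theorem~\ref{thm:chi_2_lower_bound} applies. Thus the real content is the upper bound, and the plan is to exploit Loewner's theorem to decompose $D_\Phi$ into pieces whose contraction we already control.

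The key step is to observe that the Loewner integral representation \eqref{eq:Loewner}, after subtracting affine terms (which do not affect the $\Phi$-divergence) and noting $\Phi(1)$ can be normalized away, writes $\Phi$ (up to an affine function and a nonnegative multiple of $u^2-1$) as a mixing of the elementary convex functions $u \mapsto -u/(u+t)$ over $t \in (0,\infty)$, plus the $\beta u^2$ term. I would first check that $-u/(u+t)$, after subtracting its value and tangent line at $u=1$, is (a positive multiple of) the $\Phi$-function generating the Le Cam divergence $\LC_\lambda$ with the reparametrization $\lambda = 1/(1+t)$, i.e.\ $\bar\lambda = t/(1+t)$; indeed $1 - u/(\lambda u + \bar\lambda) = 1 - u/\big(\tfrac{u}{1+t} + \tfrac{t}{1+t}\big) = 1 - (1+t)u/(u+t)$, which is an affine function of $-u/(u+t)$, so the divergences coincide up to a positive scalar. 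Hence, passing the integral representation through the (linear in $\Phi$) definition of $D_\Phi$, we get for every admissible $(\mu,K)$ and every $\nu$ a decomposition
\begin{align*}
	D_\Phi(\nu\|\mu) &= \beta\,\chi^2(\nu\|\mu) + \int_0^\infty c(t)\, \LC_{\lambda(t)}(\nu\|\mu)\, \upsilon(\d t),
\end{align*}
with $c(t) \ge 0$ and $\lambda(t) = 1/(1+t) \in (0,1)$, and likewise for $D_\Phi(\nu K\|\mu K)$ with $\nu,\mu$ replaced by $\nu K,\mu K$. (One must verify the integrals converge and the affine leftovers integrate to an affine function of $u$, so they contribute nothing to the divergence; the integrability hypothesis $\int (1+t^2)^{-1}\upsilon(\d t) < \infty$ and the behavior of $\LC_{\lambda}$ near the endpoints $\lambda\in\{0,1\}$ are exactly what make this work, since $\LC_0 = \LC_1 = 0$.)

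Given this, the upper bound is immediate from a ``sup of ratios'' argument: each term in the numerator $D_\Phi(\nu K\|\mu K)$ is bounded by the corresponding term in the denominator times the appropriate contraction coefficient — $\chi^2$-term by $\eta_{\chi^2}(\mu,K) = S^2(\mu,K)$ (Theorem~\ref{thm:maxcorr}), and each Le Cam term by $\eta_{\LC_{\lambda(t)}}(\mu,K) \le \sup_{0<\lambda<1}\eta_{\LC_\lambda}(\mu,K)$. Since for nonnegative reals $\sum a_i / \sum b_i \le \max_i (a_i/b_i)$ (and the analogous statement for integrals against a positive measure), we obtain $D_\Phi(\nu K\|\mu K) \le \max\big(S^2(\mu,K),\sup_\lambda \eta_{\LC_\lambda}(\mu,K)\big)\, D_\Phi(\nu\|\mu)$, and taking the supremum over $\nu$ yields \eqref{eq:opconv_bounds}. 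The main obstacle I anticipate is purely technical: justifying the interchange of the Loewner integral with the divergence functional and confirming that the affine remainders (which depend on $t$) assemble into something affine in $u$ after integration — this requires a careful accounting of the subtracted tangent lines at $u=1$ and a dominated-convergence / Tonelli argument using the stated integrability condition on $\upsilon$ together with the boundedness of $\d\nu/\d\mu$ on a finite alphabet.
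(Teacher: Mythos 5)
Your proposal is correct and follows essentially the same route as the paper: both use Loewner's integral representation, the change of variables $\lambda = 1/(1+t)$ to identify the elementary pieces $-u/(u+t)$ with (positive multiples of) the Le Cam divergences modulo affine terms, and then the mediant/``sum of ratios bounded by max of ratios'' argument term by term against $S^2(\mu,K)$ and $\sup_\lambda \eta_{\LC_\lambda}(\mu,K)$. The lower bound via smoothness of the Loewner representation and Theorem~\ref{thm:chi_2_lower_bound} is also exactly the paper's argument.
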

\begin{remark} {\em Since all explicit examples of functions in $\cC$ seem to be operator convex, it is tempting to think that all operator convex $\Phi$ are elements of the function class $\cC$ (cf.~Proposition~\ref{prop:LO}). However, this is not the case. For example, the function $\Phi(u) = (\sqrt{u}-1)^2$, which generates the Hellinger divergence, is operator convex. However, $1/\Phi''(u) = 2u^{3/2}$ is not concave, so $\Phi \not\in \cC$. \hfill$\diamond$}
\end{remark}

\begin{proof} By Loewner's theorem (Theorem~\ref{thm:Loewner}), $\Phi$ admits the integral representation \eqref{eq:Loewner}. Any $\Phi$ that can be represented in this form is infinitely differentiable and strictly convex at $u=1$. Therefore, $\eta_\Phi(\mu,K) \ge S^2(\mu,K)$ by Theorem~\ref{thm:chi_2_lower_bound}. This establishes first inequality in Eq.~\eqref{eq:opconv_bounds}.
	
Now we prove the second inequality in \eqref{eq:opconv_bounds}. First, let us rewrite \eqref{eq:Loewner} as
\begin{align*}
	\Phi(u) &= \beta u^2 - \int^\infty_0 \frac{u}{u+t}\upsilon(\d t) + A(u),
\end{align*}
where $A(u)$ is an affine function. A change of variables $\lambda = \frac{1}{t+1}$ gives
\begin{align}\label{eq:Loewner_transformed}
	\Phi(u) &= \beta u^2 - \int^1_0  \frac{\lambda u}{\lambda u + \bar{\lambda}}\Upsilon(\d\lambda) + A(u),
\end{align}
where $\Upsilon$ is some positive measure on $[0,1]$. Since any two elements of $\cF$ that differ by an affine function determine the same divergence, Eq.~\eqref{eq:Loewner_transformed} allows us to express $D_\Phi(\nu \| \mu)$ as
	\begin{align*}
		D_\Phi(\nu \| \mu) &= \beta \chi^2(\nu \| \mu) + \int^1_0 \lambda \LC_\lambda(\nu \| \mu) \Upsilon(\d\lambda)
	\end{align*}
The same holds for $\nu K$ and $\mu K$, so
\begin{align*}
	 D_\Phi(\nu K \| \mu K)
	 &= \beta \chi^2(\nu K \| \mu K) + \int^1_0 \lambda \LC_\lambda(\nu K \| \mu K) \Upsilon(\d\lambda) \\
	&\le \beta S^2(\mu,K)\chi^2(\nu \| \mu) + \int^1_0 \lambda \eta_{\LC_\lambda}(\mu,K)\LC_\lambda(\nu \| \mu)\Upsilon(\d\lambda) \\
	&\le \max\left(S^2(\mu,K),\sup_{0 < \lambda < 1} \eta_{\LC_\lambda}(\mu,K)\right) \cdot D_\Phi(\nu \| \mu).
\end{align*}
\end{proof}
We can now recover the result of Choi et al.~\cite{Choi_Ruskai_Seneta} as a corollary:
\begin{corollary}\label{cor:Choi} Suppose that $\Phi \in \cF$ is operator convex. Then
	\begin{align*}
		\eta_\Phi(K) = S^2(K)
	\end{align*}
	for any discrete channel $K$.
\end{corollary}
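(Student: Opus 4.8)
The plan is to deduce the corollary from the upper bound of Theorem~\ref{thm:opconv_bounds} by reducing the Le Cam SDPI constants to the $\chi^2$ case. First, the easy direction: taking $\sup_{\mu\in\PProb(\sX)}$ in the first inequality of \eqref{eq:opconv_bounds} immediately gives $\eta_\Phi(K)\ge S^2(K)$. For the reverse, take $\sup_\mu$ in the second inequality of \eqref{eq:opconv_bounds}; since a supremum of maxima equals the maximum of the suprema, this yields
\[
	\eta_\Phi(K)\le\max\Big(S^2(K),\ \sup_{0<\lambda<1}\eta_{\LC_\lambda}(K)\Big),
\]
where $\eta_{\LC_\lambda}(K)\deq\sup_{\mu\in\PProb(\sX)}\eta_{\LC_\lambda}(\mu,K)$. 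So the entire problem reduces to showing that $\eta_{\LC_\lambda}(K)\le S^2(K)$ for each fixed $\lambda\in(0,1)$.

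The key point I would exploit is that the Le Cam divergence, unlike a generic $\Phi$-divergence, becomes an honest $\chi^2$-divergence once the reference measure $\mu$ is replaced by the $\lambda$-mixture $m\deq\lambda\nu+\bar\lambda\mu$. A short change-of-reference-measure computation --- writing $r=\d\nu/\d\mu$, so that $\d m/\d\mu=\lambda r+\bar\lambda$ and $\d\nu/\d m=r/(\lambda r+\bar\lambda)$ --- yields the identity $\LC_\lambda(\nu\|\mu)=\tfrac{\lambda}{\bar\lambda}\,\chi^2(\nu\,\|\,\lambda\nu+\bar\lambda\mu)$. Since $K$ acts linearly on signed measures, $mK=\lambda\,\nu K+\bar\lambda\,\mu K$ is precisely the corresponding $\lambda$-mixture of the output distributions, so the same identity applied at the output gives $\LC_\lambda(\nu K\|\mu K)=\tfrac{\lambda}{\bar\lambda}\,\chi^2(\nu K\|mK)$. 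Dividing, the Le Cam ratio for $(\mu,K)$ equals the $\chi^2$-ratio $\chi^2(\nu K\|mK)/\chi^2(\nu\|m)$ for the pair $(m,K)$.

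From here the argument closes quickly. The pair $(m,K)$ is admissible whenever $(\mu,K)$ is, because $\bar\lambda>0$ forces $m\ge\bar\lambda\mu>0$ and $mK\ge\bar\lambda\,\mu K>0$; moreover $\nu\neq\mu$ forces $\nu\neq m$, so the denominator $\chi^2(\nu\|m)$ is strictly positive. Therefore, by Theorem~\ref{thm:maxcorr},
\[
	\frac{\LC_\lambda(\nu K\|\mu K)}{\LC_\lambda(\nu\|\mu)}=\frac{\chi^2(\nu K\|mK)}{\chi^2(\nu\|m)}\le\eta_{\chi^2}(m,K)=S^2(m,K)\le S^2(K).
\]
Taking $\sup_{\nu\neq\mu}$ and then $\sup_{\mu\in\PProb(\sX)}$ gives $\eta_{\LC_\lambda}(K)\le S^2(K)$ for every $\lambda\in(0,1)$, hence $\eta_\Phi(K)\le S^2(K)$; combined with the lower bound, this proves the corollary.

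I do not anticipate any serious obstacle. The only computation is the identity $\LC_\lambda(\nu\|\mu)=\tfrac{\lambda}{\bar\lambda}\,\chi^2(\nu\|\lambda\nu+\bar\lambda\mu)$, which is a one-liner, and the two admissibility/positivity checks are trivial given $\lambda\in(0,1)$. The \emph{conceptual} content is simply the recognition that this identity, together with the fact that channels preserve convex mixtures, replaces the $\mu$-dependent constant of Theorem~\ref{thm:maxcorr_UB} by the $\mu$-free quantity $S^2(K)$ --- exactly the matching upper bound anticipated after Theorem~\ref{thm:chi_2_lower_bound}.
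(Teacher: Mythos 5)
Your proof is correct and follows essentially the same route as the paper: reduce $\eta_{\LC_\lambda}$ to a $\chi^2$-ratio with respect to the mixture $\lambda\nu+\bar\lambda\mu$, use that $K$ preserves convex mixtures, and invoke Theorem~\ref{thm:maxcorr}. The only (cosmetic) difference is that you use the single-term identity $\LC_\lambda(\nu\|\mu)=\tfrac{\lambda}{\bar\lambda}\chi^2(\nu\|\lambda\nu+\bar\lambda\mu)$, whereas the paper writes $\LC_\lambda$ as the two-term convex combination $\lambda\chi^2(\nu\|\lambda\nu+\bar\lambda\mu)+\bar\lambda\chi^2(\mu\|\lambda\nu+\bar\lambda\mu)$; these are equivalent since $\chi^2(\mu\|\lambda\nu+\bar\lambda\mu)=(\lambda/\bar\lambda)^2\chi^2(\nu\|\lambda\nu+\bar\lambda\mu)$.
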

\begin{remark} {\em Since $\Phi(u) = u \log u$ is operator convex, this is a broad generalization of a result of Ahlswede and G\'acs \cite[Thm.~8]{Ahlswede_Gacs_hypercont}. It should be emphasized that Corollary~\ref{cor:Choi} does not mean that $\eta_\Phi(\mu,K) = S^2(\mu,K)$ for a given input distribution $\mu \in \PProb(\sX)$; however, this may be the case for specific choices of $\mu$ and $K$, as we show in the example after the proof.\hfill$\diamond$}
\end{remark}
\begin{proof} It suffices to show that
	\begin{align*}
		\sup_{0 < \lambda < 1}\eta_{\LC_\lambda}(\mu,K) \le S^2(K).
	\end{align*}
To that end, we first note that the Le Cam divergence $\LC_\lambda(\nu\|\mu)$ can be written as a convex combination of two $\chi^2$-divergences:
\begin{align*}
	\LC_\lambda(\nu \| \mu) = \lambda \chi^2(\nu \| \lambda \nu +\bar{\lambda} \mu) + \bar{\lambda} \chi^2(\mu \| \lambda \nu +\bar{\lambda}\mu).
\end{align*}
From this, it follows that
\begin{align*}
	\LC_\lambda(\nu K \| \mu K) 
	&= \lambda \chi^2(\nu K \| \lambda \nu K +\bar{\lambda} \mu K) + \bar{\lambda} \chi^2(\mu K \| \lambda \nu K +\bar{\lambda} \mu K) \\
	&\le  S^2(K) \left[\lambda\chi^2(\nu \| \lambda \nu + \bar{\lambda}\mu) + \bar{\lambda}\chi^2(\mu \| \lambda \nu + \bar{\lambda}\mu)\right] \\
	&= S^2(K) \LC_\lambda(\nu \| \mu).
\end{align*}
\end{proof}

\begin{example} {\em Let $\mu = \Bernoulli(1/2)$ and $K = \BSC(\eps)$. For any $q \neq 1/2$ and $\nu = \Bernoulli(q)$, we have $\mu K = \Bernoulli(1/2)$ and $\nu K = \Bernoulli(q \star \eps)$. Moreover,
	\begin{align*}
		\chi^2(\nu \| \mu) &= \chi^2(\Bernoulli(q)\|\Bernoulli(1/2)) = (1-2q)^2
	\end{align*}
	and
	\begin{align*}
		\chi^2(\nu K \| \mu K) &= \chi^2(\Bernoulli(q\star\eps)\|\Bernoulli(1/2)) =(1-2(q\star\eps))^2 =(1-2\eps)^2(1-2q)^2.
	\end{align*}
	Therefore,
	\begin{align}
		\eta_{\chi^2}(\mu,K) \equiv S^2(\mu,K) = (1-2\eps)^2. \label{eq:BSC_chi_2_ratio}
	\end{align}
	Moreover, for any $\lambda \in (0,1)$,
	\begin{align*}
		\LC_\lambda(\nu \| \mu) &= \LC_\lambda(\Bernoulli(q)\|\Bernoulli(1/2)) \\
		&= 1 - \frac{1}{2}\left(\frac{2q}{2\lambda q + \bar{\lambda}}+\frac{2\bar{q}}{2\lambda\bar{q}+\bar{\lambda}}\right) \\
		&= \frac{\lambda\bar{\lambda}(1-2q)^2}{1-\lambda^2(1-2q)^2} 
	\end{align*}
	and
	\begin{align*}
		\LC_\lambda(\nu K \| \mu K) &= \LC_\lambda(\Bernoulli(q\star\eps)\|\Bernoulli(1/2)) \\
		&= \frac{\lambda\bar{\lambda}(1-2(q\star\eps))^2}{1-\lambda^2(1-2(q\star\eps))^2} \\
		&= \frac{\lambda\bar{\lambda}(1-2\eps)^2(1-2q)^2}{1-\lambda^2(1-2\eps)^2(1-2q)^2}.
	\end{align*}
	Both of these divergences are invariant with respect to the transformation $q \mapsto 1-q$, so 
	\begin{align}\label{eq:BSC_LC_ratio}
		\eta_{\LC_\lambda}(\mu,K) &= \sup_{0 \le q < 1/2} \frac{(1-2\eps)^2(1-\lambda^2(1-2q)^2)}{1-(1-2\eps)^2\lambda^2(1-2q)^2} =(1-2\eps)^2,
	\end{align}
	where the supremum is achieved at $q=1/2$ (but not at any $q \neq 1/2$). (As an aside, it is not hard to show that the expression under the supremum in \eqref{eq:BSC_LC_ratio} is a concave function of $q$.) Comparing Eqs.~\eqref{eq:BSC_chi_2_ratio} and \eqref{eq:BSC_LC_ratio}, we see that
	\begin{align*}
		\eta_{\chi^2}(\mu,K) = \sup_{0 < \lambda < 1} \eta_{\LC_\lambda}(\mu,K) = (1-2\eps)^2.
	\end{align*}
	Therefore, by Theorem~\ref{thm:opconv_bounds},
	\begin{align*}
		\eta_\Phi(\Bernoulli(1/2),\BSC(\eps)) = (1-2\eps)^2
	\end{align*}
	for all operator convex $\Phi \in \cF$.
}
\end{example}

\subsection{Upper bounds via subgaussian concentration and information-transportation inequalities}
\label{ssec:info_transport_bounds}

Fix an admissible pair $(\mu,K) \in \PProb(\sX) \times \Chan(\sY|\sX)$, and let $(X,Y)$ be a random pair with probability law $\mu \otimes K$. We expect the SDPI constant $\eta(\mu,K)$ to be small if the channel output $Y$ of $K$ is nearly independent of the channel input $X \sim \mu$. In this section, we present upper bounds on $\eta(\mu,K)$ that capture this intuition in terms of the properties of the posterior likelihood ratio
\begin{align}\label{eq:PLR}
	a(x,y) \deq \frac{\d P_{X|Y=y}}{\d P_X}(x) = \frac{K^*(x|y)}{\mu(x)}.
\end{align}
Theorems \ref{thm:subgaussian} and \ref{thm:info_transport} quantify near-independence by looking at how tightly the random variable $a(X,y)$ concentrates around its expected value $1$ for each fixed $y$. Moreover, Theorem~\ref{thm:info_transport} shows a connection between SDPI for the relative entropy and \textit{information-transportation inequalities} introduced in the pioneering work of Marton \cite{Marton_blowup,Marton_dbar}.

First, we collect some preliminaries. A real-valued random variable $U$ is called \textit{subgaussian with parameter $v$} (or $v$-subgaussian) if $\E[e^{t(U-\E U)}] \le e^{vt^2/2}$ for all $t \in \Reals$ \cite[Sec.~2.3]{Boucheron_etal_concentration_book}. For any $v$-subgaussian random variable $U$ we have the tail estimate
$$
\PP(|U-\E U| \ge t) \le 2e^{-t^2/2v}, \qquad \forall t \in \Reals.
$$
To get the tightest such bound, we define the \textit{subgaussian constant}
\begin{align*}
	\sigma^2(U) \deq \inf \left\{ v \ge 0 : \E[e^{t(U-\E U)}] \le e^{vt^2/2}, t \in \Reals\right\}.
\end{align*}
With these definitions in place, we have the following theorem:
\begin{theorem}\label{thm:subgaussian} For each $y \in \sY$, let $\sigma^2(y) \deq \sigma^2\left(a(X,y)\right)$. Then
\begin{align}\label{eq:subgaussian_bound}
	\eta(\mu,K) \le 2\, \E[\sigma^2(Y)].
\end{align}
\end{theorem}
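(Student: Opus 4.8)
The plan is to use the functional characterization of $\eta(\mu,K)$ from Proposition~\ref{prop:functional_SDPI} (applied to $\Phi(u) = u\log u$), which reduces the claim to bounding $\Ent[K^*f(Y)]$ from above by $2\,\E[\sigma^2(Y)]\cdot\Ent[f(X)]$ for every nonconstant $f \in \PFunc(\sX)$ with $\E[f(X)]=1$. The key observation is that $K^*f(y) = \E[f(X)\mid Y=y] = \E_\mu[a(X,y)f(X)]$, so that $K^*f(y)$ is the expectation of $f(X)$ under the tilted measure $P_{X\mid Y=y}$, whose likelihood ratio with respect to $\mu$ is exactly the concentrated random variable $a(X,y)$ studied in the statement. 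I would therefore work one value of $y$ at a time, estimating $\Phi(K^*f(y)) - \Phi(\E f(X))$ (after integrating against $P_Y$) and eventually assembling $\Ent[K^*f(Y)]$.

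First I would reduce to the case $\E[f(X)] = 1$ (legitimate by the homogeneity-free form of Proposition~\ref{prop:functional_SDPI}), so that $K^*f(y) = \E_\mu[a(X,y)f(X)]$ and $\Ent[K^*f(Y)] = \E[\,(K^*f(Y))\log (K^*f(Y))\,]$. The heart of the argument is a pointwise-in-$y$ inequality: for fixed $y$, write $f(X) = 1 + g(X)$ with $\E_\mu[g(X)] = 0$; then $K^*f(y) = 1 + \E_\mu[a(X,y)g(X)] = 1 + \Cov_\mu(a(X,y),g(X))$, and one wants to control $\Phi(K^*f(y))$ in terms of $\E_\mu[\Phi(f(X))]$ and the subgaussian parameter $\sigma^2(y)$. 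The natural device is the Gibbs variational principle: for any $y$, $\log K^*f(y) = \log \E_\mu[a(X,y)f(X)] = \log \E_\mu[e^{\log a(X,y)}f(X)]$, but more directly, since $a(\cdot,y)$ is a probability density against $\mu$, the quantity $\E_\mu[a(X,y)f(X)\log(\cdot)]$ can be handled via the Donsker–Varadhan/entropy duality $\E_{Q}[h] \le \log\E_\mu[e^{h}] + D(Q\|\mu)$ with $Q = P_{X\mid Y=y}$ — the term $D(P_{X\mid Y=y}\,\|\,\mu)$ is precisely what integrates against $P_Y$ to give the mutual information $I(X;Y)$. Alternatively, and I think more cleanly, I would use the subgaussian hypothesis as follows: by the definition of $\sigma^2(y)$, the centered variable $a(X,y)-1$ is $\sigma^2(y)$-subgaussian under $\mu$, and then a standard transport-entropy / duality estimate bounds $\E[\,\psi(K^*f(Y))\,] - \psi(1)$ (with $\psi(u) = u\log u$) by $2\,\E[\sigma^2(Y)]$ times $\Ent[f(X)]$; this is exactly the Bobkov–Götze / Marton-type passage from subgaussian concentration to a transportation inequality to a modified-log-Sobolev/SDPI bound, and Theorem~\ref{thm:info_transport} (promised in the excerpt) presumably formalizes the intermediate transportation step.

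The main obstacle — and the place where the factor $2$ and the precise shape $2\,\E[\sigma^2(Y)]$ come from — is the pointwise estimate relating $\Phi(K^*f(y))$ to the subgaussian constant. Concretely, one must show that for each $y$, $\E_\mu[a(X,y)f(X)\log f(X)] - K^*f(y)\log K^*f(y)$ is controlled, after averaging over $y$, and the clean way is to bound $K^*f(y)\log K^*f(y) - \E_\mu[a(X,y)\, f(X)\log f(X)] \le (\text{something})\cdot\sigma^2(y)\cdot\Ent[\cdot]$ using convexity of $u\log u$ together with the subgaussian moment generating function bound $\E_\mu[e^{t(a(X,y)-1)}] \le e^{\sigma^2(y)t^2/2}$; the two $\tfrac12$'s in the Gaussian exponent combine across the forward and the backward step to produce the factor $2$. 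I expect the routine but delicate part to be justifying the interchange of $\sup$ over $f$ with the averaging over $y$ and checking that the degenerate/boundary cases ($f$ constant, $a(X,y)\equiv 1$ for some $y$, $\sigma^2(y) = 0$) cause no trouble; these I would dispatch quickly at the end.
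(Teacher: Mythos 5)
You have the right raw materials --- the identity $K^*f(y)-1=\Cov\bigl(a(X,y),f(X)\bigr)$, the Donsker--Varadhan duality, and the subgaussian moment generating function bound --- but the proposal never assembles them into a proof, and the one step that makes everything click is missing. The paper's argument begins by bounding the output entropy by the output \emph{variance}: since $\E[K^*f(Y)]=1$, Lemma~\ref{lm:f_entropy_UB} (entropy $\le$ variance for unit-mean nonnegative variables, applied with $\Psi(u)=\log u$) gives
\begin{align*}
\Ent[K^*f(Y)] \;\le\; \Var[K^*f(Y)] \;=\; \sum_{y\in\sY}\mu K(y)\,\bigl|\Cov\bigl(a(X,y),f(X)\bigr)\bigr|^2 .
\end{align*}
Only after this reduction does the subgaussian hypothesis bite: Donsker--Varadhan with $U=f(X)$ and $Z=\pm t\,(a(X,y)-1)$ gives $|\Cov(a(X,y),f(X))|\le \sigma^2(y)t/2+\Ent[f(X)]/t$, and optimizing over $t>0$ yields $|\Cov|\le\sqrt{2\sigma^2(y)\Ent[f(X)]}$; squaring and averaging over $y$ produces exactly $2\,\E[\sigma^2(Y)]\,\Ent[f(X)]$. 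Your proposal instead tries to control $\Phi(K^*f(y))$ directly, and at the decisive moment asserts that ``a standard transport-entropy/duality estimate'' bounds $\Ent[K^*f(Y)]$ by $2\,\E[\sigma^2(Y)]\,\Ent[f(X)]$ --- but that is the statement of the theorem, not a derivation. The Bobkov--G\"otze-type duality only controls the \emph{linear} functional $K^*f(y)-1$, and you still need the entropy-versus-variance comparison to convert that first-moment bound into a bound on the entropy $\Ent[K^*f(Y)]$. (Note also that Theorem~\ref{thm:info_transport} is deduced \emph{from} this theorem in the paper, so it cannot serve as the intermediate step.)

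Two smaller points. First, your initial application of Donsker--Varadhan with $Q=P_{X|Y=y}$ heads in the wrong direction: it produces $D(P_{X|Y=y}\|\mu)$, hence $I(X;Y)$, on the right-hand side, whereas the SDPI requires $D(\nu\|\mu)=\Ent[f(X)]$ there; the correct tilting is by $\nu$ (i.e.\ by $f$), with the subgaussian variable $a(\cdot,y)$ playing the role of the test function in the exponent. Second, your account of the factor $2$ (``the two $\tfrac12$'s combine across the forward and the backward step'') is not where it comes from: it arises from $\inf_{t>0}\bigl(\sigma^2(y)t/2+\Ent[f(X)]/t\bigr)=\sqrt{2\sigma^2(y)\Ent[f(X)]}$, followed by the squaring inside the variance.
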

\begin{proof} Fix any $\nu \in \Prob(\sX)$ and let $f = {\d\nu}/{\d\mu}$. Observe that $\E[a(X,y)]=\E[f(X)]=1$. Then
	\begin{align}
		D(\nu K \| \mu K) &=
		\Ent[K^*f(Y)] \nonumber\\
		&\le \Var[K^*f(Y)] \nonumber\\
		&= \sum_{y \in \sY} \mu K(y) \left( K^*f(y) - 1\right)^2 \nonumber\\
		&= \sum_{y \in \sY} \mu K(y) \left|\sum_{x \in \sX} \mu(x)\left[a(x,y)f(x)-1\right]\right|^2 \nonumber\\
		&= \sum_{y \in \sY} \mu K(y) \left| \Cov\left( a(X,y), f(X)\right)\right|^2, \label{eq:cov_est_0}
	\end{align}
	where the inequality is by Lemma~\ref{lm:f_entropy_UB} in Appendix~\ref{app:lemmas}. Next, we make use of the fact that
	\begin{align}\label{eq:Donsker_Varadhan}
		\Ent[U] &\ge \E[UZ] - \E[U] \log \E[e^Z]
	\end{align}
for any random variable $Z$ jointly distributed with $U$ and satisfying $\E[e^Z] < \infty$ (see, e.g., \cite[Thm.~4.13]{Boucheron_etal_concentration_book}; in fact, this bound holds with equality for $Z = \log U$). If we fix an arbitrary $y \in \sY$ and then use \eqref{eq:Donsker_Varadhan}  with $U = f(X)$ and $Z = \pm t\left(a(X,y)-1\right)$ for some $t > 0$, we get
\begin{align*}
	\left|\Cov\left(a(X,y),f(X)\right)\right| 
	&\le \frac{1}{t} \left(\log \E[e^{t\left(a(X,y)-1\right)}] + \Ent[f(X)]\right)  \\
	&\le \frac{\sigma^2(y)t}{2} + \frac{\Ent[f(X)]}{t}.
\end{align*}
Since this holds for an arbitrary $t$, we have
\begin{align*}
	\left|\Cov\left(a(X,y),f(X)\right)\right| \le \inf_{t > 0} \left\{ \frac{\sigma^2(y)t}{2} + \frac{\Ent[f(X)]}{t} \right\} = \sqrt{2 \sigma^2(y) \Ent[f(X)]}.
\end{align*}
Using this estimate in \eqref{eq:cov_est_0}, we get \eqref{eq:subgaussian_bound}.
\end{proof}

In order to apply Theorem~\ref{thm:subgaussian}, we need to compute or upper-bound the subgaussian constant $\sigma^2(a(X,y))$ for each $y \in \sY$. In some situtations, it is possible to derive exact expressions for subgaussian constants (as we show in the examples below); when the function $x \mapsto a(x,y)$ is Lipschitz for each $y \in \sY$, one can derive upper bounds using \textit{information-transportation inequalities} introduced in the pioneering work of Marton \cite{Marton_blowup,Marton_dbar} (see, e.g., the text of Villani \cite{Villani_topics}). If we endow the input alphabet $\sX$ with a metric $d$, then we can define the {\em $L^1$ Wasserstein distance} (or \textit{optimal transportation distance}) on $\Prob(\sX)$ by
\begin{align*}
	W_{1}(\mu,\nu) \deq \inf \left\{ \E[d(X,\bar{X})] : P_{X\bar{X}} \in \Prob(\sX \times \sX), P_X = \mu, P_{\bar{X}} = \nu\right\}
\end{align*}
For example, for the trivial metric $d(x,x') = \1\{x \neq x'\}$ we recover the total variation distance: $W_1(\mu,\nu) = \| \mu - \nu \|_\TV$. Given a function $f : \sX \to \Reals$, denote by
\begin{align*}
	\delta(f) \deq \sup_{x,x' \in \sX \atop x \neq x'} \frac{|f(x)-f(x')|}{d(x,x')}
\end{align*}
the \textit{oscillation} (or the \textit{Lipschitz norm}) of $f$ w.r.t.\ the metric $d$.

\begin{theorem}\label{thm:info_transport} Fix an admissible pair $(\mu,K) \in \PProb(\sX) \times \Chan(\sY|\sX)$. Suppose that $\mu$ satisfies an {\em information-transportation inequality} with constant $c > 0$, i.e.,
	\begin{align}\label{eq:info_transport}
		W_1(\nu,\mu) \le \sqrt{2 c\, D(\nu \| \mu)}, \qquad \forall \nu \neq \mu.
	\end{align}
Then
\begin{align}\label{eq:info_transport_bound}
	\eta(\mu,K) \le 2 c\, \E \left[ \delta^2\left( a(\cdot,Y)\right)\right].
\end{align}
\end{theorem}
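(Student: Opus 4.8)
The plan is to run the opening of the proof of Theorem~\ref{thm:subgaussian} verbatim and then replace its Donsker--Varadhan/subgaussian estimate on the covariance by a transportation estimate. Fix an arbitrary $\nu \in \Prob(\sX)$, set $f = \d\nu/\d\mu$, and let $(X,Y) \sim \mu \otimes K$, with $a(\cdot,\cdot)$ the posterior likelihood ratio of \eqref{eq:PLR}. Exactly as in the chain leading to \eqref{eq:cov_est_0}, Lemma~\ref{lm:f_entropy_UB} gives
\[
D(\nu K \| \mu K) \;=\; \Ent[K^*f(Y)] \;\le\; \Var[K^*f(Y)] \;=\; \sum_{y \in \sY} \mu K(y)\,\big|\Cov\big(a(X,y), f(X)\big)\big|^2 ,
\]
so it suffices to bound $\Cov(a(X,y), f(X))$ for each fixed $y$.

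The key observation is that, because $f = \d\nu/\d\mu$ and $\E[f(X)] = \E[a(X,y)] = 1$, the Lipschitz function $g \deq a(\cdot,y)$ satisfies
\[
\Cov\big(a(X,y), f(X)\big) \;=\; \E_\mu[g(X)f(X)] - \E_\mu[g(X)] \;=\; \E_\nu[g] - \E_\mu[g].
\]
Now I would invoke Kantorovich--Rubinstein duality, $W_1(\nu,\mu) = \sup_{\delta(h) \le 1}|\E_\nu[h] - \E_\mu[h]|$, which yields $|\Cov(a(X,y),f(X))| \le \delta(a(\cdot,y))\, W_1(\nu,\mu)$; squaring and applying the hypothesis \eqref{eq:info_transport} gives $|\Cov(a(X,y),f(X))|^2 \le 2c\,\delta^2(a(\cdot,y))\, D(\nu\|\mu)$.

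Substituting this into the display above and pulling the common factor $2c\,D(\nu\|\mu)$ out of the sum over $y$ gives
\[
D(\nu K\|\mu K) \;\le\; 2c\,D(\nu\|\mu)\sum_{y\in\sY}\mu K(y)\,\delta^2\big(a(\cdot,y)\big) \;=\; 2c\,\E\big[\delta^2(a(\cdot,Y))\big]\,D(\nu\|\mu);
\]
dividing by $D(\nu\|\mu)$ and taking the supremum over $\nu \neq \mu$ yields \eqref{eq:info_transport_bound}. There is no serious obstacle in the argument itself: the genuinely new ingredient over Theorem~\ref{thm:subgaussian} is the single duality step, which trades the subgaussian control of $a(X,y)$ for the transportation inequality on $\mu$. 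The only point needing a word of care is bookkeeping — the oscillation $\delta(\cdot)$ and the Wasserstein distance in \eqref{eq:info_transport} must be taken with respect to the \emph{same} metric $d$ on $\sX$, so that the duality bound and the hypothesis are compatible; in applications the remaining work is simply to estimate $\E[\delta^2(a(\cdot,Y))]$ for the channel at hand.
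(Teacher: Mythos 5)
Your proof is correct, and it takes a genuinely different route from the paper's. The paper keeps Theorem~\ref{thm:subgaussian} as a black box and only supplies the missing ingredient, namely a bound on the subgaussian constants: by the Bobkov--G\"otze theorem, the transportation inequality \eqref{eq:info_transport} is \emph{equivalent} to the statement that every $1$-Lipschitz $h$ satisfies $\E_\mu[e^{t(h(X)-\E h(X))}]\le e^{ct^2/2}$, whence $\sigma^2\big(a(X,y)\big)\le c\,\delta^2\big(a(\cdot,y)\big)$, and \eqref{eq:info_transport_bound} follows by substitution into \eqref{eq:subgaussian_bound}. You instead reopen the proof of Theorem~\ref{thm:subgaussian} down to the covariance decomposition \eqref{eq:cov_est_0} and bound each term directly: the identity $\Cov\big(a(X,y),f(X)\big)=\E_\nu[a(\cdot,y)]-\E_\mu[a(\cdot,y)]$ (valid since $\E[a(X,y)]=\E[f(X)]=1$ and $f=\d\nu/\d\mu$) combined with Kantorovich--Rubinstein duality and the hypothesis \eqref{eq:info_transport} gives $|\Cov|^2\le 2c\,\delta^2\big(a(\cdot,y)\big)D(\nu\|\mu)$, and summing over $y$ finishes the argument. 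Your version is more elementary and self-contained --- it avoids both Bobkov--G\"otze and the Donsker--Varadhan step entirely, using only the $W_1$ duality formula (which on a finite metric space is standard). What the paper's route buys is the intermediate estimate $\sigma^2(y)\le c\,\delta^2\big(a(\cdot,y)\big)$, which cleanly exhibits Theorem~\ref{thm:info_transport} as a corollary of Theorem~\ref{thm:subgaussian} and is what makes the later comparison of the two bounds in the binary examples transparent. Your closing caveat --- that $\delta(\cdot)$ and $W_1$ must refer to the same metric $d$ --- is the right one, and is the only compatibility condition the argument needs.
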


\begin{proof} By a result of Bobkov and G\"otze \cite{Bobkov_Goetze}, a probability measure $\mu \in \cP(\sX)$ satisfies \eqref{eq:info_transport} if and only if
\begin{align}\label{eq:BG_subgauss}
	\E_\mu\left[ e^{t \left(f(X) - \E f(X)\right)}\right] \le e^{\frac{ct^2}{2}}, \qquad t \in \Reals
\end{align}
for every $f \in \Func(\sX)$ with $\delta(f) \le 1$. In particular, if \eqref{eq:info_transport} holds, then, by rescaling \eqref{eq:BG_subgauss}, we get
\begin{align*}
	\E_\mu\left[ e^{t \left(a(X,y) - 1\right)}\right] \le \exp\left(\frac{c \delta^2\left(a(\cdot,y)\right)t^2}{2}\right).
\end{align*}
This implies that $\sigma^2\big(a(X,y)\big) \le c\delta^2\left(a(\cdot,y)\right)$. Substituting this into \eqref{eq:subgaussian_bound}, we get \eqref{eq:info_transport_bound}.
\end{proof}

\begin{example}[Binary symmetric channels with asymmetric inputs] {\em Let $\sX = \sY = \{0,1\}$, $\mu = \Bernoulli(p)$, $K = \BSC(\eps)$. We take the trivial metric $d(x,x') = \1\{x \neq x'\}$. In this case, Theorems \ref{thm:subgaussian} and \ref{thm:info_transport}  give the same bound. Indeed, by a result of Ordentlich and Weinberger \cite{Ordentlich_Weinberger_Pinsker}, $\mu = \Bernoulli(p)$ satisfies an information-transportation inequality
	\begin{align}\label{eq:DD_Pinsker}
		\| \nu - \mu \|_\TV \le \sqrt{2c(p)\, D(\nu \| \mu)}, \qquad c(p) \deq \frac{p-\bar{p}}{2(\log p - \log \bar{p})},
	\end{align}
and the constant in front of the relative entropy is optimal, i.e., 
\begin{align*}
 \inf_\nu \frac{D(\nu \| \mu)}{\| \nu - \mu \|^2_\TV} = \frac{1}{2c(p)}.
\end{align*}
[The inequality \eqref{eq:DD_Pinsker} is a distribution-dependent refinement of Pinsker's inequality, where we fix $\mu$ and vary only $\nu$.] A simple calculation gives
\begin{align*}
	\delta\left(a(\cdot,0)\right) & = \left|\frac{1-2\eps}{1-\eps \star p}\right|, \qquad \delta\left(a(\cdot,1)\right) = \left|\frac{1-2\eps}{\eps \star p}\right|,
\end{align*}
where $\eps \star p = \eps \bar{p} + \bar{\eps} p$. Therefore, applying Theorem~\ref{thm:info_transport}, we get the bound
\begin{align}\label{eq:BSC_info_transport_bound}
	\eta\left(\Bernoulli(p),\BSC(\eps)\right) \le  \frac{2c(p)(1-2\eps)^2}{(1-\eps \star p)(\eps \star p)},
\end{align}
This bound is, unfortunately, loose. Indeed, if we take the limit $p \searrow 1/2$, then we get
\begin{align}\label{eq:FC_BSC_info_transport_bound}
\eta\left(\Bernoulli(1/2),\BSC(\eps)\right) \le 2(1-2\eps)^2.
\end{align}
which is off by a factor of $2$, but still tighter than the Dobrushin contraction bound $|1-2\eps|$  (Theorem~\ref{thm:Markov_contraction_bound}) in the range $1/4 < \eps < 3/4$. Figure~\ref{fig:BSC_max_eta} shows a plot of the maximum value of the right-hand side of \eqref{eq:BSC_info_transport_bound} over $p$ for each fixed value of the crossover probability $\eps$; from this, we see that the bound is nontrivial (i.e., takes values strictly smaller than $1$) for $\eps \gtrsim 0.156$.

\begin{figure}[htb]
	\centerline{\includegraphics[width=0.5\textwidth]{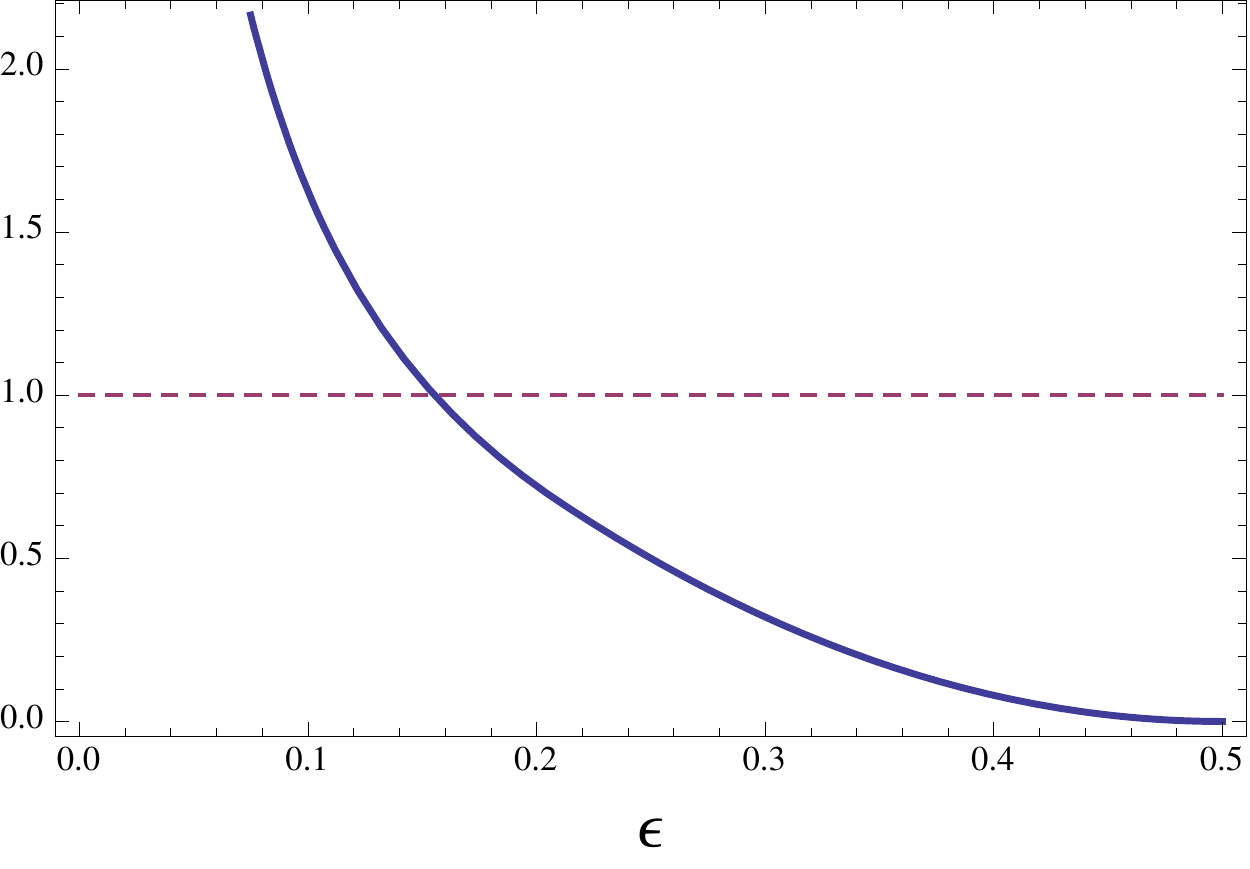}}
	\caption{\label{fig:BSC_max_eta} Maximum value of the right-hand side of \eqref{eq:BSC_info_transport_bound} over $p \in [0,1]$ for each fixed $\eps$.}
\end{figure}

In order to apply Theorem~\ref{thm:subgaussian}, we need to know the subgaussian constants of $a(X,y)$, $y \in \{0,1\}$. By a result of Bobkov et al.~\cite{Bobkov_Houdre_Tetali_subgaussian}, for any function $f : \{0,1\} \to \Reals$ and for $X \sim \Bernoulli(p)$ we have
\begin{align}\label{eq:BHT}
	2\sigma^2\left(f(X)\right) &= 2c(p) \left| f(0) - f(1)\right|^2.
\end{align}
Applying \eqref{eq:BHT} to $f=a(\cdot,0)$ and $a(\cdot,1)$, we get
\begin{align*}
	2\sigma^2(0) = 2c(p) \left|\frac{1-2\eps}{1-\eps \star p}\right|^2,  \qquad
	2\sigma^2(1) = 2c(p) \left|\frac{1-2\eps}{\eps \star p}\right|^2,
\end{align*}
and indeed Theorem~\ref{thm:subgaussian} gives the same bound \eqref{eq:BSC_info_transport_bound}.
}
\end{example}

\begin{example}[Binary input channels]{\em Let $\sX = \{0,1\}$ with $\mu = \Bernoulli(p)$, and consider an arbitrary channel $K \in \Chan(\sY|\sX)$ with a finite (not necessarily binary) output alphabet $\sY$. Then
	\begin{align*}
		a(x,y) = \frac{K^*(x|y)}{\mu(x)} = \frac{K(y|x)}{\mu K(y)},
	\end{align*}
where $\mu K(y) = \bar{p} K(y|0) + p K(y|1)$. If we again take $d$ to be the trivial metric, then the same analysis as in the previous example can be used to show that
\begin{align*}
	2\sigma^2\left(y\right) &= 2c(p) \frac{|K(y|0) - K(y|1)|^2}{\mu K(y)^2},
\end{align*}
and Theorem~\ref{thm:subgaussian} gives the bound
\begin{align*}
	\eta\left(\Bernoulli(p),K\right) &\le 2c(p) \sum_{y \in \sY} \frac{|K(y|0) - K(y|1)|^2}{\bar{p} K(y|0) + p K(y|1)}.
\end{align*}
}
\end{example}

\begin{example}[Random walk on a graph] {\em Consider a connected undirected graph $G = (\sV,\sE)$ without self-loops or multiple edges, and let $\sX = \sY = \sV$. If the vertices $x$ and $y$ are connected by an edge, we shall write $x \leftrightarrow y$; the degree of a vertex $x$ is defined as $\deg_G(x) \deq |\left\{ y \in \sV: x \leftrightarrow y \right\}|$. Define a probability measure $\mu = \mu_G \in \cP(\sV)$ by
	\begin{align*}
		\mu_G(x) \deq \frac{\deg_G(x)}{2|\sE|}, \qquad x \in \sV.
	\end{align*}
Fix a parameter $\eps \in (0,1)$, and consider a channel $K^{(\eps)}_G$ with
	\begin{align}\label{eq:GRW}
		K^{(\eps)}_G(y|x) &= \begin{cases}
		\bar{\eps}, & \text{if } x=y \\
		\dfrac{\eps}{\deg_G(x)}, & \text{if } x \leftrightarrow y \\
		0, & \text{otherwise}
	\end{cases}.
	\end{align}
Again, let $d$ be the trivial metric, $d(x,x') = \1\{x\neq x'\}$. Then $W_1(\nu,\mu) = \| \nu - \mu \|_\TV$, and we can take $c = 1/4$ in \eqref{eq:info_transport}, which is then just Pinsker's inequality. It is not hard to show that $K^{(\eps)}_G$ is \textit{reversible} w.r.t.\ $\mu_G$, i.e.,
\begin{align*}
	\mu_G(x)K^{(\eps)}_G(y|x) = \mu_G(y)K^{(\eps)}_G(x|y), \qquad \forall x,y \in \sV.
\end{align*}
Therefore, $\mu_G K^{(\eps)}_G = \mu_G$, so the posterior likelihood ratio is given by
\begin{align*}
	a(x,y) = \frac{K^{(\eps)}_G(y|x)}{\mu_G(y)} = \frac{2|\sE|}{\deg_G(y)} K^{(\eps)}_G(y|x).
\end{align*}
Now, from the definition \eqref{eq:GRW} of $K^{(\eps)}_G$ it follows that
\begin{align*}
	\left|K^{(\eps)}_G(y|x) - K^{(\eps)}_G(y|x')\right| &= \begin{cases}
	\left|\bar{\eps} - \frac{\eps}{\deg_G(x')}\1\{x' \leftrightarrow y\}\right|, & \text{if $x=y$} \\
	\left|\bar{\eps} - \frac{\eps}{\deg_G(x)}\1\{x \leftrightarrow y\}\right|, & \text{if $x'=y$} \\
	\frac{\eps}{\deg_G(x)}, & \text{if $x \leftrightarrow y,\, x' \not\leftrightarrow y$} \\
	\frac{\eps}{\deg_G(x')}, & \text{if $x \not\leftrightarrow y,\, x' \leftrightarrow y$} \\
	\eps\left| \frac{1}{\deg_G(x)} - \frac{1}{\deg_G(x')}\right|, &\text{if $x \leftrightarrow y,\, x' \leftrightarrow y$} \\
	0, &\text{if $x \not\leftrightarrow y,\, x' \not\leftrightarrow y$}
\end{cases}
\end{align*}
where $x \not\leftrightarrow y$ means that $x$ and $y$ are not connected by an edge and that $x \neq y$. Therefore,
\begin{align*}
	\delta^2\left(a(\cdot,y)\right) &= \frac{4|\sE|^2}{\deg_G(y)}\max_{x,x' \in \sV} \left| K^{(\eps)}_G(y|x) - K^{(\eps)}_G(y|x')\right|^2 \\
	&= \frac{4|\sE|^2}{\deg_G(y)^2} \Big(\Delta_0(y,\eps) \vee \Delta_1(y,\eps) \vee \Delta_2(y,\eps)\Big),
\end{align*}
where
\begin{subequations}\label{eq:Deltas}
	\begin{align}
	\Delta_0(y,\eps) &\deq \max_{x \in \sV\backslash\{y\}} \left(\frac{\eps}{\deg_G(x)}\right)^2 \1\{\deg_G(y) < |\sV| - 1 \} \\
	\Delta_1(y,\eps) &\deq \max_{x \in \sV\backslash\{y\}} \left| \bar{\eps} - \frac{\eps}{\deg_G(x)}\1\{x \leftrightarrow y\}\right|^2 \\
	\Delta_2(y,\eps) &\deq \max_{x,x' \in \sV\backslash\{y\}} \eps^2\left|\frac{1}{\deg_G(x)} - \frac{1}{\deg_G(x')}\right|^2 \1\{x \leftrightarrow y,\, x' \leftrightarrow y\}.
\end{align}
\end{subequations}
Theorem~\ref{thm:info_transport} then gives the bound
\begin{align}\label{eq:GRW_eta_bound}
	\eta\left(\mu_G,K^{(\eps)}_G\right) \le |\sE| \sum_{y \in \sV} \frac{\Delta_0(y,\eps) \vee \Delta_1(y,\eps) \vee \Delta_2(y,\eps)}{\deg_G(y)}
\end{align}
(note that $\deg_G(y) > 0$ for each $y$, since $G$ is connected).

For example, if $G$ is a complete graph, then $\Delta_0(y,\eps) = \Delta_2(y,\eps) = 0$ for all $y$, while
$$
\Delta_1(y,\eps) = \left(1 - \frac{|\sV|}{|\sV|-1}\eps\right)^2,  \qquad y \in \sV
$$
so we get the bound
\begin{align}\label{eq:complete_graph}
	\eta\left(\mu_G,K^{(\eps)}_G\right) \le \frac{|\sV|^2}{2} \left(1 - \frac{|\sV|}{|\sV|-1}\eps\right)^2,
\end{align}
which is nontrivial (i.e., strictly smaller than unity) in the range
\begin{align*}
	\frac{\left(|\sV|-1\right)\left(|\sV|-\sqrt{2}\right)}{|\sV|^2} < \eps <\frac{\left(|\sV|-1\right)\left(|\sV|+\sqrt{2}\right)}{|\sV|^2}.
\end{align*}
For the complete graph on the two-point set $\sV = \{0,1\}$, the channel $K^{(\eps)}_G$ is just $\BSC(\eps)$, and the bound \eqref{eq:complete_graph} reduces to \eqref{eq:FC_BSC_info_transport_bound}.

As another example, let $G$ be the path graph on the ternary vertex set $\sV =\{0,1,2\}$, i.e., $\sE = \big\{ \{0,1\}, \{1,2\} \big\}$. Then $\mu_G(0) = \mu_G(2) = 1/4$ and $\mu_G(1) = 1/2$. From \eqref{eq:Deltas}, we get
\begin{align*}
	\Delta_0(y,\eps) &= \begin{cases}
	\eps^2, & y \in \{0,2\} \\
	0, & y = 1
\end{cases} \\
\Delta_1(y,\eps) &= \begin{cases}
\left(1-\frac{3\eps}{2}\right)^2, & y \in \{0,2\} \\
(1-2\eps)^2, & y = 1
\end{cases} \\
\Delta_2(y,\eps) &= 0, \qquad y \in \{0,1,2\}.
\end{align*}
Substituting this into \eqref{eq:GRW_eta_bound}, we get
\begin{align}\label{eq:2path_eta}
	\eta\left(\mu_G,K^{(\eps)}_G\right) \le \begin{cases}
	13 \eps^2 - 16 \eps + 5, & 0 \le \eps \le 0.4 \\
	8 \eps^2 - 4 \eps + 1, & 0.4 \le \eps \le 1
\end{cases}.
\end{align}
\begin{figure}[htb]
	\centerline{\includegraphics[width=0.5\textwidth]{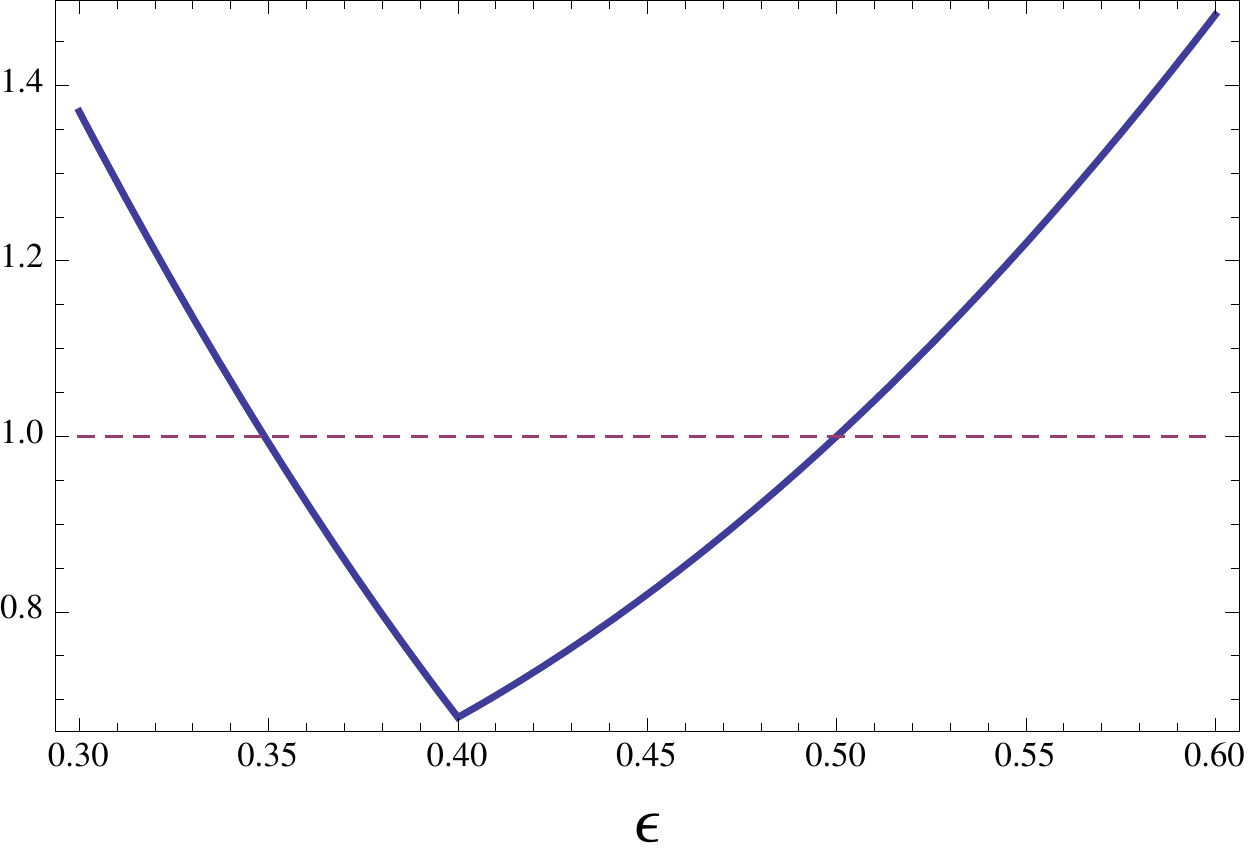}}
	\caption{\label{fig:2path} The bound of Eq.~\eqref{eq:2path_eta} as a function of the noise parameter $\eps$.}
\end{figure}
This bound, plotted in Figure~\ref{fig:2path}, is nontrivial only in the range $\frac{8-2\sqrt{3}}{13} < \eps < \frac{1}{2}$.
}
\end{example}

\begin{example}[General discrete channel] {\em Consider arbitrary finite alphabets $\sX$ and $\sY$, together with an admissible pair $(\mu,K) \in \PProb(\sX) \times \Chan(\sY|\sX)$. If we endow $\sX$ with the trivial metric $d(x,x') = \1\{x \neq x'\}$, then $\mu$ will satisfy the information-transportation inequality \eqref{eq:info_transport} for $W_1(\nu,\mu) = \| \nu - \mu \|_\TV$ with optimal ($\mu$-dependent) constant $c(\beta_\mu)$, where the function $c(\cdot)$ is defined in \eqref{eq:DD_Pinsker}, and
	\begin{align*}
		\beta_\mu \deq \min \left\{ \mu(\sA) : \sA \subseteq \sX,\, \mu(\sA) \ge 1/2 \right\}
	\end{align*}
is a measure of ``imbalance'' of $\mu$ --- in particular, when $\mu$ is the uniform distribution on $\sX$ and $|\sX|$ is even, $\beta_\mu = 1/2$. Again, this is just the distribution-dependent refinement of Pinsker's inequality \cite{Ordentlich_Weinberger_Pinsker}. Then
\begin{align*}
	\delta^2\left(a(\cdot,y)\right) &= \max_{x,x' \in \sX} \left| \frac{K^*(x|y)}{\mu(x)} - \frac{K^*(x'|y)}{\mu(x')}\right|^2 \\
	&= \frac{1}{\mu K(y)^2}\max_{x,x' \in \sX} \left| K(y|x) - K(y|x')\right|^2 \\
	&= \frac{1}{\mu K(y)^2} \delta^2\left( K(y|\cdot)\right),
\end{align*}
so Theorem~\ref{thm:info_transport} gives the bound
\begin{align}\label{eq:eta_general_channel}
	\eta(\mu,K) \le 2c(\beta_\mu) \sum_{y \in \sY} \frac{\delta^2\left(K(y|\cdot)\right)}{\mu K(y)}.
\end{align}
}

\end{example}

In general, the bounds of Theorems \ref{thm:subgaussian} and \ref{thm:info_transport} are nontrivial only for channels that are ``sufficiently noisy,'' in the sense that the posterior likelihood ratio \eqref{eq:PLR} is nearly constant as a function of the input symbol $x$ for any fixed output symbol $y$. In particular, the function $x \mapsto a(x,y)$ is constant for each $y \in \sY$ if and only if the output of $K$ is independent of the input, i.e., if $\eta(\mu,K) = 0$. However, these bounds may be useful for capturing the \textit{scaling} of the SDPI constant $\eta(\mu,K)$ with various parameters of the problem. To the best of our knowledge, the first bound on $\eta(\mu,K)$ in terms of a certain concentration property of the posterior likelihood ratio is due to Weitz \cite{Weitz_thesis} (see also \cite{Martinelli_etal_Glauber_trees}), and can be stated in our notation as follows: 
\begin{align}\label{eq:Weitz_bound}
	\eta(\mu,K) \le \left(\frac{c}{(\mu K)_*}\right)^2 \E[\tau(Y)],
\end{align}
where $c > 0$ is some numerical constant, $(\mu K)_* = \min_{y \in \sY}\mu K(y)$, and
\begin{align*}
	\tau(y) \deq \inf \left\{ t \ge 0 : \PP\left(\left|a(X,y)-1\right| > t\right) \le e^{-2/t}\right\}.
\end{align*}
Since the function $t \mapsto e^{-2/t}$ is increasing, converges to $1$ as $t \to \infty$, and to $0$ as $t \searrow 0$, the quantity $\tau(y)$ should be very close to zero for the bound \eqref{eq:Weitz_bound} to be nontrivial. In contrast to the bounds of Theorems~\ref{thm:subgaussian} and \ref{thm:info_transport}, which involve quantities pertaining to \textit{large deviations} of $a(X,y)$ from its mean, Weitz's bound is in terms of a quantity that has to do with \textit{small deviations} of $a(X,y)$ from its mean.

\subsection{Tensorization}

So far, we have considered the case of a single channel. However, many problems in information theory involve multiple uses of the same channel (or, more generally, transmission of correlated data over a memoryless channel with time-varying transition probabilities). In this context, it is of interest to determine whether the resulting ``super-channel'' inherits any SDPI-type behavior from the constituent channels. 

In precise terms, let $(\mu_1,K_1),\ldots,(\mu_n,K_n)$ be $n$ admissible pairs, where, for each $i$, $\mu_i \in \PProb(\sX_i)$ and $K_i \in \Chan(\sY_i|\sX_i)$ for some alphabets $\sX_i,\sY_i$. Fix some $\Phi \in \cF$, a product distribution $\mu = \mu_1 \otimes \ldots \otimes \mu_n \in \PProb(\sX_1 \times \ldots \times \sX_n)$, and a product channel $K = K_1 \otimes \ldots \otimes K_n \in \Chan(\sY_1 \times \ldots \times \sY_n|\sX_1 \times \ldots \times \sX_n)$. We say that the SDPI constant $\eta_\Phi(\mu,K)$ \textit{tensorizes} if
\begin{align*}
	\eta_\Phi(\mu,K) = \max_{1 \le i \le n} \eta_\Phi(\mu_i,K_i).
\end{align*}
For instance, Witsenhausen \cite{Witsenhausen_correlation} showed that $\eta_{\chi^2}(\mu,K)$ tensorizes, while a recent paper by Anantharam et al.~\cite{Anantharam_etal_HGR} presents two different proofs of the tensorization property of $\eta(\mu,K)$. In each case, the proof relies on specific properties of the underlying $\Phi$ --- Witsenhausen exploits the connection between $\eta_{\chi^2}(\mu,K)$ and the eigenvalues of the linear operator $KK^* : L^2(\sX,\mu) \to L^2(\sX,\mu)$, whereas Anantharam et al.\ use the chain rule for the relative entropy. The question is, can one give a unified proof of tensorization for a broader class of functions $\Phi \in \cF$ that contains both $\Phi(u) = (u-1)^2$ and $\Phi(u) = u \log u$? As we show next, the answer is `yes' for all functions $\Phi$ whose $\Phi$-entropies are subadditive and homogeneous in the sense of Definition~\ref{def:gen_hom}.

\begin{theorem}[Tensorization]\label{thm:tensorization_1} Suppose that $\Phi \in \cF$ induces a subadditive and homogeneous $\Phi$-entropy.  Consider any $n$ admissible pairs $(\mu_i,K_i) \in \PProb(\sX_i) \times \Chan(\sY_i|\sX_i)$. Then
	\begin{align*}
		\eta_\Phi(\mu_1 \otimes \ldots \otimes  \mu_n, K_1 \otimes \ldots \otimes K_n) = \max_{1 \le i \le n} \eta_\Phi(\mu_i,K_i).
	\end{align*}
\end{theorem}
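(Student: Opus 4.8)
The plan is to prove the two inequalities separately, deriving the ``$\ge$'' direction from a single-coordinate embedding argument and the ``$\le$'' direction from the functional form of the SDPI together with subadditivity and homogeneity of $\Ent_\Phi$. For the lower bound, fix an index $i$ and a nonconstant $f_i \in \PFunc(\sX_i)$. Lift it to the product space by setting $f(x_1,\ldots,x_n) \deq f_i(x_i)$, a function that depends only on the $i$-th coordinate. Under $\mu = \mu_1 \otimes \cdots \otimes \mu_n$ and $K = K_1 \otimes \cdots \otimes K_n$, the backward channel factorizes as $K^* = K_1^* \otimes \cdots \otimes K_n^*$ (by \eqref{eq:backward_2} applied coordinatewise), so $K^* f$ depends only on the $i$-th output coordinate and equals $K_i^* f_i$ there. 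Hence $\Ent_\Phi[f(X^n)] = \Ent_\Phi[f_i(X_i)]$ and $\Ent_\Phi[K^* f(Y^n)] = \Ent_\Phi[K_i^* f_i(Y_i)]$, so by the variational formula \eqref{eq:functional_SDPI_gen_hom} the ratio for $f$ recovers the single-coordinate ratio, giving $\eta_\Phi(\mu,K) \ge \eta_\Phi(\mu_i,K_i)$; taking the maximum over $i$ yields the lower bound.

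For the upper bound, write $c \deq \max_{1 \le i \le n}\eta_\Phi(\mu_i,K_i)$ and let $(X^n,Y^n) \sim \mu \otimes K$. Take any nonconstant $f \in \PFunc(\sX^n)$; by homogeneity we may normalize so that Proposition~\ref{prop:functional_SDPI_gen_hom} applies directly to unnormalized $f$. The goal is the entropy-production inequality $\Ent_\Phi[f(X^n)] \le \frac{1}{1-c}\,\E[\Ent_\Phi[f(X^n)|Y^n]]$, equivalently $\Ent_\Phi[K^* f(Y^n)] \le c\,\Ent_\Phi[f(X^n)]$. First apply subadditivity \eqref{eq:subadditivity} to the function $g \deq K^* f$ on the product space $\sY_1 \times \cdots \times \sY_n$ with the independent coordinates $Y_1,\ldots,Y_n$ (independent because $\mu$ and $K$ are products): this bounds $\Ent_\Phi[K^* f(Y^n)]$ by $\sum_{i=1}^n \E[\Ent_\Phi[K^* f(Y^n)\mid Y^{\backslash i}]]$. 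Now fix $i$ and condition on $Y^{\backslash i} = y^{\backslash i}$: in the inner $\Phi$-entropy only $Y_i$ is random, so one is looking at a one-step SDPI in the $i$-th coordinate with the other output coordinates frozen. The key structural fact to establish is that, because of the product structure, conditioning the joint law on $Y^{\backslash i}$ leaves the conditional joint law of $(X_i, Y_i)$ of the form $\tilde\mu_i \otimes K_i$ for a distribution $\tilde\mu_i$ that is absolutely continuous w.r.t.\ $\mu_i$ — in fact, because $K_i$ does not depend on $x^{\backslash i}$ and the $K_j$, $j \ne i$, act only on their own coordinates, one can check that the relevant conditional backward kernel is still $K_i^*$ computed against $\mu_i$. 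Applying $\eta_\Phi(\mu_i,K_i) \le c$ in the form \eqref{eq:functional_SDPI_gen_hom} to the appropriate coordinate slice of $K^* f$ then gives $\E[\Ent_\Phi[K^* f(Y^n)\mid Y^{\backslash i}]] \le c\,\E[\Ent_\Phi[(\text{slice})\mid Y^{\backslash i}]]$, and summing over $i$, combined with subadditivity of $\Ent_\Phi$ applied this time on the $\sX$-side (to reconstruct $\Ent_\Phi[f(X^n)]$ from the coordinatewise conditional entropies), closes the chain: $\Ent_\Phi[K^* f(Y^n)] \le c\sum_i \E[\cdots] \le c\,\Ent_\Phi[f(X^n)]$.

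The main obstacle is the bookkeeping in the middle step: verifying that after conditioning on $Y^{\backslash i}$ the remaining structure in the $i$-th coordinate is genuinely governed by the single pair $(\mu_i,K_i)$ — i.e., that the contraction constant $c$ applies uniformly over all these conditional problems — and, symmetrically, making sure that the two uses of subadditivity (once on the output side to decompose $\Ent_\Phi[K^*f(Y^n)]$, once on the input side to recompose $\Ent_\Phi[f(X^n)]$) line up on the same coordinate index with matching conditional expectations. One clean way to organize this is to introduce, for each $i$, the partially averaged function $f_i^{y^{\backslash i}}(x_i) \deq \E[f(X^n)\mid X_i = x_i, Y^{\backslash i} = y^{\backslash i}]$ or a close variant, apply the single-pair SDPI to it, and then invoke the law of total $\Phi$-entropy \eqref{eq:total_entropy} and subadditivity to telescope; the homogeneity hypothesis is what guarantees that these intermediate functions, which need not be normalized, can still be fed into the variational characterization of $\eta_\Phi(\mu_i,K_i)$ via \eqref{eq:functional_SDPI_gen_hom}. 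Everything else is routine once this coordinatewise reduction is set up correctly.
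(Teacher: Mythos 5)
Your lower bound is fine and is essentially the paper's argument (the paper perturbs $\mu$ in one coordinate; you lift a one-coordinate function, which is the same computation in functional form). The upper bound, however, has a genuine gap at the very last step. Your chain is: subadditivity on the output side gives $\Ent_\Phi[K^*f(Y^n)] \le \sum_i \E\big[\Ent_\Phi[K^*f(Y^n)\,|\,Y^{\backslash i}]\big]$; the coordinatewise SDPI plus the convexity property \eqref{eq:phi_convexity} then give $\E\big[\Ent_\Phi[K^*f(Y^n)\,|\,Y^{\backslash i}]\big] \le \eta_\Phi(\mu_i,K_i)\,\E\big[\Ent_\Phi[f(X^n)\,|\,X^{\backslash i}]\big]$ --- all of this is correct. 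But to close you need
\begin{align*}
\sum_{i=1}^n \E\big[\Ent_\Phi[f(X^n)\,|\,X^{\backslash i}]\big] \;\le\; \Ent_\Phi[f(X^n)],
\end{align*}
which is the \emph{reverse} of the subadditivity hypothesis \eqref{eq:subadditivity} and is false in general. Already for $\Phi(u)=u^2$ and $f(x_1,x_2)=x_1x_2$ with independent centered $X_1,X_2$ one has $\sum_i \E[\Var[f|X^{\backslash i}]] = 2\Var[f(X_1,X_2)]$. So after your two valid steps you are left with two quantities, $\Ent_\Phi[K^*f]$ and $\Ent_\Phi[f]$, each bounded above by (a multiple of) the same sum of leave-one-out conditional entropies, and no comparison between them follows.

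The fix is to abandon the leave-one-out decomposition in favor of a sequential one built on the \emph{exact} law of total $\Phi$-entropy \eqref{eq:total_entropy}. For $n=2$ (the general case follows by induction), condition on $Y_1$ alone: $\Ent_\Phi[K^*f(Y_1,Y_2)] = \E\big[\Ent_\Phi[K^*f(Y_1,Y_2)|Y_1]\big] + \Ent_\Phi\big[\E[K^*f(Y_1,Y_2)|Y_1]\big]$. The first term is a genuine one-coordinate SDPI problem for $(\mu_2,K_2)$ applied to the partially averaged function $f_1=(K_1^*\otimes\id_2)f$, and the convexity form \eqref{eq:phi_convexity} of subadditivity converts the result into $\eta_2\,\E\big[\Ent_\Phi[f(X_1,X_2)|X_1]\big]$; the second term is a one-coordinate SDPI problem for $(\mu_1,K_1)$ applied to $g_2(x_1)=\E[f(x_1,X_2)]$ and is bounded by $\eta_1\,\Ent_\Phi\big[\E[f(X_1,X_2)|X_1]\big]$. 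These two pieces recombine \emph{exactly} into $\Ent_\Phi[f(X_1,X_2)]$ by \eqref{eq:total_entropy} applied on the input side, which is where your version loses: the identity replaces the false superadditivity you would need. Note also that subadditivity enters only once, through \eqref{eq:phi_convexity}, not twice.
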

\begin{proof} For the sake of brevity, let $\eta = \eta_\Phi(\mu_1 \otimes \ldots \otimes  \mu_n, K_1 \otimes \ldots \otimes K_n)$, $\eta_i = \eta_\Phi(\mu_i,K_i)$, $\mu = \mu_1 \otimes \ldots \otimes \mu_n$, and $K = K_1 \otimes \ldots \otimes K_n$.
	
	To show that $\eta \ge \eta_i$ for all $i$, take $\nu \in \Prob(\sX_1 \times \ldots \times \sX_n)$ of the form $\mu_1 \otimes \ldots \otimes \mu_{i-1} \otimes \nu_i \otimes \mu_{i+1} \otimes \ldots \otimes \mu_n$ for some $\nu_i \in \Prob(\sX_i) \backslash \{\mu_i\}$. Then
	\begin{align*}
		& D_\Phi(\nu \| \mu) = D_\Phi(\nu_i \| \mu_i), \\
		& D_\Phi(\nu K \| \mu K) = D_\Phi(\nu_iK_i \| \mu_iK_i).
	\end{align*}
Taking the supremum of $\frac{D_\Phi(\nu K \| \mu K)}{D_\Phi(\nu \| \mu)}$ over all such $\nu$, we conclude that $\eta \ge \eta_i$.

For the reverse inequality $\eta \le \max_{1 \le i \le n} \eta_i$, it suffices to consider the case $n=2$; the general case will follow by induction. Thus, let us fix two admissible pairs $(\nu_i,K_i) \in \PProb(\sX_i) \times \Chan(\sY_i|\sX_i)$, $i=1,2$, and an arbitrary nonconstant function $f \in \PFunc(\sX_1 \times \sX_2)$. Let $(X_1,X_2,Y_1, Y_2) \in \sX_1 \times \sX_2 \times \sY_1 \times \sY_2$ be a random tuple, such that
\begin{align*}
	P_{X_1X_2} = \mu_1 \otimes \mu_2, \qquad P_{Y_1Y_2|X_1X_2} = K_1 \otimes K_2.
\end{align*}
Then, from \eqref{eq:total_entropy},
\begin{align*}
	\Ent_\Phi\big[K^*f(Y_1,Y_2)\big] = \E\left[ \Ent_\Phi\big[K^*f(Y_1,Y_2)\big|Y_1\big]\right] + \Ent_\Phi\big[\E[K^*f(Y_1,Y_2)|Y_1]\big].
\end{align*}
Define the functions $f_1 \in \PFunc(\sY_1 \times \sX_2)$ and $f_2 \in \PFunc(\sX_1 \times \sY_2)$ by
\begin{align*}
	f_1(y_1,x_2) &= \sum_{x_1 \in \sX_1} P_{X_1|Y_1}(x_1|y_1)f(x_1,x_2) \\
	f_2(x_1,y_2) &= \sum_{x_2 \in \sX_2} P_{X_2|Y_2}(x_2|y_2)f(x_1,x_2),
\end{align*}
which can be written more succinctly as $f_1 = (K^*_1 \otimes \id_2) f$ and $f_2 = (\id_1 \otimes K^*_2)f$, where $\id_1$ and $\id_2$ are the identity mappings on $\Func(\sX_1)$ and $\Func(\sX_2)$.  Since $Y_1$ and $Y_2$ are independent, we can write
\begin{align*}
	\Ent_\Phi\big[K^*f(Y_1,Y_2)\big|Y_1=y_1\big] & = \E[\Phi(K^*_2 f_1(y_1,Y_2))] - \Phi(\E[K^*_2 f_1(y_1,Y_2)]) \\
& = \Ent_\Phi\big[K^*_2 f_1(y_1,Y_2)\big] \\
& \le \eta_2 \Ent_\Phi\big[f_1(y_1,X_2)\big] \\
& \le \eta_2 \sum_{x_1 \in \sX_1} P_{X_1|Y_1}(x_1|y_1) \Ent_\Phi\big[f(x_1,X_2)\big],
\end{align*}
where the first inequality uses \eqref{eq:functional_SDPI_gen_hom}, while the second inequality follows from the definition of $f_1$ and from the convexity property \eqref{eq:phi_convexity}, which is equivalent to the assumed subadditivity of $\Ent_\Phi[\cdot]$. Therefore,
\begin{align*}
	\E\left[ \Ent_\Phi\big[K^*f(Y_1,Y_2)\big|Y_1\big]\right] &= \sum_{y_1 \in \sY_1} P_{Y_1}(y_1) \Ent_\Phi\big[K^*f(Y_1,Y_2)\big|Y_1=y_1\big] \\
	&\le \eta_2 \sum_{y_1 \in \sY_1}P_{Y_1}(y_1)\sum_{x_1 \in \sX_1}P_{X_1|Y_1}(x_1|y_1) \Ent_\Phi\big[f(x_1,X_2)\big] \\
	&= \eta_2 \sum_{x_1 \in \sX_1} P_{X_1}(x_1) \Ent_\Phi\big[f(x_1,X_2)\big] \\
	&= \eta_2\, \E\left[\Ent_\Phi\big[f(X_1,X_2)\big|X_1\big]\right].
\end{align*}
Next, let $g_2(x_1) = \E[f_2(x_1,Y_2)] = \E[f_2(X_1,Y_2)|X_1=x_1]$. Then
	\begin{align*}
		\Ent_\Phi\left[\E[K^*f(Y_1,Y_2)|Y_1]\right] 
	&= \Ent_\Phi \left[ K^*_1 g_2(Y_1)\right] \\
	&\le \eta_1 \Ent_\Phi\left[ g_2(X_1)\right] \\
	&= \eta_1 \Ent_\Phi\left[ \E[f(X_1,X_2)|X_1]\right],
	\end{align*}
	where the first line follows from the fact that $(X_1,Y_1)$ and $(X_2,Y_2)$ are independent and from definitions,
	whereas in the last line we have used the fact that
	\begin{align*}
		g_2(x_1) &= \E[f_2(x_1,Y_2)] \\
		&=\sum_{y_2 \in \sY_2} P_{Y_2}(y_2) f_2(x_1,y_2) \\
		&= \sum_{y_2 \in \sY_2} P_{Y_2}(y_2)\sum_{x_2 \in \sX_2} P_{X_2|Y_2}(x_2|y_2) f(x_1,x_2) \\
		&= \sum_{x_2 \in \sX_2} P_{X_2}(x_2) f(x_1,x_2) \\
		&= \E[f(X_1,X_2)|X_1=x_1].
	\end{align*}
Combining everything, we can write
\begin{align*}
		 \Ent_\Phi\big[K^*f(Y_1,Y_2)\big] & \le \eta_2 \cdot \E\left[ \Ent_\Phi\big[f(X_1,X_2)\big|X_1\big]\right]  + \eta_1 \cdot \Ent_\Phi \left[ \E[f(X_1,X_2)|X_1]\right] \\
		&  \le \max_{i=1,2} \eta_i \cdot \Bigg\{ \E\left[ \Ent_\Phi\big[f(X_1,X_2)\big|X_1\big]\right]  + \Ent_\Phi \left[ \E[f(X_1,X_2)|X_1]\right] \Bigg\} \\
		&  = \max_{i=1,2} \eta_i \cdot \Ent_\Phi\left[ f(X_1,X_2) \right],
\end{align*}
where in the last step we have used the law of total entropy \eqref{eq:total_entropy}. Since $f$ was arbitrary, we obtain the bound $\eta \le \max(\eta_1,\eta_2)$.
\end{proof}

\subsection{Mixtures of local channels}

Another situation that often arises in stochastic simulation and machine learning is as follows: Fix $n$ channels $K_i \in \Chan(\sX_i|\sX_i)$, $1 \le i \le n$, and a probability distribution $p = (p_i)^n_{i=1}$ on the set $\{1,\ldots,n\}$. Given an input block $x^n = (x_1,\ldots,x_n) \in \sX_1 \times \ldots \times \sX_n$, a random output block $Y^n = (Y_1,\ldots,Y_n) \in \sX_1 \times \ldots \times \sX_n$ is generated as follows: 
\begin{enumerate}
	\item a random index $J \in \{1,\ldots,n\}$ is drawn according to $p$;
	\item $Y_J$ is drawn according to $K_J(\cdot|x_J)$;
	\item $Y^{\backslash J} = x^{\backslash J}$.
\end{enumerate}
The overall stochastic transformation is described by the Markov kernel
\begin{align*}
	K \deq \sum^n_{i=1} p_i \left( \id_1 \otimes \ldots \otimes \id_{i-1} \otimes K_i \otimes \id_{i+1} \otimes \ldots \otimes \id_n\right),
\end{align*}
where, for each $i$, $\id_i$ is the idenitity mapping on $\Func(\sX_i)$. Now let us also fix $n$ probability distributions $\mu_i \in \Prob(\sX_i)$, $1 \le i \le n$. The question is: how does the SDPI constant $\eta_\Phi(\mu_1 \otimes \ldots \otimes, \mu_n, K)$ for some $\Phi \in \cF$ depend on $p$ and on the individual SDPI constants $\eta_\Phi(\mu_i,K_i)$?

\begin{theorem}\label{thm:tensorization_2} Under the same conditions as in Theorem~\ref{thm:tensorization_1},
	\begin{align}\label{eq:tensorization_2}
		1-\eta_\Phi(\mu_1 \otimes \ldots \otimes, \mu_n, K) \ge \min_{1 \le i \le n} p_i \left(1-\eta_\Phi(\mu_i,K_i)\right).
	\end{align}
\end{theorem}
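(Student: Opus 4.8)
The plan is to prove the equivalent statement $\eta_\Phi(\mu,K)\le 1-m$, where $m\deq\min_{1\le i\le n}p_i\bigl(1-\eta_\Phi(\mu_i,K_i)\bigr)$, $\mu\deq\mu_1\otimes\cdots\otimes\mu_n$, and we abbreviate $\eta_i\deq\eta_\Phi(\mu_i,K_i)$. Since $\Ent_\Phi$ is assumed homogeneous, Proposition~\ref{prop:functional_SDPI_gen_hom} applies, so it suffices to show that
\[
	\E\bigl[\Ent_\Phi[f(X^n)\,|\,Y^n]\bigr]\ \ge\ m\,\Ent_\Phi[f(X^n)]
\]
for every nonconstant $f\in\PFunc(\sX_1\times\cdots\times\sX_n)$, where $(X^n,Y^n)\sim\mu\otimes K$.

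First I would make the random index $J$ explicit on the probability space: draw $X^n\sim\mu$, then $J\sim p$ independently of $X^n$, then $Y_J\sim K_J(\cdot|X_J)$ and $Y^{\backslash J}\deq X^{\backslash J}$; the resulting $(X^n,Y^n)$ has law $\mu\otimes K$. Applying the law of total $\Phi$-entropy \eqref{eq:total_entropy} conditionally on $Y^n$, with $J$ in the role of the "inner" variable, and discarding the nonnegative term $\Ent_\Phi\bigl[\E[f(X^n)|Y^n,J]\,\big|\,Y^n\bigr]$, gives
\[
	\E\bigl[\Ent_\Phi[f(X^n)\,|\,Y^n]\bigr]\ \ge\ \E\bigl[\Ent_\Phi[f(X^n)\,|\,Y^n,J]\bigr].
\]
This "revealing $J$ can only lower the residual $\Phi$-entropy" step is the conceptual core; everything after it is bookkeeping.

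Next I would evaluate the right-hand side. Conditioned on $\{J=i\}$, the pair $(Y^n,J)$ reveals $X^{\backslash i}$ exactly (since $Y^{\backslash i}=X^{\backslash i}$) and reveals $Y_i$, while the product form of $\mu$ and the local form of the $i$-th block of $K$ force the conditional law of $X^n$ given $(Y^n,J=i)$ to freeze every coordinate but the $i$-th and put $X_i$ according to the backward channel $K^*_i(\cdot|Y_i)$ of the pair $(\mu_i,K_i)$. Writing $f_{x^{\backslash i}}$ for the slice $x_i\mapsto f(x^n)$, this yields
\[
	\E\bigl[\Ent_\Phi[f(X^n)\,|\,Y^n,J]\bigr]\ =\ \sum_{i=1}^n p_i\,\E_{X^{\backslash i}}\Bigl[\,\E_{Y_i}\bigl[\Ent_\Phi[f_{X^{\backslash i}}(X_i)\,|\,Y_i]\bigr]\Bigr],
\]
the inner expectations taken under $(X_i,Y_i)\sim\mu_i\otimes K_i$. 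For each fixed slice, the single-channel SDPI for $(\mu_i,K_i)$ in the form
\[
	\E_{Y_i}\bigl[\Ent_\Phi[g(X_i)\,|\,Y_i]\bigr]\ =\ \Ent_\Phi[g(X_i)]-\Ent_\Phi[K^*_i g(Y_i)]\ \ge\ (1-\eta_i)\,\Ent_\Phi[g(X_i)],
\]
valid for all $g\in\PFunc(\sX_i)$ by \eqref{eq:total_entropy} together with \eqref{eq:functional_SDPI_gen_hom} (and trivially when $g$ is constant), lower-bounds the $i$-th summand by $p_i(1-\eta_i)\,\E\bigl[\Ent_\Phi[f(X^n)|X^{\backslash i}]\bigr]$.

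Finally I would combine everything: bounding $p_i(1-\eta_i)\ge m$ and invoking the assumed subadditivity of $\Ent_\Phi$ in the form \eqref{eq:subadditivity},
\[
	\E\bigl[\Ent_\Phi[f(X^n)\,|\,Y^n,J]\bigr]\ \ge\ m\sum_{i=1}^n\E\bigl[\Ent_\Phi[f(X^n)|X^{\backslash i}]\bigr]\ \ge\ m\,\Ent_\Phi[f(X^n)].
\]
Chaining this with the first-step inequality and appealing again to Proposition~\ref{prop:functional_SDPI_gen_hom} gives $1-\eta_\Phi(\mu,K)\ge m$, which is exactly \eqref{eq:tensorization_2}. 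The one point demanding care is the identification of the conditional law of $X^n$ given $(Y^n,J=i)$ — namely that it acts as the backward channel $K^*_i$ on the $i$-th coordinate while keeping the others deterministic — but this is immediate from the product structure of $\mu$ and the fact that the $i$-th component of $K$ is $\id_1\otimes\cdots\otimes K_i\otimes\cdots\otimes\id_n$.
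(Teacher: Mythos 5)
Your proof is correct and is essentially the paper's argument: the step of revealing $J$ and discarding the nonnegative term $\Ent_\Phi\bigl[\E[f(X^n)|Y^n,J]\,\big|\,Y^n\bigr]$ is exactly the convexity-of-$\Phi$ inequality that the paper applies to the mixture posterior $K^* = \sum_i p_i g_i\,(\id_1\otimes\cdots\otimes K_i^*\otimes\cdots\otimes\id_n)$ with $\sum_i p_i g_i = 1$, and the subsequent identification of the conditional law given $(Y^n,J=i)$, the per-coordinate SDPI in the form $\E[\Ent_\Phi[g(X_i)|Y_i]]\ge(1-\eta_i)\Ent_\Phi[g(X_i)]$, and the final appeal to subadditivity all coincide with the paper's steps. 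The only difference is presentational: your latent-index formulation handles general $n$ directly, whereas the paper works out $n=2$ with explicit Radon--Nikodym weights and reduces the general case to it by induction.
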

\begin{proof} Once again, it suffices to consider the case $n=2$. Thus, we fix two admissible pairs $(\mu_i,K_i) \in \Prob(\sX_i) \times \Chan(\sX_i|\sX_i)$, $i \in \{1,2\}$ and a parameter $p \in [0,1]$, and consider the channel
	\begin{align*}
		K = p (K_1 \otimes \id_2) + \bar{p} (\id_1 \otimes K_2).
	\end{align*}
Let $\mu = \mu_1 \otimes \mu_2$ denote the reference input distribution. We need to show that
\begin{align}\label{eq:tensorization_2_goal}
	1-\eta_\Phi(\mu,K) \ge \min\Big(p(1-\eta_\Phi(\mu_1,K_1)), \bar{p}(1-\eta_\Phi(\mu_2,K_2))\Big).
\end{align}
As in the proof of Theorem~\ref{thm:tensorization_1}, we adopt the shorthand notation $\eta_i = \eta_\Phi(\mu_i,K_i)$ and
\begin{align*}
	\eta = \eta_\Phi(\mu,K) = \eta_\Phi\left(\mu_1 \otimes \mu_2, K\right).
\end{align*}
Let $(X_1,Y_1,X_2,Y_2)$ be a random tuple with $(X_1,X_2) \sim \mu$ and $P_{Y_1,Y_2|X_1,X_2} = K$. Also, define the Radon--Nikodym derivatives
\begin{align}
	g_1(y_1,y_2) & \deq \frac{  \d(\mu_1 K_1 \otimes \mu_2)}{  \d(\mu K)}(y_1,y_2) = \frac{\mu_1 K_1(y_1)\mu_2(y_2)}{p \mu_1 K_1(y_1)\mu_2(y_2) + \bar{p}\mu_1(y_1) \mu_2 K_2(y_2)} \label{eq:g_1_RN}
\end{align}
and
\begin{align}
	g_2(y_1,y_2) & \deq \frac{  \d(\mu_1 \otimes \mu_2K_2)}{  \d(\mu K)}(y_1,y_2) = \frac{\mu_1 (y_1)\mu_2K_2(y_2)}{p \mu_1 (y_1)\mu_2K_2(y_2) + \bar{p}\mu_1(y_1) \mu_2 K_2(y_2)} \label{eq:g_2_RN}.
\end{align}
A simple calculation shows that
	\begin{align*}
		P_{X_1,X_2|Y_1,Y_2}(\cdot|y_1,y_2)
		&= K^*(\cdot|y_1,y_2) \\
		&= p g_1(y_1,y_2) (K^*_1 \otimes \id_2)(\cdot|y_1,y_2) + \bar{p} g_2(y_1,y_2) (\id_1 \otimes K^*_2)(\cdot|y_1,y_2).
	\end{align*}
Now consider an arbitrary nonconstant function $f \in \PFunc(\sX_1 \times \sX_2)$. Then
\begin{align*}
&	\Ent_\Phi\big[f(X_1,X_2)\big] - \Ent_\Phi\big[K^*f(Y_1,Y_2)\big] \nonumber\\
&\quad= \E\left[ \Ent_\Phi\big[f(X_1,X_2)\big|Y_1,Y_2\big]\right] \\
&\quad = \sum_{y_1 \in \sY_1}\sum_{y_2 \in \sY_2} P_{Y_1,Y_2} (y_1, y_2) \Bigg[ \sum_{x_1 \in \sX_1}\sum_{x_2 \in \sX_2} P_{X_1,X_2|Y_1,Y_2}( x_1, x_2|y_1,y_2) \Phi\big(f(x_1,x_2)\big) \nonumber \\
&\qquad \qquad \qquad - \Phi\Big( \sum_{x_1 \in \sX_1}\sum_{x_2 \in \sX_2} P_{X_1,X_2|Y_1,Y_2}(  x_1,  x_2|y_1,y_2) f(x_1,x_2)\Big) \Bigg] \\
&\quad = \sum_{y_1 \in \sY_1}\sum_{y_2 \in \sY_2} \mu K(y_1,y_2) \Bigg[ p g_1(y_1,y_2) \sum_{x_1 \in \sX_1}\sum_{x_2 \in \sX_2} K^*_1 \otimes \id_2 (x_1, x_2 | y_1, y_2)  \Phi\big(f(x_1,x_2)\big) \nonumber \\
& \qquad \qquad \qquad \qquad \qquad + \bar{p} g_2(y_1,y_2)\sum_{x_1 \in \sX_1}\sum_{x_2 \in \sX_2} \id_1 \otimes K^*_2 (  x_1,   x_2 | y_1, y_2)  \Phi\big(f(x_1,x_2)\big) \Bigg] \nonumber \\
& \qquad \qquad - \sum_{y_1 \in \sY_1} \sum_{y_2 \in \sY_2} \mu K(  y_1,y_2) \Phi \Bigg( p g_1(y_1,y_2) \sum_{x_1 \in \sX_1}\sum_{x_2 \in \sX_2} K^*_1 \otimes \id_2 (  x_1,  x_2|y_1,y_2) f(x_1,x_2) \nonumber\\
& \qquad \qquad \qquad \qquad \qquad + \bar{p} g_2(y_1,y_2) \sum_{x_1 \in \sX_1}\sum_{x_2 \in \sX_2} \id_1 \otimes K^*_2 (  x_1,  x_2|y_1,y_2) f(x_1,x_2) \Bigg).
\end{align*}
From this, using the fact that $\Phi$ is convex and that $p g_1 + \bar{p} g_2 = 1$, we get
\begin{align}
	&	\Ent_\Phi\big[f(X_1,X_2)\big] - \Ent_\Phi\big[K^*f(Y_1,Y_2)\big] \nonumber\\
	& \ge p \sum_{y_1 \in \sY_1}\sum_{y_2 \in \sY_2} \mu K(  y_1,  y_2) g_1(y_1,y_2) \Bigg[ \sum_{x_1 \in \sX_1}\sum_{x_2 \in \sX_2} (K^*_1 \otimes \id_2)(  x_1,  x_2|y_1,y_2) \Phi\big(f(x_1,x_2)\big) \nonumber \\
	&\qquad \qquad \qquad \qquad - \Phi \Big( \sum_{x_1 \in \sX_1}\sum_{x_2 \in \sX_2} (K^*_1 \otimes \id_2) (  x_1,  x_2|y_1,y_2) f(x_1,x_2)\Big)\Bigg] \nonumber \\
	&\qquad + \bar{p}\sum_{y_1 \in \sY_1}\sum_{y_2 \in \sY_2} \mu K(  y_1,  y_2) g_2(y_1,y_2) \Bigg[ \sum_{x_1 \in \sX_1}\sum_{x_2 \in \sX_2} (\id_1 \otimes K^*_2)(  x_1,  x_2|y_1,y_2) \Phi\big(f(x_1,x_2)\big) \nonumber \\
	&\qquad \qquad \qquad \qquad - \Phi \Big( \sum_{x_1 \in \sX_1}\sum_{x_2 \in \sX_2} (\id_1 \otimes K^*_2) (  x_1,  x_2|y_1,y_2) f(x_1,x_2)\Big)\Bigg] \nonumber \\
	&= p   \sum_{y_2 \in \sY_2} \mu_2(  y_2) \sum_{y_1 \in \sY_1} \mu_1 K_1 (  y_1) \Bigg[ \sum_{x_1 \in \sX_1} K^*_1(  x_1|y_1) \Phi\big(f(x_1,y_2)\big)  - \Phi \Big( \sum_{x_1 \in \sX_1} K^*_1  (  x_1|y_1) f(x_1,y_2)\Big)\Bigg] \nonumber \\ 
	&\qquad + \bar{p}    \sum_{y_1 \in \sY_1} \mu_1(  y_1) \sum_{y_2 \in \sY_2} \mu_2 K_2 (  y_2) \Bigg[ \sum_{x_2 \in \sX_2} K^*_2(  x_2|y_2)\Phi\big(f(y_1,x_2)\big)  - \Phi \Big( \sum_{x_2 \in \sX_2} K^*_2  (  x_2|y_2) f(y_1,x_2)\Big)\Bigg], \label{eq:tensorization_2_convexity_bound}
\end{align}
where in the last step we have used the definitions \eqref{eq:g_1_RN} and \eqref{eq:g_2_RN} of $g_1$ and $g_2$. Now consider a random tuple $(X_1,X_2,U,V)$, such that
\begin{enumerate}
	\item $U \longrightarrow X_1 \longrightarrow X_2 \longrightarrow V$ is a Markov chain;
	\item $P_{X_1X_2} = \mu = \mu_1 \otimes \mu_2$;
	\item $P_{U|X_1} = K_1$;
	\item $P_{V|X_2} = K_2$.
\end{enumerate}
Using these definitions in \eqref{eq:tensorization_2_convexity_bound} gives
\begin{align}
		&	\Ent_\Phi\big[f(X_1,X_2)\big] - \Ent_\Phi\big[K^*f(Y_1,Y_2)\big]\nonumber\\
		& \qquad \ge p \sum P_{X_2}\left(  x_2\right) \E\left[ \Ent_\Phi \big[f(X_1,x_2)\big|U\big]\right] + \bar{p} \sum P_{X_1}\left(  x_1\right) \E\left[ \Ent_\Phi\big[f(x_1,X_2)\big|V\big]\right] \nonumber\\
		& \qquad \ge p (1-\eta_1)  \sum_{x_2 \in \sX_2} P_{X_2}\left(  x_2\right) \Ent_\Phi\big( f(X_1,x_2)\big) + \bar{p} (1-\eta_2) \sum_{x_1 \in \sX_1} P_{X_1}\left(  x_1\right) \Ent_\Phi\big[ f(x_1,X_2)\big] \nonumber\\
		& \qquad \ge \min\Big(p (1-\eta_1), \bar{p}(1-\eta_2)\Big) \Big\{ \E\left[\Ent_\Phi\big[f(X_1,X_2)\big| X_2\big]\right] + \E\left[\Ent_\Phi\big[f(X_1,X_2)\big| X_1\big]\right]\Big\} \label{eq:tensorization_2_1}\\
		& \qquad \ge \min\Big(p (1-\eta_1), \bar{p}(1-\eta_2)\Big) \Ent_\Phi \big[f(X_1,X_2)\big], \label{eq:tensorization_2_0}
\end{align}
where \eqref{eq:tensorization_2_1} is by the independence of $X_1$ and $X_2$, while \eqref{eq:tensorization_2_0} is by the assumed subadditivity of $\Ent_\Phi[\cdot]$. Since $f$ was arbitrary, we see that the inequality \eqref{eq:tensorization_2_goal} indeed holds.
\end{proof}

\begin{example} {\em Let $\sX_1 = \ldots = \sX_n = \{0,1\}$, $\mu_1 = \ldots = \mu_n = \Bernoulli(1/2)$, and $K_1 = \ldots = K_n = \BSC(\eps)$. Take $p$ to be the uniform distribution on $\{1,\ldots,n\}$. Then $K$ acts as follows: Given an $n$-bit input string $x^n = (x_1,\ldots,x_n)$, we pick one of the bits uniformly at random and flip it with probability $\eps$; the remaining bits stay the same. Then
	\begin{align*}
		\eta\left(\Bernoulli(1/2)^{\otimes n}, K\right) \le 1 - \frac{1-(1-2\eps)^2}{n} = 1 - \frac{4\eps\bar{\eps}}{n}.
	\end{align*}
In particular, when $\eps=1/2$, we get the upper bound of $1-1/n$.

We can also consider flipping bits in blocks: Let $\cB = \{B_m\}^k_{m=1}$ be a disjoint partition of the set $\{1,\ldots,n\}$ into $k$ blocks. We pick a block uniformly at random, and then independently flip each bit in that block with probability $\eps$. Denoting the resulting channel by $K_\cB$, we have
\begin{align}\label{eq:blocks_0}
	\eta\left(\Bernoulli(1/2)^{\otimes n}, K_{\cB}\right) \le 1-\frac{4\eps\bar{\eps}}{k}.
\end{align} 
To prove this, let $\mu^{(m)} = \bigotimes_{i \in B_m} \mu_i$ and $K^{(m)} = \bigotimes_{i \in B_m} K_i$. Then $\mu = \mu_1 \otimes \ldots \otimes \mu_n = \mu^{(1)} \otimes \ldots \otimes \mu^{(k)}$, and by Theorem~\ref{thm:tensorization_2} we have
\begin{align}\label{eq:blocks}
	\eta\left(\Bernoulli(1/2)^{\otimes n}, K_{\cB}\right) \le 1 - \frac{1}{k}\min_{1 \le m \le k} \left(1-\eta\big(\mu^{(m)},K^{(m)}\big)\right).
\end{align}
Since each $\mu^{(m)}$ is a product measure and each $K^{(m)}$ is a tensor product of BSCs, Theorem~\ref{thm:tensorization_1} gives 
\begin{align*}
	\eta\big(\mu^{(m)},K^{(m)}\big) = \max_{i \in B_m} \eta\left(\mu_i, K_i\right) = \eta\left(\Bernoulli(1/2), \BSC(\eps)\right) = (1-2\eps)^2.
\end{align*}
Substituting this into \eqref{eq:blocks}, we get \eqref{eq:blocks_0}. For $k=1$, $K_{\cB} \equiv \BSC(\eps)^{\otimes n}$, which has $\eta=(1-2\eps)^2$ by Theorem~\ref{thm:tensorization_1}. The bound of Eq.~\eqref{eq:blocks_0} is then achieved with equality. 
}
\end{example}

\subsection{Comparison of SDPI constants}

The following theorem shows that an upper bound on an SDPI constant for one source-channel pair can be converted into an upper bound for another such pair via a change-of-measure argument:

\begin{theorem}\label{thm:SDPI_comparison} Let $(\mu,K), (\bar{\mu},\bar{K}) \in \PProb(\sX) \times \Chan(\sY|\sX)$ be two admissible pairs.  Then, for any $\Phi \in \cF$ that satisfies the homogeneity condition \eqref{eq:gen_hom},
	\begin{align}\label{eq:SDPI_comparison}
		\eta_\Phi(\mu,K) \le 1 - \frac{a}{A}\left(1 - \eta_\Phi(\bar{\mu},\bar{K})\right),
	\end{align}
where 
		\begin{align*}
			A \deq \max_{(x,y) \in \sX \times \sY}\frac{\bar{\mu} \otimes \bar{K}(x,y)}{\mu \otimes K(x,y)}  \qquad \text{and} \qquad a \deq \min_{x \in \sX} \frac{\bar{\mu}(x)}{\mu(x)}.
		\end{align*}
\end{theorem}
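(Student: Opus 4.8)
The plan is to deduce \eqref{eq:SDPI_comparison} from the functional form of the SDPI (Proposition~\ref{prop:functional_SDPI}) together with the homogeneity of $\Ent_\Phi$, via Proposition~\ref{prop:functional_SDPI_gen_hom}, and a pointwise change-of-measure estimate. For an admissible pair $(\rho,L)$ and $f\in\PFunc(\sX)$, write $\Ent_\Phi^{\rho}[f] \deq \E_\rho[\Phi(f)] - \Phi(\E_\rho f)$ for the $\Phi$-entropy of $f$ under $\rho$, and $\cE_{\rho,L}(f) \deq \E\big[\Ent_\Phi[f(X)|Y]\big]$ for $(X,Y)\sim\rho\otimes L$. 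With this notation, Proposition~\ref{prop:functional_SDPI} combined with homogeneity (which via \eqref{eq:functional_SDPI_gen_hom} removes the normalization $\E[f]=1$) says that $\eta_\Phi(\rho,L)\le c$ if and only if $\Ent_\Phi^{\rho}[f]\le(1-c)^{-1}\cE_{\rho,L}(f)$ for all nonconstant $f\in\PFunc(\sX)$. The single analytic ingredient I need is the elementary variational representation of the $\Phi$-entropy: for any convex $\Phi$, any probability distribution $\rho$ on a finite set, and any nonconstant $f\in\PFunc(\sX)$,
\[
	\Ent_\Phi^{\rho}[f] = \min_{t}\ \E_\rho\big[\psi_t(f)\big], \qquad \psi_t(u) \deq \Phi(u) - \Phi(t) - g_t(u-t),
\]
where $g_t$ is a subgradient of $\Phi$ at $t$; the remainder $\psi_t$ is nonnegative by convexity, and the minimum is attained at $t = \E_\rho[f]$. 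Applied to the conditional laws $L^*(\cdot|y)$ and using $\rho L(y)L^*(x|y) = (\rho\otimes L)(x,y)$, the same identity gives
\[
	\cE_{\rho,L}(f) = \sum_{y\in\sY}\ \min_{t_y}\ \sum_{x\in\sX}(\rho\otimes L)(x,y)\,\psi_{t_y}\big(f(x)\big).
\]

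The heart of the argument is to push the two pointwise dominations through these formulas. Since $\psi_t\ge0$ pointwise and $\mu(x)\le a^{-1}\bar\mu(x)$ for every $x$ (immediate from the definition of $a$), we have $\E_\mu[\psi_t(f)]\le a^{-1}\E_{\bar\mu}[\psi_t(f)]$ for every $t$; minimizing over $t$ gives $\Ent_\Phi^{\mu}[f]\le a^{-1}\Ent_\Phi^{\bar\mu}[f]$. Likewise $(\bar\mu\otimes\bar K)(x,y)\le A\,(\mu\otimes K)(x,y)$ for every $(x,y)$ yields, after minimizing over $t_y$ for each $y$ and summing over $y$, $\cE_{\bar\mu,\bar K}(f)\le A\,\cE_{\mu,K}(f)$.

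It remains to chain these with the SDPI for $(\bar\mu,\bar K)$. If $\eta_\Phi(\bar\mu,\bar K)=1$ then \eqref{eq:SDPI_comparison} only asserts $\eta_\Phi(\mu,K)\le1$, which is the ordinary data processing inequality, so assume $\eta_\Phi(\bar\mu,\bar K)<1$. By the functional form applied to $(\bar\mu,\bar K)$, $\Ent_\Phi^{\bar\mu}[f]\le\big(1-\eta_\Phi(\bar\mu,\bar K)\big)^{-1}\cE_{\bar\mu,\bar K}(f)$ for every nonconstant $f\in\PFunc(\sX)$, and therefore
\[
	\Ent_\Phi^{\mu}[f] \le \frac{1}{a}\,\Ent_\Phi^{\bar\mu}[f] \le \frac{\cE_{\bar\mu,\bar K}(f)}{a\,(1-\eta_\Phi(\bar\mu,\bar K))} \le \frac{A\,\cE_{\mu,K}(f)}{a\,(1-\eta_\Phi(\bar\mu,\bar K))}
\]
for all nonconstant $f\in\PFunc(\sX)$. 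Reading this against the functional form for $(\mu,K)$ with $(1-c)^{-1} = A\,a^{-1}\big(1-\eta_\Phi(\bar\mu,\bar K)\big)^{-1}$ yields $\eta_\Phi(\mu,K)\le c = 1 - \tfrac{a}{A}\big(1-\eta_\Phi(\bar\mu,\bar K)\big)$, which is \eqref{eq:SDPI_comparison}.

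The step I expect to need the most care is the (conditional) variational representation: one has to check that the centering constant $t_y$ may be pulled out of the inner sum over $x$ for each fixed $y$, that it can be restricted to $(0,\infty)$ so that subgradients of $\Phi$ are available and the minimum is genuinely attained (the extreme case $\E_\rho[f]$ or $\E[f(X)|Y=y]$ equal to $0$ needing a small limiting argument), and that nonnegativity of $\psi_{t_y}$ is literally the only property used --- so that the pointwise estimates $\mu\le a^{-1}\bar\mu$ and $\bar\mu\otimes\bar K\le A\,\mu\otimes K$ transfer verbatim. Everything after that is bookkeeping; homogeneity is used only to guarantee that the functional inequalities for $(\mu,K)$ and for $(\bar\mu,\bar K)$ are tested over the same class of functions.
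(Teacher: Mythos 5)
Your proof is correct and follows essentially the same route as the paper's: the paper likewise reduces to the functional form via homogeneity (so that both ratios are tested over all nonconstant $f\in\PFunc(\sX)$), invokes the variational representation of the (conditional) $\Phi$-entropy as an infimum of expectations of the nonnegative Bregman remainder $\Phi(u)-\Phi(\xi)-\Phi'(\xi)(u-\xi)$ (its Lemma~\ref{lm:var_cond_ent}), and then pushes the pointwise dominations $\bar\mu\otimes\bar K\le A\,\mu\otimes K$ and $a\,\mu\le\bar\mu$ through that representation before chaining with the SDPI for $(\bar\mu,\bar K)$. The points you flag for care (attainment of the minimizing centering, nonnegativity of the remainder being the only property used) are handled the same way in the paper, via right derivatives of the convex $\Phi$ and minimization over $\xi\in\PFunc(\sY)$.
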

\begin{remark}{\em It is easy to see that $0 < a \le A$. Indeed, the first inequality holds since $\mu,\bar{\mu} \in \PProb(\sX)$. For the second, by definition of $a$ and $A$, for every $x \in \sX$ we have
	\begin{align*}
		a \mu(x) \le \bar{\mu}(x) = \sum_{y \in \sY}\bar{\mu} \otimes \bar{K} (x,y) \le A \sum_{y \in \sY} \mu \otimes K (x,y) = A \mu (x).
	\end{align*}}
\end{remark}
\begin{proof} Consider random pairs $(X,Y)$ and $(\bar{X},\bar{Y})$ with respective probability laws $\mu \otimes K$ and $\bar{\mu} \otimes \bar{K}$. Using Eq.~\eqref{eq:functional_SDPI_gen_hom} and the law of total entropy Eq.~\eqref{eq:total_entropy},
	\begin{align}
		\eta_\Phi(\mu,K) &= \sup\left\{ \frac{\Ent_\Phi \left[\E[ f(X)|Y]\right]}{\Ent_\Phi\left[f(X)\right]} : f \in \PFunc(\sX)\, f \neq {\rm const} \right\} \\
		&= \sup \left\{ \frac{\Ent_\Phi [f(X)] - \E\left[\Ent_\Phi\big[f(X)\big|Y\big]\right]}{\Ent_\Phi[f(X)]} : f \in \PFunc(\sX),\, f \neq {\rm const}\right\} \nonumber \\
		&= 1 - \inf\left\{ \frac{\E\left[\Ent_\Phi\big[f(X)\big|Y\big]\right]}{\Ent_\Phi[f(X)]} : f \in \PFunc(\sX),\, f \neq {\rm const} \right\}. \label{eq:comparison_1}
	\end{align}
Using Lemma~\ref{lm:var_cond_ent} in Appendix~\ref{app:lemmas}, we can write
\begin{align*}
 \E\left[\Ent_\Phi\big[f(X)\big|Y\big]\right] &= \inf_{\xi \in \PFunc(\sY)} \E\left[\Phi(f(X))-\Phi(\xi(Y))-(f(X)-\xi(Y))\Phi'(\xi(Y))\right]\\
 &= \inf_{\xi \in \PFunc(\sY)} \E\left[\frac{\d(\mu \otimes K)}{\d(\bar{\mu} \otimes \bar{K})}(\bar{X},\bar{Y})\left( \Phi(f(\bar{X}))-\Phi(\xi(\bar{Y}))-(f(\bar{X})-\xi(\bar{Y}))\Phi'(\xi(\bar{Y})) \right)\right] \\
 &\ge \frac{1}{A} \inf_{\xi \in \PFunc(\sY)}\E\left[ \Phi(f(\bar{X}))-\Phi(\xi(\bar{Y}))-(f(\bar{X})-\xi(\bar{Y}))\Phi'(\xi(\bar{Y})) \right] \\
 &= \frac{1}{A}	\E\left[\Ent_\Phi\big[f(\bar{X})\big|\bar{Y}\big]\right],
\end{align*}
where the inequality follows from the definition of $A$ and from the convexity of $\Phi$. An analogous argument gives the inequality
\begin{align*}
	\Ent_\Phi\big[f(X)\big] \le \frac{1}{a} \Ent_\Phi\big[f(\bar{X})\big].
\end{align*}
Using these estimates in \eqref{eq:comparison_1}, we get
\begin{align*}
	\eta_\Phi(\mu,K) &\le 1 - \frac{a}{A}\inf \left\{ \frac{\E\left[\Ent_\Phi\big[f(\bar{X})\big|\bar{Y}\big]\right]}{\Ent_\Phi[f(\bar{X})]} : f \in \PFunc(\sX),\, f \neq {\rm const} \right\} \\
	&= 1 - \frac{a}{A}\left(1 - \eta_\Phi(\bar{\mu},\bar{K}) \right).
\end{align*}
\end{proof}
\begin{corollary}\label{cor:SDPI_comparison} If two channels $K,\bar{K} \in \Chan(\sY|\sX)$ are such that $\bar{K}(y|x) \le A K(y|x)$ for all $(x,y) \in \sX \times \sY$, then 
	\begin{align*}
		\eta_\Phi(\mu,K) \le 1 - \frac{1}{A}\left(1-\eta_\Phi(\mu,\bar{K})\right)
	\end{align*}
	for any $\Phi \in \cF$ satisfying \eqref{eq:gen_hom} and any $\mu \in \PProb(\sX)$.
\end{corollary}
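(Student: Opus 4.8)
The plan is to obtain Corollary~\ref{cor:SDPI_comparison} as an immediate specialization of Theorem~\ref{thm:SDPI_comparison}, applied with $\bar{\mu} = \mu$ (that is, keeping the reference input distribution fixed and changing only the channel, from $\bar{K}$ to $K$). With this choice the two constants in the theorem collapse: $a = \min_{x \in \sX} \frac{\mu(x)}{\mu(x)} = 1$, and, since $\mu \in \PProb(\sX)$,
\begin{align*}
	A_0 \deq \max_{(x,y) \in \sX \times \sY} \frac{\mu \otimes \bar{K}(x,y)}{\mu \otimes K(x,y)} = \max_{(x,y) \in \sX \times \sY} \frac{\bar{K}(y|x)}{K(y|x)},
\end{align*}
the maximum being understood over the pairs $(x,y)$ with $K(y|x) > 0$; under the hypothesis $\bar{K}(y|x) \le A K(y|x)$ with $A < \infty$, any $(x,y)$ with $K(y|x) = 0$ also has $\bar{K}(y|x) = 0$, so no term is lost. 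The hypothesis then says precisely that $A_0 \le A$, and summing $\bar{K}(y|x) \le A_0 K(y|x)$ over $y$ gives $A_0 \ge 1$, so $1/A_0 \in (0,1]$. Theorem~\ref{thm:SDPI_comparison} (whose homogeneity requirement on $\Phi$ is exactly the one assumed in the corollary, so it is carried over verbatim) yields
\begin{align*}
	\eta_\Phi(\mu,K) \le 1 - \frac{1}{A_0}\big(1 - \eta_\Phi(\mu,\bar{K})\big).
\end{align*}

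To conclude, I would invoke monotonicity of the right-hand side in the ratio constant: because $\eta_\Phi(\mu,\bar{K}) < 1$ the factor $1 - \eta_\Phi(\mu,\bar{K})$ is strictly positive, so replacing $A_0$ by the (possibly larger) constant $A$ can only increase $1 - \frac{1}{A_0}\big(1-\eta_\Phi(\mu,\bar{K})\big)$; hence $\eta_\Phi(\mu,K) \le 1 - \frac{1}{A}\big(1-\eta_\Phi(\mu,\bar{K})\big)$, as claimed. There is no genuinely hard step here — the result is a bookkeeping consequence of Theorem~\ref{thm:SDPI_comparison}. The only points deserving a word of care are (i) that $(\mu,\bar{K})$ must be an admissible pair for Theorem~\ref{thm:SDPI_comparison} to apply, which should either be added as a standing hypothesis or noted to follow from $\mu\bar{K} \ll \mu K$ provided no output letter is annihilated by $\bar{K}$, and (ii) the elementary observation that the bound is decreasing in $1/A$, which supplies the final relaxation from $A_0$ to $A$.
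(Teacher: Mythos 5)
Your proposal is correct and is essentially the paper's intended argument: the corollary is obtained by specializing Theorem~\ref{thm:SDPI_comparison} to $\bar{\mu}=\mu$, which gives $a=1$ and $A_0=\max_{x,y}\bar{K}(y|x)/K(y|x)\le A$, followed by the monotonicity observation (for which nonnegativity of $1-\eta_\Phi(\mu,\bar{K})$, guaranteed by the data processing inequality, already suffices). Your two side remarks — that the hypothesis forces $\bar K(y|x)=0$ wherever $K(y|x)=0$, and that admissibility of $(\mu,\bar K)$ should be assumed — are valid and worth noting.
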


\subsection{Extremal functions}
\label{ssec:extremal}

In this section, we will characterize the extremal functions $f \in \PFunc(\sX)$ that attain the infimum in \eqref{eq:functional_SDPI}. In particular, we will prove that, for any sufficiently smooth $\Phi$, these functions are solutions of the variational equation
\begin{align}\label{eq:SDP_var_0}
	\E\left[ \Phi'\left(\E[f(\bar{X})|Y]\right)\Big|X\right] - \Phi'(1) = \eta\left( \Phi' \left(f(X)\right)-\Phi'(1)\right),
\end{align}
with $\eta = \eta_\Phi(\mu,K)$ under the constraint $f \in \PFunc(\sX)$ and $\E[f(X)]=1$. Here, the random triple $(X,\bar{X},Y)$ is such that $X \to Y \to \bar{X}$ is a Markov chain, and both $(X,Y)$ and $(\bar{X},Y)$ have law $\mu \otimes K$. Written more compactly, \eqref{eq:SDP_var_0} takes the form
\begin{align}\label{eq:SDP_var}
	K (\Phi' \circ K^*f) - \Phi'(1) = \eta \left( \Phi' \circ f - \Phi'(1)\right).
\end{align}
\begin{theorem}\label{thm:SDP_var} Suppose $\Phi \in \cF$ has the following properties:
	\begin{enumerate}[(a)]
		\item It is three times differentiable with $\Phi''(1) > 0$.
		\item The $\Phi$-entropy functional is homogeneous in the sense of Definition~\ref{def:gen_hom}.
		\item There exists a constant $c > 0$, such that
		\begin{align}\label{eq:entropy_via_derivative}
			\Ent_\Phi[U] = c\, \E\left[U\left(\Phi'(U)-\Phi'(1)\right)\right]
		\end{align}
		for any nonnegative-valued random variable $U$ with $\E U = 1$.
	\end{enumerate}
	Then either $\eta_\Phi(\mu,K) = S^2(\mu,K)$, or there exists a nonconstant function $f \in \PFunc(\sX)$, such that \eqref{eq:SDP_var} holds with $\eta = \eta_\Phi(\mu,K)$. Moreover, $\eta_\Phi(\mu,K)$ is the smallest constant $\eta > 0$, for which \eqref{eq:SDP_var} has a solution among nonconstant functions in $\PFunc(\sX)$.
\end{theorem}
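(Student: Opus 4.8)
The plan is to realize $\eta_\Phi(\mu,K)$ as the maximum of an explicit, degree-zero homogeneous quotient over a compact convex set, and to read off \eqref{eq:SDP_var} as the first-order optimality condition at a maximizer. Set $R(f)\deq\Ent_\Phi[K^*f(Y)]/\Ent_\Phi[f(X)]$; by \eqref{eq:gen_hom} we have $R(cf)=R(f)$ for all $c>0$. By the functional form of the SDPI under homogeneity, Eq.~\eqref{eq:functional_SDPI_gen_hom}, together with the fact that any nonconstant $f\in\PFunc(\sX)$ can be rescaled to have $\E_\mu[f]=1$ (note $\E_\mu[f]>0$ since $\mu\in\PProb(\sX)$) without changing $R(f)$, one has
\begin{align*}
	\eta_\Phi(\mu,K)=\sup\big\{R(f):f\in\PFunc(\sX),\ \E_\mu[f]=1,\ f\neq\mathrm{const}\big\}.
\end{align*}
The feasible set $\{f\ge0:\E_\mu[f]=1\}$ is a compact simplex (each $0\le f(x)\le 1/\mu(x)$); since the $\Phi$ admissible under (a)--(c) are strictly convex on $(0,\infty)$, $\Ent_\Phi[f(X)]>0$ for every nonconstant $f$, so $R$ is continuous on this set with the constant function $\1$ deleted. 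On approach to $\1$, writing $f=\1+\eps g$ with $\E_\mu[g]=0$ and using the expansion $\Ent_\Phi[1+\eps U]=\tfrac12\Phi''(1)\eps^2\Var[U]+o(\eps^2)$ from the proof of Theorem~\ref{thm:chi_2_lower_bound}, one finds $R(\1+\eps g)\to\Var[K^*g(Y)]/\Var[g(X)]\le S^2(\mu,K)$ as $\eps\to0$, uniformly in $g$ on a fixed sphere. Hence $R$ extends to an upper semicontinuous function $\bar R$ on the whole compact set by setting $\bar R(\1)\deq S^2(\mu,K)$, and $\max\bar R=\eta_\Phi(\mu,K)$ is attained.

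The dichotomy now follows. If the maximum of $\bar R$ is attained only at $\1$, then $R(f)\le S^2(\mu,K)$ for every nonconstant $f$, so $\eta_\Phi(\mu,K)\le S^2(\mu,K)$, and Theorem~\ref{thm:chi_2_lower_bound} gives the first alternative $\eta_\Phi(\mu,K)=S^2(\mu,K)$. Otherwise the maximum is attained at some nonconstant $f^*\in\PFunc(\sX)$ with $\E_\mu[f^*]=1$, $R(f^*)=\eta_\Phi(\mu,K)$, and (see below) $f^*$ may be taken strictly positive. Using $\mu K(y)K^*(x|y)=K(y|x)\mu(x)$, at any strictly positive $f$ of mean $1$ one computes
\begin{align*}
	\partial_{f(x)}\Ent_\Phi[K^*f(Y)]=\mu(x)\big(K(\Phi'\circ K^*f)(x)-\Phi'(1)\big),\qquad \partial_{f(x)}\Ent_\Phi[f(X)]=\mu(x)\big(\Phi'(f(x))-\Phi'(1)\big).
\end{align*}
At the interior maximizer $f^*$ the only active constraint is $\E_\mu[f]=1$, so $\nabla R(f^*)=\rho\,(\mu(x))_{x\in\sX}$ for some $\rho\in\Reals$; since $R$ is homogeneous of degree $0$, Euler's identity $\sum_x f^*(x)\,\partial_{f(x)}R(f^*)=0$ forces $\rho\,\E_\mu[f^*]=\rho=0$, whence $\nabla R(f^*)=0$. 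As $\partial_{f(x)}R(f^*)=\big(\partial_{f(x)}\Ent_\Phi[K^*f^*(Y)]-R(f^*)\,\partial_{f(x)}\Ent_\Phi[f^*(X)]\big)/\Ent_\Phi[f^*(X)]$, the identity $\nabla R(f^*)=0$ reads exactly $K(\Phi'\circ K^*f^*)-\Phi'(1)=\eta_\Phi(\mu,K)\,(\Phi'\circ f^*-\Phi'(1))$, i.e.\ \eqref{eq:SDP_var}.

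For the ``moreover'' part, conversely, let $f\in\PFunc(\sX)$ be a nonconstant solution of \eqref{eq:SDP_var} with parameter $\eta$, normalized so that $\E_\mu[f]=1$ (the normalization natural to \eqref{eq:functional_SDPI}). Multiplying \eqref{eq:SDP_var} by $f(x)$, summing against $\mu$, using the adjoint relation between $K$ and $K^*$ (cf.~\eqref{eq:backward_1}) to rewrite $\E[f(X)\,K(\Phi'\circ K^*f)(X)]=\E[K^*f(Y)\,\Phi'(K^*f(Y))]$, and applying the representation \eqref{eq:entropy_via_derivative} from hypothesis~(c) to both $U=f(X)$ and $U=K^*f(Y)$ (each of mean $1$) yields $\tfrac1c\Ent_\Phi[K^*f(Y)]=\tfrac{\eta}{c}\Ent_\Phi[f(X)]$, so $\eta=R(f)\le\eta_\Phi(\mu,K)$; combined with the previous paragraph, $\eta_\Phi(\mu,K)$ is the value of the parameter singled out by solvability of \eqref{eq:SDP_var} at the extremal function. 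Two points concentrate the work. The first is the reduction to a \emph{strictly positive} maximizer: when $\Phi'(0^+)=-\infty$ (as for $\Phi(u)=u\log u$, and, up to an affine term, for $\Phi(u)=u^\alpha$ with $0<\alpha<1$), a feasibility-preserving perturbation that raises a vanishing coordinate $f^*(x_0)=0$ strictly increases $R$, because there $\partial_{f(x_0)}\Ent_\Phi[f(X)]=-\infty$ while $\partial_{f(x_0)}\Ent_\Phi[K^*f(Y)]$ is finite, so no boundary maximizer can occur; the residual case in which $\Phi'(0^+)$ is finite (among the admissible $\Phi$ this happens, up to an affine term, for $\Phi(u)=u^\alpha$ with $\alpha>1$) requires a separate argument locating the maximizer in the interior or verifying that \eqref{eq:SDP_var} persists at zero coordinates. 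The second is the minimality half of the ``moreover'' claim, i.e.\ that no $\eta$ strictly below $\eta_\Phi(\mu,K)$ admits a nonconstant nonnegative solution; this is where I expect the main obstacle to lie.
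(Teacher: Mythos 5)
Your proposal is correct and follows essentially the same route as the paper: isolate a neighborhood of the constant function where the quotient is forced down to $S^2(\mu,K)$ by the second-order Taylor expansion, extract a nonconstant extremizer by compactness when $\eta_\Phi(\mu,K)>S^2(\mu,K)$, read off \eqref{eq:SDP_var} as the first-order optimality condition, and obtain the converse bound by multiplying \eqref{eq:SDP_var} by $f$, integrating against $\mu$, and combining the adjoint relation \eqref{eq:backward_1} with \eqref{eq:entropy_via_derivative}. The only substantive difference is cosmetic: you phrase stationarity via a Lagrange multiplier that you kill with Euler's identity for the degree-zero homogeneous quotient, whereas the paper perturbs $f\mapsto f+\eps g$ and expands to first order; these are the same computation. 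Concerning the two issues you flag at the end: (i) the boundary case $f^*(x_0)=0$ is not treated in the paper either --- its assertion that $f+\eps g$ is nonnegative for all small $\eps>0$ and \emph{arbitrary} $g$ tacitly assumes the minimizer is strictly positive --- so you have identified a gap shared with the source rather than introduced one; (ii) the ``minimality'' worry dissolves, because your multiply-and-integrate step already yields $\Ent_\Phi[K^*f(Y)]=\eta\,\Ent_\Phi[f(X)]$ and hence $\eta\le\eta_\Phi(\mu,K)$ for \emph{every} nonconstant solution, which is exactly (and only) what the paper proves; the word ``smallest'' in the statement is evidently a slip for ``largest,'' since a claim that no $\eta<\eta_\Phi(\mu,K)$ admits a nonconstant solution would sit badly with the paper's own remark that \eqref{eq:SDP_var} may have multiple, non-extremal solutions. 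So nothing further is needed from you on (ii), and on (i) you are at least as careful as the original.
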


\begin{remark} {\em The functions $\Phi(u) = u \log u$ and $\Phi(u) = \frac{u^p-1}{p-1}$, $1 < p \le 2$, satisfy the condition \eqref{eq:entropy_via_derivative} (details are provided in the examples after the proof). On the other hand, the function $\Phi(u) = -\log u$ cannot satisfy \eqref{eq:entropy_via_derivative} for any choice of $c$, since $\Ent_\Phi[U] = -\E \log U$, while $\E[U(\Phi'(U)-\Phi'(1))] = \E U - 1 = 0$ for any nonnegative-valued $U$ with $\E U = 1$.\hfill$\diamond$}
\end{remark}

\begin{remark} {\em We emphasize that the variational equation \eqref{eq:SDP_var} may have multiple solutions, not all of which are actually extremal. In general, it is not easy to obtain explicit closed-form expressions for the extremal solutions of \eqref{eq:SDP_var}.\hfill$\diamond$}
\end{remark}

\begin{proof} Suppose that $\eta_\Phi(\mu,K) > S^2(\mu,K)$, for otherwise there is nothing to prove. We seek to minimize the functional
	\begin{align*}
		W(f) \deq \frac{\Ent_\Phi[f(X)]-\Ent_\Phi[K^* f(Y)]}{\Ent_\Phi[f(X)]} = \frac{\E\left[\Ent_\Phi[f(X)|Y]\right]}{\Ent_\Phi[f(X)]}
	\end{align*}
	over all $f \in \PFunc(\sX)$. By homogeneity, $W(c f) = W(f)$ for all $c > 0$, so without loss of generality we can restrict the minimization to $f \in \cM \deq \left\{ f \in \PFunc(\sX) : E[f(X)]=1 \right\}$. For $\eps > 0$, define the set
	$$
	\cM_\eps \deq \left\{ f \in \PFunc(\sX): \E[f(X)]=1 \text{ and } \| f - 1 \|_\infty < \eps \right\},
	$$
so $\cM = \cM_\infty$. From the Taylor expansion 
\begin{align*}
	\Phi(1+u) = \Phi(1) + \Phi'(1)u + \frac{\Phi''(1)}{2}u^2 + O\left(u^3\right),
\end{align*}
we have, for every $f \in \cM_\eps$,
\begin{align*}
	\Ent_\Phi[f(X)] = \frac{\Phi''(1)}{2} \Var[f(X)] + O(\eps^3)
\end{align*}
and
\begin{align*}
	\E\left[\Ent_\Phi[f(X)|Y]\right] = \frac{\Phi''(1)}{2}\left(\Var[f(X)]-\Var[\E[f(X)|Y]]\right) + O\left(\eps^3\right).
\end{align*}
Therefore,
\begin{align*}
	\inf_{f \in \cM_\eps} W(f) &= \inf_{f \in \cM_\eps} \frac{\Var[f(X)]-\Var[\E[f(X)|Y|]]+O(\eps^3)}{\Var[f(X)]+O(\eps^3)},
\end{align*}
which implies that, for any $\delta \in (0,1)$ there exists some $\eps_0 = \eps_0(\delta)$, such that
\begin{align*}
	\inf_{f \in \cM_{\eps_0}} W(f) &\ge \inf_{g \in \cM} \frac{\Var[g(X)]-\Var[\E[g(X)]|Y|]}{\Var[g(X)]} + \delta \\
	&= 1 - S^2(\mu,K) + \delta \\
	&> 1 - \eta_\Phi(\mu,K) + \delta.
\end{align*}
On the other hand, since $\inf_{f \in \cM} W(f) = 1-\eta_\Phi(\mu,K)$, any $f$ that minimizes $W$, if it exists, must lie in $\cM \backslash \cM_{\eps_0}$, i.e., it must be nonconstant. It remains to show the existence of such a minimizing $f$. Since any $f \in \cM$ satisfies $\| f \|_\infty \le 1/\mu_*$, where $\mu_*$ is the smallest (positive) mass of $\mu$, the set $\cM \backslash \cM_{\eps_0}$ is a closed and bounded subset of a finite-dimensional linear space, hence compact. The denominator of $W(f)$ is positive for all $f \in \cM \backslash \cM_{\eps_0}$, so $W$ is a continuous functional on the compact set $\cM \backslash \cM_{\eps_0}$ and thus attains its infimum on some nonconstant $f \in \cM$.

Now, let $f$ be such a minimizing function. We use a variational argument following Bobkov and Tetali \cite[Sec.~6]{Bobkov_Tetali_logsob}. Given an arbitrary $g \in \Func(\sX)$, the perturbed function $f + \eps g$ is nonnegative for all sufficiently small $\eps > 0$. Consequently, by definition of $\eta_\Phi$,
\begin{align}\label{eq:1st_variation}
	(1-\eta_\Phi)\Ent_\Phi\left[f(X)+\eps g(X)\right] \le \E\left[\Ent_\Phi[f(X)+\eps g(X)|Y]\right].
\end{align}
Applying the Taylor expansion
\begin{align*}
	\Ent_\Phi[U + \eps V] &= \E[\Phi(U+\eps V)] - \Phi(\E U + \eps \E V) \\
	&= \Ent_\Phi[U] + \eps \E\left[ \big(\Phi'(U) - \Phi'(\E U)\big) V \right] + O(\eps^2)
\end{align*}
to $U = f(X)$, $V = g(X)$ first for $X \sim \mu$ and then for $X \sim K^*(\cdot|y)$, $y \in \sY$, and then using the fact that $(1-\eta_\Phi) \Ent_\Phi[f(X)] = \E\left[\Ent_\Phi[f(X)|Y]\right]$ by the extremality of $f$, we have
\begin{align}
& (1-\eta_\Phi) \Ent_\Phi[f(X)+\eps g(X)] - \E\left[\Ent_\Phi[f(X)+\eps g(X)|Y]\right] \nonumber\\
& \qquad =  \eps\,\E\Big[ (1-\eta_\Phi)\big(\Phi'(f(X)) - \Phi'(1)\big) g(X) - \E\big[\big(\Phi'(f(X)) - \Phi'(\E[f(X)|Y])\big) g(X) |Y\big]\Big]
+ o(\eps) \nonumber\\
& \qquad = \eps\,\E\Big[ \Phi'(\E[f(X)|Y]) \E[g(X)|Y] - \eta_\Phi\big(\Phi'(f(X)) - \Phi'(1)\big)g(X) \Big] + o(\eps) \nonumber\\
& \qquad = \eps\,\E\Big[ \Big(\E[\Phi'(\E[f(\bar{X})|Y])|X] - \eta_\Phi \Phi'(f(X) - (1-\eta_\Phi) \Phi'(1)\big) g(X)  \Big)\Big] + o(\eps), \label{eq:1st_variation_2}
\end{align}
where in the last line $\bar{X}$ is an independent and identically distributed copy of $X$ given $Y$, and we have used the fact that $\E[\xi(Y)\E[\gamma(X)|Y]] = \E[\E[\xi(Y)|X]\gamma(X)]$ for any pair $\gamma \in \Func(\sX), \xi \in \Func(\sY)$. Now, by \eqref{eq:1st_variation}, the leftmost quantity in \eqref{eq:1st_variation_2} is nonpositive, whereas the rightmost quantity will be nonpositive for all sufficiently small $\eps > 0$ if and only if
\begin{align*}
	\E\big[ \big(\E[\Phi'(\E[f(\bar{X})|Y])|X] - \eta_\Phi\big(\Phi'(f(X)) - (1-\eta_\Phi) \Phi'(1)\big) g(X)  \big)\big] = 0.
\end{align*}
Since $g$ is arbitrary and $\mu \in \PProb(\sX)$, the minimizing function $f \in \PFunc(\sX)$ with $\E[f(X)]=1$ must satisfy
\begin{align*}
	\E[\Phi'(\E[f(\bar{X})|Y])|X] - \Phi'(1) = \eta_\Phi\big(\Phi'(f(X)) - \Phi'(1)\big),
\end{align*}
which is precisely \eqref{eq:SDP_var}.

It remains to show minimality. To that end, let $\tilde{\eta} > 0$ be another constant such that there exists some function $\tilde{f} \in \PFunc(\sX)$ with $\E[\tilde{f}(X)]=1$ satisfying
\begin{align}\label{eq:SDP_var_alt}
		\E\big[ \Phi'\big(\E[\tilde{f}(\bar{X})|Y]\big)\big|X\big] - \Phi'(1) = \tilde{\eta}\big( \Phi' \big(\tilde{f}(X)\big)-\Phi'(1)\big)
\end{align}
Multiplying both sides of \eqref{eq:SDP_var_alt} by $\tilde{f}(X)$, taking expectations, and using \eqref{eq:entropy_via_derivative}, we get
\begin{align*}
	\Ent_\Phi[\E[\tilde{f}(X)|Y]] = \tilde{\eta} \Ent_\Phi[\tilde{f}(X)].
\end{align*}
By definition of $\eta_\Phi(\mu,K)$, we must have $\tilde{\eta} \le \eta_\Phi(\mu,K)$.
\end{proof}
The proof of the theorem shows that if $\eta_\Phi(\mu,K) > S^2(\mu,K)$, then Eq.~\eqref{eq:SDP_var} admits a nontrivial (i.e., nonconstant) solution. The contrapositive of this statement gives:

\begin{corollary}\label{cor:SDP_trivial} If the infimum in \eqref{eq:functional_SDPI} is not achieved, i.e., if the SDPI \eqref{eq:f_SDPI} is strict unless $f \equiv 1$, then $\eta_\Phi(\mu,K) = S^2(\mu,K)$.
\end{corollary}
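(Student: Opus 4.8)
The plan is to obtain Corollary~\ref{cor:SDP_trivial} as the logical contrapositive of an assertion already established inside the proof of Theorem~\ref{thm:SDP_var}, combined with the universal lower bound $\eta_\Phi(\mu,K)\ge S^2(\mu,K)$. First I would record that Corollary~\ref{cor:SDP_trivial} inherits the standing hypotheses on $\Phi$ from Theorem~\ref{thm:SDP_var}; in particular $\Phi$ is three times differentiable with $\Phi''(1)>0$, which is exactly what Theorem~\ref{thm:chi_2_lower_bound} requires. Hence $\eta_\Phi(\mu,K)\ge S^2(\mu,K)$ holds unconditionally, and it suffices to rule out the strict inequality $\eta_\Phi(\mu,K)>S^2(\mu,K)$ whenever the infimum in \eqref{eq:functional_SDPI} is not attained by a nonconstant function.

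The heart of the argument is therefore the implication: \emph{if} $\eta_\Phi(\mu,K)>S^2(\mu,K)$, \emph{then} the infimum in \eqref{eq:functional_SDPI} is attained (by a nonconstant $f$). This is precisely the compactness step already carried out in the proof of Theorem~\ref{thm:SDP_var}, and I would reuse it essentially verbatim: by homogeneity (Definition~\ref{def:gen_hom}) the minimization of $W(f)=\E[\Ent_\Phi[f(X)|Y]]/\Ent_\Phi[f(X)]$ may be restricted to $\cM=\{f\in\PFunc(\sX):\E[f(X)]=1\}$; the second-order Taylor expansion of $\Ent_\Phi$ around $f\equiv 1$ shows that on a sufficiently small $L^\infty$-ball $\cM_{\eps_0}$ one has $\inf_{\cM_{\eps_0}}W\ge 1-S^2(\mu,K)+\delta$ for a suitable $\delta\in(0,\eta_\Phi(\mu,K)-S^2(\mu,K))$, which strictly exceeds $\inf_{\cM}W=1-\eta_\Phi(\mu,K)$; consequently any minimizing sequence eventually leaves $\cM_{\eps_0}$, and since $\cM\setminus\cM_{\eps_0}$ is compact (because $\|f\|_\infty\le 1/\mu_*$ on $\cM$) and $W$ is continuous there with strictly positive denominator, the infimum is attained at some nonconstant $f\in\cM$. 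Translating back via Proposition~\ref{prop:functional_SDPI}, this says that the SDPI \eqref{eq:f_SDPI} with the optimal constant $c=\eta_\Phi(\mu,K)$ holds with equality for the distribution $\nu$ with $\d\nu/\d\mu=f$.

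Taking the contrapositive then finishes the proof: if \eqref{eq:f_SDPI} with $c=\eta_\Phi(\mu,K)$ is strict for every nonconstant $f$ — i.e. the infimum in \eqref{eq:functional_SDPI} is not achieved — then we cannot have $\eta_\Phi(\mu,K)>S^2(\mu,K)$, so $\eta_\Phi(\mu,K)\le S^2(\mu,K)$, and with the lower bound this forces $\eta_\Phi(\mu,K)=S^2(\mu,K)$. I do not expect a genuine obstacle here, since the analytic work (the Taylor estimates near the constant function and the compactness/continuity reasoning) is already done in Theorem~\ref{thm:SDP_var}; the only point requiring care is the logical bookkeeping, namely that the phrase ``the infimum in \eqref{eq:functional_SDPI} is not achieved'' must be read with the \emph{optimal} constant $c=\eta_\Phi(\mu,K)$, so that ``not achieved'' is exactly the negation of the attainment statement proved above.
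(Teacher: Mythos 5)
Your proposal is correct and is exactly the route the paper takes: the paper derives Corollary~\ref{cor:SDP_trivial} as the contrapositive of the attainment/compactness step inside the proof of Theorem~\ref{thm:SDP_var} (if $\eta_\Phi(\mu,K)>S^2(\mu,K)$ then the infimum in \eqref{eq:functional_SDPI} is achieved by a nonconstant minimizer), combined with the lower bound $\eta_\Phi(\mu,K)\ge S^2(\mu,K)$ from Theorem~\ref{thm:chi_2_lower_bound}. Your added care about reading ``not achieved'' with the optimal constant $c=\eta_\Phi(\mu,K)$, and about the corollary inheriting the hypotheses of Theorem~\ref{thm:SDP_var}, is consistent with (and slightly more explicit than) the paper's one-line derivation.
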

\begin{remark} {\em Equivalently, $\eta_\Phi(\mu,K)=S^2(\mu,K)$ if for an arbitrary $\gamma > 0$ the only solution to Eq.~\eqref{eq:SDP_var} among $f \in \PFunc(\sX)$ with $\E[f(X)]=1$ is the trivial solution $f \equiv 1$.\hfill$\diamond$}
\end{remark}

Here are a couple of specific examples:
\begin{itemize}
	\item For $\Phi(u) = u \log u$, we have $\Phi'(u) = \log u + 1$, and $\Phi$ satisfies the conditions (a)--(c) of Theorem~\ref{thm:SDP_var}. In particular, Eq.~\eqref{eq:entropy_via_derivative} holds with $c = 1$. The variational equation \eqref{eq:SDP_var} becomes
	\begin{align*}
		K(\log K^* f) = \eta \log f, \qquad \eta = \eta(\mu,K).
	\end{align*}
	\item For $\Phi(u) = \frac{u^p-1}{p-1}$, $1 < p \le 2$, we have $\Phi'(u) = \frac{pu^{p-1}}{p-1}$, and $\Phi$ satisfies the conditions (a)--(c). In this case, \eqref{eq:entropy_via_derivative} holds with $c=p$. The variational equation takes the form
	\begin{align*}
	K \left((K^*f)^{p-1}\right) = \eta_p f^{p-1} + 1-\eta_p, \qquad \eta_p = \eta_{\Phi_p}(\mu,K).
	\end{align*}
\end{itemize}

\section{Connections with $\Phi$-Sobolev inequalities}
\label{sec:phisob}

\subsection{General framework}

Strong data processing inequalities for a pair $(\mu,K)$ can be interpreted in terms of the effect of the adjoint channel $K^*$ on the $\Phi$-entropies of suitably normalized nonnegative functions of the input, see Proposition~\ref{prop:functional_SDPI}. In this section, we show that there is a close relationship between SDPIs and another class of functional inequalities --- the so-called \textit{$\Phi$-Sobolev inequalities} \cite{Chafai_entropy,Boucheron_etal_concentration_book} that relate the $\Phi$-entropy $\Ent_\Phi[f(X)]$ of an arbitrary function of the input $X \sim \mu$ to some measure of correlation between $f(X)$ and the output $Y \sim \mu K$.

We will measure correlation in the following way. For any triple $(U,V,Z)$ of jointly distributed random variables, where $U,V$ are real-valued, we define
\begin{align}\label{eq:Dirichlet}
	\cE(U,V|Z)  &\deq \E\left[\big(U-\E[U|Z]\big)\big(V-\E[V|Z]\big)\right].
\end{align}
This quantity has an estimation-theoretic interpretation: since $e(U|Z) \deq U-\E[U|Z]$ is the error of a minimum mean-square error (MMSE) estimator of $U$ given $Z$, and $\E[e(U|Z)] = 0$, $\cE(U,V|Z)$ is the covariance of $e(U|Z)$ and $e(V|Z)$: \begin{align*}
\cE(U,V|Z) = \Cov\left[e(U|Z),e(V|Z)\right].
\end{align*}
In particular, 
\begin{align*}
\cE(U,U|Z) = \E\Big[\big(U - \E[U|Z]\big)^2\Big] \equiv \MMSE(U|Z),
\end{align*}
the MMSE achievable in estimating $U$ from $Z$. We pause to record a few key properties of $\cE$ (see Appendix~\ref{app:E_proof} for the proof):
\begin{proposition}\label{prop:E} The functional $\cE$ defined in \eqref{eq:Dirichlet} has the following properties:
	\begin{enumerate}
		\item Symmetry -- $\cE(U,V|Y) = \cE(V,U|Y)$.
		\item Linearity -- $\cE(a U + b U',V|Y) = a\,\cE(U,V|Y) + b\,\cE(U',V|Y)$ for any constants $a,b \in \Reals$.
		\item Degeneracy -- If $U$ is constant a.s., then $\cE(U,V|Y)=0$.
		\item Representation in terms of an exchangeable pair -- Let $(X,Y) \in \sX \times \sY$ be a random pair with $P_X = \mu \in \Prob(\sX)$ and $P_{Y|X} = K \in \Chan(\sY|\sX)$, where $(\mu,K)$ is an admissible pair. Then for any two functions $f,g\in\Func(\sX)$,
		\begin{align}
			\cE(f(X),g(X)|Y)
			&= \frac{1}{2} \E\left[\big(f(X)-f(X')\big)\big(g(X)-g(X')\big)\right] \label{eq:exch_1}\\
			&= \E\left[ \big(f(X)-f(X')\big)_+\big(g(X)-g(X')\big)\right], \label{eq:exch_2}
		\end{align}
		where $(u)_+ \deq u \vee 0$, and $(X,X')$ is a pair of $\sX$-valued random variables with $P_X = \mu$ and $P_{X'|X} = K^* K$.
	\end{enumerate}
\end{proposition}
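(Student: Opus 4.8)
The plan is to verify the four properties essentially in order, with the bulk of the work concentrated in property~4. Properties 1--3 are immediate from the definition \eqref{eq:Dirichlet}: symmetry follows because the product $\bigl(U-\E[U|Z]\bigr)\bigl(V-\E[V|Z]\bigr)$ is symmetric in $U$ and $V$; linearity follows because $U\mapsto U-\E[U|Z]$ is a linear map and expectation is linear; and degeneracy follows because if $U$ is a.s.\ constant then $\E[U|Z]=U$ a.s., so the first factor vanishes. I would dispatch these in two or three lines each.

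For property~4, first I would fix the relevant exchangeable structure. Let $(X,Y)\sim\mu\otimes K$ and set $X'$ to be distributed according to $P_{X'|X}=K^*K$; equivalently, one draws $X\sim\mu$, then $Y\sim K(\cdot|X)$, then $X'\sim K^*(\cdot|Y)$, so that $X\to Y\to X'$ is a Markov chain with $(X,Y)$ and $(X',Y)$ both having law $\mu\otimes K$ (this uses the defining property \eqref{eq:backward_1}--\eqref{eq:backward_2} of $K^*$). In particular $(X,X')$ is an exchangeable pair. The key observation is then that $\E[f(X)|Y]=\E[f(X')|Y]=K^*f(Y)$, and conditionally on $Y$ the variables $X$ and $X'$ are independent with the same law $K^*(\cdot|Y)$. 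Hence, writing $\tilde f(X)\deq f(X)-\E[f(X)|Y]$ and similarly for $g$, I would expand
\begin{align*}
\E\bigl[\bigl(f(X)-f(X')\bigr)\bigl(g(X)-g(X')\bigr)\bigr]
&= \E\bigl[\bigl(\tilde f(X)-\tilde f(X')\bigr)\bigl(\tilde g(X)-\tilde g(X')\bigr)\bigr],
\end{align*}
since subtracting the $Y$-measurable quantities $\E[f(X)|Y]$ etc.\ changes nothing once the two copies are differenced. Expanding the product into four terms and taking $\E[\,\cdot\,|Y]$ first: the cross terms $\E[\tilde f(X)\tilde g(X')|Y]$ and $\E[\tilde f(X')\tilde g(X)|Y]$ factor as $\E[\tilde f(X)|Y]\,\E[\tilde g(X')|Y]=0$ by conditional independence and the fact that $\tilde f,\tilde g$ are centered given $Y$; the two diagonal terms each contribute $\E[\tilde f(X)\tilde g(X)|Y]$ by exchangeability. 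This yields $2\,\E[\tilde f(X)\tilde g(X)]=2\,\cE(f(X),g(X)|Y)$, giving \eqref{eq:exch_1}. For \eqref{eq:exch_2} I would use the symmetry of the pair $(X,X')$: swapping $X\leftrightarrow X'$ in $\E[(f(X)-f(X'))_+(g(X)-g(X'))]$ turns $(f(X)-f(X'))_+$ into $(f(X')-f(X))_+$ and negates the $g$-difference, and adding the two expressions and using $(a)_+ + (-a)_+ = |a|$ together with $\operatorname{sgn}(a)\cdot(\text{difference})$ bookkeeping recovers $\tfrac12\E[(f(X)-f(X'))(g(X)-g(X'))]$; alternatively, note $(a)_+ b = \tfrac12 ab + \tfrac12 |a| \operatorname{sgn}(a) b$ is not quite it — cleaner is to observe directly that $\E[(f(X)-f(X'))_+(g(X)-g(X'))]$ is invariant under the swap (both factors flip sign on the event $f(X)<f(X')$, and the $(\cdot)_+$ handles the sign), hence equals half the sum of itself and its swapped version, which is $\tfrac12\E[(f(X)-f(X'))(g(X)-g(X'))]$.

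The main obstacle I anticipate is purely bookkeeping in \eqref{eq:exch_2}: getting the $(\cdot)_+$ manipulation exactly right without sign errors, since unlike \eqref{eq:exch_1} it is not symmetric-looking on its face. The cleanest route is to write $\E[(f(X)-f(X'))_+(g(X)-g(X'))] = \E[\mathbf 1\{f(X)\ge f(X')\}(f(X)-f(X'))(g(X)-g(X'))]$ and then add the version obtained by relabeling $(X,X')\mapsto(X',X)$, which by exchangeability equals the original; the sum is $\E[(f(X)-f(X'))(g(X)-g(X'))]$ (the two indicator events partition the sample space up to the null set $\{f(X)=f(X')\}$ where the integrand vanishes), so the original equals half of that. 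Everything else is routine conditioning. No step requires a result beyond the definition of $K^*$ and elementary properties of conditional expectation already in use in the paper.
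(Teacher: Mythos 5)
Your proposal is correct and follows essentially the same route as the paper: items 1--3 are dispatched as immediate, the exchangeability of $(X,X')$ and the Markov structure $X \to Y \to X'$ carry the weight in item 4, and your symmetrization/indicator argument for \eqref{eq:exch_2} is the paper's argument verbatim. The only (cosmetic) difference is in how you obtain the cross-term identity for \eqref{eq:exch_1}: you center $f,g$ given $Y$ and invoke the conditional independence of $X$ and $X'$ given $Y$, whereas the paper computes $\E[f(X)g(X')] = \E\left[f(X)\,KK^*g(X)\right] = \E\big[\E[f(X)|Y]\,\E[g(X)|Y]\big]$ via the adjointness relation \eqref{eq:backward_1} --- two phrasings of the same fact.
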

\begin{remark} {\em The terminology in Item~4 merits some discussion. It is not hard to show (and, in fact, we do show it in the proof of the proposition) that the joint distribution $P_{XX'}$ has the following symmetry property:
\begin{align}\label{eq:exchangeability}
	P_{XX'}(x,x') = P_{XX'}(x',x), \qquad \forall x,x' \in \sX.
\end{align}
In other words, the random variables $X$ and $X'$ form an \textit{exchangeable pair}.\hfill$\diamond$
}
	
\end{remark}

Generalizing the definition due to Chafa\"i \cite{Chafai_entropy}, we now introduce \textit{$\Phi$-Sobolev inequalities}:

\begin{definition} Consider an admissible pair $(\mu,K)$ and a random pair $(X,Y)$ with probability distribution $\mu \otimes K$. Fix a function $\Phi \in \cF$. We say that $(\mu,K)$ satisfies a $\Phi$-Sobolev inequality with constant $\alpha \ge 0$ if there exists some function $\Psi : \Reals^+ \to \Reals$, such that the inequality
	\begin{align}\label{eq:fsob}
		\Ent_\Phi[f(X)] \le \alpha\, \cE\big( f(X), \Psi \circ f(X) \big | Y \big)
	\end{align}
holds for all $f \in \PFunc(\sX)$.
\end{definition}
Now we are ready to state our main result that relates  SDPIs to $\Phi$-Sobolev inequalities:

\begin{theorem}\label{thm:SDPI_to_phiSob} Suppose that $\Phi \in \cF$ is such that $\Phi(0) < \infty$, the function $\Psi$ defined in \eqref{eq:first_diff_0} is concave, and the corresponding $\Phi$-entropy is homogeneous. Then $\eta_\Phi(\mu,K) \le c$ implies that the pair $(\mu,K)$ satisfies the $\Phi$-Sobolev inequality of the form \eqref{eq:fsob} with constant $\alpha = (1-c)^{-1}$.
\end{theorem}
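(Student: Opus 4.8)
The plan is to deduce the $\Phi$-Sobolev inequality from the functional (entropy-production) form of the SDPI. Because the $\Phi$-entropy is homogeneous, Proposition~\ref{prop:functional_SDPI_gen_hom} (together with Proposition~\ref{prop:functional_SDPI}) shows that $\eta_\Phi(\mu,K)\le c$ is equivalent to
\begin{align*}
\Ent_\Phi[f(X)] \le \frac{1}{1-c}\,\E\big[\Ent_\Phi[f(X)|Y]\big]
\end{align*}
for \emph{every} $f\in\PFunc(\sX)$ (the normalization $\E[f(X)]=1$ may be dropped, since both sides rescale by $\kappa(c)$ under $f\mapsto cf$, and the constant case reads $0\le 0$). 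Hence it suffices to prove
\begin{align}\label{eq:plan_key}
\E\big[\Ent_\Phi[f(X)|Y]\big] \le \cE\big(f(X),\Psi\circ f(X)\,\big|\,Y\big), \qquad f\in\PFunc(\sX),
\end{align}
after which $\alpha=(1-c)^{-1}$ works.

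To prove \eqref{eq:plan_key} I would use the splitting $\Phi(u)=\Phi(0)+u\,\Psi(u)$, valid since $\Phi(0)<\infty$. Writing $m(Y)\deq\E[f(X)|Y]$, this gives $\E[\Ent_\Phi[f(X)|Y]] = \E[f(X)\Psi(f(X))] - \E[m(Y)\Psi(m(Y))]$, while expanding the definition \eqref{eq:Dirichlet} of $\cE$ and using the tower property to collapse the cross terms yields $\cE(f(X),\Psi\circ f(X)|Y) = \E[f(X)\Psi(f(X))] - \E\big[m(Y)\,\E[\Psi(f(X))|Y]\big]$. Subtracting,
\begin{align*}
\cE\big(f(X),\Psi\circ f(X)\,\big|\,Y\big) - \E\big[\Ent_\Phi[f(X)|Y]\big] = \E\Big[m(Y)\big(\Psi(m(Y)) - \E[\Psi(f(X))|Y]\big)\Big],
\end{align*}
which is nonnegative because $m(Y)=\E[f(X)|Y]\ge 0$ and, by the concavity of $\Psi$ and Jensen's inequality applied conditionally on $Y$, $\Psi(\E[f(X)|Y])\ge \E[\Psi(f(X))|Y]$. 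Averaging over $Y$ and combining with the first step gives the claim.

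There is no real obstacle here: once one notices the splitting $\Phi(u)=\Phi(0)+u\,\Psi(u)$, the gap between $\cE(f(X),\Psi\circ f(X)|Y)$ and $\E[\Ent_\Phi[f(X)|Y]]$ is precisely a conditional Jensen term for the concave function $\Psi$. The only point needing a remark is the degenerate possibility $\Psi(0)=-\infty$ (as for $\Phi(u)=u\log u$, where $\Psi(u)=\log u$): if $f$ vanishes somewhere on the conditional support of $X$ then the right-hand side of \eqref{eq:fsob} is $+\infty$ and the inequality is vacuous, while for strictly positive $f$ all expectations above are finite and the argument runs verbatim; the general case then follows by monotone approximation, using that $u\mapsto u\,\Psi(u)=\Phi(u)-\Phi(0)$ extends continuously to $\Reals^+$.
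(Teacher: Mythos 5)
Your proposal is correct and follows essentially the same route as the paper: both arguments combine the functional (entropy-production) form of the SDPI from Proposition~\ref{prop:functional_SDPI} with the splitting $\Phi(u)=u\Psi(u)+\Phi(0)$ and a conditional Jensen step for the concave function $\Psi$, which identifies the bound with $\cE\big(f(X),\Psi\circ f(X)\,\big|\,Y\big)$ via the identity $\cE(U,V|Z)=\E\big[UV-\E[U|Z]\E[V|Z]\big]$. Your closing remark about strict positivity of $f$ when $\Psi(0)=-\infty$ matches the paper's implicit restriction to $f\in\SPFunc(\sX)$ in its proof.
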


\begin{proof} For any $u > 0$ we can write $\Phi(u) = u\Psi(u) + \Phi(0)$. Thus, for any real-valued random variable $U$ which is a.s.\ strictly positive and a jointly distributed random variable $Y$, we have
\begin{align}
	\Ent_\Phi[U|Y] &= \E[U\Psi(U)|Y] - \E[U|Y] \Psi(\E[U|Y]) \nonumber\\
 &\le \E[U\Psi(U)|Y] - \E[U|Y]\E[\Psi(U)|Y],\label{eq:cond_ent_psi}
\end{align}
where the second line is by the concavity of $\Psi$. Now let $U = f(X)$ for some $f \in \SPFunc(\sX)$. Using \eqref{eq:cond_ent_psi} and Proposition~\ref{prop:functional_SDPI}, we get
\begin{align*}
	\Ent_\Phi\big[f(X)\big] 
	&\le \frac{1}{1-c} \E\left[f(X)\Psi(f(X))-\E[f(X)|Y]\E[\Psi(f(X)|Y]\right] \\
	&= \frac{1}{1-c}\cE(f(X),\Psi \circ f(X)|Y),
\end{align*}
where the second line follows from the easily verified identity $\cE(U,V|Z) = \E[UV - \E[U|Z]\E[V|Z]] = \E[U(V - \E[V|Z])]$.\end{proof}

Theorem~\ref{thm:SDPI_to_phiSob} provides a route to $\Phi$-Sobolev inequalities via SDPIs --- any good upper bound on $\eta_\Phi(\mu,K)$ would automatically translate into a bound on the constant in the corresponding $\Phi$-Sobolev inequality. Such functional inequalities are a powerful tool in applied probability (for example, in the context of quantifying the convergence of Markov chains to equilibrium); in the next section, we will illustrate this on the particular case of Poincar\'e inequalities (corresponding to $\Phi(u) = u^2-1$ and log-Sobolev inequalities (corresponding to $\Phi(u)=u\log u$).

It is often useful to estimate the $\Phi$-entropy of a composite function $F \circ f$ (we will see examples of this later on). The following result contains Theorem~5 of \cite{Boucheron_etal_moment_inequalities} as a special case:

\begin{theorem}\label{thm:composite_fctn_phiSob} Suppose that the assumptions of Theorem~\ref{thm:SDPI_to_phiSob} hold, and that the function $\Psi$ is differentiable. Let $F : \Reals \to \Reals^+$ be a convex, differentiable, nondecreasing function, such that $\Psi \circ F$ is convex. Then, for any $f \in \Func(\sX)$,
	\begin{align*}
\Ent_\Phi[F \circ f(X)] \le \frac{1}{1-c} \E\Big[ F'^2\big(f(X)\big)\Psi'\big(F \circ f(X)\big)\left(f(X)-f(X')\right)^2_+\Big],
\end{align*}
where $(X,X')$ is an exchangeable pair of random variables with $P_X = \mu$ and $P_{X'|X}=K^* K$, and $\Psi'$ denotes the right derivative of $\Psi$. Similarly, if $F$ is nonincreasing, then
	\begin{align*}
\Ent_\Phi[F \circ f(X)] \le \frac{1}{1-c} \E\Big[ F'^2\big(f(X)\big)\Psi'\big(F \circ f(X)\big)\left(f(X')-f(X)\right)^2_+\Big],
	\end{align*}
\end{theorem}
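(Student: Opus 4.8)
The plan is to apply the $\Phi$-Sobolev inequality of Theorem~\ref{thm:SDPI_to_phiSob} to the composite function $g \deq F\circ f$ and then peel off everything but a pointwise estimate, using the exchangeable-pair representation of $\cE$. First I would observe that $F\colon\Reals\to\Reals^+$ forces $g = F\circ f\in\PFunc(\sX)$, so Theorem~\ref{thm:SDPI_to_phiSob} (which under the stated hypotheses on $\Phi,\Psi$ yields the $\Phi$-Sobolev inequality with constant $(1-c)^{-1}$ once $\eta_\Phi(\mu,K)\le c$) applies and gives
\[
	\Ent_\Phi\big[F\circ f(X)\big] \le \frac{1}{1-c}\,\cE\big(F\circ f(X),\, \Psi\circ F\circ f(X)\,\big|\,Y\big).
\]
Then I would invoke property~4 of Proposition~\ref{prop:E}, namely the identity \eqref{eq:exch_2}, with the pair of functions $F\circ f$ and $\Psi\circ F\circ f$; denoting by $(X,X')$ the exchangeable pair with $P_X=\mu$ and $P_{X'|X}=K^*K$, the right-hand side becomes
\[
	\frac{1}{1-c}\,\E\Big[\big(F(f(X))-F(f(X'))\big)_+\big(\Psi(F(f(X)))-\Psi(F(f(X')))\big)\Big].
\]

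The heart of the matter is the pointwise inequality
\[
	\big(F(a)-F(b)\big)_+\big(\Psi(F(a))-\Psi(F(b))\big) \le F'(a)^2\,\Psi'(F(a))\,(a-b)_+^2,\qquad a,b\in\Reals .
\]
When $a<b$ both sides vanish --- the left because $F$ is nondecreasing, the right because $(a-b)_+=0$. When $a\ge b$ I would use the tangent-line (subgradient) bounds from convexity: $F(a)-F(b)\le F'(a)(a-b)$ from convexity of $F$, and $\Psi(F(a))-\Psi(F(b))\le(\Psi\circ F)'(a)(a-b)=\Psi'(F(a))F'(a)(a-b)$ from convexity of $\Psi\circ F$. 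Because $F$ is nondecreasing, $F(a)-F(b)\ge 0$ and $F'(a)\ge 0$; and because $\Psi(u)=(\Phi(u)-\Phi(0))/u$ is the slope of the chord of the convex function $\Phi$ joining $0$ to $u$, it is automatically nondecreasing, so $\Psi'(F(a))\ge 0$ and $\Psi(F(a))-\Psi(F(b))\ge 0$ as well. Hence the two nonnegative quantities have nonnegative upper bounds and the two estimates may be multiplied, giving exactly $F'(a)^2\Psi'(F(a))(a-b)^2=F'(a)^2\Psi'(F(a))(a-b)_+^2$. Substituting $a=f(X)$, $b=f(X')$ and taking expectations completes the nondecreasing case.

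For nonincreasing $F$ I would rerun the computation verbatim: now $F'\le 0$ and both $F$ and $\Psi\circ F$ are nonincreasing, the tangent-line bounds are used for $a\le b$, and the same multiplication produces $F'(a)^2\Psi'(F(a))(b-a)_+^2$, i.e., $(f(X')-f(X))_+^2$ replaces $(f(X)-f(X'))_+^2$. The main obstacle is precisely this pointwise inequality: one must keep the signs straight so that multiplying the two one-sided estimates is legitimate, and record that $\Psi'\ge 0$ (monotonicity of chord slopes of a convex function through the origin); using $\Psi'$ as a right derivative is harmless since $\Psi$ is assumed differentiable. Everything else --- the passage from $\cE$ to the exchangeable pair and the final integration --- is routine.
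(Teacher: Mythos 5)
Your proposal is correct and follows essentially the same route as the paper: apply Theorem~\ref{thm:SDPI_to_phiSob} to $F\circ f$, pass to the exchangeable pair via \eqref{eq:exch_2}, and establish the pointwise bound $\left(F(u)-F(v)\right)\left(\Psi(F(u))-\Psi(F(v))\right) \le F'^2(u)\Psi'(F(u))(u-v)^2$ for $u>v$ by multiplying the two nonnegative tangent-line estimates coming from convexity of $F$ and of $\Psi\circ F$ together with monotonicity of $\Psi$. Your explicit check that both factors are nonnegative before multiplying, and your observation that $\Psi$ is nondecreasing because it is the chord slope of the convex $\Phi$, are exactly the points the paper relies on.
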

\begin{proof} We only consider the case when $F$ is nondecreasing, since the other case is handled similarly. Suppose that $u > v$. Then, by monotonicity and convexity of $F$,
	\begin{align*}
		0 \le F(u) - F(v) \le F'(u)(u-v).
	\end{align*}
Moreover, because $f$ is convex, the function $\Psi$ defined in \eqref{eq:first_diff_0} is nondecreasing. Using this together with the assumed convexity of $\Psi \circ F$, we have
\begin{align*}
	0 \le \Psi (F(u)) - \Psi(F(v)) \le \Psi'(F(u))F'(u)(u-v).
\end{align*}
Thus, when $u > v$,
\begin{align}
	\left(F(u)-F(v)\right)\left(\Psi(F(u))-\Psi(F(v))\right) \le F'^2(u)\Psi'(F(u))(u-v)^2.\label{eq:psi_eta_bound}
\end{align}
Therefore, using Theorem~\ref{thm:SDPI_to_phiSob}, we can write
\begin{align*}
	\Ent_\Phi[F \circ f(X)] 
	&\le \frac{1}{1-c} \cE(F \circ f(X), \Psi \circ F \circ f(X)|Y) \\
	&= \frac{1}{1-c} \E\Big[ \left( F(f(X))-F(f(X'))\right)_+ \cdot\left( \Psi\big(F(f(X))\big)-\Psi\big(F(f(X'))\big)\right)\Big] \\
	&\le \frac{1}{1-c} \E\Big[ F'^2\big(f(X)\big)\Psi'\big(F(f(X))\big)\left(f(X)-f(X')\right)^2_+\Big],
\end{align*}
where the second step is by \eqref{eq:exch_2}, while the last step is by \eqref{eq:psi_eta_bound}.
\end{proof}

\subsection{Logarithmic Sobolev and Poincar\'e inequalities}
\label{ssec:logsob}

We now particularize the above general results to two specific types of functional inequalities:
\begin{itemize}
	\item logarithmic Sobolev inequalities, with $\Phi(u) = u \log u$;
	\item Poincar\'e inequalities, with $\Phi(u) = u^2 - 1$.
\end{itemize}
These inequalities are well-known in functional analysis and probability theory (see, e.g., \cite{Bakry_logsob_notes,Diaconis_Saloff_logsob,Miclo_hypercontractive,Bobkov_Tetali_logsob,Mossel_reverse_hypercont}). We will first introduce our definitions of these inequalities following the ideas laid down in the preceding section, and then show how these definitions are related to the ``standard'' ones.

We start with Poincar\'e inequalities:
\begin{definition} We say that an admissible pair $(\mu,K) \in \Prob(\sX) \times \Chan(\sY|\sX)$ satisfies a Poincar\'e inequality with constant $\alpha \ge 0$ if
	\begin{align*}
		\Var \big[ f(X) \big] \le \alpha \,\cE\big(f(X),f(X) \big| Y\big)
	\end{align*}
	for all $f \in \PFunc(\sX)$, where $(X,Y)$ is a random pair with probability law $\mu \otimes K$. The {\em Poincar\'e constant} of $(\mu,K)$ is given by
	\begin{align*}
		\lambda(\mu,K) \deq \inf_{f \in \PFunc(\sX)} \frac{\cE\big(f(X),f(X)\big|Y\big)}{\Var\big[f(X)\big]},
	\end{align*}
	where we adopt the convention that $\frac{0}{0} = + \infty$.
\end{definition}
According to the above definition, $\alpha^* = \frac{1}{\lambda(\mu,K)}$ is the smallest value of $\alpha$ for which the pair $(\mu,K)$ will satisfy a Poincar\'e inequality. Moreover, we have the following:
\begin{proposition} For any admissible pair $(\mu,K)$,
	\begin{align*}
		\lambda(\mu,K) = 1 - S^2(\mu,K).
	\end{align*}
	That is, $(\mu,K)$ satisfies a Poincar\'e inequality with constant $\alpha$ if and only if $\eta_{\chi^2}(\mu,K) \le 1-1/\alpha$.
\end{proposition}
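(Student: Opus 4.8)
The plan is to reduce the claim to two results already in hand: the identity $\eta_{\chi^2}(\mu,K) = S^2(\mu,K)$ from Theorem~\ref{thm:maxcorr}, and the functional characterization of $\eta_{\chi^2}$ provided by Proposition~\ref{prop:functional_SDPI_gen_hom}. The $\chi^2$-divergence corresponds to $\Phi(u) = u^2 - 1$; the associated $\Phi$-entropy is the ordinary variance, $\Ent_{\chi^2}[U] = \Var[U]$, which is homogeneous with $\kappa(c) = c^2$, an invertible function on $(0,\infty)$. Hence Proposition~\ref{prop:functional_SDPI_gen_hom} applies and gives
\begin{align*}
	\eta_{\chi^2}(\mu,K) = 1 - \inf\left\{ \frac{\E\big[\Ent_{\chi^2}[f(X)|Y]\big]}{\Ent_{\chi^2}[f(X)]} : f \in \PFunc(\sX),\ f \neq {\rm const}\right\}.
\end{align*}

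The second step is to recognize the quantities inside this infimum as exactly those in the definition of the Poincar\'e constant. The denominator is just $\Var[f(X)]$. For the numerator, with $\Phi(u)=u^2$ the conditional $\Phi$-entropy \eqref{eq:conditional_phi_entropy} is the conditional variance $\Var[f(X)|Y] = \E[f(X)^2|Y] - (\E[f(X)|Y])^2$, and by the tower property $\E\big[\Var[f(X)|Y]\big] = \E\big[(f(X)-\E[f(X)|Y])^2\big] = \MMSE(f(X)|Y) = \cE\big(f(X),f(X)\big|Y\big)$, using the very definition of $\cE$ in \eqref{eq:Dirichlet}. Therefore the infimum above is precisely $\lambda(\mu,K)$ (the constant-$f$ case yields $0/0 = +\infty$ under the stated convention, so it does not affect the infimum), which gives $\eta_{\chi^2}(\mu,K) = 1 - \lambda(\mu,K)$; combining this with Theorem~\ref{thm:maxcorr} yields $\lambda(\mu,K) = 1 - S^2(\mu,K)$.

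For the ``if and only if'' reformulation I would simply note that $\alpha^* = 1/\lambda(\mu,K)$ is the smallest admissible Poincar\'e constant, so $(\mu,K)$ satisfies a Poincar\'e inequality with constant $\alpha$ iff $\alpha \ge \big(1 - S^2(\mu,K)\big)^{-1} = \big(1 - \eta_{\chi^2}(\mu,K)\big)^{-1}$, which rearranges to $\eta_{\chi^2}(\mu,K) \le 1 - 1/\alpha$.

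The only point that needs a word of care — and the closest thing to an obstacle here — is the matching of the domains of the two infima: the definition of $\lambda(\mu,K)$ ranges over nonnegative $f$, and one might worry whether allowing arbitrary real-valued $f$ changes the value. It does not, because both $\Var[f(X)]$ and $\cE(f(X),f(X)|Y)$ are unchanged when $f$ is shifted by a constant (by the linearity and degeneracy properties in Proposition~\ref{prop:E}), so the infimum over $\PFunc(\sX)$ coincides with the infimum over all of $\Func(\sX)$ — which is also exactly why calling this a Poincar\'e inequality is justified. Everything else is a routine computation.
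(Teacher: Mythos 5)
Your proof is correct and amounts to essentially the same argument as the paper's: both reduce the claim to the law of total variance for $\Phi(u)=u^2$ --- i.e., the identity $\cE(f(X),f(X)|Y)=\E\big[\Var[f(X)|Y]\big]=\Var[f(X)]-\Var[K^*f(Y)]$ --- combined with Theorem~\ref{thm:maxcorr}. The only difference is packaging: you obtain the equality $\eta_{\chi^2}(\mu,K)=1-\lambda(\mu,K)$ in one shot from Proposition~\ref{prop:functional_SDPI_gen_hom}, whereas the paper proves the two inequalities separately (one direction via Theorem~\ref{thm:SDPI_to_phiSob}, where Jensen's inequality is an equality since $\Psi(u)=u$ is affine, and the other by the direct computation above); your remark on shift-invariance of both the numerator and denominator, which reconciles the domains $\PFunc(\sX)$ and $\Func(\sX)$, is a worthwhile point that the paper glosses over.
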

\begin{proof} The function $\Phi(u)=u^2-1$ satisfies the conditions of Theorem~\ref{thm:SDPI_to_phiSob} with $\Psi(u) = u$, and $\Ent_\Phi[U] = \Var[U]$.  Therefore, if $\eta_{\chi^2}(\mu,K) \le c$, then for any $f \in \PFunc(\sX)$ we have
	\begin{align*}
		\Var\big[f(X)\big] &\le \frac{1}{1-c} \cE\big(f(X), \Psi \circ f(X) \big| Y \big) \\
		&= \frac{1}{1-c} \cE\big(f(X),f(X)\big| Y\big),
	\end{align*}
	which implies that the pair $(\mu,K)$ satisfies Poincar\'e with constant $\alpha = \frac{1}{1-c}$. Therefore,
	\begin{align*}
		\lambda(\mu,K) &\ge \sup \left\{ 1-c : \eta_{\chi^2}(\mu,K) \le c \right\} \\
		&= 1-\eta_{\chi^2}(\mu,K) \\
		&= 1-S^2(\mu,K),
	\end{align*}
	where the last step is by Theorem~\ref{thm:maxcorr}.
	
Conversely, suppose that $(\mu,K)$ satisfies Poincar\'e with constant $\alpha$. A simple computation shows
	\begin{align*}
		\cE\big(f(X),f(X)\big|Y\big) = \Var\big[f(X)\big] - \Var\big[K^* f(Y)\big].
	\end{align*}
Therefore,
\begin{align*}
	\Var\big[K^*f(Y)\big] \le \left(1-\frac{1}{\alpha}\right) \Var\big[f(X)\big]
\end{align*}
for any $f \in \PFunc(\sX)$. This, in turn, implies that
\begin{align*}
	S^2(\mu,K) &= \sup_{f \in \PFunc(\sX)} \frac{\Var\big[K^*f(Y)\big]}{\Var\big[f(X)\big]} \\
	&\le \inf\left\{ 1-\frac{1}{\alpha} : \frac{1}{\alpha} \le \lambda(\mu,K) \right\} \\
	&=1-\lambda(\mu,K).
\end{align*}
\end{proof}

Now let us consider log-Sobolev inequalities:
\begin{definition} We say that an admissible pair $(\mu,K) \in \Prob(\sX) \times \Chan(\sY|\sX)$ satisfies a logarithmic Sobolev inequality with constant $\alpha \ge 0$ if
	\begin{align*}
		\Ent \big[ f(X) \big] \le \alpha \,\cE\big(f(X),\log f(X) \big| Y\big)
	\end{align*}
	for all $f \in \SPFunc(\sX)$, where $(X,Y)$ is a random pair with probability law $\mu \otimes K$. The {\em log-Sobolev constant} of $(\mu,K)$ is given by
	\begin{align}\label{eq:logSob_constant}
		\rho_1(\mu,K) \deq \inf_{f \in \SPFunc(\sX)} \frac{\cE\big(f(X),\log f(X)\big|Y\big)}{\Ent\big[f(X)\big]},
	\end{align}
again with the convention that $\frac{0}{0} = + \infty$.
\end{definition}
The following is an extension of Prop.~5.1 in \cite{DelMoral_contraction} to the case $\sX \neq \sY$, and with explicit constants:

\begin{proposition}\label{prop:SDPI_vs_logSob} For any admissible pair $(\mu,K)$,
	\begin{align}\label{eq:SDPI_vs_logSob}
		1-\eta(\mu,K) \le \rho_1(\mu,K) \le 1-\frac{(1-\log 2)\log 2}{2}\eta(\mu,K) \le 1 - \frac{1}{10} \eta(\mu,K).
	\end{align}
	That is, if $\eta(\mu,K) \le c$, then $(\mu,K)$ satisfies a log-Sobolev inequality with constant $\alpha = \frac{1}{1-c}$. Conversely, if $(\mu,K)$ satisfies log-Sobolev with constant $\alpha$, then
	\begin{align*}
		\eta(\mu,K) \le \frac{2}{(1-\log 2)\log 2}\left(1 - \frac{1}{\alpha}\right) \le 10 \left(1 - \frac{1}{\alpha}\right).
	\end{align*}
\end{proposition}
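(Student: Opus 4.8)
The plan is to exploit the machinery already established. For the leftmost inequality $1-\eta(\mu,K) \le \rho_1(\mu,K)$, I would simply invoke Theorem~\ref{thm:SDPI_to_phiSob} with $\Phi(u)=u\log u$: in that case $\Psi(u) = \log u$ is concave, the $\Phi$-entropy is homogeneous (with $\kappa(c)=c$), and $\Phi(0)=0<\infty$, so the theorem says that $\eta(\mu,K)\le c$ implies $\Ent[f(X)] \le (1-c)^{-1}\,\cE(f(X),\log f(X)\mid Y)$ for all $f\in\SPFunc(\sX)$; taking $c = \eta(\mu,K)$ and rearranging through the definition \eqref{eq:logSob_constant} of $\rho_1$ gives $\rho_1(\mu,K) \ge 1-\eta(\mu,K)$.

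For the reverse direction, the point is that the quantity $\cE(f(X),\log f(X)\mid Y)$ appearing in the log-Sobolev inequality can be controlled from above by the ``entropy production'' quantity $\E[\Ent[f(X)\mid Y]]$, which by Proposition~\ref{prop:functional_SDPI} and the law of total entropy \eqref{eq:total_entropy} equals $\Ent[f(X)] - \Ent[K^*f(Y)]$ and governs $\eta(\mu,K)$. Concretely, using the exchangeable-pair representation \eqref{eq:exch_2}, $\cE(f(X),\log f(X)\mid Y) = \E[(f(X)-f(X'))_+(\log f(X)-\log f(X'))]$, while $\E[\Ent[f(X)\mid Y]]$ admits (via Proposition~\ref{prop:E} applied conditionally on $Y$, together with the conditional Jensen bound, or directly via Lemma~\ref{lm:var_cond_ent}) a comparable but not identical expression. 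The idea is to show there is a universal constant $\beta \in (0,1)$ — here $\beta = \tfrac{1}{2}(1-\log 2)\log 2$ — such that
\begin{align*}
	\beta\,\cE\big(f(X),\log f(X)\big|Y\big) \le \E\big[\Ent[f(X)\mid Y]\big]
\end{align*}
for all $f\in\SPFunc(\sX)$. Granting this, for any admissible pair we get $\E[\Ent[f(X)\mid Y]] \ge \beta\,\rho_1(\mu,K)\,\Ent[f(X)]$, and since $\E[\Ent[f(X)\mid Y]] = (1-\tilde\eta)\Ent[f(X)]$ at the extremal $f$ with $\tilde\eta \le \eta(\mu,K)$ (using Proposition~\ref{prop:functional_SDPI}), one concludes $1-\eta(\mu,K) \le 1 - \beta\rho_1(\mu,K)$ in the direction needed, i.e. $\eta(\mu,K) \le \beta^{-1}(1-1/\alpha)$ whenever log-Sobolev holds with constant $\alpha$; equivalently $\rho_1(\mu,K) \le 1 - \beta\,\eta(\mu,K)$. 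The final numerical bounds follow since $\tfrac{1}{2}(1-\log2)\log2 \ge \tfrac{1}{10}$ and $\beta^{-1} \le 10$.

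The main obstacle is establishing the pointwise comparison $\beta\,(a-b)_+(\log a - \log b) \le \Ent_\Phi[\cdot\mid Y]$-type inequality with the sharp constant. Reduced to two points (which is what the conditional $\Phi$-entropy amounts to after conditioning on $Y$ and applying convexity/Proposition~\ref{prop:E}), this becomes an elementary but delicate one-variable calculus problem: one must show that for positive reals $a,b$ and weights summing to one, the ``chi-square-like'' two-point entropy $\E[\Phi(U)\mid Y] - \Phi(\E[U\mid Y])$ with $\Phi(u)=u\log u$ dominates $\beta$ times $(a-b)_+(\log a-\log b)$ times the corresponding weight product, with $\beta=\tfrac12(1-\log2)\log2$ being optimal. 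I expect this to be handled by Lemma~\ref{lm:var_cond_ent} or a dedicated lemma in Appendix~\ref{app:lemmas}, reducing the claim to the scalar inequality whose extremal configuration is where the ratio $a/b$ attains a specific value (the $\log 2$ and $1-\log 2$ factors strongly suggest the extremizer sits at a $1$-vs-$1$ mass split with $a/b \to 0$ or $\infty$). The rest of the argument is bookkeeping with the law of total $\Phi$-entropy and the functional form of the SDPI.
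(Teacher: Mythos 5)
The first inequality is fine and coincides with the paper's argument: apply Theorem~\ref{thm:SDPI_to_phiSob} with $\Phi(u)=u\log u$, $\Psi(u)=\log u$, and read off $\rho_1(\mu,K)\ge 1-\eta(\mu,K)$ from the definition \eqref{eq:logSob_constant}.

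The second half of your plan, however, hinges on a comparison lemma that is false. You propose a universal constant $\beta=\tfrac{1}{2}(1-\log 2)\log 2$ such that $\beta\,\cE\big(f(X),\log f(X)\big|Y\big)\le \E\big[\Ent[f(X)|Y]\big]$ for all $f\in\SPFunc(\sX)$. Take $Y$ independent of $X$ (an admissible pair) and $f=\d\nu/\d\mu$ with $\E[f(X)]=1$. Then $\E[\Ent[f(X)|Y]]=\Ent[f(X)]=D(\nu\|\mu)$, whereas $\cE(f(X),\log f(X)|Y)=\Cov[f(X),\log f(X)]=D(\nu\|\mu)+D(\mu\|\nu)$. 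On a two-point space with $\mu$ uniform and $\nu=(1-\delta,\delta)$, one has $D(\nu\|\mu)\to\log 2$ while $D(\mu\|\nu)\to\infty$ as $\delta\to 0$, so the ratio $\E[\Ent[f|Y]]/\cE(f,\log f|Y)$ tends to $0$ and no positive $\beta$ can work; in particular the extremal analysis you anticipate (a two-point calculus problem producing $\tfrac12(1-\log2)\log2$) does not exist. The paper's mechanism is genuinely different: it never compares $\cE(f,\log f|Y)$ with $\E[\Ent[f|Y]]$ for the \emph{same} $f$. Instead it uses the integral representation of Del Moral--Ledoux--Miclo, $\E[\Ent[f(X)|Y]]=\int_0^\infty \cE(f_t(X),\log f_t(X)|Y)\,\d t$ with $f_t=e^{-t}f+1-e^{-t}$, applies the log-Sobolev inequality to every interpolant $f_t$, and then obtains the constant from the purely scalar estimate $\int_0^\infty\Ent[f_t(X)]\,\d t\ge\tfrac12(1-\log2)\log2\,\Ent[f(X)]$, proved via $\xi(u)=(u+1)\log(u+1)-u$ and $\inf_{u\ge-1}\xi(u/2)/\xi(u)=\tfrac{1-\log2}{2}$. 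The $\log 2$ factors thus come from the interpolation in $t$, not from a pointwise domination of $\cE$ by the conditional entropy. Separately, your final bookkeeping is inconsistent: from $\E[\Ent[f|Y]]\ge\beta\,\rho_1(\mu,K)\,\Ent[f]$ for all $f$ one gets $1-\eta(\mu,K)\ge\beta\,\rho_1(\mu,K)$, i.e.\ $\eta\le 1-\beta\rho_1\le 1-\beta/\alpha$; this is not the line ``$1-\eta\le 1-\beta\rho_1$'' that you wrote, and it does not rearrange to ``$\rho_1\le 1-\beta\eta$.'' You need to re-examine which of these rearrangements the intermediate inequality actually delivers and match it against the claimed statement.
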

\begin{proof} The first inequality in \eqref{eq:SDPI_vs_logSob} follows from Theorem~\ref{thm:SDPI_to_phiSob} with $\Phi(u) = u\log u$ and $\Psi(u) = \log u$. To prove the second inequality, we borrow (and slightly streamline) an ingenious idea from \cite{DelMoral_contraction}. Let us fix an arbitrary $f \in \SPFunc(\sX)$. Then
	\begin{align}
		\E\left[\Ent\big[f(X)\big|Y\big]\right] &= \sum_{y \in \sY} \mu K(y) \sum_{x \in \sX} K^*(x|y) f(x) \log \frac{f(x)}{K^*f(y)} \nonumber\\
		&= \sum_{y \in \sY} \mu K(y) \Ent \big[ f(X) \big| Y=y\big] \label{eq:cond_ent_sum}.
	\end{align} 
By \cite[Lm.~5.2]{DelMoral_contraction}, the entropy $\Ent[U]$ of any nonnegative real-valued random variable $U$ with $\E[U \log U] < \infty$ admits the integral representation
	\begin{align}\label{eq:ent_sym}
		\Ent[U] &= \frac{1}{2}\int^\infty_0 e^{-t}\E\left[\left(U - \bar{U}\right)\log \frac{e^{-t}U + 1-e^{-t}}{e^{-t}\bar{U}+1-e^{-t}}\right] \d t,
	\end{align}
where $\bar{U}$ is an independent copy of $U$. Applying \eqref{eq:ent_sym} to each term in \eqref{eq:cond_ent_sum}, we obtain
\begin{align*}
	\Ent[f(X)|Y=y]= \frac{1}{2}\sum_{x,\bar{x} \in \sX} K^*(x|y)K^*(\bar{x}|y)\left[\int^\infty_0 \left(f_t(x)-f_t(\bar{x})\right) \left( \log f_t(x) - \log f_t(\bar{x})\right) \d t\right],
\end{align*}
where $f_t(x) \deq e^{-t}f(x) + 1 -e^{-t}$. Averaging this w.r.t.\ $Y \sim \mu K$ gives
\begin{align}
&	\E\left[\Ent\big[f(X)\big|Y\big]\right] \nonumber \\
& =  \frac{1}{2} \sum_{y \in \sY}\sum_{x,\bar{x} \in \sX} \mu K (y) K^*(x|y)K^*(\bar{x}|y)  \left[\int^\infty_0 \left(f_t(x)-f_t(\bar{x})\right) \left( \log f_t(x) - \log f_t(\bar{x})\right) \d t\right]\nonumber\\
& = \frac{1}{2} \sum_{x,\bar{x} \in \sX} \mu(x) \sum_{y \in \sY} K(y|x)K^*(\bar{x}|y)  \left[\int^\infty_0 \left(f_t(x)-f_t(\bar{x})\right) \left( \log f_t(x) - \log f_t(\bar{x})\right) \d t\right] \nonumber\\
&= \frac{1}{2} \int^\infty_0 \E \left[ \left(f_t(X)-f_t(X')\right)\left(\log f_t(X)-\log f_t(X')\right)\right] \d t \nonumber\\
&= \int^\infty_0 \cE\big(f_t(X), \log f_t(X)\big|Y\big) \d t, \label{eq:cond_ent_sym}
\end{align}
where $(X,X')$ is an exchangeable pair with joint law $P_{XX'} = \mu \otimes K^*K$, and in the last step we have used Eq.~\eqref{eq:exch_1}. From \eqref{eq:cond_ent_sym} and the definition of the log-Sobolev constant, it follows that
\begin{align}
	\E\left[\Ent\big[f(X)\big|Y\big]\right] &\ge \rho_1(\mu,K)\int^\infty_0 \Ent\left[f_t(X)\right] \d t.\label{eq:cond_ent_sym_2}
\end{align}
Now consider the function $\xi(u) \deq (u+1)\log(u+1) - u$, $u \ge -1$. This function is nonnegative, nonincreasing on $[-1,0]$, nondecreasing on $\Reals^+$, and
\begin{align*}
	\inf_{u \ge -1} \frac{\xi(u/2)}{\xi(u)} = \frac{1-\log 2}{2}.
\end{align*}
By monotonicity, $\xi(cu) \ge \xi(u/2) \ge \frac{1-\log 2}{2} \xi(u)$ for all $u \ge -1$ and for any $1/2 \le c \le 1$. Therefore,
\begin{align}
\int^\infty_0	\Ent\left[f_t(X)\right]\d t &= \int^\infty_0 \E\left[f_t(X)\log f_t(X) - f_t(X)+1 \right]  \d t \nonumber\\
 	&= \int^\infty_0 \E\left[\xi\left(f_t(X)-1\right)\right] \d t \nonumber\\
	& = \int^\infty_0 \E\left[ \xi\left(e^{-t}\left(f(X)-1\right)\right)\right] \d t \nonumber\\
	& \ge \int^{\log 2}_0 \E\left[ \xi\left(e^{-t}\left(f(X)-1\right)\right)\right] \d t \nonumber\\
	& \ge \frac{1-\log 2}{2} \int^{\log 2}_0 \E\left[\xi\left(f(X)-1\right)\right] \d t \nonumber\\
	& = \frac{(1-\log 2)\log 2}{2} \Ent\big[f(X)\big]. \label{eq:ent_integral}
\end{align}
Using \eqref{eq:ent_integral} in \eqref{eq:cond_ent_sym}, we obtain
\begin{align*}
	\E\left[ \Ent\big[f(X)\big|Y\big]\right] \ge \frac{(1-\log 2)\log 2}{2} \rho_1(\mu,K) \Ent\big[f(X)\big].
\end{align*}
Since $f$ was arbitrary, this implies the second inequality in \eqref{eq:SDPI_vs_logSob}.
\end{proof}

Now let us see how these results are related to the standard formulation of log-Sobolev inequalities in a discrete setting (see, e.g., \cite{Diaconis_Saloff_logsob,Bobkov_Tetali_logsob,Mossel_reverse_hypercont}). Given a finite set $\sX$, we fix an admissible pair $(\mu,M) \in \Prob(\sX) \times \Chan(\sX|\sX)$, such that the Markov kernel $M$ is \textit{reversible} w.r.t.\ $\mu$:
\begin{align}\label{eq:reversibility}
	\mu(x)M(x'|x) &= \mu(x')M(x|x'), \qquad \forall x,x' \in \sX
\end{align}
(nonreversible kernels can be handled as well, but we will not need this generalization here). From \eqref{eq:reversibility}, it follows that $M$ leaves $\mu$ invariant: $\mu M = \mu$. Define the \textit{Dirichlet form} $\cE_{\mu,M} : \Func(\sX) \times \Func(\sX) \to \Reals$ by
\begin{align}
	\cE_{\mu,M}(f,g) &\deq \frac{1}{2}\sum_{x,x' \in \sX} \left(f(x)-f(x')\right)\left(g(x)-g(x')\right)\mu(x)M(x'|x) \nonumber\\
	&\equiv \frac{1}{2}\E\left[ \left(f(X)-f(X')\right)\left(g(X)-g(X')\right)\right], \label{eq:Dirichlet_2}
\end{align}
where $(X,X') \in \sX \times \sX$ is a random pair with probability law $\mu \otimes M$. Our ``overloading'' of the notation $\cE(\cdot,\cdot)$ [compare with Eq.~\eqref{eq:Dirichlet}] is not accidental. To see this, we first need a definition:

\begin{definition} Fix some alphabet $\sY$ and a channel $K \in \Chan(\sY|\sX)$. We say that the pair $(\mu,M)$ {\em factors through $K$} if $M = K^*_\mu \circ K$, i.e., if
	\begin{align*}
		M(x'|x) &= \sum_{y \in \sY} K^*_\mu(x'|y) K(y|x), \qquad \forall (x,x') \in \sX \times \sX.
	\end{align*}
\end{definition}
\noindent In other words, $(\mu,M)$ factors through $K$ if we can generate a copy of $(X,X')$ according to the following two-stage procedure, starting with a draw $X \sim \mu$:
\begin{enumerate}
	\item Pass $X$ through the channel $K$ to get $Y$.
	\item Pass $Y$ through the adjoint channel $K^*_\mu$ to get $X'$.
\end{enumerate}
This is nothing but the well-known \textit{two-stage} (or \textit{two-component}) \textit{Gibbs sampler} \cite{Robert_Casella_MCMC,Diaconis_stoch_alt_proj}. 

\begin{proposition}\label{prop:Dirichlet_equiv} The random variables $X$ and $X'$ form an exchangeable pair. Moreover, if $(\mu,M)$ factors through some channel $K \in \Chan(\sY|\sX)$, then
	\begin{align}\label{eq:Dirichlet_equiv}
		\cE_{\mu,M}(f,g) = \cE\left(f(X),g(X)\big|Y\right), \qquad \forall f,g \in \Func(\sX)
	\end{align}
	where $(X,Y) \in \sX \times \sY$ is a random pair with law $\mu \otimes K$.
\end{proposition}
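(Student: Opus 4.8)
The plan is to verify both assertions of Proposition~\ref{prop:Dirichlet_equiv} by directly unwinding the definitions. First, I would establish exchangeability of $(X,X')$ when $P_X = \mu$ and $P_{X'|X} = M = K^*_\mu \circ K$. Writing out the joint mass function,
\begin{align*}
	P_{XX'}(x,x') &= \mu(x) M(x'|x) = \mu(x) \sum_{y \in \sY} K^*_\mu(x'|y) K(y|x) = \sum_{y \in \sY} \mu(x) K(y|x) K^*_\mu(x'|y).
\end{align*}
Now I would substitute the Bayes formula \eqref{eq:backward_2}, $K^*_\mu(x'|y) = K(y|x')\mu(x')/\mu K(y)$, to get
\begin{align*}
	P_{XX'}(x,x') = \sum_{y \in \sY} \frac{\mu(x) K(y|x) \mu(x') K(y|x')}{\mu K(y)},
\end{align*}
which is manifestly symmetric in $x$ and $x'$. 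This is exactly \eqref{eq:exchangeability}, so $(X,X')$ is an exchangeable pair. (Alternatively, this follows immediately from Proposition~\ref{prop:E}, Item~4, whose proof is deferred to the appendix, but giving the short self-contained computation here is cleaner.)

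For the identity \eqref{eq:Dirichlet_equiv}, the key observation is that the two-stage sampling description of $(\mu,M)$ factoring through $K$ means precisely the following: if $(X,Y) \sim \mu \otimes K$ and then $X'$ is drawn from $K^*_\mu(\cdot|Y)$, then $(X,X')$ has law $\mu \otimes M$ and, crucially, $X$ and $X'$ are conditionally independent given $Y$, each with conditional law $K^*_\mu(\cdot|Y)$. Given this, I would compute directly from the definition \eqref{eq:Dirichlet} of $\cE$. Set $e_f(y) \deq f(X) - \E[f(X)|Y=y]$ and note $\E[f(X)|Y] = K^*_\mu f(Y)$. Then, using conditional independence of $X$ and $X'$ given $Y$ together with the fact that $X'$ has the same conditional distribution as $X$ given $Y$,
\begin{align*}
	\cE(f(X),g(X)|Y) &= \E\big[(f(X) - K^*_\mu f(Y))(g(X) - K^*_\mu g(Y))\big] \\
	&= \E\Big[ \E\big[(f(X)-\E[f(X)|Y])(g(X')-\E[g(X')|Y]) \,\big|\, Y\big]\Big],
\end{align*}
where I have replaced the second factor's $X$ by the conditionally-iid copy $X'$ (this does not change the conditional expectation by independence). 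Expanding the product $ (f(X)-f(X'))(g(X)-g(X'))$ and using exchangeability to symmetrize gives the formula \eqref{eq:exch_1} for $\cE$, namely $\cE(f(X),g(X)|Y) = \frac12 \E[(f(X)-f(X'))(g(X)-g(X'))]$ with $(X,X') \sim \mu \otimes K^*K = \mu \otimes M$. But that is exactly the expression \eqref{eq:Dirichlet_2} defining $\cE_{\mu,M}(f,g)$.

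The only mild obstacle is bookkeeping: one must be careful that ``$X'$ is a conditionally-iid copy of $X$ given $Y$'' is really what the factorization hypothesis $M = K^*_\mu K$ delivers, and that the joint law $P_{XX'}$ obtained this way coincides with $\mu \otimes M$ — this is just the chain $X \to Y \to X'$ with $P_{X|Y} = P_{X'|Y} = K^*_\mu$, marginalizing out $Y$. Once that structural fact is in hand, the computation above is a two-line application of the defining formula for $\cE$ together with Proposition~\ref{prop:E}(4) (or an equivalent direct symmetrization), so no serious difficulty arises. I would present the exchangeability computation first, then invoke it to justify the symmetrization step in the second display.
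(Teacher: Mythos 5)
Your overall route is the paper's: exchangeability plus Proposition~\ref{prop:E}(4) plus the definition \eqref{eq:Dirichlet_2} of the Dirichlet form, and your direct computation of $P_{XX'}(x,x')$ is exactly the one the paper relegates to the appendix proof of Proposition~\ref{prop:E}(4). (Minor remark: the first assertion of the proposition holds for \emph{any} $M$ reversible w.r.t.\ $\mu$, which is the paper's one-line argument $P_{XX'}(x,x')=\mu(x)M(x'|x)=\mu(x')M(x|x')$; your computation only covers the factorized case, though that is all that is needed for the second assertion.)

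However, your attempted self-contained derivation of the identity contains a false step. You assert
\begin{align*}
	\E\big[(f(X)-\E[f(X)|Y])(g(X)-\E[g(X)|Y])\big] = \E\Big[\E\big[(f(X)-\E[f(X)|Y])(g(X')-\E[g(X')|Y])\,\big|\,Y\big]\Big],
\end{align*}
justified by ``independence does not change the conditional expectation.'' It does: since $X$ and $X'$ are conditionally independent given $Y$, the inner conditional expectation on the right factors as $\E[f(X)-\E[f(X)|Y]\mid Y]\cdot\E[g(X')-\E[g(X')|Y]\mid Y]=0$, so the right-hand side is identically zero, while the left-hand side is $\cE(f(X),g(X)|Y)$, which is generally nonzero. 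The conditionally-iid copy must enter through the \emph{cross term}, not the centered product: write $\cE(f(X),g(X)|Y)=\E[f(X)g(X)]-\E\big[\E[f(X)|Y]\,\E[g(X)|Y]\big]$, then use conditional independence to get $\E\big[\E[f(X)|Y]\,\E[g(X)|Y]\big]=\E[f(X)g(X')]$, and finally exchangeability to conclude $\E[f(X)g(X)]-\E[f(X)g(X')]=\tfrac12\E\big[(f(X)-f(X'))(g(X)-g(X'))\big]=\cE_{\mu,M}(f,g)$. Since you ultimately fall back on citing Proposition~\ref{prop:E}(4) anyway, the proof is salvageable by deleting the faulty display, but as written that step does not hold.
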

\begin{proof} Exchangeability of $(X,X')$ follows from the reversibility condition \eqref{eq:reversibility}. The identity \eqref{eq:Dirichlet_equiv} is a consequence of \eqref{eq:Dirichlet_2} and Prop.~\ref{prop:E}, Part 4).
\end{proof}

With these definitions out of the way, we can introduce the hierarchy of log-Sobolev inequalities following Mossel et al.~\cite{Mossel_reverse_hypercont}:

\begin{definition}\label{def:LSI} The pair $(\mu,M)$ satisfies {\em log-Sobolev inequality of order $p \in \Reals^+ \backslash \{0,1\}$ with constant $c$}, or $\LSI_p(c)$, if
	\begin{align*}
		\Ent\left[f^p(X)\right] \le \frac{cp^2}{4(p-1)} \cE_{\mu,M}\left(f^{p-1},f\right), \quad \forall f \in \PFunc(\sX);
	\end{align*}
$\LSI_1(c)$ if
$$
\Ent\left[f(X)\right] \le \dfrac{c}{4}\cE_{\mu,M}(f,\log f), \qquad \forall f \in \SPFunc(\sX);
$$
and $\LSI_0(c)$ if
$$
\Var[\log f(X)] \le - \frac{c}{2}\cE_{\mu,M}(f,1/f), \qquad \forall f \in \SPFunc(\sX).
$$
\end{definition}
\noindent Another important functional inequality relates the variance to the Dirichlet form $\cE_{\mu,M}$:

\begin{definition} $(\mu,M)$ satisfies a {\em Poincar\'e inequality} with constant $c \ge 0$, or $\PI(c)$, if 
	\begin{align*}
		\Var[f(X)] \le c\, \cE_{\mu,M}(f,f), \qquad \forall f \in \PFunc(\sX).
	\end{align*}
\end{definition}

\noindent We are interested in the tightest constants in log-Sobolev inequalities for $p \in [0,2]$. With that in mind, we define
\begin{align*}
		\tilde{\rho}_p(\mu,M) \deq
		\frac{p^2}{4(p-1)}\inf_{f \in \SPFunc(\sX)}  \frac{\cE_{\mu,M}(f^{p-1},f)}{\Ent[f^p(X)]}
	\end{align*}
	for $p \not\in \{0,1\}$, with the convention $\frac{0}{0} = \infty$. The constants $\tilde{\rho}_0,\tilde{\rho}_1$ are defined analogously. The {\em Poincar\'e constant} is
\begin{align*}
	\tilde{\lambda}(\mu,M) \deq \inf_{f \in \Func(\sX)} \frac{\cE_{\mu,M}(f,f)}{\Var[f(X)]}.
\end{align*}
Mossel et al.~\cite{Mossel_reverse_hypercont} proved that the function $p \mapsto \tilde{\rho}_p(\mu,M)$ is nonincreasing:
	\begin{align}\label{eq:logSob_monotonicity}
		\tilde{\rho}_0(\mu,M) \ge \tilde{\rho}_p(\mu,M) \ge \tilde{\rho}_q(\mu,M), \qquad 0 \le p \le q \le 2
	\end{align}
and moreover $\tilde{\rho}_0(\mu,M) = \frac{1}{2}\tilde{\lambda}(\mu,M)$. Log-Sobolev and Poincar\'e inequalities arise naturally in the study of the continuous-time random walk on $\sX$ with infinitesimal generator $L = M-I$. This is a pure-jump Markov process with state space $\sX$ that jumps from state $x$ to another state $x'$ with probability $M(x'|x)$, and the times between successive jumps are i.i.d.\ $\text{Exp}(1)$ random variables. Let $\{X_t\}_{t \ge 0}$ denote this process with $X_0 \sim \mu$, where $X_t \sim \mu$ for all $t$ by stationarity. For each $t \ge 0$, define the mapping $P_t : \Func(\sX) \to \Func(\sX)$ by 
$$
P_t f(x) \deq \E[f(X_t)|X_0 = x].
$$
Then one can prove the following (see, e.g., \cite[Prop.~1.7]{Montenegro_Tetali}):
\begin{enumerate}
	\item $\Var[P_tf(X_0)] \le e^{-t/c}\Var[f(X_0)]$ for all $f \in \Func(\sX)$ and all $t \ge 0$ if and only if the pair $(\mu,M)$ satisfies $\PI(c)$.
	\item $\Ent[P_tf(X_0)] \le e^{-t/c}\Ent[f(X_0)]$ for all $f \in \PFunc(\sX)$ and all $t \ge 0$ if and only if the pair $(\mu,M)$ satisfies $\LSI_1(4c)$.
\end{enumerate}
In other words, the Poincare inequality and the log-Sobolev inequality for $p=1$ completely characterize the exponential rate of decay of variance and entropy, respectively, along the trajectory of $\{X_t\}$ with $X_0 \sim \mu$. In particular, if for each $t \ge 0$ we consider the channel $M_t \in \Chan(\sX|\sX)$ with transition probabilities $M_t(x'|x) = \PP \left(X_t = x' | X_0 = x\right)$, then
\begin{align*}
	\eta_{\chi^2}(\mu,M_t) \le e^{- \tilde{\lambda}(\mu,M)t} \qquad \text{and} \qquad \eta(\mu,M_t) \le e^{-4\tilde{\rho}_1(\mu,M)t}.
\end{align*}
The main utility of the log-Sobolev inequality for $p=2$ is that the Dirichlet form $\cE_{\mu,M}(f,f)$ is much easier to deal with than $\cE_{\mu,M}(f,\log f)$; by monotonicity property of the log-Sobolev constants [cf.~\eqref{eq:logSob_monotonicity}], we end up with the handy estimate
\begin{align*}
	\eta(\mu,M_t) \le e^{-4\tilde{\rho}_2(\mu,M)t}.
\end{align*}
Thus, it is important to obtain tight upper and lower bounds on the Poincar\'e and the log-Sobolev constants of the pair $(\mu,M)$. We now show that such bounds can be given in terms of the SDPI constant $\eta(\mu,K)$ of any channel $K$ that the pair $(\mu,M)$ factors through; conversely, we can obtain bounds on $\eta(\mu,K)$ in terms of log-Sobolev and Poincar\'e constants of the pair $(\mu,K^*_\mu \circ K)$. We start with the Poincar\'e constant, in which case we have the following exact characterization:

\begin{theorem}\label{thm:SDPI_vs_Poincare} The functional $K \mapsto S^2(\mu,K)$ is constant on the collection
	\begin{align*}
		\Chan(\mu,M) &\deq \left\{ K : \text{ $(\mu,M)$ factors through $K$} \right\},
	\end{align*}
	and its value there is equal to $1-\tilde{\lambda}(\mu,M)$. Equivalently, if $K \in \Chan(\mu,M)$, then
	\begin{align*}
		\eta_{\chi^2}(\mu,K) = 1-\tilde{\lambda}(\mu,M).
	\end{align*}
\end{theorem}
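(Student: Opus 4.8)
The plan is to reduce the claim to the maximal-correlation characterization of $\eta_{\chi^2}$ from Theorem~\ref{thm:maxcorr}, using the identification of Dirichlet forms in Proposition~\ref{prop:Dirichlet_equiv}. Fix any $K \in \Chan(\mu,M)$ and let $(X,Y)$ be a random pair with law $\mu \otimes K$. Since $(\mu,M)$ factors through $K$, Proposition~\ref{prop:Dirichlet_equiv} gives $\cE_{\mu,M}(f,f) = \cE\big(f(X),f(X)\big|Y\big)$ for every $f \in \Func(\sX)$. The first step is to rewrite the right-hand side in closed form. Since $\cE(U,U|Z) = \E[(U-\E[U|Z])^2] = \MMSE(U|Z)$, and $\E[f(X)|Y] = K^*f(Y)$, the law of total variance yields
\[
\cE_{\mu,M}(f,f) = \Var\big[f(X)\big] - \Var\big[K^*f(Y)\big],
\]
which is exactly the computation already noted in the proof of the Poincar\'e/maximal-correlation proposition preceding Definition of $\rho_1$.

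Next I would substitute this identity into the definition of $\tilde\lambda(\mu,M)$. For any nonconstant $f$,
\[
\frac{\cE_{\mu,M}(f,f)}{\Var[f(X)]} = 1 - \frac{\Var[K^*f(Y)]}{\Var[f(X)]},
\]
so taking the infimum over $f$ gives $\tilde\lambda(\mu,M) = 1 - \sup_f \frac{\Var[K^*f(Y)]}{\Var[f(X)]}$. Because $\Var[\cdot]$ is invariant under adding a constant to $f$ (and $K^*$ is unital), and the ratio is invariant under positive rescaling, the supremum over nonconstant $f \in \Func(\sX)$ equals the supremum over nonconstant $f \in \PFunc(\sX)$, equivalently over the mean-zero subspace $\cH_0(\sX)$ used in the proof of Theorem~\ref{thm:maxcorr}. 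That supremum is precisely $\eta_{\chi^2}(\mu,K) = S^2(\mu,K)$. Hence $S^2(\mu,K) = 1 - \tilde\lambda(\mu,M)$, and since the right-hand side depends only on $(\mu,M)$, this proves both that $K \mapsto S^2(\mu,K)$ is constant on $\Chan(\mu,M)$ and that its value there is $1 - \tilde\lambda(\mu,M)$; the equivalent statement $\eta_{\chi^2}(\mu,K) = 1-\tilde\lambda(\mu,M)$ follows from $\eta_{\chi^2}(\mu,K)=S^2(\mu,K)$.

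I do not expect a genuine obstacle here; the one point requiring a word of care is the domain of optimization, since $\tilde\lambda(\mu,M)$ is defined as an infimum over all of $\Func(\sX)$ while $\eta_{\chi^2}(\mu,K)$ was characterized via a supremum over $\PFunc(\sX)$ (or $\cH_0(\sX)$). As only the equivalence class of $f$ modulo additive constants and positive scaling affects the variance ratio, these optimization problems share the same value, which closes the argument.
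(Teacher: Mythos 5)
Your proof is correct and follows essentially the same route as the paper: both arguments reduce the theorem to the identity $\cE_{\mu,M}(f,f) = \Var[f(X)] - \Var[K^*_\mu f(Y)]$ and then identify the resulting variational problem with $1-\eta_{\chi^2}(\mu,K) = 1-S^2(\mu,K)$ via Theorem~\ref{thm:maxcorr}. The only cosmetic difference is that you obtain the variance-drop identity from Proposition~\ref{prop:Dirichlet_equiv} together with the law of total variance, whereas the paper verifies it by a direct expansion of $\E[(\E[f(X)|Y])^2] = \E[f(X)f(X')]$; your remark about the domains of optimization being equivalent modulo affine rescaling correctly handles the one point of care.
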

\begin{proof} We need to show the following: if $(\mu,M)$ factors through $K$, then
	\begin{align}\label{eq:variance_drop}
		\cE_{\mu,M}(f,f) &= \Var\left[f(X)\right] - \Var\left[K^*_\mu f(Y)\right],
	\end{align}
	where $(X,Y) \in \sX \times \sY$ is a random pair with law $\mu \otimes K$. Assuming this is true, we then have
	\begin{align*}
			\tilde{\lambda}(\mu,M) &= \inf_{f \in \Func(\sX)} \frac{\cE_{\mu,M}(f,f)}{\Var[f(X)]} \\
			&= \inf_{f \in \Func(\sX)} \frac{\Var\left[f(X)\right] - \Var\left[K^*_\mu f(Y)\right]}{\Var\left[f(X)\right]} \\
			&= 1 - \eta_{\chi^2}(\mu,K) \\
			&= 1 - S^2(\mu,K).
	\end{align*}
	Noting that $\tilde{\lambda}(\mu,M)$ is independent of the choice of $K$, we obtain the statement of the theorem. 
	
	It remains to prove \eqref{eq:variance_drop}. Fixing $K$, we have
	\begin{align*}
		\Var\left[f(X)\right] - \Var\left[K^*_\mu f(Y)\right] &= \E[f^2(X)] - \E\left[(\E[f(X)|Y])^2\right],
	\end{align*}
	where
	\begin{align*}
		\E\left[(\E[f(X)|Y])^2\right] &= \sum_{x,x' \in \sX}\sum_{y \in \sY} \mu K(y) K^*_\mu(x|y) K^*_\mu(x'|y) f(x) f(x') \\
		&= \sum_{x,x' \in \sX}\sum_{y \in \sY} \mu(x) K(y|x) K^*_\mu(x'|y) f(x) f(x') \\
		&= \sum_{x,x' \in \sX} \mu(x) M(x'|x) f(x) f(x') \\
		&= \E[f(X)f(X')].
	\end{align*}
	Therefore,
	\begin{align*}
		\cE_{\mu,M}(f,f) &= \frac{1}{2} \E\left[\left(f(X)-f(X')\right)^2\right] \\
		&= \E[f^2(X)] - \E[f(X)f(X')] \\
		&= \E[f^2(X)] - \E\left[(\E[f(X)|Y])^2\right] \\
		&= \Var\left[f(X)\right] - \Var\left[K^*_\mu f(Y)\right].
	\end{align*}
\end{proof}

\begin{example}[Doubly symmetric binary source]\label{ex:DSBS} {\em Consider the case $\sX = \{0,1\}$, $\mu = \Bernoulli(1/2)$, $M = \BSC(\eps)$ with $\eps \le 1/2$. The resulting exchangeable pair $(X,X')$ is the \textit{doubly symmetric binary source} (DSBS) with parameter $\eps$ \cite{Wyner_common_info}. It is a matter of simple computation to show that the pair $(\mu,M)$ factors through $K = \BSC(\delta(\eps))$ with
	\begin{align}
	\label{eq:BSC_factor}
	\delta(\eps) =  \frac{1 + \sqrt{1 - 2\eps}}{2}.
\end{align}
	We know that
	\begin{align*}
		S^2\left(\Bernoulli(1/2),\BSC(\delta(\eps))\right) &= \left(1-2\delta(\eps)\right)^2 = 1-2\eps,
	\end{align*}
	which therefore gives
	\begin{align*}
		\tilde{\lambda}\left(\Bernoulli(1/2),\BSC(\eps)\right) = 2\eps.
	\end{align*}
For any $f \in \Func(\sX)$, we can compute the Dirichlet form
	\begin{align*}
		\cE_{\mu,M}(f,f) &= \frac{1}{2} \left[ \mu (0) M(1|0) + \mu(1) M(0|1)\right]\big(f(0) - f(1)\big)^2 \\
		&= \frac{\eps}{2} \big(f(0)-f(1)\big)^2,
	\end{align*}
	which gives us the Poincar\'e inequality
	\begin{align*}
		\Var[f(X)] &\le \frac{1}{4}\big(f(0)-f(1)\big)^2
	\end{align*}
	(see, e.g., \cite[Ex.~3.9]{Bobkov_Tetali_logsob}). Note that this inequality is independent of the crossover probability $\eps$.}
\end{example}
Next, we consider the case of the log-Sobolev constant $\tilde{\rho}_1(\mu,M)$, for which we can only give upper and lower bounds:

\begin{theorem}\label{thm:SDPI_vs_logSob_2} The functional $K \mapsto \rho_1(\mu,K)$ is constant on the collection of all channels $K$ such that $M = K^*_\mu \circ K$, where it takes the value $4\tilde{\rho}_1(\mu,M)$. Moreover, if $(\mu,M)$ factors through $K$, the log-Sobolev constant $\tilde{\rho}_1(\mu,M)$ satisfies
	\begin{align}\label{eq:SDPI_vs_logSob_2}
		1-\eta(\mu,K) \le 4\tilde{\rho}_1(\mu,M) \le 1-\frac{(1-\log 2)\log 2}{2}\eta(\mu,K).
	\end{align}
\end{theorem}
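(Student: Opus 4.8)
The plan is to reduce the whole statement to two results already in hand: Proposition~\ref{prop:Dirichlet_equiv}, which identifies the abstract correlation functional $\cE(\cdot,\cdot\,|\,Y)$ with the Dirichlet form $\cE_{\mu,M}$ whenever $(\mu,M)$ factors through $K$, and Proposition~\ref{prop:SDPI_vs_logSob}, which sandwiches $\rho_1(\mu,K)$ between $1-\eta(\mu,K)$ and $1-\tfrac{(1-\log 2)\log 2}{2}\eta(\mu,K)$. The only genuinely new identity to establish is
\[
	\rho_1(\mu,K) = 4\,\tilde{\rho}_1(\mu,M) \qquad \text{whenever } M = K^*_\mu \circ K,
\]
after which both assertions of the theorem follow at once.

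To prove this identity I would simply unfold the definitions. Fix any $K \in \Chan(\mu,M)$ and let $(X,Y)\sim\mu\otimes K$. By Proposition~\ref{prop:Dirichlet_equiv} (which in turn rests on Part 4 of Proposition~\ref{prop:E}), for every $f\in\SPFunc(\sX)$ we have $\cE\big(f(X),\log f(X)\,|\,Y\big) = \cE_{\mu,M}(f,\log f)$; moreover $\Ent[f(X)]$ depends only on the law $\mu$ of $X$ and hence is the same for every $K\in\Chan(\mu,M)$. Therefore the ratio $\cE(f(X),\log f(X)|Y)/\Ent[f(X)]$ in the definition~\eqref{eq:logSob_constant} of $\rho_1(\mu,K)$ coincides term by term with $\cE_{\mu,M}(f,\log f)/\Ent[f(X)]$, and taking infima over the common class $\SPFunc(\sX)$ gives $\rho_1(\mu,K) = \inf_f \cE_{\mu,M}(f,\log f)/\Ent[f(X)]$, a quantity independent of $K$. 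Comparing with the normalization in Definition~\ref{def:LSI} — $\LSI_1(c)$ requires $\Ent[f(X)]\le\tfrac c4\,\cE_{\mu,M}(f,\log f)$, so the best constant is $c^* = 4\big(\inf_f \cE_{\mu,M}(f,\log f)/\Ent[f(X)]\big)^{-1}$ and $\tilde{\rho}_1(\mu,M):=1/c^* = \tfrac14\inf_f \cE_{\mu,M}(f,\log f)/\Ent[f(X)]$ — yields exactly $\rho_1(\mu,K) = 4\tilde{\rho}_1(\mu,M)$, establishing the first claim.

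With the identity available, the bracketing inequality~\eqref{eq:SDPI_vs_logSob_2} is pure substitution: Proposition~\ref{prop:SDPI_vs_logSob} gives $1-\eta(\mu,K)\le\rho_1(\mu,K)\le 1-\tfrac{(1-\log 2)\log 2}{2}\eta(\mu,K)$, and replacing $\rho_1(\mu,K)$ by $4\tilde{\rho}_1(\mu,M)$ produces the stated chain (and, a fortiori, the cleaner $1-\eta(\mu,K)\le 4\tilde{\rho}_1(\mu,M)\le 1-\tfrac1{10}\eta(\mu,K)$).

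I do not expect a real obstacle: the argument is bookkeeping. The two points that need care are (i) keeping the factor $4$ straight between the two normalization conventions for the $p=1$ log-Sobolev constant, and (ii) verifying that the infima defining $\rho_1(\mu,K)$ and $\tilde{\rho}_1(\mu,M)$ range over the same function class $\SPFunc(\sX)$, so that term-by-term equality of the ratios passes to the infimum. It is worth noting the contrast with Theorem~\ref{thm:SDPI_vs_Poincare}, where the $\chi^2$/variance picture yields the \emph{exact} identity $\eta_{\chi^2}(\mu,K)=1-\tilde{\lambda}(\mu,M)$; here we can only \emph{bracket} $4\tilde{\rho}_1(\mu,M)$ in terms of $\eta(\mu,K)$, precisely because $\rho_1(\mu,K)$ and $\eta(\mu,K)$ are themselves only related by the two-sided estimate of Proposition~\ref{prop:SDPI_vs_logSob} and not by an equality.
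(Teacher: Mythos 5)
Your proposal is correct and follows essentially the same route as the paper: the paper's proof likewise chains $1-\eta(\mu,K)\le\rho_1(\mu,K)$ from Proposition~\ref{prop:SDPI_vs_logSob}, identifies $\cE(f(X),\log f(X)|Y)$ with $\cE_{\mu,M}(f,\log f)$ via Proposition~\ref{prop:Dirichlet_equiv}, reads off $\rho_1(\mu,K)=4\tilde{\rho}_1(\mu,M)$ from the two normalizations, and then gets the upper bound in \eqref{eq:SDPI_vs_logSob_2} by substituting this identity into the second inequality of Proposition~\ref{prop:SDPI_vs_logSob}. Your bookkeeping of the factor $4$ and of the common function class $\SPFunc(\sX)$ matches the paper's argument.
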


\begin{proof} Choose any channel $K$, such that $M = K^*_\mu \circ K$. Then, with $(X,Y) \sim \mu \otimes K$,
	\begin{align}
		1-\eta(\mu,K) &\le \rho_1(\mu,K) \label{eq:logsob_comp_1} \\
		&=\inf_{f \in \SPFunc(\sX)} \frac{\cE\big(f(X),\log f(X)\big | Y \big)}{\Ent\left[f(X)\right]} \label{eq:logsob_comp_2}\\
		&= \inf_{f \in \SPFunc(\sX)} \frac{\cE_{\mu,M}(f, \log f)}{\Ent[f(X)]} \label{eq:logsob_comp_3}\\
		&= 4 \tilde{\rho}_1(\mu,M), \label{eq:logsob_comp_4}
	\end{align}
	where \eqref{eq:logsob_comp_1} is by Proposition~\ref{prop:SDPI_vs_logSob}; \eqref{eq:logsob_comp_2} is by \eqref{eq:logSob_constant}; \eqref{eq:logsob_comp_3} is by Proposition~\ref{prop:Dirichlet_equiv}; \eqref{eq:logsob_comp_4} is by definition of $\tilde{\rho}_1(\mu,M)$.

	This proves the first inequality in \eqref{eq:SDPI_vs_logSob}. The second inequality follows from the upper bound on $\rho_1(\mu,K)$ in Proposition~\ref{prop:SDPI_vs_logSob}, as well as from the just established fact that $\rho_1(\mu,K) = 4\tilde{\rho}_1(\mu,M)$.
\end{proof}

\begin{example}[Doubly symmetric binary source, continued] {\em Consider again the case of the DSBS with parameter $\eps \le 1/2$. From the previous example, we know that $(\mu,M)$ factors through $K = \BSC(\delta(\eps))$ with crossover probability $\delta(\eps)$ given by \eqref{eq:BSC_factor}. For this channel, we have
	\begin{align*}
		\eta\big(\Bernoulli(1/2),\BSC(\delta(\eps))\big)	= 1-2\eps.
	\end{align*}
Applying this and Theorem~\ref{thm:SDPI_vs_logSob_2}, we get the following upper and lower bounds on the log-Sobolev constant $\tilde{\rho}_1$:
	\begin{align*}
		\frac{\eps}{2} \le  \tilde{\rho}_1\big( \Bernoulli(1/2), \BSC(\eps) \big) \le \frac{1}{4}\left[1-\frac{(1-\log 2)\log 2}{2}(1-2\eps)\right].
	\end{align*}
Unfortunately, neither of the bounds is tight, since the log-Sobolev constant in this case is known exactly: $\tilde{\rho}_1\big(\Bernoulli(1/2),\BSC(\eps)\big) = \eps$ \cite[Ex.~3.9]{Bobkov_Tetali_logsob}. A sharp bound can be obtained from the monotonicity property \eqref{eq:logSob_monotonicity} of the log-Sobolev constants:
\begin{align*}
	\tilde{\rho}_1\big(\Bernoulli(1/2),\BSC(\eps)\big) &\le \tilde{\rho}_0\big(\Bernoulli(1/2),\BSC(\eps)\big) \\
	&=\frac{1}{2}\tilde{\lambda}\big(\Bernoulli(1/2),\BSC(\eps)\big) \\
	&= \eps.
\end{align*}
}
\end{example}

Finally, we consider the log-Sobolev constant $\tilde{\rho}_2(\mu,M)$:

\begin{theorem}\label{thm:strong_DPI_1} For any channel $K \in \Chan(\sY|\sX)$ such that $M = K^*_\mu \circ K$,
	\begin{align}\label{eq:SDPI_vs_2Sob}
		\eta(\mu,K) \le 1 - \tilde{\rho}_2(\mu,M).
	\end{align}
\end{theorem}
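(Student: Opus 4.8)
The plan is to reduce the statement to the functional (entropy‑production) form of the SDPI and then close the loop with the $\LSI_2$ inequality after the single substitution $f\mapsto\sqrt f$. By Proposition~\ref{prop:functional_SDPI} applied to $\Phi(u)=u\log u$, the bound $\eta(\mu,K)\le 1-\tilde\rho_2(\mu,M)$ is equivalent to
\begin{align*}
	\tilde\rho_2(\mu,M)\,\Ent\big[f(X)\big]\le \E\left[\Ent\big[f(X)\big|Y\big]\right]
\end{align*}
for all nonconstant $f\in\PFunc(\sX)$ with $\E[f(X)]=1$, where $(X,Y)\sim\mu\otimes K$; since both sides are homogeneous of degree one in $f$, it in fact suffices to prove this for every $f\in\SPFunc(\sX)$ with no normalization. (If $\tilde\rho_2(\mu,M)=0$ the claimed bound is vacuous, so we may assume $\tilde\rho_2(\mu,M)>0$.)

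First I would condition on $Y$: writing $\nu_y\deq K^*_\mu(\cdot|y)=P_{X|Y=y}$, we have $\E[\Ent[f(X)|Y]]=\sum_{y}\mu K(y)\,\Ent_{\nu_y}[f]$, where $\Ent_{\nu_y}[f]$ is the $\Phi$-entropy of $f$ computed under $\nu_y$. The crux is a comparison pointwise in $y$: for any probability measure $\pi$ on $\sX$ and any $f\in\SPFunc(\sX)$,
\begin{align*}
	\Var_\pi\big[\sqrt f\,\big]\le\Ent_\pi[f].
\end{align*}
To see this, normalize $\E_\pi[f]=1$; then $\Ent_\pi[f]=D(\pi^f\|\pi)$ with $\d\pi^f=f\,\d\pi$, while $\Var_\pi[\sqrt f]=1-(\E_\pi\sqrt f)^2\le 2(1-\E_\pi\sqrt f)=H^2(\pi^f,\pi)$ (the inequality uses $\E_\pi\sqrt f\le 1$ by Jensen), and $H^2\le D$ is the standard Hellinger-versus-relative-entropy comparison, equivalently the elementary pointwise bound $(\sqrt u-1)^2\le u\log u-(u-1)$ integrated against $\pi$ (the linear term drops out). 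Applying this with $\pi=\nu_y$ and averaging over $y\sim\mu K$,
\begin{align*}
	\E\left[\Ent\big[f(X)\big|Y\big]\right]\ge\sum_{y}\mu K(y)\,\Var_{\nu_y}\big[\sqrt f\,\big]=\E\Big[\big(\sqrt{f(X)}-\E[\sqrt{f(X)}\,|\,Y]\big)^2\Big]=\cE\big(\sqrt{f(X)},\sqrt{f(X)}\,\big|\,Y\big).
\end{align*}

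Here the factorization hypothesis $M=K^*_\mu\circ K$ enters: by Proposition~\ref{prop:Dirichlet_equiv}, the right‑hand side equals the Dirichlet form $\cE_{\mu,M}(\sqrt f,\sqrt f)$, and the definition of the $\LSI_2$ constant gives $\cE_{\mu,M}(\sqrt f,\sqrt f)\ge\tilde\rho_2(\mu,M)\,\Ent[(\sqrt f)^2(X)]=\tilde\rho_2(\mu,M)\,\Ent[f(X)]$. Chaining the last two displays yields the required entropy‑production inequality, hence \eqref{eq:SDPI_vs_2Sob}.

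I expect the main obstacle to be the bridge between $\E[\Ent[f(X)|Y]]$, which is the quantity controlling $\eta(\mu,K)$, and the Dirichlet form $\cE_{\mu,M}(\cdot,\cdot)$, which governs $\LSI_2$: the conditional entropy does not itself linearize into a Dirichlet form, and the device that makes the comparison go through is precisely the passage to $\sqrt f$ via $\Var_\pi[\sqrt f]\le\Ent_\pi[f]$. Two routine points also need attention: that the infimum defining $\tilde\rho_2(\mu,M)$ over strictly positive functions suffices, by a standard approximation argument, to handle all nonnegative $f$ appearing in \eqref{eq:entropy_production_inequality}, and that the comparison $\Var_\pi[\sqrt f]\le\Ent_\pi[f]$ persists (with both sides finite) when $f$ attains the value $0$.
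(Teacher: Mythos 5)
Your proof is correct and follows essentially the same route as the paper's: condition on $Y$, lower-bound the conditional entropy by the conditional variance of $\sqrt{f}$, identify the resulting quantity with the Dirichlet form $\cE_{\mu,M}(\sqrt{f},\sqrt{f})$ via the factorization $M=K^*_\mu\circ K$, and invoke the definition of $\tilde\rho_2(\mu,M)$. The only (cosmetic) difference is how the key inequality $\Ent_\pi[f]\ge\Var_\pi[\sqrt{f}]$ is obtained: the paper cites Miclo's refined convexity bound $\Phi(u)\ge\Phi(v)+(1+\log v)(u-v)+(\sqrt{u}-\sqrt{v})^2$ together with the MMSE property of conditional expectation, whereas you derive it from homogeneity plus the elementary chain $\Var_\pi[\sqrt{f}]\le H^2(\pi^f,\pi)\le D(\pi^f\|\pi)$ --- an equivalent computation.
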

\begin{proof} We use the following delicate convexity bound for the function $\Phi(u) = u\log u$ \cite{Miclo_hypercontractive}:
	\begin{align}\label{eq:Miclo_bound}
		\Phi(u) \ge \Phi(v) + (1+\log v)(u-v) + \left(\sqrt{u} - \sqrt{v}\right)^2, \qquad \forall u,v \ge 0.
	\end{align}
Let $(X,Y)$ be a random pair with law $\mu \otimes K$. Fix any function $f \in \PFunc(\sX)$ with $\E[f(X)]=1$ and use the bound \eqref{eq:Miclo_bound} to get
\begin{align}
	\Phi\big(f(x)\big) &\ge \Phi\big(K^*_\mu f(y)\big)  + \Big(1+\log K^*_\mu f(y)\Big)\Big(f(x)-K^*_\mu f(y)\Big)  + \left(\sqrt{f(x)} - \sqrt{K^*_\mu f(y)}\right)^2
	\label{eq:Miclo_bound_fg}
\end{align}
Taking conditional expectation $\E[\cdot|Y]$ of both sides of \eqref{eq:Miclo_bound_fg}, we obtain
\begin{align*}
	\E\Big[\Phi\big(f(X)\big)\Big|Y\Big] &\ge \Phi\left(\E[f(X)|Y]\right) + \E\Big[\left(\sqrt{f(X)} - \sqrt{\E[f(X)|Y]}\right)^2\Big|Y\Big] \nonumber\\
	&\ge \Phi\left(\E[f(X)|Y]\right)  + \E\Big[\left(\sqrt{f(X)} - \E\Big[\sqrt{f(X)}\Big|Y\Big]\right)^2\Big|Y\Big],
\end{align*}
where we have used the fact that
\begin{align*}
\E\big[(U-\E[U|Y])^2\big|Y\big] = \inf_{f \in \Func(\sY)} \E\big[\left(U-f(Y)\right)^2\big|Y\big].
\end{align*}
for any real-valued random variable $U$ jointly distributed with $Y$. Next we take the expectation w.r.t.\ $Y$ to get
\begin{align*}
	\Ent[f(X)] &\ge \Ent\big[\E[f(X)|Y]\big] + \cE\Big(\sqrt{f(X)},\sqrt{f(X)} \Big| Y \Big) \\
	&= \Ent\left[\E[f(X)|Y]\right] + \cE_{\mu,M}\Big(\sqrt{f},\sqrt{f}  \Big) 
\end{align*}
where we have used the fact that $\Ent[U]=\E[\Phi(U)]$ for all nonnegative random variables $U$ with $\E U = 1$, as well as Proposition~\ref{prop:Dirichlet_equiv}. Using this and the definition of $\tilde{\rho}_2(\mu,M)$, we get
\begin{align*}
\Ent\big[\E[f(X)|Y]\big] \le \left(1-\tilde{\rho}_2(\mu,M)\right)\Ent\left[f(X)\right].
\end{align*}
Since $f$ was arbitrary, we get the bound \eqref{eq:SDPI_vs_2Sob}.
\end{proof}

\subsection{The gap between SDPI and $\Phi$-Sobolev}

As evident from the proof of Theorem~\ref{thm:SDPI_to_phiSob}, we need to invoke Jensen's inequality in order to pass from a $\Phi$-entropy SDPI to a $\Phi$-Sobolev inequality. This observation prompts us to investigate the gap between these two inequalities:

\begin{theorem}\label{thm:Jensen_gap} If $\Phi$ satisfies the conditions of Theorem~\ref{thm:SDPI_to_phiSob}, then for any $f \in \PFunc(\sX)$
	\begin{align}\label{eq:Dirichlet_vs_entropy_gap}
		\cE(f(X), \Psi \circ f(X)|Y) = \E\left[ \Ent_\Phi[f(X)|Y] \right] + \E\left[f(X)\Ent_{-\Psi}[f(X)|Y]\right]
	\end{align}
Therefore, if $\eta_\Phi(P_X,P_{Y|X}) \le c$ for some $0 \le c < 1$, and if the function $u \mapsto -\Psi(u)$ is strictly convex at $u=1$, then the $\Phi$-Sobolev inequality
\begin{align}\label{eq:PsiSob}
	\Ent_\Phi[f(X)] \le \frac{1}{1-c} \cE(f(X), \Psi \circ f(X)|Y)
\end{align}
is strict for any nonconstant $f$. If $\Psi$ is affine, then $\cE(f(X), \Psi \circ f(X)|Y) = \E\left[ \Ent_\Phi[f(X)|Y] \right]$, and in that case $\eta_\Phi(P_X,P_{Y|X}) \le c$ is equivalent to \eqref{eq:PsiSob}.
\end{theorem}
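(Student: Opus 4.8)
The plan is to establish the pointwise (conditioned on $Y$) identity \eqref{eq:Dirichlet_vs_entropy_gap} by directly expanding both $\Phi$-entropies on the right-hand side in terms of $\Psi$, using the defining relation $\Phi(u) = u\Psi(u) + \Phi(0)$, and then to derive the three stated consequences by routine reasoning. First I would write $U = f(X)$ for fixed $Y=y$, so that $\cE(f(X),\Psi\circ f(X)|Y) = \E[U\Psi(U)|Y] - \E[U|Y]\,\E[\Psi(U)|Y]$ by the identity $\cE(U,V|Z) = \E[UV - \E[U|Z]\E[V|Z]]$ already recorded in the proof of Theorem~\ref{thm:SDPI_to_phiSob}. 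On the other side, $\E[\Ent_\Phi[U|Y]] = \E[\Phi(U)|Y] - \E[\,\E[U|Y]\Psi(\E[U|Y]) + \Phi(0)\,|Y]$, wait — more carefully, $\Ent_\Phi[U|Y] = \E[\Phi(U)|Y] - \Phi(\E[U|Y]) = \E[U\Psi(U)|Y] + \Phi(0) - \E[U|Y]\Psi(\E[U|Y]) - \Phi(0) = \E[U\Psi(U)|Y] - \E[U|Y]\Psi(\E[U|Y])$, where the $\Phi(0)$ terms cancel. Likewise $\Ent_{-\Psi}[U|Y] = -\E[\Psi(U)|Y] + \Psi(\E[U|Y])$, so $\E[U|Y]\,\Ent_{-\Psi}[U|Y] = -\E[U|Y]\E[\Psi(U)|Y] + \E[U|Y]\Psi(\E[U|Y])$. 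Adding the two expressions and noting the $\E[U|Y]\Psi(\E[U|Y])$ terms cancel yields exactly $\E[U\Psi(U)|Y] - \E[U|Y]\E[\Psi(U)|Y] = \cE(f(X),\Psi\circ f(X)|Y)$; finally take $\E[\cdot]$ over $Y$. This is the whole identity, and it is purely algebraic.

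For the strictness claim: since $\Phi \in \cF$, the conditional $\Phi$-entropy $\Ent_\Phi[f(X)|Y]$ is nonnegative by Jensen, and if $-\Psi$ is convex then $\Ent_{-\Psi}[f(X)|Y] \ge 0$ as well, so the second term on the right of \eqref{eq:Dirichlet_vs_entropy_gap} is nonnegative (as $f > 0$). Combining \eqref{eq:Dirichlet_vs_entropy_gap} with the functional form of the SDPI, Proposition~\ref{prop:functional_SDPI} (precisely the inequality $\Ent_\Phi[f(X)] \le (1-c)^{-1}\E[\Ent_\Phi[f(X)|Y]]$), gives $\Ent_\Phi[f(X)] \le (1-c)^{-1}\cE(f(X),\Psi\circ f(X)|Y)$, and the inequality is strict as soon as $\E[f(X)\,\Ent_{-\Psi}[f(X)|Y]] > 0$. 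The latter holds whenever $f$ is nonconstant and $-\Psi$ is strictly convex at $u=1$: one reduces as in Proposition~\ref{prop:functional_SDPI} to normalized $f$ with $\E[f(X)]=1$ (using homogeneity of $\Ent_\Phi$, which is assumed in Theorem~\ref{thm:SDPI_to_phiSob}), and then if $f$ is nonconstant there is a positive-probability set of $y$ for which $x\mapsto f(x)$ is nonconstant under $K^*(\cdot|y)$ and $\E[f(X)|Y=y]$ lies near $1$ — strict convexity of $-\Psi$ at $1$ forces $\Ent_{-\Psi}[f(X)|Y=y]>0$ there. (Here one should be slightly careful: strict convexity only ``at $1$'' suffices because near-degenerate conditionals concentrate near the mean; the argument is the standard one used to upgrade Jensen to a strict inequality for nonconstant integrands whose values cluster near the point of strict convexity, and I would spell it out via a continuity/compactness argument on $\{f : \E[f(X)]=1,\ \|f\|_\infty \le 1/\mu_*\}$ exactly as in the proof of Theorem~\ref{thm:SDP_var}.)

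For the affine case: if $\Psi(u) = \beta u + \gamma$ then $-\Psi$ is affine, hence $\Ent_{-\Psi}[f(X)|Y] = 0$ identically (a $\Phi$-entropy with affine $\Phi$ vanishes, directly from \eqref{eq:conditional_phi_entropy}), so \eqref{eq:Dirichlet_vs_entropy_gap} collapses to $\cE(f(X),\Psi\circ f(X)|Y) = \E[\Ent_\Phi[f(X)|Y]]$. Then the $\Phi$-Sobolev inequality \eqref{eq:PsiSob} becomes literally $\Ent_\Phi[f(X)] \le (1-c)^{-1}\E[\Ent_\Phi[f(X)|Y]]$, which by Proposition~\ref{prop:functional_SDPI} holds for all nonconstant $f\in\PFunc(\sX)$ with $\E[f(X)]=1$ if and only if $\eta_\Phi(\mu,K)\le c$; homogeneity lets one drop the normalization. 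I expect the only real subtlety to be the strictness upgrade in the second assertion — getting from ``$-\Psi$ strictly convex at $1$'' to a genuinely strict inequality for every nonconstant $f$ — since this requires ruling out the degenerate possibility that the conditional laws $K^*(\cdot|y)$ are almost surely concentrated on level sets of $f$; admissibility of $(\mu,K)$ (strict positivity of $\mu$ and $\mu K$) combined with the fact that a nonconstant $f$ cannot be $K^*(\cdot|y)$-a.s.\ constant for $\mu K$-almost every $y$ unless $Y$ determines $f(X)$ a.s., which would itself force $\Ent_{-\Psi}[f(X)|Y]$ to vanish only when $f$ is constant — handles this, but it is the one place the argument needs care rather than mere computation.
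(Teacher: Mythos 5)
Your proposal is correct and follows essentially the same route as the paper's proof: the identity \eqref{eq:Dirichlet_vs_entropy_gap} is obtained by the same algebraic manipulation based on $\Phi(u)=u\Psi(u)+\Phi(0)$ (the paper adds and subtracts $\E\big[\E[f(X)|Y]\Psi(\E[f(X)|Y])\big]$, which is your cancellation read in reverse), and the consequences follow from nonnegativity of $\Ent_{-\Psi}[f(X)|Y]$, strict convexity of $-\Psi$, and the vanishing of affine-generated entropies exactly as in the paper. Your extra care about upgrading ``strict convexity at $1$'' to strictness of \eqref{eq:PsiSob} for every nonconstant $f$ is warranted --- the paper's own proof passes over this point more tersely --- but it does not change the argument's structure.
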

\begin{remark} {\em When $\Psi$ is affine,  $\Phi$ is of the form $\Phi(u) = au^2 + bu + c$ for some $a \ge 0$, $b,c \in \Reals$. Thus, the SDPI for $\chi^2$-divergence is equivalent to the corresponding $\Phi$-Sobolev inequality (which in this case is precisely the Poincar\'e inequality).\hfill$\diamond$}
\end{remark}
\begin{proof} By definition of $\cE$,
	\begin{align}
		&\cE(f(X),\Psi \circ f(X)|Y) \nonumber\\
		&\quad= \E\Big[f(X)\Psi(f(X))-\E[f(X)|Y]\E[\Psi(f(X)|Y]\Big] \nonumber\\
		&\quad= \E[f(X)\Psi(f(X))] - \E[\E[f(X)|Y]\Psi(\E[f(X)|Y])] \nonumber\\
		& \qquad \qquad + \E[\E[f(X)|Y]\Psi(\E[f(X)|Y])] -\E\left[\E[f(X)|Y]\E[\Psi(f(X)|Y]\right] \nonumber\\
		&\quad= \E\left[\Ent_\Phi[f(X)|Y]\right] + \E\Big[f(X)\big\{\E[-\Psi(f(X))|Y] - \left(-\Psi(\E[f(X)|Y])\right)\big\} \Big].\label{eq:Jensen_gap}
	\end{align}
Since $-\Psi$ is convex, we recognize the quantity in the curly braces in \eqref{eq:Jensen_gap} as the conditional entropy $\Ent_{-\Psi}[f(X)|Y]$. This proves \eqref{eq:Dirichlet_vs_entropy_gap}.

From \eqref{eq:Dirichlet_vs_entropy_gap} we see that $\cE(f(X),\Psi \circ f(X)|Y) = \E\left[\Ent_\Phi[f(X)|Y]\right]$ for a given nonconstant $f \in \PFunc(\sX)$ if and only if $\Ent_{-\Psi}[f(X)|Y] = 0$ a.s. If $-\Psi$ is strictly convex at $1$, then $\Ent_{-\Psi}[U]=0$ if and only if $U$ is a.s.\ constant; thus, in this case, the inequality \eqref{eq:PsiSob} is strict for any nonconstant $f$. If $\Psi$ is affine, then $\Ent_{-\Psi}[U] = 0$ for all $U$, so in that case $\cE(f(X),\Psi \circ f(X)|Y) = \E\left[\Ent_\Phi[f(X)|Y]\right]$.
\end{proof}

As a corollary, we obtain the following useful formula that expresses the covariance between $f(X)$ and $\Psi \circ f(X)$ in terms of entropies:

\begin{corollary}
	\begin{align}\label{eq:cov_vs_ent}
		\Cov[f(X),\Psi \circ f(X)] = \Ent_\Phi[f(X)] + \E\left[ f(X) \right] \Ent_{-\Psi}[f(X)].
	\end{align}
\end{corollary}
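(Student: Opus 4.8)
The plan is to obtain \eqref{eq:cov_vs_ent} as the degenerate (unconditional) case of the gap identity \eqref{eq:Dirichlet_vs_entropy_gap} in Theorem~\ref{thm:Jensen_gap}. Concretely, I would apply that theorem with the output alphabet $\sY$ taken to be a one-point set, so that $Y$ is a.s.\ constant and $(\mu,K)$ is trivially admissible for any $\mu \in \PProb(\sX)$. Since the hypotheses of Theorem~\ref{thm:Jensen_gap} (via Theorem~\ref{thm:SDPI_to_phiSob}) constrain only $\Phi$ and not $K$, this choice is legitimate. With $Y$ deterministic we have $\E[\,\cdot\,|Y] = \E[\,\cdot\,]$, hence $\cE(U,V|Y) = \E[(U-\E U)(V - \E V)] = \Cov[U,V]$, the conditional $\Phi$-entropy $\Ent_\Phi[f(X)|Y]$ reduces to $\Ent_\Phi[f(X)]$, and $\Ent_{-\Psi}[f(X)|Y]$ reduces to the constant $\Ent_{-\Psi}[f(X)]$, so that $\E[f(X)\Ent_{-\Psi}[f(X)|Y]] = \E[f(X)]\,\Ent_{-\Psi}[f(X)]$. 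Substituting these into \eqref{eq:Dirichlet_vs_entropy_gap} with $U = f(X)$, $V = \Psi\circ f(X)$ yields exactly \eqref{eq:cov_vs_ent}.

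As an alternative (and as a sanity check), one can derive the identity directly from the relation $\Phi(u) = u\Psi(u) + \Phi(0)$ used in the proof of Theorem~\ref{thm:SDPI_to_phiSob}: this gives $\E[f(X)\Psi(f(X))] = \E[\Phi(f(X))] - \Phi(0)$ and $\Phi(\E f(X)) = \E[f(X)]\Psi(\E f(X)) + \Phi(0)$, so that $\E[f(X)\Psi(f(X))] = \Ent_\Phi[f(X)] + \E[f(X)]\Psi(\E f(X))$; subtracting $\E[f(X)]\,\E[\Psi(f(X))]$ and recognizing $\Psi(\E f(X)) - \E[\Psi(f(X))] = \Ent_{-\Psi}[f(X)]$ completes the computation.

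There is essentially no serious obstacle here; the only point requiring a moment's care is to verify that the regularity assumptions on $\Phi$ carried over from Theorem~\ref{thm:SDPI_to_phiSob} are conditions on $\Phi$ alone, so that the specialization to a trivial channel is admissible and the factorization $\Phi = u\Psi + \Phi(0)$ with $\Psi$ concave is available. Given that, the corollary follows in one line.
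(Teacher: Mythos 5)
Your proposal is correct and follows essentially the same route as the paper: the paper's proof also specializes the gap identity \eqref{eq:Dirichlet_vs_entropy_gap} to a pair $(X,Y)$ with $Y$ independent of $X$ (your choice of a one-point output alphabet is just the extreme case of this), so that $\cE(\cdot,\cdot|Y)$ becomes the covariance and the conditional entropies become unconditional. Your supplementary direct computation from $\Phi(u)=u\Psi(u)+\Phi(0)$ is also correct and confirms the identity.
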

\begin{proof} Consider any pair $(X,Y)$, where $Y$ is independent of $X$. In that case, $\cE(f(X),g(X)|Y) = \Cov[f(X),g(X)]$ for any pair $f,g \in \Func(\sX)$, whereas $\Ent_\Phi[f(X)|Y] = \Ent_\Phi[f(X)]$ for any $\Phi \in \cF$. The formula \eqref{eq:cov_vs_ent} follows from these observations.
\end{proof}

\section{Some applications}
\label{sec:applications}

\subsection{Concentration inequalities}

One of the main uses of logarithmic Sobolev inequalities is in the context of concentration inequalities: Given a probability space $(\sX,\mu)$ and a function $f \in \Func(\sX)$, the objective is to obtain tight upper bounds on the deviation probabilities $\PP[f(X)-\E f(X) \ge t]$ for $t \ge 0$, where $X \sim \mu$. A general procedure that allows one to pass from a suitable log-Sobolev inequality to a Gaussian tail bound of the form
\begin{align}\label{eq:Gaussian_tail}
	\PP[f(X)-\E f(X) \ge t] \le e^{-\kappa t^2}, \qquad t \ge 0
\end{align}
for some $\kappa > 0$ and for all $f$ in a suitable subset of $\Func(\sX)$ is called the \textit{Herbst argument} \cite{Ledoux_conc_book,Boucheron_etal_concentration_book,Raginsky_Sason_FnT_2nd}, and can be summarized as follows (see, e.g., \cite[Chap.~3]{Raginsky_Sason_FnT_2nd}):

We start with a pair $(\cA,\Gamma)$, where:
\begin{enumerate}
	\item $\cA \subseteq \Func(\sX)$ is a class of real-valued functions on $\sX$, such that $a f + b \in \cA$ for all $f \in \cA$ and all $a \ge 0, b \in \Reals$.
	\item $\Gamma : \cA \to \PFunc(\sX)$ is an operator with the property that $\Gamma (af + b) = a\Gamma f$ for all $f \in \cA$ and all $a \ge 0, b \in \Reals$.
\end{enumerate}
We then say that $\mu$ satisfies a modified log-Sobolev inequality with constant $c > 0$ on $(\cA,\Gamma)$ if
\begin{align}\label{eq:mLSI}
	\Ent[e^{f(X)}] \le \frac{c}{2} \E\left[e^{f(X)} \left|\Gamma f(X)\right|^2\right], \qquad \forall f \in \cA.
\end{align}
Here is how we pass from \eqref{eq:mLSI} to a Gaussian tail bound of the form \eqref{eq:Gaussian_tail}. Without loss of generality, we may assume that $\E[f(X)] = 0$. For any $\lambda \ge 0$, $\lambda f \in \cA$ and $\Gamma (\lambda f) = \lambda \Gamma f$. Therefore, replacing $f$ with $\lambda f$ in \eqref{eq:mLSI}, we arrive at
\begin{align}
	\Ent\left[e^{\lambda f(X)}\right] &\le \frac{c\lambda^2}{2} \E\left[e^{\lambda f(X)} \left|\Gamma f(X)\right|^2\right] \nonumber \\
	&\le \frac{c \| \Gamma f \|^2_\infty \lambda^2}{2} \E\left[e^{\lambda f(X)}\right], \label{eq:Herbst_1}
\end{align}
where $\| \Gamma f \|_\infty \deq \sup_{x \in \sX} |\Gamma f(x)|$. If we define the tilted distribution $\d\mu^{(\lambda)} \deq e^{\lambda f} \d\mu / \E[e^{\lambda f(X)}]$, then
\begin{align*}
	D(\lambda) \deq D\big(\mu^{(\lambda)} \big\| \mu \big) = \frac{\Ent\left[e^{\lambda f(X)}\right]}{\E\left[e^{\lambda f(X)}\right]}.
\end{align*}
Therefore, from \eqref{eq:Herbst_1} we get
\begin{align*}
	D(\lambda) \le \frac{c \| \Gamma f \|^2_\infty \lambda^2}{2}, \qquad \forall \lambda \ge 0.
\end{align*}
On the other hand, if we define the logarithmic moment-generating function $\Lambda(\lambda) \deq \log \E[e^{\lambda f(X)}]$, then it is a matter of simple calculus to show that
\begin{align}\label{eq:Herbst_2}
	D(\lambda) = \lambda^2 \frac{\d}{\d \lambda}\left( \frac{\Lambda(\lambda)}{\lambda}\right).
\end{align}
Combining \eqref{eq:Herbst_1} and \eqref{eq:Herbst_2}, we get the differential inequality
\begin{align*}
	\frac{\d}{\d\lambda} \left( \frac{\Lambda(\lambda)}{\lambda}\right) \le \frac{c \| \Gamma f \|^2_\infty}{2},
\end{align*}
which can be integrated to give $\Lambda(\lambda) \le \frac{c \| \Gamma f \|^2_\infty \lambda^2}{2}$. This shows that $f$ is $v$-subgaussian with $v = c \| \Gamma f \|^2_\infty$, and therefore it satisfies \eqref{eq:Gaussian_tail} with $\kappa = 1/2v = 1/(2c \| \Gamma f \|^2_\infty)$ (cf.~Section~\ref{ssec:info_transport_bounds}). Effectively, $\| \Gamma f \|_\infty$ is a measure of the ``variability'' of $f$.

We now show that we can use any reversible Markov kernel $M$ on $\sX$ that leaves $\mu$ invariant as a yardstick for measuring the variability of functions in $\Func(\sX)$, and that the constant $c$ in the log-Sobolev inequality \eqref{eq:mLSI} can be expressed in terms of the relative-entropy SDPI constants $\eta(\mu,K)$, where $K$ runs over all factorizations $M = K^*_\mu K$. Following Houdr\'e and Tetali \cite{Houdre_Tetali_product_graphs}, let us define the $\ell_2$ positive discrete gradient operator $\nabla^+_2 : \Func(\sX) \to \PFunc(\sX)$ via
\begin{align*}
	\nabla^+_2 f(x) \deq \left( \sum_{x' \in \sX} M(x'|x) \left(f(x)-f(x')\right)^2_+\right)^{1/2}.
\end{align*}
It is easy to see that the pair $(\cA,\Gamma) = (\Func(\sX), \nabla^+_2)$ satisfies the requirements 1 and 2 listed in the preceding paragraph.

\begin{theorem} Consider a pair $(\mu,M) \in \PProb(\sX) \times \Chan(\sY|\sX)$, where $M$ is reversible w.r.t.\ $\mu$. Then the following modified log-Sobolev inequality holds for all $f \in \Func(\sX)$:
\begin{align*}
	\Ent\left[e^{f(X)}\right] \le \frac{c}{2}\E\left[e^{f(X)} \left|\nabla^+_2 f(X)\right|^2 \right],
\end{align*}
where $X \sim \mu$, and
	\begin{align}\label{eq:M_mLSI_const}
		c = \inf \left\{ \frac{2}{1-\eta(\mu,K)} : M = K^*_\mu K \right\}.
	\end{align}
\end{theorem}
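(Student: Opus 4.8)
The plan is to derive the stated modified log-Sobolev inequality as a direct specialization of the composite-function $\Phi$-Sobolev bound of Theorem~\ref{thm:composite_fctn_phiSob}. Take $\Phi(u) = u\log u$, so that $\Psi(u) = (\Phi(u)-\Phi(0))/u = \log u$, and take $F = \exp$. All hypotheses are then in place: $\Phi \in \cF$ with $\Phi(0) = 0 < \infty$; $\Psi = \log$ is concave and differentiable; the associated $\Phi$-entropy is homogeneous (with $\kappa(c) = c$), so the assumptions of Theorem~\ref{thm:SDPI_to_phiSob} hold; and $F$ is convex, differentiable, nondecreasing, maps into $\Reals^+$, and $\Psi\circ F(t) = t$ is (trivially) convex.

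Next I would fix an arbitrary channel $K \in \Chan(\sY|\sX)$ with $M = K^*_\mu K$. Since $\eta_\Phi(\mu,K) = \eta(\mu,K)$ for this $\Phi$, Theorem~\ref{thm:composite_fctn_phiSob} applied with $c = \eta(\mu,K)$ gives, for every $f \in \Func(\sX)$,
\begin{align*}
	\Ent\big[e^{f(X)}\big] \le \frac{1}{1-\eta(\mu,K)}\, \E\Big[ F'^2\big(f(X)\big)\,\Psi'\big(F\circ f(X)\big)\,\big(f(X)-f(X')\big)_+^2\Big],
\end{align*}
where $(X,X')$ is the exchangeable pair with $P_X = \mu$ and $P_{X'|X} = K^*K = M$. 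Here $F'(t) = e^t$ and $\Psi'\big(F\circ f(X)\big) = e^{-f(X)}$, so $F'^2\big(f(X)\big)\Psi'\big(F\circ f(X)\big) = e^{f(X)}$; and since $P_{X'|X} = M$,
\begin{align*}
	\E\big[ e^{f(X)}\big(f(X)-f(X')\big)_+^2\big] = \sum_{x\in\sX}\mu(x)e^{f(x)}\sum_{x'\in\sX}M(x'|x)\big(f(x)-f(x')\big)_+^2 = \E\big[ e^{f(X)}\,|\nabla^+_2 f(X)|^2\big].
\end{align*}
Combining the two displays yields $\Ent[e^{f(X)}] \le \frac{1}{1-\eta(\mu,K)}\,\E\big[ e^{f(X)}|\nabla^+_2 f(X)|^2\big]$, which is exactly the modified log-Sobolev inequality with constant $c_K = 2/(1-\eta(\mu,K))$.

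Finally, since this holds for every $K$ factoring $M$ as $K^*_\mu K$, it holds with $c$ replaced by $\inf_K c_K = \inf\{2/(1-\eta(\mu,K)) : M = K^*_\mu K\}$, the claimed value. The only point requiring a moment's care — and really the only ``obstacle'' — is that this infimum need not be attained; but for any constant strictly larger than the infimum the inequality already holds for a suitable $K$, and since both sides of the inequality depend linearly on the constant, passing to the limit (for each fixed $f$) along a sequence $K_n$ with $c_{K_n}$ decreasing to the infimum gives the inequality for the infimum itself.
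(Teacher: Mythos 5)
Your proof is correct and follows essentially the same route as the paper: apply Theorem~\ref{thm:composite_fctn_phiSob} with $\Phi(u)=u\log u$, $F(u)=e^u$ to a factorization $M=K^*_\mu K$, identify the resulting expectation with $\E[e^{f(X)}|\nabla^+_2 f(X)|^2]$, and optimize over $K$. (Your closing worry about the infimum is unnecessary: since neither side of the inequality depends on $K$, the bound with the infimal constant follows immediately from its validity for each fixed $K$.)
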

\begin{proof} Suppose that $M$ factors through some channel $K \in \Chan(\sY|\sX)$. As before, let $(X,X')$ be an exchangeable pair with law $\mu \otimes K^*K \equiv \mu \otimes M$. Applying Theorem~\ref{thm:composite_fctn_phiSob} to $\Phi(u) = u \log u$ and $F(u) = e^u$, we conclude that any $f \in \Func(\sX)$ satisfies
	\begin{align*}
		\Ent\left[e^{f(X)}\right] &\le \frac{1}{1-\eta(\mu,K)} \E\left[e^{f(X)}\left(f(X)-f(X')\right)^2_+\right] \\
		&= \frac{1}{1-\eta(\mu,K)}\sum_{x \in \sX} \mu(x) e^{f(x)} \sum_{x' \in \sX} K^* K(x'|x) \left(f(x)-f(x')\right)^2_+ \\
		&= \frac{1}{1-\eta(\mu,K)}\sum_{x \in \sX} \mu(x) e^{f(x)} \sum_{x' \in \sX} M(x'|x) \left(f(x)-f(x')\right)^2_+ \\	
		&= \frac{1}{1-\eta(\mu,K)}\sum_{x \in \sX} \mu(x) e^{f(x)} \left|\nabla^+_2 f(x)\right|^2 \\
		&= \frac{1}{1-\eta(\mu,K)}\E\left[e^{f(X)}|\nabla^+_2 f(X)|^2\right].
	\end{align*}
Optimizing over the choice of $K$, we see that \eqref{eq:mLSI} holds with $c$ given by \eqref{eq:M_mLSI_const}.
\end{proof}

\subsection{Contraction of mutual information in a Markov chain}

Consider a Markov chain $U \to X \to Y$, where the joint law $P_{XY}$ is fixed, while the alphabet $\sU$ of $U$ and the conditional distribution $P_{U|X}$ are allowed to vary arbitrarily. By the data processing inequality for the mutual information, $I(U;Y) \le I(U;X)$ for any choice of $P_{U|X}$. The question is: what is the maximum value of the ratio $\frac{I(U;Y)}{I(U;X)}$ that can be achieved by any choice of $P_{U|X}$? The following claim was made by Erkip and Cover \cite{Erkip_Cover}:
\begin{align}\label{eq:Erkip_Cover}
	\sup_{P_{U|X}} \frac{I(U;Y)}{I(U;X)} = S^2(P_X,P_{Y|X}),
\end{align}
where $S^2$ is the squared maximal correlation (see Section~\ref{ssec:maxcorr}). However, Anantharam et al.\ in a recent preprint \cite{Anantharam_etal_HGR} pointed out a flaw in the proof of \eqref{eq:Erkip_Cover}, and showed instead that
\begin{align}\label{eq:Anantharam}
	\sup_{P_{U|X}} \frac{I(U;Y)}{I(U;X)} = \eta(P_X,P_{Y|X}),
\end{align}
where $\eta$ is the relative-entropy SDPI constant. Moreover, they provided an explicit example of a source-channel pair $(P_X,P_{Y|X})$, for which the mutual-information ratio on the left-hand sides of Eqs.~\eqref{eq:Erkip_Cover} and \eqref{eq:Anantharam} is strictly larger than $S^2(P_X,P_{Y|X})$. 

We will now present a generalization of the result of Anantharam et al., and show, as a consequence, that $S^2(P_X,P_{Y|X})$ can indeed be expressed as a supremum of the ratio of two information-like quantities pertaining to the Markov chain $U \to X \to Y$ with an arbitrary choice of $P_{U|X}$. Fix a function $\Phi \in \cF$. Given a random pair $(U,V)$, we define the \textit{mutual $\Phi$-information} \cite{Cohen_etal_book}\footnote{Palomar and Verd\'u \cite{Palomar_Verdu_lautum} define $\Phi$-information between $U$ and $V$ as $D_\Phi(P_U \otimes P_V \| P_{UV})$. Their definition is equivalent to Eq.~\eqref{eq:f_info} if we replace $\Phi$ with its \textit{Csisz\'ar conjugate} $\Phi^\star(u) \deq u \Phi(1/u)$ \cite{LieseVajda06}.} as
\begin{align}
	I_\Phi(U;V) &\deq D_\Phi\big(P_{UV} \big\| P_{U} \otimes P_{V} \big) \nonumber\\
 &= \int P_U(\d u) \int P_{V}(\d v) \Phi \left( \frac{\d P_{V|U=u}}{\d P_V}(v)\right) \nonumber\\
	&= \int P_U(\d u) D_\Phi\big(P_{V|U=u} \big \| P_V\big). \label{eq:f_info}
\end{align}
If $U$ and $V$ are related via a Markov kernel $K$ (i.e., $P_{UV} = P_U \otimes K$), we may also use the notation $I_\Phi(P_U,K)$ to indicate the fact that the $\Phi$-information is a functional of the source distribution and the kernel that generates the random output given the input.
\begin{theorem}\label{thm:gen_Erkip_Cover} If $\Phi \in \cF$ is differentiable, and its derivative is uniformly bounded in some neighborhood of $1$, then
	\begin{align*}
		\sup_{P_{U|X}} \frac{I_\Phi(U;Y)}{I_\Phi(U;X)} = \eta_\Phi(P_X,P_{Y|X}).
	\end{align*}
\end{theorem}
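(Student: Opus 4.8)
The plan is to establish the two inequalities separately. Write $\mu \deq P_X$ and $K \deq P_{Y|X}$, so that $(\mu,K)$ is the admissible pair in the statement. For any choice of $\sU$ and $P_{U|X}$, the Markov relation $U \to X \to Y$ gives $P_{Y|U=u} = (P_{X|U=u})K$ and $P_Y = \mu K$, and by the reduction recorded in~\eqref{eq:f_info} one has $I_\Phi(U;X) = \sum_u P_U(u)\,D_\Phi(P_{X|U=u}\|\mu)$ and $I_\Phi(U;Y) = \sum_u P_U(u)\,D_\Phi\big((P_{X|U=u})K\,\big\|\,\mu K\big)$. For the ``$\le$'' direction I would apply the SDPI $D_\Phi(\nu K\|\mu K) \le \eta_\Phi(\mu,K)\,D_\Phi(\nu\|\mu)$ with $\nu = P_{X|U=u}$ for each $u$ and average against $P_U$, which yields $I_\Phi(U;Y) \le \eta_\Phi(\mu,K)\,I_\Phi(U;X)$; taking the supremum over $P_{U|X}$ gives one inequality.

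For the ``$\ge$'' direction I would show that for \emph{every} $\nu \in \Prob(\sX)$ with $D_\Phi(\nu\|\mu) > 0$ there is a family of binary auxiliary variables whose information ratio converges to $D_\Phi(\nu K\|\mu K)/D_\Phi(\nu\|\mu)$. Fix such a $\nu$ and let $\mu_* \deq \min_{x\in\sX}\mu(x) > 0$. For $\delta \in (0,\mu_*)$ take $U \in \{0,1\}$ with $\PP(U=1) = \delta$, $P_{X|U=1} = \nu$, and $P_{X|U=0} = \mu_\delta \deq (\mu - \delta\nu)/(1-\delta)$; since $\sX$ is finite and $\mu$ is strictly positive, $\mu_\delta \in \Prob(\sX)$, and the induced marginal of $X$ is exactly $\mu$. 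Then $I_\Phi(U;X) = \delta\,D_\Phi(\nu\|\mu) + (1-\delta)\,D_\Phi(\mu_\delta\|\mu)$ and $I_\Phi(U;Y) = \delta\,D_\Phi(\nu K\|\mu K) + (1-\delta)\,D_\Phi(\mu_\delta K\|\mu K)$.

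The heart of the argument is the estimate $D_\Phi(\mu_\delta\|\mu) = o(\delta)$ and $D_\Phi(\mu_\delta K\|\mu K) = o(\delta)$ as $\delta \to 0$. Writing $f \deq \d\nu/\d\mu$ (bounded by $1/\mu_*$ on the finite alphabet), the density $\d\mu_\delta/\d\mu = 1 + \tfrac{\delta}{1-\delta}(1-f)$ is an $O(\delta)$ perturbation of the constant $1$ that is \emph{centered}, i.e.\ $\E_\mu[\d\mu_\delta/\d\mu - 1] = 0$. Hence, setting $g(t) \deq \Phi(1+t) - \Phi(1) - \Phi'(1)t$ — which is convex and nonnegative (supporting line of $\Phi$ at $u=1$) and, by differentiability of $\Phi$ at $1$, satisfies $g(t) = o(|t|)$ as $t \to 0$ — the centering kills the linear term and leaves $D_\Phi(\mu_\delta\|\mu) = \E_\mu\big[g\big(\tfrac{\delta}{1-\delta}(1-f(X))\big)\big] \le \sup_{|t|\le C\delta} g(t) = o(\delta)$, where $C$ depends only on $\mu_*$ and one uses that $\sup_{|t|\le s}g(t) = \max(g(s),g(-s))$ by convexity. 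The same computation applies to $\mu_\delta K$, whose density with respect to $\mu K$ is $K^*_\mu(\d\mu_\delta/\d\mu) = 1 + \tfrac{\delta}{1-\delta}(1 - K^*_\mu f)$ by the density-update identity (Lemma~\ref{lm:density_update}); this is again a centered $O(\delta)$ perturbation of $1$ (using $K^*_\mu 1 = 1$ and $\E[K^*_\mu f(Y)] = \E[f(X)]$), so $D_\Phi(\mu_\delta K\|\mu K) = o(\delta)$ as well. Substituting both estimates and letting $\delta \to 0$ gives $I_\Phi(U;Y)/I_\Phi(U;X) \to D_\Phi(\nu K\|\mu K)/D_\Phi(\nu\|\mu)$. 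Since any $\nu$ with $D_\Phi(\nu\|\mu)=0$ also has $D_\Phi(\nu K\|\mu K)=0$ by the ordinary data processing inequality and so contributes nothing to $\eta_\Phi$, taking the supremum over $\nu$ yields $\sup_{P_{U|X}} I_\Phi(U;Y)/I_\Phi(U;X) \ge \eta_\Phi(\mu,K)$, which together with the first part proves equality.

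I expect the only delicate point to be the ``$o(\delta)$'' estimate: it is crucial both that the perturbing density be mean-zero under $\mu$ (so the first-order term cancels) and that $\Phi$ be differentiable at $u=1$ with a derivative controlled near $1$ (so the remainder is genuinely $o(\delta)$ rather than merely $O(\delta)$) — this is exactly where the smoothness hypothesis on $\Phi$ is used, and it is also the reason the identity can fail for non-smooth $\Phi$, such as the one generating the total variation distance, whose SDPI constant is the Dobrushin coefficient rather than a mutual-information ratio.
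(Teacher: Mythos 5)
Your proposal is correct and follows essentially the same two-part strategy as the paper: the upper bound by applying the SDPI termwise to $\nu = P_{X|U=u}$ and averaging, and the lower bound via the Anantharam-et-al.\ construction of a binary auxiliary $U$ whose small-weight component is the extremizing $\nu$ and whose large-weight component is the compensating distribution that keeps the $X$-marginal equal to $P_X$. The only difference is cosmetic: where the paper differentiates $L_\eta(\eps)$ at $\eps=0$ and invokes Lemma~\ref{lm:entropy_derivative}, you show directly that the compensating divergence terms are $o(\delta)$ via the centered, bounded perturbation $1+\tfrac{\delta}{1-\delta}(1-f)$ together with convexity and differentiability of $\Phi$ at $1$ --- an equivalent (and, if anything, slightly more economical) route to the same limit.
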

\begin{proof} Define a probability measure $Q \in \Prob(\sU)$ by
	\begin{align*}
		Q(u) \deq \frac{P_U(u) D_\Phi\big(P_{X|U=u}\big\| P_X\big)}{\sum_{u \in \sU}P_U(u) D_\Phi\big(P_{X|U=u} \big\| P_X\big)}.
	\end{align*}
This measure is supported on the set $\tilde{\sU} \deq \{ u \in \sU: D_\Phi(P_{X|U=u} \| P_X) > 0 \}$. From data processing, we have the inclusion $\{ u \in \sU: D_\Phi(P_{Y|U=u} \| P_Y) > 0 \} \subseteq \tilde{\sU}$.  Taking all of this into account, we can write
	\begin{align*}
		\frac{I_\Phi(U;Y)}{I_\Phi(U;X)} 
		&= \frac{\sum_{u \in \tilde{\sU}} P_U(u) D_\Phi\big(P_{Y|U=u} \big\| P_Y\big)}{\sum_{u \in \tilde{\sU}} P_U(u) D_\Phi\big(P_{X|U=u} \big\| P_X \big)} \\
		&= \sum_{u \in \tilde{\sU}} Q(u) \frac{D_\Phi\big(P_{Y|U=u} \big\| P_Y\big)}{D_\Phi\big(P_{X|U=u} \big\| P_X\big)} \\
		&\le \max_{u \in \tilde{\sU}} \frac{D_\Phi\big(P_{Y|U=u} \big\| P_Y\big)}{D_\Phi\big(P_{X|U=u}\big\| P_X\big)} \\
		&= \max_{u \in \tilde{\sU}} \frac{D_\Phi\big(P_{X|U=u} P_{Y|X}  \big\| P_X P_{Y|X} \big)}{D_\Phi\big(P_{X|U=u} \big\| P_X\big)} \\
		&\le \eta_\Phi(P_X,P_{Y|X}).
	\end{align*}
To prove the reverse inequality, we adopt the construction from \cite{Anantharam_etal_HGR}. Fix an arbitrary $Q_X \in \Prob(\sX)$. For any $\eps \in (0,1)$ small enough so that $\nu \deq P_X - \eps Q_X$ is a nonnegative measure, let $P^{(\eps)}_U = \Bernoulli(\eps)$ and define $P^{(\eps)}_{X|U}$ by
\begin{align*}
	P^{(\eps)}_{X|U=0} = Q_X, \qquad P^{(\eps)}_{X|U=1} = \frac{\nu}{\bar{\eps}}.
\end{align*}
With these choices, $P^{(\eps)}_U P^{(\eps)}_{X|U} = \eps P^{(\eps)}_{X|U=0} + \bar{\eps} P^{(\eps)}_{X|U=1} = \eps Q_X + P_X - \eps P_X = P_X$. For any $\eta > 0$, define the function
\begin{align*}
	L_\eta(\eps) \deq I_\Phi\left(P^{(\eps)}_U, P_{Y|X} \circ P^{(\eps)}_{X|U}\right) - \eta I_\Phi\left(P^{(\eps)}_U, P^{(\eps)}_{X|U}\right).
\end{align*}
A simple calculation gives
\begin{align*}
	I_\Phi\left(P^{(\eps)}_U, P_{Y|X} \circ P^{(\eps)}_{X|U}\right) &= \eps D_\Phi\big( P^{(\eps)}_{X|U=0} P_{Y|X} \big\| P_Y\big) + \bar{\eps} D_\Phi\big( P^{(\eps)}_{X|U=1} P_{Y|X} \big\| P_Y\big) \\
	&= \eps D_\Phi\big(Q_X P_{Y|X} \big\| P_Y \big) + \bar{\eps} D_\Phi\Big( \frac{P_X - \eps Q_X}{\bar{\eps}}P_{Y|X}\Big\| P_Y\Big) \\
	&= \eps D_\Phi\big(Q_X P_{Y|X} \big\| P_Y \big) + \bar{\eps} D_\Phi\Big( \frac{P_Y - \eps Q_X P_{Y|X} }{\bar{\eps}}\Big\| P_Y\Big),
\end{align*}
where in the last line we have used the fact that any Markov kernel extends to a linear map on signed measures. Similarly,
\begin{align*}
	I_\Phi\left(P^{(\eps)}_U, P^{(\eps)}_{X|U}\right) &= \eps D_\Phi\big(P^{(\eps)}_{X|U=0} \big\| P_X\big) + \bar{\eps} D_\Phi\big(P^{(\eps)}_{X|U=1} \big\| P_X \big) \\
	&= \eps D_\Phi\big( Q_X \big\| P_X\big) + \bar{\eps} D_\Phi\Big( \frac{P_X - \eps Q_X}{\bar{\eps}}\Big\| P_X \Big).
\end{align*}
Let $f = \frac{\d Q_X}{\d P_X}$ and $g^{(\eps)} = \frac{1-\eps f}{\bar{\eps}}$. Then, by virtue of our choice of $\eps$, $g^{(\eps)} \in \PFunc(\sX)$, and $\E[g^{(\eps)}(X)] = 1$. With these definitions, we can rewrite the above expressions as
\begin{align*}
	I_\Phi \left(P^{(\eps)}_U, P_{Y|X} \circ P^{(\eps)}_{X|U}\right) &= \eps \Ent_\Phi \left[ P^*_{Y|X} f(Y)\right] + \bar{\eps} \Ent_\Phi \left[P^*_{Y|X} g^{(\eps)}(Y)\right]
\end{align*}
and
\begin{align*}
	I_\Phi\left(P^{(\eps)}_U, P^{(\eps)}_{X|U}\right) = \eps \Ent_\Phi \left[ f(X)\right] + \bar{\eps} \Ent_\Phi \left[ g^{(\eps)}(X)\right].
\end{align*}
Consequently,
\begin{align*}
	\frac{\d}{\d\eps} L_\eta(\eps) &= \Ent_\Phi\left[ P^*_{Y|X} f(Y)\right] - \eta \Ent_\Phi \left[ f(X)\right] + \frac{\d}{\d\eps} \left\{ \bar{\eps} \left(\Ent_\Phi \left[P^*_{Y|X} g^{(\eps)}(Y)\right] -  \eta\Ent_\Phi \left[ g^{(\eps)}(X)\right]\right)\right\} \nonumber \\
	&= \Ent_\Phi\left[ P^*_{Y|X} f(Y)\right] - \eta \Ent_\Phi \left[ f(X)\right] + \eta \Ent_\Phi \left[ g^{(\eps)}(X)\right] - \Ent_\Phi \left[P^*_{Y|X} g^{(\eps)}(Y)\right] \nonumber \\
	& \qquad \qquad + \bar{\eps}\frac{\d}{\d \eps} \left\{ \Ent_\Phi \left[P^*_{Y|X} g^{(\eps)}(Y)\right] -  \eta\Ent_\Phi \left[ g^{(\eps)}(X)\right]\right\}.
\end{align*}
Now let us choose $Q_X$ so that $\frac{D_\Phi(Q_Y \| P_Y)}{D_\Phi(Q_X \| P_X)} > \eta_\Phi(P_X,P_{Y|X}) - \delta$ for some small $\delta > 0$. Then, for any $\eta < \eta_\Phi(P_X,P_{Y|X}) - \delta$ we have
\begin{align*}
	\frac{\d}{\d\eps} L_\eta(\eps) \Big|_{\eps = 0} &=  \Ent_\Phi\left[ P^*_{Y|X} f(Y)\right] - \eta \Ent_\Phi \left[ f(X)\right] > 0,
\end{align*}
where we have used Lemma~\ref{lm:entropy_derivative} in Appendix~\ref{app:lemmas}, and where
 the strict inequality holds due to our choice of $\eta$. Thus, the function $\eps \mapsto L_\eta(\eps)$ is strictly increasing in some neighborhood of $0$. Since $L_\eta(0) = 0$, there exists some value $\eps_0 > 0$, such that $L_\eta(\eps_0) > 0$, i.e.,
\begin{align*}
	\sup_{P_{U|X}} \frac{I_\Phi(U;Y)}{I_\Phi(U;X)} \ge \frac{I_\Phi\left(P^{(\eps_0)}_U, P_{Y|X} \circ P^{(\eps_0)}_{X|U}\right)}{I_\Phi\left(P^{(\eps_0)}_U, P^{(\eps_0)}_{X|U}\right)} > \eta.
\end{align*}
Since this holds for all $0 < \eta < \eta_\Phi(P_X,P_{Y|X}) - \delta$, and $\delta > 0$ was arbitrary, we conclude, upon taking $\delta \searrow 0$, that
\begin{align*}
	\sup_{P_{U|X}} \frac{I_\Phi(U;Y)}{I_\Phi(U;X)} \ge \eta_\Phi(P_X,P_{Y|X}).
\end{align*}
Since we already established the reverse inequality, the theorem is proved.
\end{proof}
Thus, if $\Phi(u) = u\log u$, we recover the result of Anantharam et al.~\cite{Anantharam_etal_HGR}; on the other hand, choosing $\Phi(u) = (u-1)^2$, we can express the squared maximal correlation $S^2(P_X,P_{Y|X})$ as
\begin{align*}
	S^2(P_X,P_{Y|X}) = \sup_{P_{U|X}} \frac{I_{\chi^2}(U; Y)}{I_{\chi^2}(U; X)},
\end{align*}
where the \textit{$\chi^2$-information} $I_{\chi^2}(U;V)$ is the variance of the Radon--Nikodym derivative $\frac{\d P_{UV}}{\d (P_U \otimes P_V)}$ w.r.t.\ the product distribution $P_U \otimes P_V$. We also have the following result:

\begin{corollary}\label{cor:SDPI_product} Let $(X,Y)$ be a random pair taking values in a finite product space $\sX \times \sY$, such that $P_X \in \PProb(\sX)$ and $P_Y \in \PProb(\sY)$. Then for any $\Phi \in \cF$ satisfying the conditions of Theorem~\ref{thm:gen_Erkip_Cover},
	\begin{align}\label{eq:SDPI_product}
		\eta_\Phi(P_X,P_{Y|X}) \eta_\Phi(P_Y,P_{X|Y}) \ge \frac{I_\Phi(X;X')}{I_\Phi(X;X)} \vee \frac{I_\Phi(Y;Y')}{I_\Phi(Y;Y)},
	\end{align}
	where $(X,X')$ is an exchangeable pair generated according to the Markov chain
	\begin{align}\label{eq:Gibbs_sampler_MC}
		X \xrightarrow{P_{Y|X}} Y \xrightarrow{P_{X|Y}} X'
	\end{align}
	and $(Y,Y')$ is an exchangeable pair generated according to the Markov chain
	\begin{align*}
		Y \xrightarrow{P_{X|Y}} X \xrightarrow{P_{Y|X|}} Y'.
	\end{align*}
\end{corollary}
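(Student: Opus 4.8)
The plan is to derive the bound by iterating the mutual‑$\Phi$‑information identity of Theorem~\ref{thm:gen_Erkip_Cover} three times. I will carry out the argument for the first term $I_\Phi(X;X')/I_\Phi(X;X)$ in the maximum; the second term $I_\Phi(Y;Y')/I_\Phi(Y;Y)$ follows verbatim after interchanging the roles of $\sX$ and $\sY$, so the $\vee$ is handled simply by bounding each term by the \emph{same} product $\eta_\Phi(P_X,P_{Y|X})\,\eta_\Phi(P_Y,P_{X|Y})$ and taking the larger.

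First I would record two structural facts about the Gibbs‑sampler chain $X \xrightarrow{P_{Y|X}} Y \xrightarrow{P_{X|Y}} X'$: it makes $X \to Y \to X'$ a Markov chain, and, since the second step resamples $X'$ from $P_{X|Y}(\cdot\mid Y)$, the joint law $P_{X'Y}$ coincides with $P_{XY}$; in particular $P_{X'|Y} = P_{X|Y}$ and $I_\Phi(X';Y)=I_\Phi(X;Y)$. I also use that $I_\Phi(U;V)=D_\Phi(P_{UV}\,\|\,P_U\otimes P_V)$ is symmetric in its two arguments. Next I would apply Theorem~\ref{thm:gen_Erkip_Cover} to the admissible pair $(P_Y,P_{X'|Y})$, i.e.\ to the Markov chain with ``input'' $Y$, ``output'' $X'$, and free variable $U$ ranging over all $P_{U|Y}$ for which $U\to Y\to X'$ is a Markov chain; the choice $U=X$ is admissible by the first structural fact, so
\begin{align*}
\frac{I_\Phi(X;X')}{I_\Phi(X;Y)} \;\le\; \sup_{P_{U|Y}}\frac{I_\Phi(U;X')}{I_\Phi(U;Y)} \;=\; \eta_\Phi\big(P_Y,P_{X'|Y}\big) \;=\; \eta_\Phi\big(P_Y,P_{X|Y}\big).
\end{align*}
Then I would apply the same theorem once more, to the trivial chain $X\to X\to Y$ (input $X$, channel $P_{Y|X}$, output $Y$), with the admissible choice of $U=X$ coupled to the input through the identity channel, obtaining $I_\Phi(X;Y)/I_\Phi(X;X)\le \eta_\Phi(P_X,P_{Y|X})$. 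Multiplying the two displays gives $I_\Phi(X;X')\le \eta_\Phi(P_X,P_{Y|X})\,\eta_\Phi(P_Y,P_{X|Y})\,I_\Phi(X;X)$, which is exactly the bound on the first term; swapping $\sX\leftrightarrow\sY$ (and using the symmetry of $I_\Phi$ together with $P_{Y'|X}=P_{Y|X}$ for the chain $Y\to X\to Y'$) yields the identical bound on $I_\Phi(Y;Y')/I_\Phi(Y;Y)$, and taking the maximum completes the proof.

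The argument is essentially bookkeeping, so the only care needed is in: (a) checking Markov‑chain admissibility of the special choices $U=X$ (resp.\ $U=Y$) and of the ``identity'' chain, so that Theorem~\ref{thm:gen_Erkip_Cover} applies — and, correspondingly, that all pairs invoked, namely $(P_X,P_{Y|X})$ and $(P_Y,P_{X|Y})$, are admissible, which is exactly where the hypotheses $P_X\in\PProb(\sX)$, $P_Y\in\PProb(\sY)$ are used; (b) invoking $P_{X'|Y}=P_{X|Y}$, $I_\Phi(X';Y)=I_\Phi(X;Y)$ and the symmetry of $\Phi$‑information, which are what allow the two applications of the theorem to chain together; and (c) the degenerate case in which $X$ (or $Y$) is a.s.\ constant, so $I_\Phi(X;X)=0$: there the data processing inequality forces $I_\Phi(X;X')=0$ and the asserted inequality reduces to the trivial $\eta_\Phi(P_X,P_{Y|X})\,\eta_\Phi(P_Y,P_{X|Y})\ge 0$ under the usual $\tfrac{0}{0}$ convention. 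I do not expect any genuine obstacle beyond this bookkeeping; all the analytic content already resides in Theorem~\ref{thm:gen_Erkip_Cover}.
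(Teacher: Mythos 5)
Your proof is correct and follows essentially the same route as the paper: one application of Theorem~\ref{thm:gen_Erkip_Cover} to the chain $X \to Y \to X'$ gives $I_\Phi(X;X') \le \eta_\Phi(P_Y,P_{X|Y})\, I_\Phi(X;Y)$, followed by $I_\Phi(X;Y) \le \eta_\Phi(P_X,P_{Y|X})\, I_\Phi(X;X)$, and then the roles of $\sX$ and $\sY$ are interchanged. The only cosmetic difference is in the second inequality, which the paper obtains by applying the SDPI directly to the Dirac measures $\delta_x$ in the integral representation of $I_\Phi(X;Y)$, whereas you invoke the theorem a second time with the identity channel $U=X$ --- these amount to the same estimate.
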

\begin{proof} Applying Theorem~\ref{thm:gen_Erkip_Cover} to the Markov chain \eqref{eq:Gibbs_sampler_MC} gives
	\begin{align*}
		\eta_\Phi(P_Y,P_{X|Y}) \ge \frac{I_\Phi(X;X')}{I_\Phi(X;Y)}.
	\end{align*}
On the other hand,
\begin{align*}
	I_\Phi(X;Y) &= \sum_{x \in \sX} P_X(x) D_\Phi(P_{Y|X=x} \| P_Y) \\
	&= \sum_{x \in \sX} P_X(x) D_\Phi(\delta_x P_{Y|X} \| P_X P_{Y|X}) \\
	&\le \eta_\Phi(P_X,P_{Y|X}) \sum_{x \in \sX} P_X(x) D_\Phi(\delta_x \| P_X) \\
	&= \eta_\Phi(P_X,P_{Y|X}) I_\Phi(X;X),
\end{align*}
where $\delta_x$ denotes the Dirac measure located at $x$. Combining these estimates gives \eqref{eq:SDPI_product}. Interchanging the roles of $X$ and $Y$, we obtain an analogous bound involving $I_\Phi(Y;Y')$ and $I_\Phi(Y;Y)$.
\end{proof}
\noindent For example, if $\Phi(u) = u \log u$, the bound \eqref{eq:SDPI_product} becomes
\begin{align*}
	\eta(P_X,P_{Y|X})\eta(P_Y,P_{X|Y}) \ge \frac{I(X;X')}{H(X)} \vee \frac{I(Y;Y')}{H(Y)},
\end{align*}
where $H(X)$ is the usual Shannon entropy of $X$. If $\Phi(u) = (u-1)^2$, then we have
\begin{align*}
	S^2(P_X,P_{Y|X})S^2(P_Y,P_{X|Y}) \ge \frac{I_{\chi^2}(X;X')}{|\sX|-1} \vee \frac{I_{\chi^2}(Y;Y')}{|\sY|-1}.
\end{align*}
Corollary~\ref{cor:SDPI_product} may be useful for obtaining lower bounds on the mixing time of Gibbs samplers. It also shows that the modified log-Sobolev constant $c$ defined in \eqref{eq:M_mLSI_const} is bounded from below as
\begin{align*}
	c \ge {2H(X)}{H(X|X')},
\end{align*}
where $(X,X')$ is an exchangeable pair with $P_X = \mu$ and $P_{X'X} = M$.

\subsection{Fastest mixing Markov chain on a graph}

Let $G = (\sV,\sE)$ be a connected undirected graph with vertex set $\sV$ and edge set $\sE \subseteq \sV \times \sV$. Since $G$ is undirected, $(x,x') \in \sE \Rightarrow (x',x) \in \sE$. We assume that each vertex has a self-loop, i.e., $(x,x) \in \sE$ for all $x \in \sV$. Consider a (discrete-time) Markov chain $\{X_t\}_{t = 0, 1,\ldots}$ with states in $\sV$, whose one-step transition probability matrix $K$ has the following properties:
\begin{enumerate}
	\item It is symmetric, i.e., $K(x'|x) = K(x|x')$ for all $x,x' \in \sV$.
	\item It respects the graph structure, i.e., $K(x'|x) \neq 0$ only if $(x,x') \in \sE$.
\end{enumerate}
Let $\mu$ be the uniform distribution on $\sV$. The first property of $K$ implies that it is reversible with respect to $\mu$, so that $\mu = \mu K$. Let $\nu$ be the distribution of the initial state $X_0$, and let $\nu_t$ denote the distribution of $X_t$, the state at time $t$, so that $\nu_t = \nu K^t$. If the Markov chain is irreducible and aperiodic (which will be the case if $K(x|x) > 0$ for all $x \in \sV$), then $\nu_t$ will converge to $\mu$. There are multiple ways of quantifying the rate of convergence; we introduce the following definition:

\begin{definition} Given a convex function $\Phi \in \cF$, the {\em $\Phi$-mixing time} of $K$ is the function $\tau_\Phi(K,\cdot) : \Reals^+ \to \Naturals$, defined by
	\begin{align*}
		\tau_\Phi(K,\eps) \deq \min \left\{ t \in \Naturals: \sup_{\nu \in \Prob(\sV)} D_\Phi\big(\nu K^t \| \mu) \le \eps \right\}
	\end{align*}
\end{definition}

Unsurprisingly, the mixing time is controlled by the SDPI constant $\eta_\Phi(\mu,K)$:
\begin{proposition} Suppose $\Phi(0) < \infty$, and let $n = |\sV|$. Then
	\begin{align}\label{eq:f_mixtime_LB}
		\tau_\Phi(K,\eps) \le \frac{\log \big(D^*_{\Phi,n}/\eps\big)}{\log \big(1/\eta_\Phi(\mu,K)\big)},
	\end{align}
	where $D^*_{\Phi,n} \deq \frac{\Phi(n)}{n} + \left(1-\frac{1}{n}\right) \Phi(0)$.
\end{proposition}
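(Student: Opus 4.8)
The plan is to combine one-step contraction of $D_\Phi(\cdot\|\mu)$ under $K$, iterated $t$ times, with a uniform bound on $D_\Phi(\nu\|\mu)$ over all initial distributions $\nu$. Since $K$ is symmetric and $\mu$ is uniform, $K$ is reversible with respect to $\mu$, so $\mu K=\mu$ and hence $\mu K^t=\mu$ for every $t\in\Naturals$. Assume $\eta_\Phi(\mu,K)<1$ (otherwise the asserted bound is vacuous), and fix $\nu\in\Prob(\sV)$. Applying the inequality $D_\Phi(\rho K\|\mu K)\le\eta_\Phi(\mu,K)\,D_\Phi(\rho\|\mu)$ — valid for every $\rho\in\Prob(\sV)$ by the definition of $\eta_\Phi(\mu,K)$, and trivially an equality $0=0$ when $\rho=\mu$ — successively to $\rho=\nu,\nu K,\ldots,\nu K^{t-1}$ and using $\mu K=\mu$ at each step, we obtain
\begin{align*}
	D_\Phi(\nu K^t\|\mu)\le \eta_\Phi(\mu,K)^{\,t}\,D_\Phi(\nu\|\mu).
\end{align*}

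Next I would bound $D_\Phi(\nu\|\mu)$ uniformly in $\nu$. With $n=|\sV|$ and $\mu(x)=1/n$ we have $D_\Phi(\nu\|\mu)=\sum_{x\in\sV}\tfrac1n\,\Phi\big(n\nu(x)\big)-\Phi(1)$. For each $x$, the representation $n\nu(x)=\nu(x)\cdot n+(1-\nu(x))\cdot 0$ together with convexity of $\Phi$ gives $\Phi\big(n\nu(x)\big)\le \nu(x)\Phi(n)+(1-\nu(x))\Phi(0)$; averaging against $\tfrac1n$ and using $\sum_x\nu(x)=1$ yields $\sum_{x\in\sV}\tfrac1n\Phi\big(n\nu(x)\big)\le \tfrac{\Phi(n)}{n}+(1-\tfrac1n)\Phi(0)=D^*_{\Phi,n}$. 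Under the standing normalization $\Phi(1)=0$ (the one that makes $D^*_{\Phi,n}$ the exact value of $\sup_\nu D_\Phi(\nu\|\mu)$, a Dirac mass being the maximizer) this gives $D_\Phi(\nu\|\mu)\le D^*_{\Phi,n}$ for every $\nu\in\Prob(\sV)$.

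Combining the two estimates, $\sup_{\nu\in\Prob(\sV)}D_\Phi(\nu K^t\|\mu)\le\eta_\Phi(\mu,K)^{\,t}D^*_{\Phi,n}$, so $D_\Phi(\nu K^t\|\mu)\le\eps$ for all $\nu$ whenever $\eta_\Phi(\mu,K)^{\,t}D^*_{\Phi,n}\le\eps$. Taking logarithms (and using $\log(1/\eta_\Phi(\mu,K))>0$), this holds for every integer $t\ge \log\big(D^*_{\Phi,n}/\eps\big)\big/\log\big(1/\eta_\Phi(\mu,K)\big)$; hence $\tau_\Phi(K,\eps)$, being the least such $t$, satisfies $\tau_\Phi(K,\eps)\le\big\lceil\log(D^*_{\Phi,n}/\eps)/\log(1/\eta_\Phi(\mu,K))\big\rceil$, which is \eqref{eq:f_mixtime_LB} up to the harmless rounding to an integer (and when $\eps\ge D^*_{\Phi,n}$ both sides are $0$).

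There is no deep obstacle here: the only substantive step is the convexity estimate $\sup_\nu D_\Phi(\nu\|\mu)\le D^*_{\Phi,n}$, and the only points worth a word of care are the normalization of $\Phi$ at $1$ (so that $D^*_{\Phi,n}$ genuinely bounds the divergence rather than $D^*_{\Phi,n}-\Phi(1)$), the degenerate cases $\eta_\Phi(\mu,K)=1$ (bound vacuous) and $\eta_\Phi(\mu,K)=0$ (the chain mixes in a single step), and the passage from a real threshold to the integer-valued $\tau_\Phi(K,\eps)$.
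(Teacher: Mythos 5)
Your proof is correct and follows essentially the same route as the paper: iterate the one-step contraction $D_\Phi(\nu K^{t}\|\mu)=D_\Phi(\nu K^{t}\|\mu K^{t})\le\eta_\Phi(\mu,K)^{t}D_\Phi(\nu\|\mu)$ using the invariance $\mu K=\mu$, then bound $D_\Phi(\nu\|\mu)$ uniformly by $D^*_{\Phi,n}$ and solve for $t$. The only cosmetic difference is that the paper obtains the uniform bound by noting that the convex functional $\nu\mapsto D_\Phi(\nu\|\mu)$ is maximized at an extreme point $\delta_x$ of the simplex, whereas you verify the same bound pointwise via $\Phi(n\nu(x))\le\nu(x)\Phi(n)+(1-\nu(x))\Phi(0)$; your explicit remarks on the normalization $\Phi(1)=0$ and the degenerate cases are welcome extra care.
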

\begin{proof} For any $t \ge 0$ and any $\nu \in \Prob(\sV)$,
	\begin{align*}
		D_\Phi(\nu K^t \| \mu) &= D_\Phi(\nu K^t \| \mu K^t) \le \Big(\eta_t(\mu,K)\Big)^t D_\Phi(\nu \| \mu).
	\end{align*}
	where we have used the fact that $\mu$ is $K$-invariant. Since $\Phi$-divergences are convex, and since a convex function on a compact convex set attains its maximum on an extreme point, we have
	\begin{align*}
		D_\Phi(\nu \| \mu) &\le \max_{x \in \sV} D_\Phi(\delta_x \| \mu),
	\end{align*}
	where $\delta_x$ is the Dirac measure located at $x$. Moreover, for any $x \in \sV$,
	\begin{align*}
		D_\Phi(\delta_x \| \mu) &= \frac{1}{n}\sum_{x' \in \sV} \Phi\left(\frac{\delta_x(x')}{1/n}\right) \\
		&= \frac{\Phi(n)}{n} + \left(1-\frac{1}{n}\right) \Phi(0) \\
		&\equiv D^*_{\Phi,n}.
	\end{align*}
Since $\nu$ was arbitrary, we have
\begin{align*}
	\sup_{\nu \in \Prob(\sV)} D_\Phi(\nu K^t \| \mu) &\le D^*_{\Phi,n} \Big(\eta_t(\mu,K)\Big)^t.
\end{align*}
Solving for the smallest $t$ that would make the right-hand side smaller than $\eps$, we obtain \eqref{eq:f_mixtime_LB}.
\end{proof}

It is customary to fix some value of $\eps$ (for discrete-time chains, a common choice is $1/2$), and to speak about the scaling of the mixing times in terms of the parameters of the graph and the Markov chain. For example, if $\Phi(u) = \frac{1}{2}|u-1|$, then the chain with one-step transition kernel $K$ mixes in $O\left(\frac{1}{\log \vartheta(K)^{-1}} \right)$ steps ($\TV$), where $\vartheta(K)$ is the Dobrushin coefficient of $K$; for $\Phi(u) = (u-1)^2$, the chain mixes in $O\left(\frac{\log n}{\log [S^2(\mu,K)]^{-1}}\right)$ steps $(\chi^2)$, where $S^2(\mu,K)$ is the maximal correlation; and for $\Phi(u) = u \log u$, the chain mixes in $O\left(\frac{\log \log n}{\log \eta(\mu,K)^{-1}}\right)$ steps (relative entropy). Thus, if $\eta_\Phi(\mu,K)$ is small, the corresponding Markov chain will mix faster in the sense that it will take fewer steps for the $\Phi$-divergence between the current state distribution and the uniform distribution on $\sV$ to fall below a given value. This motivates the following
\begin{quote} \textbf{Fastest mixing Markov chain (FMMC) problem:} Let $\Chan(G) \subset \Chan(\sV|\sV)$ be the set of all Markov kernels $K \in \Chan(\sV|\sV)$ satisfying the conditions listed in the beginning of this section. For a fixed convex function $\Phi \in \cF$,
	\begin{align*}
		\text{minimize } &\quad \eta_\Phi(\mu,K) \\
		\text{subject to }&\quad K \in \Chan(G)
	\end{align*}
\end{quote}
\begin{proposition} For any $\Phi \in \cF$, the FMMC problem is a convex program.
\end{proposition}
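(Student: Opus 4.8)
The plan is to verify directly the two defining features of a convex program: the objective functional is convex, and the feasible set is convex. For the objective, nothing new is needed --- Proposition~\ref{prop:SDPI_convexity} already establishes that, for fixed $\sX = \sY = \sV$ and fixed $\mu \in \PProb(\sV)$, the map $K \mapsto \eta_\Phi(\mu,K)$ is convex on $\Chan(\sV|\sV)$. (In the FMMC setup $\mu$ is the uniform distribution on $\sV$, which is strictly positive, and every $K \in \Chan(G)$ is reversible w.r.t.\ $\mu$, so $\mu K = \mu$ and the pair $(\mu,K)$ is admissible; thus $\eta_\Phi(\mu,K)$ is well-defined throughout $\Chan(G)$.)

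It remains to argue that $\Chan(G)$ is a convex subset of $\Chan(\sV|\sV)$. First, $\Chan(\sV|\sV)$ is itself convex: it is the set of $|\sV| \times |\sV|$ matrices with $K(x'|x) \ge 0$ and $\sum_{x'} K(x'|x) = 1$ for all $x$, i.e.\ a Cartesian product of probability simplices, hence a polytope. The two additional requirements imposed in the FMMC problem are both linear: symmetry is the family of equalities $K(x'|x) - K(x|x') = 0$ for all $x,x' \in \sV$, and respecting the graph structure is the family of equalities $K(x'|x) = 0$ for all $(x,x') \notin \sE$. Each such constraint defines a hyperplane (equivalently, an affine subspace) in the ambient matrix space, so $\Chan(G)$ is the intersection of the polytope $\Chan(\sV|\sV)$ with finitely many affine subspaces, and is therefore again a (possibly lower-dimensional) polytope --- in particular convex and nonempty (it contains, e.g., the lazy random walk on $G$).

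Putting the two pieces together: minimizing the convex functional $K \mapsto \eta_\Phi(\mu,K)$ over the convex set $\Chan(G)$ is, by definition, a convex optimization problem, which proves the proposition. I do not expect any genuine obstacle here; the only point worth a line of care is noting that restricting a convex functional (Proposition~\ref{prop:SDPI_convexity}) to the convex subset $\Chan(G) \subseteq \Chan(\sV|\sV)$ preserves convexity, and that admissibility of $(\mu,K)$ holds on all of $\Chan(G)$ so that the objective is everywhere finite.
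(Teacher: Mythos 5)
Your proof is correct and follows essentially the same route as the paper: the feasible set $\Chan(G)$ is convex because the symmetry and sparsity constraints are affine (the paper verifies this by directly checking closure under convex combinations, which is the same observation), and the objective is convex by Proposition~\ref{prop:SDPI_convexity}. The extra remarks on admissibility and nonemptiness are fine but not needed for the claim.
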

\begin{proof} The constraint set $\Chan(G)$ is convex. To see this, consider any two $K_1,K_2 \in \cM(G)$, and let $K = \lambda K_1 + \bar{\lambda}K_2$ for some $\lambda \in (0,1)$. Since both $K_1$ and $K_2$ are symmetric, for any pair $x,x' \in \sX$ we have
	\begin{align*}
		K(x'|x) &= \lambda K_1(x'|x) + \bar{\lambda}K_2(x'|x) = \lambda K_1 (x|x') + \bar{\lambda}K_2(x|x') = K(x|x').
	\end{align*}
Similarly, suppose that $(x,x')\not\in\sE$. Then $K_1(x'|x) = K_2(x'|x) = 0$, so $K(x'|x) = 0$ as well. Thus, $K \in \cM(G)$. The objective function $K \mapsto \eta_\Phi(\mu,K)$ is likewise convex, by Proposition~\ref{prop:SDPI_convexity}. 
\end{proof}
For $\Phi(u) = (u-1)^2$, the FMMC problem was studied by Boyd et al.~\cite{Boyd_etal_FMMC}, who showed that it can be equivalently represented by a semidefinite program (SDP), for which efficient solvers are available. For a general $\Phi$, there is not much one can say without exploiting specific properties of that $\Phi$ or any symmetries of the graph $G$; however, we can provide bounds on the values of the FMMC problems for different choices of $\Phi$. With that in mind, let $\eta^*_\Phi(G)$ denote the minimum value of the FMMC objective a given choice of $\Phi$ and $G$:
$$
\eta^*_\Phi(G) \deq \inf_{K \in \Chan(G)} \eta_\Phi(\mu,K).
$$
Then we observe the following:
\begin{itemize}
	\item $\eta^*_\Phi(G) \le \inf_{K \in \Chan(G)} \vartheta(K)$ for any $\Phi \in \cF$. This follows from the fact that $\eta_\Phi(\mu,K) \le \eta_\Phi(K) \le \vartheta(K)$, by Theorem~\ref{thm:Markov_contraction_bound}.
	\item If $\Phi$ is three times differentiable and $\Phi''(1) > 0$, then $\eta^*_\Phi(G) \ge \eta^*_{\chi^2}(G)$. This follows from the fact that, for such $\Phi$, $\eta_\Phi(\mu,K) \ge \eta_{\chi^2}(\mu,K)$ [cf.~Theorem~\ref{thm:chi_2_lower_bound}]. The quantity $\eta^*_{\chi^2}(G)$ and the corresponding convex program were studied extensively by Boyd et al.~\cite{Boyd_etal_FMMC}. 
\end{itemize}
The above definition of mixing time can be generalized to any other invariant distribution $\mu$ on $\sV$: Let $\Chan_\mu(G) \subset \Chan(\sV|\sV)$ be the set of all Markov kernels $K$, such that:
\begin{enumerate}
	\item $\mu(x)K(x'|x) = \mu(x')K(x|x')$ for all $x,x' \in \sV$.
	\item $K(x'|x) \neq 0$ only if $(x,x') \in \sE$.
\end{enumerate}
Then the same definition of the mixing time applies, and we have the bound
\begin{align*}
	\tau_\Phi(K,\eps) \le \frac{\log \big(D^*_{\Phi,\mu}/\eps\big)}{\log \big(1/\eta_\Phi(\mu,K)\big)},
\end{align*}
where
\begin{align*}
	D^*_{\Phi,\mu} \deq \max_{x \in \sV} \left\{ \mu(x) \Phi\left(\frac{1}{\mu(x)}\right) + \big(1-\mu(x)\big)\Phi(0)\right\}.
\end{align*}
We can then consider the appropriate modification of the FMMC problem, and the same arguments as before can be used to show that it is given by a convex program.

\subsection{Mixing times of Swendsen-Wang and heat-bath dynamics}

Let $G = (\sV,\sE)$ be an undirected graph without self-loops. In this case, we can identify the edge set of $G$ with a subset of ${\sV \choose 2}$, the set of all two-element subsets of $\sV$. If two vertices $u,v \in \sV$ are connected by an edge, we will write $u \leftrightarrow v$. Fix an integer $q \ge 2$, and consider the set $\sX = \sX_q = \{1,\ldots,q\}^\sV$ of tuples $x = (x_v : v \in \sV)$ with coordinates in $\{1,\ldots,q\}$. The elements of $\sX$ are called \textit{$q$-colorings of $G$}, and we say that $x \in \sX$ is a \textit{proper} $q$-coloring if $x_u \neq x_v$ whenever $u \leftrightarrow v$.

The problem of computing the number $\sP_G(q)$ of proper $q$-colorings of an arbitrary $G$ (or even deciding whether it is nonzero) is intractable, although it is known that $\sP_G(q)$ is polynomial in $q$. A related problem of drawing a $q$-coloring of $G$ uniformly at random (assuming $\sP_G(q) > 0$) is also intractable \cite{Jerrum_book}. However, it turns out that the problem of computing (or approximating) $\sP_G(q)$ is closely related to the problem of sampling from the so-called \textit{$q$-state Potts model}, described by the Gibbs distribution
\begin{align}\label{eq:Potts}
	\PP_{\beta,q}(x) \deq \frac{1}{Z(\beta,q)} \exp \Bigg( \beta \sum_{u,v \in \sV \atop u \leftrightarrow v} \1\{x_u = x_v\}\Bigg),
\end{align}
where the parameter $\beta \ge 0$ is called the inverse temperature, and $Z(\beta,q)$ is the normalization constant known as the partition function. In particular, $\sP_G(q) = \lim_{\beta \to \infty} Z(\beta,q)$. Direct sampling from $\PP_{\beta,q}$ is also intractable, so one resorts to Markov Chain Monte Carlo (MCMC) methods: Pick a Markov kernel $K \in \Chan(\sX|\sX)$ that leaves the Gibbs distribution \eqref{eq:Potts} invariant, pick an arbitrary initial configuration $x_0 \in \sX$, and for each $t = 0,1,\ldots$ generate a random configuration $X_{t+1}$ according to $K(\cdot|X_t)$. With a good choice of $K$, the distribution of $X_t$ will rapidly converge to $\PP_{\beta,q}$. Two popular choices of $K$ are the \textit{heat-bath} (or \textit{Glauber}) \textit{dynamics} and the \textit{Swendsen-Wang dynamics} \cite{Winkler_MCMC_book,Grimmett_RC}. They are defined as follows:

\paragraph{Heat-bath dynamics.} At each time step $t$, given the current configuration $x_t = (x_{v,t})_{v \in \sV}$, we pick a vertex $v \in \sV$ uniformly at random, assign it a new random color $X_{v,t+1} \in \{1,\ldots,q\}$ according to the conditional distribution $\PP_{\beta,q}(X_{v,t}|X^{\backslash v}_t = x^{\backslash v}_t)$, and set $X^{\backslash v}_{t+1} = x^{\backslash v}_t$. Here, $x^{\backslash v}_t = (x_{u,t})_{u \in \sV \backslash \{v\}}$ is the time-$t$ configuration of all the vertices except $v$. Thus, the transition probabilities of the heat-bath Markov chain are given by the Markov kernel
\begin{align}\label{eq:heat_bath}
	K^{\rm HB}_{\beta,q}(x'|x) = \frac{1}{|\sV|}\sum_{v \in \sV} \PP_{\beta,q}(x'_v|x^{\backslash v}) \1\{ (x')^{\backslash v} = x^{\backslash v} \}.
\end{align}
\paragraph{Swendsen-Wang dynamics.} This construction is based on a coupling of the $q$-Potts model and the so-called \textit{random-cluster} (or {\em Fortuin-Kasteleyn}) model on $G$. The latter is defined as follows \cite{Grimmett_RC}. Let $\sY = 2^\sE = \{\sA: \sA \subseteq \sE\}$ and fix a parameter $p \in (0,1)$. Then the random-cluster model is described by the following probability measure on $\sY$:
\begin{align}\label{eq:FK}
	\QQ_{p,q}(\sA) \deq \frac{1}{\tilde{Z}(p,q)} \left(\frac{p}{\bar{p}}\right)^{|\sA|} q^{C(\sA)}, \qquad \forall \sA \in \sY
\end{align}
where $\tilde{Z}(\cdot,\cdot)$ is the partition function, $\bar{p} = 1-p$, and $C(\sA)$ is the number of connected components of the induced graph $(\sV,\sA)$. It can be shown that
\begin{align*}
	\tilde{Z}(p,q) = Z\left( \log(1/\bar{p}),q\right),
\end{align*}
where $Z(\cdot,\cdot)$ is the partition function for the $q$-Potts model. Now let $p = 1 - e^{-\beta}$, and consider the following probability measure on the Cartesian product $\sX \times \sY$:
\begin{align}\label{eq:ES}
	\MM(x,\sA) &= \frac{1}{\tilde{Z}(p,q)} \left( \frac{p}{\bar{p}}\right)^{|\sA|} \1\{ \sA \subset \sE(x)\} \\
	&= \frac{1}{Z(\beta,q)} \left( e^{\beta}-1\right)^{|\sA|} \1\{ \sA \subset \sE(x)\}.
\end{align}
where $\sE(x) \deq \left\{ \{u,v\} \in \sE \, : \, x_u = x_v \right\}$ is the set of edges on which $x$ violates the proper $q$-coloring constraint. It can be shown that $\MM$ is a coupling of $\PP_{\beta,q}$ and $\QQ_{p,q}$ with $p = 1-e^{-\beta}$, i.e., if $(X,Y)$ is a random pair with law $\MM$, then $P_X = \PP_{\beta,q}$ and $P_Y = \QQ_{1-e^{-\beta},q}$.

With these definitions at hand, we can describe the Swendsen-Wang algorithm:
\begin{itemize}
	\item Start with an arbitrary initial configuration $x_0 \in \sX$
	\item For each $t=0,1,2,\ldots$
	\begin{itemize}
		\item Draw a random set $\sA_t \in \sY$ according to the conditional distribution $\MM(Y=\cdot|X=x_t)$.
		\item Draw $X_{t+1}$ from the conditional distribution $\MM(X=\cdot|Y=\sA_t)$.
	\end{itemize}
\end{itemize}
In words, given $x_t$, we draw $\sA_t$ by deleting each edge of $\sE(x_t)$ independently with probability $p = 1-e^{-\beta}$; given $\sA_t$, we draw $X_{t+1}$ by assigning a random color independently to each connected component of $(\sV,\sA_t)$ and coloring all vertices in the same component with the same color. Thus, the Swendsen-Wang dynamics is a two-stage Gibbs sampler that generates a trajectory $\{(X_t,Y_t)\}_{t \ge 0}$ according to
\begin{align*}
	\ldots \longrightarrow X_t \xrightarrow{\quad \MM_{Y|X} \quad } Y_t \xrightarrow{\quad \MM_{X|Y} \quad} X_{t+1} \longrightarrow \ldots
\end{align*}
The discrete-time process $\{X_t\}_{t \ge 0}$ is a Markov chain with one-step transition kernel
\begin{align*}
	K^{\rm SW}_{\beta,q}(x'|x) &= \MM_{X|Y} \circ \MM_{Y|X} (x'|x) \\
	&= \sum_{\sA \in \sY} \MM_{X|Y}(x'|\sA) \MM_{Y|X}(\sA|x).
\end{align*}
By construction, the Markov kernel $K^{\rm SW}_{\beta,q}$ is reversible w.r.t.\ the Gibbs measure $\PP_{\beta,q}$.

With each of these two algorithms, the hope is that the corresponding Markov chain mixes rapidly, i.e., the distribution of the state $X_t$ converges quickly to $\PP_{\beta,q}$ as $t \to \infty$. Just as in the previous section, for a given divergence-generating function $\Phi \in \cF$, the rate at which $D_\Phi\big(P_{X_t} \big\| \PP_{\beta,q}\big)$ converges to zero is controlled by the SDPI constant $\eta_\Phi\big(\PP_{\beta,q}, K^{\bullet}_{\beta,q}\big)$, where $\bullet$ is either ${\rm HB}$ or ${\rm SW}$. The heat-bath algorithm is widely used because it is easy to implement. On the other hand, the popularity of the Swendsen-Wang algorithm is due to the fact that, empirically, it tends to mix rapidly for a wide variety of graphs and small values of $q$ (however, see \cite{Borgs_etal_torpid_SW} for examples of slow mixing of Swendsen-Wang). In a recent paper, Ullrich \cite{Ullrich_SW_vs_HB} showed that the spectral gap of Swendsen-Wang is lower-bounded by a constant multiple of the spectral gap of the heat-bath kernel, where the constant depends on the number of colors $q$, the inverse temperature $\beta$, and the maximum degree $\Delta$ of $G$. Now, the spectral gap can be related to the SDPI constant for the $\chi^2$-divergence (see Remark~\ref{rem:spectral_gap}), so Ullrich's result can immediately be converted into a statement about the $\chi^2$ SDPI constants of Swendsen-Wang and heat-bath kernels. The theorem below sharpens and extends the bound of Ullrich to other $\Phi$-divergences; just like in \cite{Ullrich_SW_vs_HB}, the theorem allows us to convert any available upper bound for the heat-bath kernel into an upper bound for the Swendsen-Wang kernel (or, conversely, any lower bound for Swendsen-Wang into a lower bound for heat-bath). 

\begin{theorem}\label{thm:SW_vs_HB} For any $\Phi \in \cF$ that satisfies the generalized homogeneity condition \eqref{eq:gen_hom}, 
	\begin{align}\label{eq:SW_vs_HB}
		\eta_\Phi\left(\PP_{\beta,q},K^{\rm SW}_{\beta,q}\right) \le \frac{q^{2\Delta+1}e^{4\beta\Delta}-1}{q^{2\Delta+1}e^{4\beta\Delta} - \left[\eta_\Phi\left(\PP_{\beta,q},K^{\rm HB}_{\beta,q} \right)\right]^2},
	\end{align}
where $\Delta = \max_{v \in \sV} \deg_G(v)$ is the maximum degree of $G$.
\end{theorem}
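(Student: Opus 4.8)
The plan is to adapt Ullrich's comparison of the spectral gaps of Swendsen--Wang and heat-bath dynamics \cite{Ullrich_SW_vs_HB} to arbitrary $\Phi$-divergences, using the comparison machinery of Section~\ref{sec:SDPI}.

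\emph{Identifying the right pair of kernels.} By the definition \eqref{eq:ES} of the Edwards--Sokal coupling $\MM$, the Swendsen--Wang kernel is exactly the $\sX$-marginal one-step transition of the two-component Gibbs sampler for $\MM$; in the terminology of Section~\ref{ssec:logsob}, the pair $(\PP_{\beta,q},K^{\rm SW}_{\beta,q})$ \emph{factors through} $\MM_{Y|X}$, so $K^{\rm SW}_{\beta,q}=(\MM_{Y|X})^{*}\,\MM_{Y|X}$ is a positive self-adjoint operator on $L^{2}(\sX,\PP_{\beta,q})$, in particular reversible w.r.t.\ $\PP_{\beta,q}$ with spectrum in $[0,1]$. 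The heat-bath kernel $K^{\rm HB}_{\beta,q}$ of \eqref{eq:heat_bath} is likewise reversible w.r.t.\ $\PP_{\beta,q}$ but need not be positive, and the object of the same ``squared'' type is $(K^{\rm HB}_{\beta,q})^{2}=(K^{\rm HB}_{\beta,q})^{*}\,K^{\rm HB}_{\beta,q}$, which is positive, reversible w.r.t.\ $\PP_{\beta,q}$, and satisfies
\[
  \eta_\Phi\big(\PP_{\beta,q},(K^{\rm HB}_{\beta,q})^{2}\big)\ \le\ \eta_\Phi\big(\PP_{\beta,q},K^{\rm HB}_{\beta,q}\big)^{2}
\]
by two applications of the ordinary data-processing inequality along $\nu\mapsto\nu K^{\rm HB}_{\beta,q}\mapsto\nu (K^{\rm HB}_{\beta,q})^{2}$ and $\PP_{\beta,q}K^{\rm HB}_{\beta,q}=\PP_{\beta,q}$. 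This is why \eqref{eq:SW_vs_HB} carries the square of the heat-bath SDPI constant.

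\emph{The comparison lemma.} The heart of the proof is a quantitative domination of two heat-bath steps by one Swendsen--Wang step: with $C:=q^{2\Delta+1}e^{4\beta\Delta}$,
\[
  (K^{\rm HB}_{\beta,q})^{2}(x'\mid x)\ \le\ C\,K^{\rm SW}_{\beta,q}(x'\mid x)\qquad\text{for all }x,x'\in\sX ,
\]
equivalently $K^{\rm SW}_{\beta,q}=\tfrac1C(K^{\rm HB}_{\beta,q})^{2}+(1-\tfrac1C)R$ with $R$ a kernel reversible w.r.t.\ $\PP_{\beta,q}$. One proves this by a congestion / path-routing argument built on $\MM$: an elementary transition of $(K^{\rm HB}_{\beta,q})^{2}$ recolors at most two vertices $u,v$, and is routed through the Swendsen--Wang move in which the random bond set leaves the (at most $2\Delta$) edges incident to $\{u,v\}$ absent, so $u,v$ sit in singleton components and are recolored independently. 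The multiplicative cost is a product of worst-case local quantities: the ratio $\PP_{\beta,q}(x)/\PP_{\beta,q}(x')$ under a single-vertex flip lies in $[e^{-\beta\Delta},e^{\beta\Delta}]$, hence within a factor $e^{4\beta\Delta}$ for a two-vertex change; the heat-bath conditionals obey $\PP_{\beta,q}(\,\cdot\mid x^{\backslash v})\ge(q\,e^{\beta\Delta})^{-1}$; and the probability that the bond step isolates $\{u,v\}$ and assigns them the prescribed colors contributes the remaining powers of $q^{-1}$ over the incident edges. These depend only on $q,\beta$ and $\Delta$, and assemble to $C$.

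\emph{From the comparison lemma to \eqref{eq:SW_vs_HB}.} Since $\Phi$ satisfies the homogeneity hypothesis \eqref{eq:gen_hom}, the pointwise bound can be fed into the comparison Theorem~\ref{thm:SDPI_comparison} (equivalently Corollary~\ref{cor:SDPI_comparison}, with $K=K^{\rm SW}_{\beta,q}$, $\bar K=(K^{\rm HB}_{\beta,q})^{2}$, $A=C$); combined with the convexity of $\eta_\Phi$ in the kernel (Proposition~\ref{prop:SDPI_convexity}), the functional characterization \eqref{eq:functional_SDPI_gen_hom}, and the submultiplicativity step above, the decomposition $K^{\rm SW}_{\beta,q}=\tfrac1C(K^{\rm HB}_{\beta,q})^{2}+(1-\tfrac1C)R$ yields a bound on $\eta_\Phi(\PP_{\beta,q},K^{\rm SW}_{\beta,q})$ in terms of $s:=\eta_\Phi(\PP_{\beta,q},(K^{\rm HB}_{\beta,q})^{2})$ and $C$. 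The off-the-shelf argument only gives the linear estimate $1-\tfrac1C(1-s)$; a more careful accounting of the ``holding'' content of $K^{\rm SW}_{\beta,q}$ (rather than discarding it via the trivial $\eta_\Phi(\PP_{\beta,q},R)\le1$) is needed to sharpen this to $\tfrac{C-1}{C-s}$. Finally, since $s\mapsto\tfrac{C-1}{C-s}$ is nondecreasing on $[0,1)$, one may replace $s$ by the larger quantity $\eta_\Phi(\PP_{\beta,q},K^{\rm HB}_{\beta,q})^{2}$, obtaining \eqref{eq:SW_vs_HB}. The main obstacle is the comparison lemma --- extracting the explicit constant $q^{2\Delta+1}e^{4\beta\Delta}$ from the Edwards--Sokal routing --- and, secondarily, the bookkeeping that upgrades the generic comparison bound to the rational form displayed in \eqref{eq:SW_vs_HB}.
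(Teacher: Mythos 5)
There is a genuine gap: your argument, as you yourself concede at the end, only delivers the linear bound
\begin{align*}
\eta_\Phi\big(\PP_{\beta,q},K^{\rm SW}_{\beta,q}\big) \le 1-\frac{1}{C}\Big(1-\big[\eta_\Phi\big(\PP_{\beta,q},K^{\rm HB}_{\beta,q}\big)\big]^2\Big) = \frac{C-1+\big[\eta^{\rm HB}\big]^2}{C},
\end{align*}
which is strictly weaker than the claimed $\tfrac{C-1}{C-[\eta^{\rm HB}]^2}$ (the difference of the two, after clearing denominators, is $s(1-s)\ge 0$ with $s=[\eta^{\rm HB}]^2$). The ``more careful accounting of the holding content'' that you invoke to close this gap is not an argument. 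The mechanism that actually produces the rational form is different from your decomposition: one compares $K^{\rm SW}_{\beta,q}$ not to $(K^{\rm HB}_{\beta,q})^2$ but to the \emph{sandwiched} kernel $K \deq K^{\rm HB}_{\beta,q}\circ K^{\rm SW}_{\beta,q}\circ K^{\rm HB}_{\beta,q}$. Because $\PP_{\beta,q}$ is invariant under both kernels, submultiplicativity of SDPI constants gives $\eta_\Phi(\PP_{\beta,q},K)\le [\eta^{\rm HB}]^2\,\eta^{\rm SW}$ --- note that $\eta^{\rm SW}$ itself appears here --- and Ullrich's pointwise bound $K(x'|x)\le q^{2\Delta+1}e^{4\beta\Delta}\,K^{\rm SW}_{\beta,q}(x'|x)$ together with Corollary~\ref{cor:SDPI_comparison} gives $\eta^{\rm SW}\le 1-\tfrac{1}{C}\big(1-\eta_\Phi(\PP_{\beta,q},K)\big)\le 1-\tfrac{1}{C}\big(1-[\eta^{\rm HB}]^2\eta^{\rm SW}\big)$. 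Solving this self-referential inequality for $\eta^{\rm SW}$ is exactly what yields \eqref{eq:SW_vs_HB}. Your decomposition $K^{\rm SW}=\tfrac1C(K^{\rm HB})^2+(1-\tfrac1C)R$ has no such self-reference and cannot produce the rational bound.

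A secondary issue: the comparison lemma you propose to prove, $(K^{\rm HB}_{\beta,q})^2(x'|x)\le C\,K^{\rm SW}_{\beta,q}(x'|x)$, is not the inequality established by Ullrich (his Eq.\ bounds the ratio of the sandwiched kernel $K$ to $K^{\rm SW}_{\beta,q}$), so even that step would need an independent verification that the same constant $q^{2\Delta+1}e^{4\beta\Delta}$ works; your sketch of the routing argument is plausible but not carried out. The submultiplicativity claim $\eta_\Phi(\PP_{\beta,q},(K^{\rm HB})^2)\le[\eta^{\rm HB}]^2$ and the appeal to homogeneity for Corollary~\ref{cor:SDPI_comparison} are fine.
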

\begin{remark} {\em In the notation of this paper, the main result of \cite{Ullrich_SW_vs_HB} can be written as
	\begin{align}\label{eq:Ullrich}
		\sqrt{\eta_{\chi^2}\left(\PP_{\beta,q},K^{\rm SW}_{\beta,q}\right)} \le  \frac{2q^{4\Delta+2}e^{8\beta\Delta}-1 + \sqrt{\eta_{\chi^2}\left(\PP_{\beta,q},K^{\rm HB}_{\beta,q} \right)}}{2q^{4\Delta+2}e^{8\beta\Delta}}.
	\end{align}
Particularizing our bound \eqref{eq:SW_vs_HB} to the case $\Phi(u) = (u-1)^2$, we see that it is tighter than \eqref{eq:Ullrich}. A plot of the two bounds as a function of the $\chi^2$ SDPI constant of the heat-bath dynamics is shown in Figure~\ref{fig:compare_vs_Ullrich} for $q=2$, $\Delta=3$, and $\beta = 0.001$. (Admittedly, both bounds are fairly crude even for small values of $q$ and $\Delta$, due to the presence of $O(q^\Delta)$ terms.)\hfill$\diamond$}
\end{remark}

\begin{figure}\label{fig:compare_vs_Ullrich}
	\centerline{\includegraphics[width=0.5\textwidth]{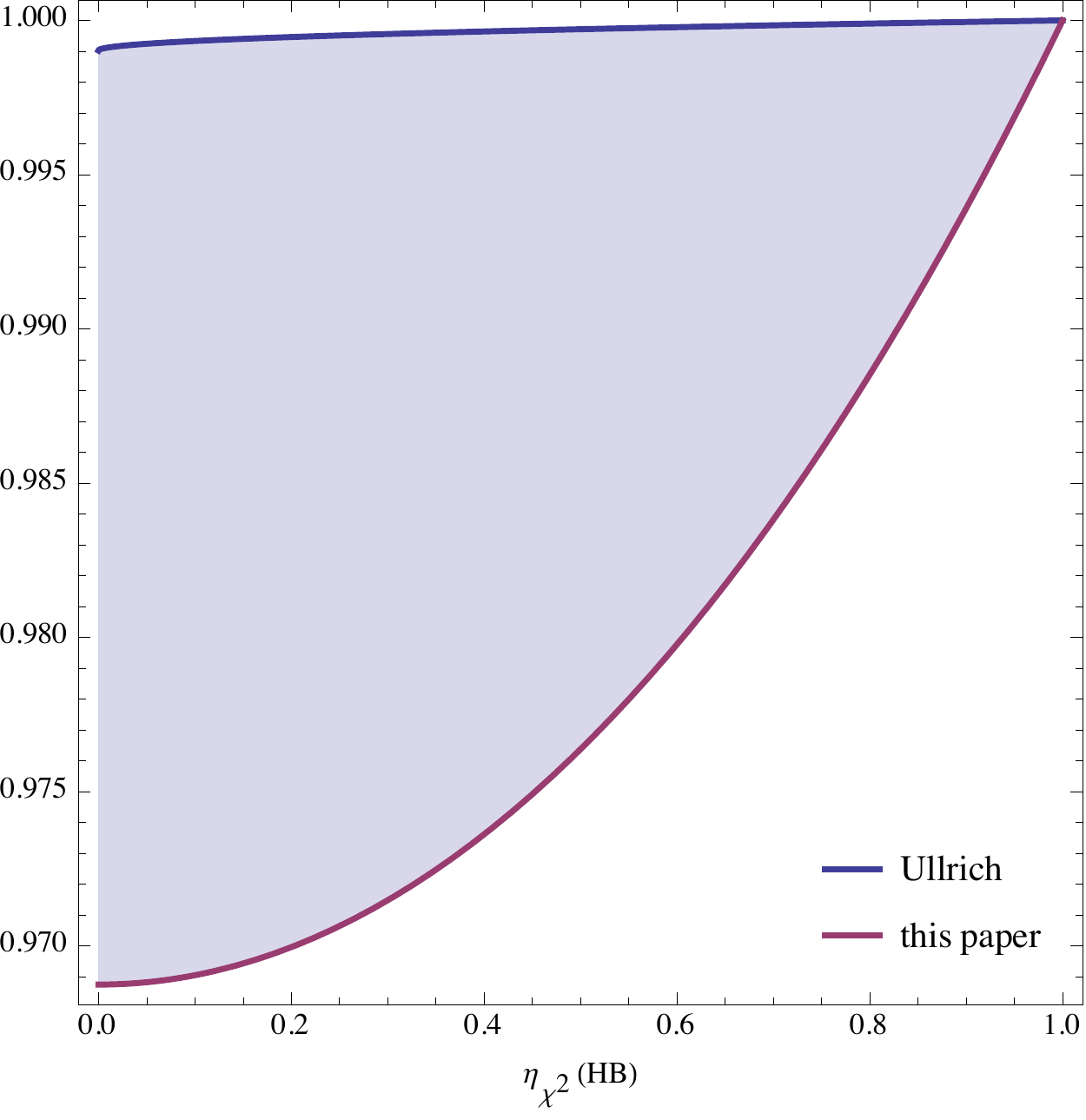}}
	\caption{Upper bounds of Ullrich and Theorem~\ref{thm:SW_vs_HB} as a function of the $\chi^2$ SDPI constant of the heat-bath dynamics, for $q=2$, $\Delta = 3$, $\beta = 0.001$.}
\end{figure}

\begin{proof} We borrow a clever trick of Ullrich \cite{Ullrich_SW_vs_HB} and compare the Swendsen-Wang kernel $K^{\rm SW}_{\beta, q}$ to $K = K^{\rm HB}_{\beta, q} \circ K^{\rm SW}_{\beta, q} \circ K^{\rm HB}_{\beta,q}$. Since the Gibbs distribution $\PP_{\beta,q}$ is invariant under both the SW and the HB kernels, it is also the invariant distribution of $K$. Moreover, for any $\nu \neq \PP_{\beta,q}$, we have
	\begin{align*}
		D_\Phi\big(\nu K \big\| \PP_{\beta, q}\big) &= D_\Phi \big( \nu K \big \| \PP_{\beta,q} K \big) \\
		&= D_\Phi \Big( \nu \big(K^{\rm HB}_{\beta, q} \circ K^{\rm SW}_{\beta, q} \circ K^{\rm HB}_{\beta,q}\big) \Big\| \PP_{\beta, q}\big(K^{\rm HB}_{\beta, q} \circ K^{\rm SW}_{\beta, q} \circ K^{\rm HB}_{\beta,q}\big) \Big) \\
		&\le \left[\eta_\Phi\left(\PP_{\beta,q},K^{\rm HB}_{\beta,q}\right)\right]^2  \eta_\Phi(\PP_{\beta,q},K^{\rm SW}_{\beta,q}) \, D_\Phi (\nu \| \PP_{\beta,q}),
	\end{align*}
	where we have repeatedly exploited the invariance of $\PP_{\beta,q}$ w.r.t.\ the SW and the HB kernels. Since $\nu$ was arbitrary, we conclude that
	\begin{align}\label{eq:eta_f_Ullrich_K}
		\eta_\Phi(\PP_{\beta,q},K) \le \left[\eta_\Phi\left(\PP_{\beta,q},K^{\rm HB}_{\beta,q}\right)\right]^2  \eta_\Phi(\PP_{\beta,q},K^{\rm SW}_{\beta,q}).
	\end{align}
On the other hand, Ullrich also proved that
	\begin{align}\label{eq:Ullrich_K_ratio}
		\max_{x,x' \in \sX} \frac{K(x'|x)}{K^{\rm SW}_{\beta,q}(x'|x)} \le q^{2\Delta+1}e^{4\beta\Delta}.
	\end{align}
From Eq.~\eqref{eq:Ullrich_K_ratio} and Corollary~\ref{cor:SDPI_comparison}, we get the estimate
\begin{align}\label{eq:Ullrich_K_comparison}
	\eta_\Phi(\PP_{\beta,q},K^{\rm SW}_{\beta,q}) \le 1 - \frac{1}{q^{2\Delta+1}e^{4\beta\Delta}}\left(1 - \eta_\Phi(\PP_{\beta,q},K)\right).
\end{align}
Finally, using \eqref{eq:eta_f_Ullrich_K} in \eqref{eq:Ullrich_K_comparison} and rearranging, we obtain \eqref{eq:SW_vs_HB}.
\end{proof}

\subsection{Reconstruction in graphical models}

The Potts model described in the preceding section is an example of a \textit{probabilistic graphical model} (or a pairwise Markov random field) \cite{Wainwright_Jordan_graphical}. Any such model is specified by a pair $(G,\mathbf{{\mathbb U}})$, where $G = (\sV,\sE)$ is an undirected graph and $\mathbf{{\mathbb U}} = \{{\mathbb U}_e\}_{e \in \sE}$ is a collection of symmetric \textit{edge potentials} ${\mathbb U}_e : \Omega\times\Omega \to \Reals^+$. Here, $\Omega$ is a finite set often referred to as \textit{state} or \textit{spin space}. The configuration space of the graphical model is the set $\sX = \Omega^\sV$ of all tuples $x = (x_v)_{v \in \sV}$, where each $x_v$ takes values in $\Omega$. Once $G$ and $\mathbf{{\mathbb U}}$ are fixed, we consider the following probability measure on $\sX$:
\begin{align*}
	\PP_{G,\mathbf{{\mathbb U}}}(x) = \frac{1}{Z(G,\mathbf{{\mathbb U}})}\prod_{\{u,v\} \in \sE} {\mathbb U}_{\{u,v\}}(x_u,x_v),
\end{align*}
where $Z$ is the normalization constant. For example, the $q$-state Potts model on $G$ [cf.~Eq.~\eqref{eq:Potts}] is of this form with $\Omega = \{1,\ldots,q\}$ and ${\mathbb U}_{uv}(x_u,x_v) = \exp\left(\beta \1\{x_u = x_v\} \right)$.

The \textit{reconstruction problem} (see, e.g., \cite{Montanari_Gerschenfeld_reconstruction,Bhatnagar_reconstruction}) for the graphical model $(G,\mathbf{{\mathbb U}})$ can be stated informally as follows: Given two disjoint sets of vertices $\sA$ and $\sB$, how much can we infer about the configuration $X_\sA \deq (X_v)_{v \in \sA}$ on $\sA$ by observing $X_\sB$? For a precise definition, let $d_G$ denote the \textit{graph distance} on $G$, i.e., $d_G(u,v)$ is the number of edges on the shortest path between $u$ and $v$.

\begin{definition} Given a function $\Phi \in \cF$, we say that the probabilistic graphical model $(G,\mathbf{{\mathbb U}})$ is {\em not $\Phi$-reconstructible} if for any set of vertices $\sA$ there exist some constants $C_\sA,c_\sA > 0$, such that
	\begin{align*}
		I_\Phi(X_\sA; X_\sB) \le C_\sA e^{-c_\sA d_G(\sA,\sB)}
	\end{align*}
for all sets $\sB$ disjoint from $\sA$, where $d_G(\sA,\sB) \deq \min_{u \in \sA,v \in \sB} d_G(u,v)$. Here, the $\Phi$-information is computed w.r.t.\ the marginal distribution of $(X_\sA,X_\sB)$ induced by $\PP_{G,{\mathbb U}}$.
\end{definition}
Alternatively, we may examine correlations between functions of $X_\sA$ and $X_\sB$:
\begin{definition} The graphical model $(G,\mathbf{{\mathbb U}})$ has {\em exponential decay of correlations} if for any $\sA \subset \sV$ there exist positive constants $C_\sA,c_\sA$, such that, for any set of vertices $\sB$ disjoint from $\sA$ and  for any two functions $f \in \Func(\sX_\sA)$ and $g \in \Func(\sX_\sB)$,
	\begin{align*}
		\Cov \left[ f(X_\sA), g(X_\sB)\right] \le C_\sA e^{-c_\sA d_G(\sA,\sB)} \sqrt{\Var[f(X_\sA)]\Var[g(X_\sB)]}.
	\end{align*}
\end{definition}
We can now establish the following result:

\begin{theorem} Suppose that $\Phi \in \cF$ is twice differentiable and strictly convex, its second derivative is nonincreasing, and the  function $\Psi$ defined in \eqref{eq:first_diff_0} is concave. Then $(G,\mathbf{{\mathbb U}})$ is not $\Phi$-reconstructible if and only if it has exponential decay of correlations.
\end{theorem}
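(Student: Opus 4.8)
The plan is to prove the two directions of the equivalence separately, using the SDPI machinery developed above to pass between the $\Phi$-information decay condition and the covariance decay condition. The bridge in both directions is the observation that for a probabilistic graphical model, the marginal law $P_{X_\sA X_\sB}$ is obtained from $P_{X_\sA}$ by passing through the channel $P_{X_\sB \mid X_\sA}$, so that $\Phi$-information is nothing but a $\Phi$-divergence: $I_\Phi(X_\sA;X_\sB) = \sum_{a} P_{X_\sA}(a) D_\Phi(P_{X_\sB\mid X_\sA=a} \| P_{X_\sB})$, and likewise (by exchanging roles) $I_\Phi(X_\sA;X_\sB) = \sum_b P_{X_\sB}(b) D_\Phi(\delta_b P_{X_\sA\mid X_\sB} \| P_{X_\sA})$. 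The key structural fact I will invoke is the \emph{spatial Markov property} of the graphical model: if $d_G(\sA,\sB)$ is large, one can insert an intermediate ``cutset'' of vertices $\sC$ separating $\sA$ from $\sB$, so that $X_\sA \to X_\sC \to X_\sB$ is a Markov chain, and by the tensorization/composition behavior of SDPI constants the influence decays multiplicatively in the number of such cuts — i.e. roughly $\eta_\Phi$ raised to a power proportional to $d_G(\sA,\sB)$, provided each local channel along the way has SDPI constant bounded away from $1$.

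For the direction ``not $\Phi$-reconstructible $\Rightarrow$ exponential decay of correlations,'' I would argue as follows. Fix $\sA$; I want to control $\Cov[f(X_\sA),g(X_\sB)]$. Condition on $X_\sB$: $\Cov[f(X_\sA),g(X_\sB)] = \Cov[\E[f(X_\sA)\mid X_\sB], g(X_\sB)] \le \sqrt{\Var[\E[f(X_\sA)\mid X_\sB]]}\sqrt{\Var[g(X_\sB)]}$ by Cauchy--Schwarz. Now $\Var[\E[f(X_\sA)\mid X_\sB]] = \chi^2$-type quantity controlled by $\eta_{\chi^2}(P_{X_\sB}, P_{X_\sA\mid X_\sB})\,\Var[f(X_\sA)] = S^2(P_{X_\sA},P_{X_\sB\mid X_\sA})\Var[f(X_\sA)]$ via Theorem~\ref{thm:maxcorr}. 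So it suffices to show $S^2(P_{X_\sA},P_{X_\sB\mid X_\sA})$ decays exponentially in $d_G(\sA,\sB)$. Since $\Phi$ is three-times differentiable with $\Phi''(1)>0$ (which follows from strict convexity plus the smoothness hypotheses; one may need to note that $\Psi$ concave and $\Phi''$ nonincreasing force $\Phi'' > 0$ near $1$), Theorem~\ref{thm:chi_2_lower_bound} gives $S^2 \le \eta_\Phi(P_{X_\sA},P_{X_\sB\mid X_\sA})$, and then I relate $\eta_\Phi$ to the $\Phi$-information decay: by Corollary~\ref{cor:SDPI_product}-style reasoning, or more directly, the hypothesis $I_\Phi(X_\sA;X_\sB) \le C_\sA e^{-c_\sA d_G(\sA,\sB)}$ combined with $I_\Phi(X_\sA;X_\sB) \ge (\text{const})\cdot \eta_\Phi$-lower-bound via taking $U$ an indicator (as in the proof of Theorem~\ref{thm:gen_Erkip_Cover} / Corollary~\ref{cor:SDPI_product}) bounds a suitable quantity. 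The cleanest route is: the $\chi^2$-information $I_{\chi^2}(X_\sA;X_\sB)$ is sandwiched between $\Phi$-informations up to $\mu_*$-dependent constants via Theorems~\ref{thm:chi_2_lower_bound} and~\ref{thm:maxcorr_UB} (the state space is finite, so all masses are bounded below), and $I_{\chi^2}(X_\sA;X_\sB)$ directly dominates $\Var[\E[f(X_\sA)\mid X_\sB]]/\Var[f(X_\sA)]$ after normalizing $f$; hence exponential decay of $I_\Phi$ forces exponential decay of the covariance with a (possibly larger) rate constant.

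For the converse direction ``exponential decay of correlations $\Rightarrow$ not $\Phi$-reconstructible,'' I would go through the $\chi^2$-divergence. Exponential decay of correlations is precisely exponential decay of $S^2(P_{X_\sA},P_{X_\sB\mid X_\sA})$ (take $f,g$ to be the optimal test functions in the definition of maximal correlation; by finiteness the sup is attained), equivalently $\eta_{\chi^2}(P_{X_\sA},P_{X_\sB\mid X_\sA}) \le C_\sA e^{-c_\sA d_G(\sA,\sB)}$. I then need an upper bound on $I_\Phi(X_\sA;X_\sB) = \sum_b P_{X_\sB}(b) D_\Phi(\delta_b P_{X_\sA\mid X_\sB}\|P_{X_\sA})$ in terms of $\eta_\Phi(P_{X_\sB},P_{X_\sA\mid X_\sB})$, which by Theorem~\ref{thm:maxcorr_UB} (the hypotheses on $\Phi$ — twice differentiable, strictly convex, $\Phi''$ nonincreasing, $\Psi$ concave — are exactly those of that theorem) is at most $\frac{2\Psi'(1)}{\Phi''(1/\mu_*)}S^2(P_{X_\sB},P_{X_\sA\mid X_\sB})$ times $D_{\chi^2}$-type quantities; on a finite space $D_\Phi(\delta_b P_{X_\sA\mid X_\sB}\|P_{X_\sA})$ is bounded by a constant, so $I_\Phi(X_\sA;X_\sB) \le C'_\sA\, S^2(P_{X_\sB},P_{X_\sA\mid X_\sB})$, and $S^2$ is symmetric in the roles of $\sA$ and $\sB$ so it equals $S^2(P_{X_\sA},P_{X_\sB\mid X_\sA})$, which decays exponentially by hypothesis. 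The main obstacle I anticipate is bookkeeping the constants $C_\sA, c_\sA$ cleanly as one varies $\sB$ and, in particular, making sure the $\mu_*$-dependent multiplicative factors in Theorems~\ref{thm:chi_2_lower_bound} and~\ref{thm:maxcorr_UB} do not blow up — but since $\sA$ is fixed and $\Omega$ is finite, the relevant conditional laws live on the finite space $\Omega^\sA$ with masses bounded below by a constant depending only on $\sA$ and $\mathbf{{\mathbb U}}$ (using positivity of the edge potentials on the finite graph), so these factors are uniformly bounded; the rate constant $c_\sA$ may degrade by a fixed factor but remains strictly positive. The only genuinely delicate point is verifying $\Phi''(1) > 0$ from the stated hypotheses so that Theorem~\ref{thm:chi_2_lower_bound} applies — strict convexity gives $\Phi'' \ge 0$ with $\Phi''$ not identically zero, and $\Phi''$ nonincreasing then gives $\Phi'' > 0$ on an initial interval, which must include $1$ if we normalize; otherwise one replaces $\Phi$ by its behavior near the relevant range of likelihood ratios, all of which lie in a fixed compact subinterval of $(0,\infty)$ by finiteness.
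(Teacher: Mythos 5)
Your overall strategy coincides with the paper's: sandwich $I_\Phi(X_\sA;X_\sB)$ between constant multiples of $I_{\chi^2}(X_\sA;X_\sB)$ so that $\Phi$-reconstructibility reduces to $\chi^2$-reconstructibility, and identify the latter with decay of $S^2(P_\sA,P_{\sB|\sA})$ and hence of covariances. The forward direction as you describe it is sound, and your worry about third derivatives is unnecessary: the lower half of the sandwich, $I_\Phi \ge \tfrac{1}{2}\Phi''(1/p^\sA_*)\,I_{\chi^2}$, comes from Lemma~\ref{lm:f_entropy_LB} applied to the density ratio $f(x_\sA,x_\sB)=P_{\sA|\sB}(x_\sA|x_\sB)/P_\sA(x_\sA)$ viewed under the product law, which needs only the stated hypotheses and has $\|f\|_\infty\le 1/p^\sA_*$, a constant depending only on $\sA$. (The inequality $S^2\le I_{\chi^2}$ that you assert in passing is true but is the one ingredient the paper cites rather than proves.) The cutset/tensorization discussion in your opening paragraph is never used and is not needed: the exponential decay is a hypothesis in each direction, not something to be extracted from the Markov structure of the model.

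The genuine gap is in the converse direction. You disintegrate $I_\Phi(X_\sA;X_\sB)=\sum_b P_\sB(b)\,D_\Phi(\delta_b P_{\sA|\sB}\|P_\sA)$ and invoke Theorem~\ref{thm:maxcorr_UB} for the pair $(P_\sB,P_{\sA|\sB})$. But the constant there is $2\Psi'(1)/\Phi''(1/\mu_*)$ with $\mu_*=\min_b P_\sB(b)$, and the reference divergences $D_\Phi(\delta_b\|P_\sB)$ live on $\Omega^\sB$; both degrade with $|\sB|$ (for $\Phi(u)=u\log u$ one has $\Phi''(1/\mu_*)=\mu_*$, exponentially small in $|\sB|$), and $|\sB|$ is unrelated to $d_G(\sA,\sB)$ --- take $\sB$ enormous but adjacent to $\sA$. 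The resulting bound is therefore not of the required form $C_\sA e^{-c_\sA d_G(\sA,\sB)}$ with constants uniform over $\sB$. Your closing remark that ``the relevant conditional laws live on $\Omega^\sA$'' is the right instinct but does not match the formula you wrote down; and simply noting that each $D_\Phi(\delta_b P_{\sA|\sB}\|P_\sA)$ is bounded by an $\sA$-dependent constant gives no $S^2$ factor at all. The repair is to condition on the fixed set: bound $I_{\chi^2}(X_\sA;X_\sB)=\sum_{x_\sA}P_\sA(x_\sA)\,\chi^2(\delta_{x_\sA}P_{\sB|\sA}\|P_\sA P_{\sB|\sA})\le S^2(P_\sA,P_{\sB|\sA})\,(|\sX_\sA|-1)$ via the $\chi^2$-SDPI, and then use the $\mu_*$-free upper half of the sandwich, $I_\Phi\le\Psi'(1)\,I_{\chi^2}$, from Lemma~\ref{lm:f_entropy_UB}; alternatively apply Theorem~\ref{thm:maxcorr_UB} to $(P_\sA,P_{\sB|\sA})$ so that $\mu_*=p^\sA_*$. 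With that orientation all constants depend only on $\sA$ and your argument closes, matching the paper's proof.
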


\begin{proof} We first show that exponential decay of correlations is equivalent to $(G,\mathbf{{\mathbb U}})$ not being $\chi^2$-reconstructible. With a slight abuse of notation, we will denote by $P_\sA$ the marginal distribution of $X_\sA$, etc. By definition of maximal correlation, $(G,\mathbf{{\mathbb U}})$ has exponential decay of correlation if and only if for any $\sA \subset \sV$ there exist some $C_\sA,c_\sA>0$, such that
	\begin{align}\label{eq:exp_decay_maxcorr}
		S^2(P_\sA, P_{\sB|\sA}) \le C_\sA e^{-c_\sA d_G(\sA,\sB)}
	\end{align}
for all sets of vertices $\sB$ with $\sB \cap \sA = \varnothing$. Now let $\tilde{\sX}_\sA$ denote the support of $P_\sA$. Using the definition of $\chi^2$-information and Theorem~\ref{thm:maxcorr}, we can write
	\begin{align*}
		I_{\chi^2}(X_\sA; X_\sB) &= \sum_{x_\sA \in \tilde{\sX}_\sA} P_\sA(x_\sA) \chi^2 \big( P_{X_\sB|X_\sA = x_\sA} \big\| P_\sB \big) \\
		&= \sum_{x_\sA \in \tilde{\sX}_\sA} P_\sA(x_\sA) \chi^2\big( \delta_{x_\sA} P_{\sB|\sA} \big\| P_\sA P_{\sB|\sA}\big) \\
		&\le S^2(P_\sA, P_{\sB|\sA})\sum_{x_\sA} P_\sA(x_\sA) \chi^2\big( \delta_{x_\sA} \big\| P_\sA\big) \\
		&= S^2(P_\sA, P_{\sB|\sA}) I_{\chi^2}(X_{\sA}; X_{\sA}) \\
		&= S^2(P_\sA,P_{\sB|\sA}) \left(|\sX_\sA|-1\right).
	\end{align*}
From this and from \eqref{eq:exp_decay_maxcorr}, we see that exponential decay of correlations implies that $(G,\mathbf{{\mathbb U}})$ is not $\chi^2$-reconstructible. The converse statement follows from the inequality $S^2(P_\sA,P_{\sB|\sA}) \le I_{\chi^2}(X_\sA; X_\sB)$ \cite[Prop.~12]{YP_YW_dissipation}.

To complete the proof, let $(\bar{X}_\sA,\bar{X}_\sB)$ be a random pair with probability law $P_\sA \otimes P_\sB$. Then
\begin{align*}
	I_\Phi(X_\sA; X_\sB) = \Ent_\Phi\left[ f(\bar{X}_\sA, \bar{X}_\sB)\right],
\end{align*}
where we have defined
\begin{align*}
	f(x_\sA,x_\sB) \deq \frac{P_{\sA| \sB}(x_\sA|x_\sB)}{P_\sA(x_\sA)}.
\end{align*}
In particular, $I_{\chi^2}(X_\sA; X_\sB) = \Var \left[ f(\bar{X}_\sA,\bar{X}_\sB)\right]$. Therefore, applying Lemmas~\ref{lm:f_entropy_UB} and \ref{lm:f_entropy_LB} in Appendix~\ref{app:lemmas} and using the fact that $\| f(\bar{X}_\sA,\bar{X}_\sB)  \|_\infty \le 1/p^\sA_*$, where $p^\sA_* \deq \displaystyle\min_{x_\sA \in \tilde{\sX}_\sA} P_\sA(x_\sA)$ is the minimum nonzero probability of any configuration in $\sA$, we get
\begin{align*}
	\frac{\Phi''(1/p^\sA_*)}{2} I_{\chi^2}(X_\sA; X_\sB) \le I_\Phi(X_\sA; X_\sB) \le \Psi'(1) I_{\chi^2}(X_\sA; X_\sB).
\end{align*}
Since $\Phi$ is strictly convex, $\Phi''$ is everywhere positive. This inequality shows that the graphical model $(G,{\mathbf {\mathbb U}})$ is not $\Phi$-reconstructible if and only if it is not $\chi^2$-reconstructible, which in turn is equivalent to exponential decay of correlations.
\end{proof}

A related notion of correlation decay has to do with the diminishing influence of ``far away'' spins. A key property of Gibbs measures is the following conditional independence relation: for any $\sA \subset \sV$, the outer boundary of $\sA$, denoted by $\partial \sA$, is the set of all $v \in \sA^c$, such that $\{u,v\} \in \sE$ for some $u \in \sA$. Then under $\PP_{G,\mathbb{U}}$, 
\begin{align*}
	X_\sA \longrightarrow X_{\partial \sA} \longrightarrow X_{\sA^c}
\end{align*}
is a Markov chain. That is, the configuration of spins in a given set $\sA$ of vertices is conditionally independent of all other spins given the configuration of the neighbors of $\sA$. The following definition formalizes the notion that the influence of the spins in the boundary of $\sA$ on the spins in any subset of $\sA$ should decay with the distance from that subset to the boundary:

\begin{definition} The graphical model $(G,\mathbb{U})$ has the {\em spatial mixing property} if there exist positive constants $C,c$, such that, for any two sets of vertices $\sB \subset \sA \subset \sV$ and for any two boundary configurations $x_{\partial \sA},\bar{x}_{\partial \sA}$,
	\begin{align}\label{eq:spatial_mixing}
		\left\| \frac{P_{\sB|\partial \sA}(\cdot|x_{\partial \sA})}{P_{\sB|\partial \sA}(\cdot|\bar{x}_{\partial \sA})}-1\right\|_\infty \le C|\sA| e^{-c d_G(\sB,\partial \sA)}.
	\end{align}
\end{definition}
\begin{remark}{\em This mixing condition is slightly stronger than the condition proposed by Weitz \cite{Weitz_thesis}, which is in turn stronger (but more generally applicable) than the complete analyticity condition of Dobrushin and Shlosman \cite{Dobrushin_Shlosman_CA}. The latter is only applicable to the case when the underlying graph $G$ is the square lattice ${\mathbb Z}^d$.\hfill$\diamond$}
\end{remark}
\noindent If $(G,\mathbb{U})$ has spatial mixing, then one would expect the relative-entropy SDPI constant of the channel $P_{\sB|\partial \sA}$ at $P_{\partial \sA}$ to decay exponentially with the distance $d_G(\sB,\partial \sA)$. This is indeed the case:
\begin{theorem} Suppose that $(G,\mathbb{U})$ has the spatial mixing property. Then
	\begin{align}\label{eq:spatial_mixing_SDPI}
		\eta(P_{\partial \sA}, P_{\sB|\partial \sA}) \le \frac{2C^2|\sA|^2}{p^*_\sB} e^{-2c d_G(\sB,\partial \sA)}.
	\end{align}
\end{theorem}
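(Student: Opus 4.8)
The plan is to derive \eqref{eq:spatial_mixing_SDPI} from the subgaussian SDPI bound of Theorem~\ref{thm:subgaussian}, applied to the admissible pair $(P_{\partial \sA},P_{\sB|\partial \sA})$ (with the support of $P_{\partial\sA}$ as input alphabet and the support of $P_\sB$ as output alphabet, as always restricting to supports so that the pair is admissible and $p^*_\sB \deq \min_{x_\sB}P_\sB(x_\sB) > 0$). Abbreviate $r \deq d_G(\sB,\partial \sA)$ and $\eps \deq C|\sA|e^{-cr}$. The first step is to identify the posterior likelihood ratio \eqref{eq:PLR} of this pair: since $P_\sB = P_{\partial\sA}\,P_{\sB|\partial\sA}$, Bayes' rule gives
\[
	a(x_{\partial\sA},x_\sB) \;=\; \frac{P_{\partial\sA|\sB}(x_{\partial\sA}|x_\sB)}{P_{\partial\sA}(x_{\partial\sA})} \;=\; \frac{P_{\sB|\partial\sA}(x_\sB|x_{\partial\sA})}{P_\sB(x_\sB)},
\]
so that, for each fixed output configuration $x_\sB$, the spatial mixing hypothesis \eqref{eq:spatial_mixing} controls exactly the relative variation of the map $x_{\partial\sA}\mapsto a(x_{\partial\sA},x_\sB)$ in its first argument.

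The second and main step is to bound, for each $x_\sB$ in the support of $P_\sB$, the subgaussian constant $\sigma^2(x_\sB)\deq\sigma^2\big(a(X_{\partial\sA},x_\sB)\big)$ with $X_{\partial\sA}\sim P_{\partial\sA}$. Fix $x_\sB$ and put $m\deq\min_{x_{\partial\sA}}P_{\sB|\partial\sA}(x_\sB|x_{\partial\sA})$, $M\deq\max_{x_{\partial\sA}}P_{\sB|\partial\sA}(x_\sB|x_{\partial\sA})$. Applying \eqref{eq:spatial_mixing} to the maximizing and minimizing boundary configurations gives $M/m\le 1+\eps$, hence $M-m\le\eps m$; moreover $P_\sB(x_\sB)$ is a convex combination of the numbers $P_{\sB|\partial\sA}(x_\sB|\cdot)$, so $m\le P_\sB(x_\sB)$. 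Therefore the set of values $\{a(x_{\partial\sA},x_\sB):x_{\partial\sA}\}$ lies in an interval of length $(M-m)/P_\sB(x_\sB)\le \eps m/P_\sB(x_\sB)\le\eps$, and this interval contains $\E[a(X_{\partial\sA},x_\sB)]=1$; by the elementary bound on the subgaussian constant of a bounded random variable in terms of the square of its range (Hoeffding's lemma), $\sigma^2(x_\sB)\le\eps^2/4$. Averaging over $x_\sB\sim P_\sB$ and invoking \eqref{eq:subgaussian_bound} yields
\[
	\eta(P_{\partial\sA},P_{\sB|\partial\sA}) \;\le\; 2\,\E[\sigma^2(X_\sB)] \;\le\; \tfrac12\eps^2 \;=\; \tfrac12 C^2|\sA|^2 e^{-2cr},
\]
which, since $p^*_\sB\le 1$, is in particular bounded by $2C^2|\sA|^2e^{-2cr}/p^*_\sB$, as claimed. (The same estimate can equally be pushed through the information-transportation bound of Theorem~\ref{thm:info_transport} with the trivial metric and Pinsker's inequality, or through \eqref{eq:eta_general_channel}; all three routes collapse to the same oscillation computation.)

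The hard part is the passage from the multiplicative control \eqref{eq:spatial_mixing} to an additive bound on the fluctuation of the posterior likelihood ratio that is \emph{uniform in the output configuration} $x_\sB$: a careless estimate — bounding $m\le 1$ and $P_\sB(x_\sB)\ge p^*_\sB$ separately — gives oscillation of order $\eps/p^*_\sB$ and, after squaring and averaging, an $\eta$-bound of order $\eps^2/(p^*_\sB)^2$, which is weaker than \eqref{eq:spatial_mixing_SDPI}; summing a per-$y$ bound of the form $\eps^2/P_\sB(y)$ over $y$ is also no good, as it introduces a factor of $|\tilde\sX_\sB|$. The resolution is to keep the factor $P_\sB(x_\sB)$ in the denominator and use $m\le P_\sB(x_\sB)$ so that it cancels, leaving the oscillation bounded by $\eps$ with no dependence on the minimal marginal probability at all; the $p^*_\sB$ appearing in \eqref{eq:spatial_mixing_SDPI} is then merely a harmless slackening. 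Everything else — the Bayes identity, Hoeffding's lemma, and the application of Theorem~\ref{thm:subgaussian} — is routine.
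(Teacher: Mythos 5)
Your proof is correct, and it reaches the stated bound by a genuinely different --- and in fact sharper --- route than the paper. The paper starts from the general-channel estimate \eqref{eq:eta_general_channel} (itself a consequence of Theorem~\ref{thm:info_transport} with the trivial metric and the refined Pinsker inequality), converts the oscillation $\max_{x_{\partial\sA},x'_{\partial\sA}}\left|P_{\sB|\partial\sA}(x_\sB|x_{\partial\sA})-P_{\sB|\partial\sA}(x_\sB|x'_{\partial\sA})\right|$ into a multiplicative deviation against a fixed reference configuration $\bar{x}_{\partial\sA}$, applies \eqref{eq:spatial_mixing}, and then controls the remaining sum $\sum_{x_\sB}P_{\sB|\partial\sA}(x_\sB|\bar{x}_{\partial\sA})/P_\sB(x_\sB)$ by the crude estimate $P_\sB(x_\sB)\ge p^*_\sB$; that last step is the sole source of the factor $1/p^*_\sB$ in \eqref{eq:spatial_mixing_SDPI}. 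You instead invoke Theorem~\ref{thm:subgaussian} and bound the subgaussian constant of the posterior likelihood ratio \eqref{eq:PLR} for each output configuration via Hoeffding's lemma; your observation that $m\le P_\sB(x_\sB)$, so that the range of $a(\cdot,x_\sB)$ is at most $C|\sA|e^{-cd_G(\sB,\partial\sA)}$ \emph{uniformly in} $x_\sB$, is precisely the cancellation the paper forgoes, and it yields the strictly stronger conclusion $\eta(P_{\partial\sA},P_{\sB|\partial\sA})\le\tfrac{1}{2}C^2|\sA|^2e^{-2cd_G(\sB,\partial\sA)}$, from which \eqref{eq:spatial_mixing_SDPI} follows trivially since $p^*_\sB\le 1$. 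You are also right that the same cancellation can be pushed through Theorem~\ref{thm:info_transport}, since $\delta\big(a(\cdot,x_\sB)\big)=(M-m)/P_\sB(x_\sB)\le C|\sA|e^{-cd_G(\sB,\partial\sA)}$ directly. The one caveat, which you share with the paper and which is really a feature of the hypothesis \eqref{eq:spatial_mixing}, is the implicit assumption that the conditionals $P_{\sB|\partial\sA}(\cdot|x_{\partial\sA})$ have a common support, so that the ratios in \eqref{eq:spatial_mixing} and the quantity $M/m$ are well defined; modulo that, your argument is complete and improves the constant.
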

\begin{proof} Using \eqref{eq:eta_general_channel}, we can upper-bound $\eta(P_{\partial \sA}, P_{\sB|\partial \sA})$ as follows:
	\begin{align*}
		\eta(P_{\partial \sA}, P_{\sB|\partial \sA}) \le \frac{1}{2}\sum_{x_\sB} \frac{1}{P_\sB(x_\sB)} \max_{x_{\partial \sA},x'_{\partial \sA}} \left| P_{\sB|\partial \sA}(x_\sB|x_{\partial \sA}) - P_{\sB|\partial \sA}(x_\sB|x'_{\partial \sA})\right|^2.
	\end{align*}
If we now pick an arbitrary boundary configuration $\bar{x}_{\partial \sA}$, then we can write
\begin{align*}
		\eta(P_{\partial \sA}, P_{\sB|\partial \sA}) &\le 2\sum_{x_\sB} \frac{1}{P_\sB(x_\sB)} \max_{x_{\partial \sA}} \left| P_{\sB|\partial \sA}(x_\sB|x_{\partial \sA}) - P_{\sB|\partial \sA}(x_\sB|\bar{x}_{\partial \sA})\right|^2 \\
		&\le 2\sum_{x_\sB} \frac{P_{\sB|\partial \sA}(x_\sB|\bar{x}_{\partial \sA})}{P_\sB(x_\sB)} \max_{x_{\partial \sA}} \left| \frac{P_{\sB|\partial \sA}(x_\sB|x_{\partial \sA})}{P_{\sB|\partial \sA}(x_\sB|\bar{x}_{\partial \sA})}-1\right|^2.
\end{align*}
Using \eqref{eq:spatial_mixing}, we get \eqref{eq:spatial_mixing_SDPI}.
\end{proof}

\section{Summary of contributions and concluding remarks}
\label{sec:summary}

In this paper, we have attempted to give a systematic and unified presentation of strong data processing inequalities (SDPIs) for discrete channels. As a reminder, given a convex function $\Phi : \Reals^+ \to \Reals$, we say that a channel $K \in \Chan(\sY|\sX)$ satisfies an SDPI with constant $c \in [0,1)$ at input distribution $\mu$ if
\begin{align}\label{eq:linear_SDPI}
	D_\Phi(\nu K \| \mu K) \le c D_\Phi(\nu \| \mu)
\end{align}
for all $\nu \neq \mu$. We denote the best constant in the above inequality by $\eta_\Phi(\mu,K)$, and let$\eta_\Phi(K) \deq \sup_\mu \eta_\Phi(\mu,K)$. For the reader's convenience, we summarize the key novel contributions:
\begin{itemize}
	\item For all sufficiently smooth $\Phi$, $\eta_\Phi(\mu,K)$ is lower-bounded by the squared maximal correlation $S^2(\mu,K)$, which is also the SDPI constant of $K$ at $\mu$ for the $\chi^2$-divergence (Theorem~\ref{thm:chi_2_lower_bound}). This refines the inequality $\eta(\mu,K) \ge S^2(\mu,K)$ due to Ahlswede and G\'acs \cite{Ahlswede_Gacs_hypercont}, as well as the inequality $\eta_\Phi(K) \ge S^2(K) \equiv \sup_\mu S^2(\mu,K)$ due to Cohen et al.~\cite{Cohen_etal_dataproc}.
	\item For all operator convex $\Phi$ (see Section~\ref{ssec:opconv} for definitions and examples), we have proved the upper bound
	$$
	\eta_\Phi(\mu,K) \le \max \left(S^2(\mu,K), \sup_{0 < \lambda < 1} \eta_{\LC_\lambda}(\mu,K)\right)
	$$
(Theorem~\ref{thm:opconv_bounds}), where $\LC_\lambda(\cdot \| \cdot)$ denotes the Le Cam divergence with parameter $\lambda$ (see Section~\ref{sec:entropies}). This refines the inequality $\eta_\Phi(K) \le S^2(K)$ for all operator convex $\Phi$, due to Choi et al.~\cite{Choi_Ruskai_Seneta}, and reduces to it upon taking the supremum of both sides w.r.t.\ $\mu$. 
\item For $\Phi(u) = u \log u$ (which gives the usual relative entropy), the SDPI constant $\eta_\Phi(\mu,K)$ can be upper-bounded in terms of the subgaussian constant $\sigma^2(y)$ of the posterior likelihood ratio $a(X,y) = \frac{K^*(X|y)}{\mu(X)}$ for each $y \in \sY$, where $X \sim \mu$. Smaller value of $\sigma^2(y)$ indicates that $a(X,y) \approx 1$ with high probability, which means that the observation $Y = y$ is nearly uninformative about the input $X$. Theorem~\ref{thm:subgaussian} gives the inequaity $\eta(\mu,K) \le 2\,\E[\sigma^2(Y)]$, which can be weakened to the bound of Theorem~\ref{thm:info_transport} using information-transportation inequalities.
\item Under mild regularity conditions on $\Phi$, the SDPI constants \textit{tensorize}: given a product distribution $\mu_1 \otimes \ldots \otimes \mu_n$ and a product channel $K_1 \otimes \ldots \otimes K_n$,
$$
\eta_\Phi(\mu_1 \otimes \ldots \otimes \mu_n, K_1 \otimes \ldots \otimes K_n) = \max_{1 \le i \le n} \eta_\Phi(\mu_i,K_i)
$$
(Theorem~\ref{thm:tensorization_1}). This extends previous tensorization results for $\Phi(u)=(u-1)^2$ due to Witsenhausen \cite{Witsenhausen_correlation} and for $\Phi(u)=u\log u$ due to Anantharam et al.~\cite{Anantharam_etal_HGR}. Theorem~\ref{thm:tensorization_2} gives a tensorization inequality for \textit{mixtures} of local channels, i.e., when an input block of length $n$ is transformed to an output block of length $n$ by drawing a coordinate index $I$ at random from $\{1,\ldots,n\}$ and then passing the $I$th symbol through the channel $K_I$.
\item Section~\ref{sec:phisob} is dedicated to an exposition of the deep links between SDPIs and $\Phi$-Sobolev inequalities \cite{Chafai_entropy}, which provide a powerful tool for nonasymptotic quantitative analysis of convergence to equilibrium in Markov processes and other random dynamical systems. For the specific case of $\Phi(u)=u\log u$, we have obtained a number of inequalities relating the optimal constants in log-Sobolev inequalities for a reversible Markov chain $M$ with invariant distribution $\mu$ to relative-entropy SDPI constants $\eta(\mu,K)$ for any channel $K$ with the property that $M = K^*_\mu \circ K$, where $K^*_\mu$ is the adjoint, or backward, channel associated to the pair $(\mu,K)$ [see Eq.~\eqref{eq:backward_2} for the definition].
\item Section~\ref{sec:applications} presents several applications of the results of preceding sections to information theory, discrete probability, and statistical physics. In particular, we discuss a connection between the strong data processing property and the concentration-of-measure phenomenon; generalize a recent result of Anantharam et al.~\cite{Anantharam_etal_HGR} on the strong contraction of mutual information in discrete Markov chains\footnote{See \cite{PW_BayesSDPI} for an extension of this result to abstract alphabets.} to a more general notion of $\Phi$-information; relate the problem of computing SDPI constants (which is a convex program) to the problem of finding the fastest mixing Markov chain on a graph; sharpen a recent result of Ullrich \cite{Ullrich_SW_vs_HB} on the mixing time of two popular MCMC schemes for a certain class of graphical models; and outline an SDPI-based characterization of the decay of correlations in discrete graphical models.
\end{itemize}
After the original breakthrough work of Ahlswede and G\'acs \cite{Ahlswede_Gacs_hypercont}, strong data processing inequalities have received a great deal of attention, with a recent surge of research activity motivated by problems in information theory. Recent work by Polyanskiy and Wu \cite{YP_YW_dissipation}  has uncovered certain limitations of SDPIs. For example, in the setting of continuous alphabets and additive-noise channels, they have shown that it is possible for a channel $K$ to have $\eta_\Phi(K)=1$ and still satisfy a weaker ``nonlinear'' strong data processing inequality of the form
$$
D_\Phi(\nu K \| \mu K) \le F_\Phi\big(D_\Phi(\nu \| \mu)\big)
$$
for some increasing function $F_\Phi : \Reals^+ \to \Reals^+$ with $F_\Phi(0)=0$, such that $F_\Phi(u)<u$ for all sufficiently small $u \neq 0$. Nevertheless, SDPIs still remain a versatile tool for many problems of current theoretical and practical interest.

\section*{Acknowledgments}

The author would like to thank V.~Anantharam, S.~Kamath, A.~Kontorovich, C.~Nair, Y.~Polyanskiy, I.~Sason, P.~Tetali, R.~van Handel, and Y.~Wu for many useful and stimulating discussions, and the two anonymous reviewers and the Associate Editor for their meticulous reading of the manuscript and for numerous useful suggestions and corrections. The author would also like to separately thank one of the anonymous reviewers for a suggestion on how to streamline the proof of Theorem~\ref{thm:SDP_var}, as well as for pointing out a subtle issue pertainig to Theorem~\ref{thm:Jensen_gap}.

\begin{appendix}
\section{Miscellaneous lemmas}
\label{app:lemmas}
	\begin{lemma}\label{lm:density_update} Let $(\mu,K) \in \PProb(\sX) \times \Chan(\sY|\sX)$ be an admissible pair, and consider any other $\nu \in \Prob(\sX)$. If $f = \d \nu/\d \mu$, then
		\begin{align*}
			K^*f = \frac{\d (\nu K)}{\d (\mu K)},
		\end{align*}
where $K^* = K^*_\mu \in \Chan(\sY|\sX)$ is the backward channel induced by the pair $(\mu,K)$, cf.~Eqs.~\eqref{eq:backward_1}, \eqref{eq:backward_2}.
	\end{lemma}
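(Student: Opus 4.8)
\textbf{Proof proposal for Lemma~\ref{lm:density_update}.} The plan is to verify the claimed identity directly from the definition of the backward channel $K^*$ given in Eq.~\eqref{eq:backward_2}, namely $K^*(x|y) = K(y|x)\mu(x)/\mu K(y)$, and the action $K^*f(y) = \sum_{x\in\sX} K^*(x|y) f(x)$. Since both $(\mu K)(y) > 0$ and $\mu(x)>0$ by admissibility, there are no division-by-zero issues, and $f = \d\nu/\d\mu$ is a well-defined nonnegative function on $\sX$ with $\nu(x) = f(x)\mu(x)$.

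First I would fix $y \in \sY$ and compute $K^*f(y)$ by substituting the explicit formula for $K^*(x|y)$:
\begin{align*}
	K^*f(y) = \sum_{x\in\sX} \frac{K(y|x)\mu(x)}{\mu K(y)}\, f(x) = \frac{1}{\mu K(y)}\sum_{x\in\sX} K(y|x)\,\nu(x) = \frac{\nu K(y)}{\mu K(y)},
\end{align*}
where the last step uses $\nu(x) = f(x)\mu(x)$ and the definition $\nu K(y) = \sum_{x} \nu(x) K(y|x)$. This is precisely the Radon--Nikodym derivative $\frac{\d(\nu K)}{\d(\mu K)}(y)$, since on a finite alphabet the density of one measure with respect to another (dominating) measure is just the pointwise ratio of the mass functions, and $\mu K$ dominates $\nu K$ because $\mu K \in \PProb(\sY)$. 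Taking $y$ arbitrary gives the claimed identity.

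There is essentially no obstacle here: the statement is a one-line consequence of Bayes' rule once the definitions are unwound, and the only thing to be slightly careful about is noting that $\mu K$ being strictly positive guarantees both that the ratio $\nu K(y)/\mu K(y)$ is finite for every $y$ and that it genuinely represents the density (rather than merely a version of it almost everywhere). One could alternatively derive the identity from the adjoint relation \eqref{eq:backward_1} by testing against indicator functions $g = \1_{\{y\}}$ and $f$ replaced by $f/\mu$-type weightings, but the direct substitution above is the cleanest route and I would present that.
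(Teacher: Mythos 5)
Your proof is correct and follows exactly the same route as the paper's: substitute the Bayes-rule formula \eqref{eq:backward_2} into $K^*f(y)$, cancel $\mu(x)$ against $f(x)=\nu(x)/\mu(x)$, and identify the resulting ratio $\nu K(y)/\mu K(y)$ with the density on the finite alphabet. No gaps; the remarks on positivity of $\mu K$ are a fine (if optional) addition.
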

	\begin{proof} A direct calculation:
		\begin{align*}
			K^*f(y) &= \sum_{x \in \sX}  K^*(x|y) f(x) \\
			&= \sum_{x \in \sX} \frac{K(y|x)\mu(x)}{\mu K(y)} \frac{\nu(x)}{\mu(x)}  \\
			&= \frac{1}{\mu K(y)}\sum_{x \in \sX} \nu(x)K(y|x) \\
			&= \frac{\nu K(y)}{\mu K(y)} \\
			&= \frac{\d (\nu K)(y)}{\d(\mu K)}
		\end{align*}
		for any $y \in \sY$.
	\end{proof}
	
		\begin{lemma}\label{lm:f_entropy_UB} Suppose $\Phi \in \cF$ is differentiable, and the function $\Psi(u) = \frac{\Phi(u)-\Phi(0)}{u}$ is concave. Then for any nonnegative random variable $U$ with $\E U = 1$,
			\begin{align}\label{eq:f_entropy_UB}
				\Ent_\Phi[U] \le \Psi\big(1 + \Var[U] \big) - \Psi(1) \le \Psi'(1) \Var[U].
			\end{align}
		\end{lemma}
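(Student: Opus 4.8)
The plan is to prove the two inequalities in \eqref{eq:f_entropy_UB} separately, using convexity of $\Phi$ (equivalently, the structure of $\Psi$) together with Jensen's inequality applied in the ``right direction.'' First I would rewrite the $\Phi$-entropy using the decomposition $\Phi(u) = u\Psi(u) + \Phi(0)$, which holds for all $u > 0$ and extends to $u = 0$ since $\Phi(0)$ is finite. This gives
\begin{align*}
	\Ent_\Phi[U] = \E[\Phi(U)] - \Phi(\E U) = \E[U\Psi(U)] + \Phi(0) - \Psi(1) - \Phi(0) = \E[U\Psi(U)] - \Psi(1),
\end{align*}
using $\E U = 1$ in the last step. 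So the task reduces to bounding $\E[U\Psi(U)]$ from above by $\Psi(1 + \Var[U])$.

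The key trick for the first inequality is to view $\E[U\Psi(U)]$ as an expectation of $\Psi$ under a tilted (size-biased) measure. Precisely, since $U \ge 0$ and $\E U = 1$, we may define a new probability measure under which $U$ has density $U$ w.r.t.\ its original law; then $\E[U\Psi(U)] = \tilde{\E}[\Psi(U)]$ where $\tilde{\E}$ denotes expectation under the tilted law. By concavity of $\Psi$ and Jensen's inequality, $\tilde{\E}[\Psi(U)] \le \Psi(\tilde{\E}[U]) = \Psi(\E[U^2]) = \Psi(1 + \Var[U])$, where the last equality uses $\E[U^2] = \Var[U] + (\E U)^2 = 1 + \Var[U]$. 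This establishes $\Ent_\Phi[U] \le \Psi(1 + \Var[U]) - \Psi(1)$.

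For the second inequality, I would use concavity of $\Psi$ once more, this time in the form of the first-order (supporting line) bound: for a concave differentiable function, $\Psi(v) \le \Psi(1) + \Psi'(1)(v-1)$ for all $v \ge 0$. Applying this with $v = 1 + \Var[U] \ge 1$ gives $\Psi(1 + \Var[U]) - \Psi(1) \le \Psi'(1)\Var[U]$, which chains with the first inequality to finish the proof. One small point to keep straight: $\Psi$ being concave on $(0,\infty)$ together with differentiability of $\Phi$ ensures $\Psi'(1)$ exists and the supporting-line inequality is valid; if one wanted to be careful about the endpoint $u = 0$, note that $\Psi(0) = \lim_{u \to 0^+}(\Phi(u)-\Phi(0))/u = \Phi'(0^+)$ is well-defined (possibly $-\infty$, but that only helps the inequality). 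I do not anticipate a genuine obstacle here — the main thing to get right is the direction of Jensen's inequality under the tilted measure, and the bookkeeping $\E[U^2] = 1 + \Var[U]$; everything else is routine.
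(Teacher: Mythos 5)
Your proof is correct and is essentially identical to the paper's: the paper likewise writes $Q(\d u) = u\,P(\d u)$ for the size-biased law, applies Jensen's inequality to the concave $\Psi$ under $Q$ to get $\Ent_\Phi[U] = \E_Q[\Psi(U)] - \Psi(1) \le \Psi(\E[U^2]) - \Psi(1)$, and finishes with the concavity supporting-line bound at $1$. No gaps.
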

		\begin{proof} We can assume that $\Var[U] < \infty$, because otherwise there is nothing to prove. Let $P$ denote the law of $U$. Since $U$ is nonnegative and has unit mean, $Q(\d u) \deq u P(\d u)$ is a probability measure. Therefore,
			\begin{align*}
				\Ent_\Phi[U] &= \E_P[\Phi(U)]  - \Phi(1) \\
				&= \E_Q[\Psi(U)] - \Psi(1) \\
				&\le \Psi(\E_Q U) - \Psi(1) \\
				&= \Psi(\E[U^2]) - \Psi(1) \\
				&= \Psi\big(1 + \Var[U]\big) - \Psi(1),
			\end{align*}
			where the third step is by Jensen's inequality, and the remaining steps follow from definitions. This proves the first inequality in \eqref{eq:f_entropy_UB}. Now, since $\Psi$ is concave, we have
			\begin{align*}
				\Psi\big(1+ \Var[U]\big) \le \Psi(1) + \Psi'(1) \Var[U].
			\end{align*}
	Using this, we obtain the second inequality.
		\end{proof}

		\begin{lemma}\label{lm:f_entropy_LB} Suppose $\Phi \in \cF$ is twice differentiable, and $\Phi''$ is nonincreasing. Then for any nonnegative random variable $U$ with $\E U = 1$ and $\| U \|_\infty < \infty$,
			\begin{align}\label{eq:f_entropy_LB}
				\Ent_\Phi[U] \ge \frac{\Phi''(\| U \|_\infty)}{2} \Var[U].
			\end{align}
		\end{lemma}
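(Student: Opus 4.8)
The plan is to obtain \eqref{eq:f_entropy_LB} from Jensen's inequality after subtracting off a quadratic correction chosen so that the residual function is convex on the range of $U$. Write $M \deq \| U \|_\infty < \infty$ and $c \deq \Phi''(M)$, which is nonnegative since $\Phi$ is convex. The hypothesis $\E U = 1$ together with $U \ge 0$ forces $M \ge 1$, so both the constant $1 = \E U$ and every realization of $U$ lie in the interval $[0,M]$; if $M = 1$ then $U$ is a.s.\ constant, both sides of \eqref{eq:f_entropy_LB} vanish, and there is nothing to prove, so we may assume $M > 1$.

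The key step is to analyze $g(u) \deq \Phi(u) - \tfrac{c}{2}u^2$ on $[0,M]$. Here $g''(u) = \Phi''(u) - c$, and since $\Phi''$ is nonincreasing we have $\Phi''(u) \ge \Phi''(M) = c$ for all $u \le M$, hence $g'' \ge 0$ on $(0,M)$; convexity of $g$ on the closed interval $[0,M]$ then follows from continuity of the finite convex function $\Phi$ on $\Reals^+$. Applying Jensen's inequality to $g$ and using $\E U = 1$ gives
\begin{align*}
	\E[\Phi(U)] - \tfrac{c}{2}\E[U^2] = \E[g(U)] \ge g(\E U) = \Phi(1) - \tfrac{c}{2}.
\end{align*}
Rearranging and invoking $\E[U^2] - 1 = \E[U^2] - (\E U)^2 = \Var[U]$ and $\Phi(1) = \Phi(\E U)$, this reads $\Ent_\Phi[U] = \E[\Phi(U)] - \Phi(1) \ge \tfrac{c}{2}\Var[U] = \tfrac{1}{2}\Phi''(\| U \|_\infty)\Var[U]$, which is exactly \eqref{eq:f_entropy_LB}.

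An equivalent route is a pointwise second-order Taylor expansion of $\Phi$ around $v = 1$: for each $u$ in the support of $U$ there is $\xi$ between $u$ and $1$ with $\Phi(u) = \Phi(1) + \Phi'(1)(u-1) + \tfrac12\Phi''(\xi)(u-1)^2$, and since $\xi \le \max(u,1) \le M$ and $\Phi''$ is nonincreasing, $\Phi''(\xi) \ge \Phi''(M) = c$; taking expectations kills the linear term and yields the same bound. There is no real obstacle here: the only point requiring a moment's care is the behaviour of $\Phi$ at $0$ when $U$ places mass there, but in the Taylor form the remainder is evaluated only between $u$ and $1$ — both inside the region where $\Phi$ is twice differentiable — and in the Jensen form continuity of the finite convex function $\Phi$ suffices. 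The entire content of the lemma is the observation that monotonicity of $\Phi''$ lets one replace $\Phi''(\xi)$ by its value at the right endpoint $\| U \|_\infty$.
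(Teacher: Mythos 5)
Your proof is correct. Your primary route---subtracting the quadratic $\tfrac{c}{2}u^2$ with $c = \Phi''(\|U\|_\infty)$ so that the residual $g(u) = \Phi(u) - \tfrac{c}{2}u^2$ is convex on $[0,\|U\|_\infty]$, then applying Jensen's inequality to $g$ at $\E U = 1$---is the ``global'' form of the argument. The paper runs the ``pointwise'' form, which is exactly your second route: a second-order Taylor expansion of $\Phi$ about $u=1$ with Lagrange remainder $\tfrac12 \Phi''(v)(u-1)^2$ for some $v \in [u \wedge 1, u \vee 1]$, followed by the monotonicity bound $\Phi''(v) \ge \Phi''(u \vee 1) \ge \Phi''(\|U\|_\infty)$ (using $\E U = 1$ to get $\|U \vee 1\|_\infty = \|U\|_\infty$) and taking expectations to annihilate the linear term. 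The two arguments are interchangeable here and give the identical constant; your Jensen version has the small advantage of not needing to locate the intermediate point $v$ when $U$ charges $0$---a technicality you rightly flag and which the paper's proof passes over silently. The case split at $M = 1$ in your write-up is harmless but unnecessary, since the Jensen argument goes through verbatim in that degenerate case.
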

		\begin{proof} By Taylor's theorem, for any $u \ge 0$ we have
			\begin{align*}
				\Phi(u) - \Phi(1) = \Phi'(1) (u-1) + \frac{\Phi''(v)}{2} (u-1)^2
			\end{align*}
			for some $v \in [u \wedge 1, u \vee 1]$. Since $\Phi''$ is nonincreasing, $\Phi''(v) \ge \Phi''(u \vee 1) \ge \Phi''(\| U \vee 1 \|_\infty) = \Phi''(\| U \|_\infty)$, where the equality is a consequence of the assumption that $\E U = 1$. Taking expectations w.r.t.\ $U$, we obtain \eqref{eq:f_entropy_LB}.
		\end{proof}
	
	\begin{lemma}\label{lm:var_cond_ent} Let $U$ and $Z$ be two jointly distributed random variables, where $U$ is real-valued and nonnegative, and $Z$ takes values in an arbitrary set $\sZ$. Then, for any $\Phi \in \cF$, the expectation of the conditional $\Phi$-entropy $\Ent_\Phi[U|Z]$ admits the following variational representation:
		\begin{align*}
			\E\left[\Ent_\Phi[U|Z]\right] 
			& = \inf_{\xi \in \PFunc(\sZ)} \E\left[ \Phi(U) - \Phi(\xi(Z)) - (U-\xi(Z))\Phi'(\xi(Z))\right],
		\end{align*}
	where $\Phi'$ denotes the right derivative of $\Phi$ (which exists due to convexity).
	\end{lemma}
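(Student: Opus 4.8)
The plan is to recognize $\Ent_\Phi[U|Z]$ as, for each fixed value of $Z$, a Bregman-type gap and to exploit convexity of $\Phi$. First I would fix $z \in \sZ$ and work conditionally on $Z = z$. By definition, $\Ent_\Phi[U|Z=z] = \E[\Phi(U)|Z=z] - \Phi(\E[U|Z=z])$. The key observation is that for the (conditional) mean $m(z) \deq \E[U|Z=z]$ and for any $\xi(z) > 0$, the convexity of $\Phi$ gives the supporting-line inequality
\begin{align*}
	\Phi(m(z)) \ge \Phi(\xi(z)) + (m(z) - \xi(z))\Phi'(\xi(z)),
\end{align*}
with equality precisely when $\xi(z) = m(z)$ (more carefully, equality holds whenever $\Phi$ is affine on the segment between $\xi(z)$ and $m(z)$; taking $\xi(z) = m(z)$ always achieves it). Subtracting this from $\E[\Phi(U)|Z=z]$ yields
\begin{align*}
	\Ent_\Phi[U|Z=z] \le \E\big[\Phi(U) - \Phi(\xi(z)) - (U - \xi(z))\Phi'(\xi(z))\,\big|\,Z=z\big],
\end{align*}
where I have used $\E[(U - \xi(z))\Phi'(\xi(z))|Z=z] = (m(z) - \xi(z))\Phi'(\xi(z))$ since $\xi(z)$ is $Z$-measurable.

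Next I would take expectations over $Z$ and then infimize over $\xi \in \PFunc(\sZ)$. Since the displayed bound holds for every strictly positive $\xi$, we get
\begin{align*}
	\E\big[\Ent_\Phi[U|Z]\big] \le \inf_{\xi \in \PFunc(\sZ)} \E\big[\Phi(U) - \Phi(\xi(Z)) - (U - \xi(Z))\Phi'(\xi(Z))\big].
\end{align*}
For the reverse inequality, I would exhibit a near-optimal choice of $\xi$: take $\xi(z) = m(z) = \E[U|Z=z]$ whenever this is strictly positive, so that the supporting-line inequality becomes an equality and the right-hand integrand reduces exactly to $\Ent_\Phi[U|Z=z]$, giving the matching upper bound on the infimum. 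The only subtlety is that $\xi$ is required to lie in $\PFunc(\sZ)$, i.e., be strictly positive, whereas $m(z)$ could be $0$ on a set of positive probability; but on $\{m(Z) = 0\}$ we have $U = 0$ a.s.\ conditionally, so $\Ent_\Phi[U|Z=z] = 0$ there, and one can perturb $\xi$ to $\xi = m(Z) \vee \delta$ and send $\delta \searrow 0$ (using that $\Phi$ and its right derivative are finite and that $U$ is bounded-in-expectation, indeed $\E U < \infty$, so the dominated/monotone passage is routine). This handles the degenerate case and establishes the matching lower bound on the infimum.

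The main obstacle I anticipate is the bookkeeping around the infimum: ensuring it is genuinely attained (or approached) within the admissible class $\PFunc(\sZ)$ of strictly positive functions, and being careful about $\Phi'$ being only a right derivative, which could jump — but since we only ever evaluate $\Phi'(\xi(Z))$ and then use the one-sided supporting inequality $\Phi(u) \ge \Phi(v) + (u - v)\Phi'(v)$ (valid for the right derivative of any convex $\Phi$), no differentiability of $\Phi$ beyond convexity is actually needed. A second minor point is integrability: one should note that both sides are manifestly well-defined in $[0,\infty]$ (the integrand on the right is nonnegative by convexity), so there is no issue combining the two inequalities. With these points dispatched, the two inequalities together give the claimed variational representation.
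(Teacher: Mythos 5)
Your proposal is correct and follows essentially the same route as the paper's own proof: the supporting-line inequality $\Phi(m) \ge \Phi(\xi) + (m-\xi)\Phi'(\xi)$ applied conditionally on $Z=z$ gives the upper bound on $\E[\Ent_\Phi[U|Z]]$ for every $\xi$, and the choice $\xi(z)=\E[U|Z=z]$ achieves equality. Your extra care about the degenerate set $\{\E[U|Z]=0\}$ (where strict positivity of $\xi$ forces a $\delta$-perturbation) is a point the paper's proof silently elides, and is a welcome refinement rather than a deviation.
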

	
	\begin{proof} This lemma is a generalization of Lemma~14.4 in \cite{Boucheron_etal_concentration_book}. Fix an arbitrary $\xi \in \PFunc(\sZ)$. Then, by convexity of $\Phi$, for any $z \in \sZ$ we have
		\begin{align*}
			\Phi(\E[U|Z=z]) \ge \Phi(\xi(z)) + \Phi'(\xi(z)) (\E[U|Z=z]-\xi(z)).
		\end{align*}
		From this, we get
		\begin{align*}
			\Ent_\Phi\big[U\big|Z=z\big] &= \E[\Phi(U)|Z=z] - \Phi(\E[U|Z=z]) \\
			&\le \E[\Phi(U)|Z=z] - \Phi(\xi(z)) - \Phi'(\xi(z))(\E[U|Z=z] - \xi(z)).
		\end{align*}
		Taking expectations of both sides w.r.t.\ $Z$, we see that
		\begin{align}\label{eq:var_cond_ent_bound}
			\E\left[\Ent_\Phi[U|Z]\right] &\le \E\left[ \Phi(U) - \Phi(\xi(Z)) - (U - \xi(Z)) \Phi'(\xi(Z))\right]
		\end{align}
		for any $\xi \in \PFunc(\sZ)$. On the other hand, if we take $\xi(z) = \E[U|Z=z]$, then the bound in \eqref{eq:var_cond_ent_bound} is achieved with equality.
	\end{proof}

	\begin{lemma}\label{lm:entropy_derivative} Let $\Phi \in \cF$ be a differentiable function, such that $\Phi'(u)$ is uniformly bounded in some neighborhood of $u=1$. Then for any nonnegative real-valued random variable $U$ with $\E U = 1$ and $\| U \|_\infty < \infty$, we have
		\begin{align}
			\frac{\d}{\d \eps} \Ent_\Phi \left[ \frac{1-\eps U}{\bar{\eps}}\right] \Bigg|_{\eps = 0} = 0.
		\end{align}
	\end{lemma}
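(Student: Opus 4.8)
The plan is to compute the derivative directly by Taylor-expanding $\Phi$ around $u=1$. Write $g_\eps \deq \frac{1-\eps U}{\bar\eps}$; note $g_0 = 1$, $\E[g_\eps] = 1$ for every $\eps$ in the valid range, and, differentiating, $\frac{\d}{\d\eps}g_\eps = \frac{1-U}{\bar\eps^2}$, so that $\frac{\d}{\d\eps}g_\eps\big|_{\eps=0} = 1-U$. Since $\|U\|_\infty < \infty$, for $\eps$ small we have $\|g_\eps - 1\|_\infty = O(\eps)$ uniformly, and the map $\eps \mapsto g_\eps(\omega)$ is smooth with bounded derivatives on a neighborhood of $0$; hence we may differentiate under the expectation sign.

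The key step is then the expansion
\[
\Ent_\Phi[g_\eps] = \E[\Phi(g_\eps)] - \Phi(\E g_\eps) = \E[\Phi(g_\eps)] - \Phi(1).
\]
Using $\Phi(1 + v) = \Phi(1) + \Phi'(1)v + o(v)$ with $v = g_\eps - 1 = \frac{\eps(1-U)}{\bar\eps}$ (here only first-order differentiability of $\Phi$ and the uniform boundedness of $\Phi'$ near $1$ are needed, so that the remainder is controlled uniformly in $\omega$ once $\|U\|_\infty < \infty$), we get
\[
\E[\Phi(g_\eps)] = \Phi(1) + \Phi'(1)\,\E\!\left[\frac{\eps(1-U)}{\bar\eps}\right] + o(\eps) = \Phi(1) + 0 + o(\eps),
\]
since $\E[1-U] = 0$. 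Therefore $\Ent_\Phi[g_\eps] = o(\eps)$, and dividing by $\eps$ and letting $\eps \to 0$ gives that the derivative at $\eps = 0$ vanishes.

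The only mild obstacle is justifying that the first-order term really is $o(\eps)$ rather than merely bounded: one must check that the Taylor remainder $R(v) = \Phi(1+v) - \Phi(1) - \Phi'(1)v$ satisfies $\E[R(g_\eps - 1)] = o(\eps)$. This follows because $g_\eps - 1$ ranges over a compact set shrinking to $\{0\}$ as $\eps \to 0$ (using $\|U\|_\infty < \infty$), and on any fixed compact neighborhood of $1$ differentiability of $\Phi$ gives $R(v)/v \to 0$ as $v \to 0$ uniformly; combined with $\E|g_\eps - 1| = O(\eps)$ this yields $\E[R(g_\eps-1)] = o(\eps)$. Everything else is routine bookkeeping. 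An alternative, essentially equivalent route is to write $\Ent_\Phi[g_\eps] = \E[\Phi(g_\eps) - \Phi(1) - \Phi'(1)(g_\eps - 1)]$ (valid since $\E[g_\eps - 1] = 0$) and observe that the integrand is a nonnegative quantity that is $o(\eps)$ pointwise and dominated, so its expectation is $o(\eps)$; this avoids invoking differentiation under the integral.
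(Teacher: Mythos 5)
Your proposal is correct. The paper's own proof reaches the same conclusion by a slightly different (though closely related) route: it differentiates under the expectation sign, bounding
\begin{align*}
\left|\frac{\d}{\d\eps}\Phi\!\left(\frac{1-\eps u}{\bar{\eps}}\right)\right| = \left|\frac{1-u}{\bar{\eps}^2}\,\Phi'\!\left(\frac{1-\eps u}{\bar{\eps}}\right)\right| \le C|u-1|
\end{align*}
uniformly for small $\eps$ (this is exactly where the hypothesis that $\Phi'$ is uniformly bounded near $1$ enters), and then invokes dominated convergence to get $\frac{\d}{\d\eps}\Ent_\Phi[g_\eps]\big|_{\eps=0} = \Phi'(1)\,\E[1-U] = 0$. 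You instead Taylor-expand $\Phi$ about $1$ and show $\Ent_\Phi[g_\eps] = o(\eps)$ directly, which sidesteps the interchange of derivative and expectation altogether; your control of the remainder via $\|g_\eps - 1\|_\infty = O(\eps)$ and $R(v)=o(v)$ is sound, and your closing observation that the integrand $\Phi(g_\eps)-\Phi(1)-\Phi'(1)(g_\eps-1)$ is nonnegative by convexity is a nice touch. One small remark: your argument actually needs only differentiability of $\Phi$ at the single point $u=1$ (plus $\|U\|_\infty<\infty$), so it is marginally more economical in hypotheses than the paper's DCT argument, which uses the uniform bound on $\Phi'$ in a neighborhood of $1$ for the dominating function. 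In both cases the heart of the matter is the same: the first-order term is $\Phi'(1)\,\E[1-U]=0$.
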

	
	\begin{proof} Since $\E U = 1$, for all sufficiently small $\eps > 0$ we have
		\begin{align*}
			\Ent_\Phi\left[ \frac{1-\eps U}{\bar{\eps}}\right] &= \E\left[ \Phi\left(\frac{1-\eps U}{\bar{\eps}}\right)\right].
		\end{align*}
		By our assumptions on $\Phi$, there exists a constant $C > 0$, such that
		\begin{align*}
			\left| \frac{\d}{\d \eps} \Phi \left(\frac{1-\eps u}{\bar{\eps}}\right)\right| &= \left| \frac{1- u}{\bar{\eps}^2} \Phi'\left(\frac{1-\eps u}{\bar{\eps}}\right)\right| \le C|u-1|
		\end{align*}
		for all sufficiently small $\eps > 0$. Therefore, by the dominated convergence theorem, we can interchange expectation and derivative to get
		\begin{align*}
				\frac{\d}{\d \eps} \Ent_\Phi \left[ \frac{1-\eps U}{\bar{\eps}}\right] \Bigg|_{\eps = 0} &= \Phi'(1)\E\left[  (1-U)\right] = 0.
		\end{align*}
	\end{proof}

	\section{Proof of Proposition~\ref{prop:E}}
	\label{app:E_proof}
	
	Items 1)--3) are obvious. We prove 4). To that end, we first analyze the joint distribution of $X$ and $X'$. First of all, for any $x,x' \in \sX$, using the definition of $K^*$, we can write
	\begin{align*}
		P_{XX'}(x,x') &= \mu(x)K^* K(x'|x) \\
		&= \mu(x)\sum_{y \in \sY} K^*(x'|y)K(y|x) \\
		&= \mu(x)\sum_{y \in \sY} \frac{K(y|x')\mu(x')}{\mu K(y)} K(y|x) \\
		&= \mu(x')  \sum_{y \in \sY} \frac{K(y|x)\mu(x)}{\mu K(y)} K(y|x') \\
		&= \mu(x') \sum_{y \in \sY} K^*(x|y) K(y|x) \\
		&= \mu(x') K^* K(x|x') \\
		&= P_{XX'}(x',x).
	\end{align*}
In other words, the distribution of $P_{XX'}$ is \textit{exchangeable} (or $(X,X')$ is an \textit{exchangeable pair}). This implies, in particular, that the marginal distribution $P_{X'}$ is the same as $P_X$, i.e., $\mu$. Moreover, for any function $f \in \Func(\sX)$ and any $x \in \sX$,
\begin{align*}
	\E[f(X')|X=x] &= \sum_{x' \in \sX}K^* K(x'|x)f(x') \\
	&= \sum_{x' \in \sX}\sum_{y \in \sY}K^*(x'|y)K(y|x)f(x') \\
	&= \sum_{y \in \sY} K(y|x) \sum_{x' \in \sX} K^*(x'|y)f(x') \\
	&= \sum_{y \in \sY} K(y|x) K^*f(y) \\
	&= KK^*f(x).
\end{align*}
Using these facts, we can write
	\begin{align*}
	&	\E\left[ \left(f(X)-f(X')\right)\left(g(X)-g(X')\right)\right]  \\
	&\quad = \E\left[f(X)g(X)\right]  + \E[f(X')g(X')]  - \Big(  \E[f(X)g(X')] - \E[f(X')g(X)]\Big) \\
	&\quad = 2 \Big\{\E[f(X)g(X)] - \E[f(X)g(X')]\Big\},
	\end{align*}
where
	\begin{align*}
		\E[f(X)g(X')] &= \E[f(X)\E[g(X')|X]] \\
		&= \E[f(X) KK^* g(X)] \\
		&= \E[K^* f(X) K^* g(X)] \\
		&= \E[\E[f(X)|Y]\E[g(X)|Y]] \\
		&= \E[f(X)\E[g(X)|Y]].
	\end{align*}
Accordingly, we have
\begin{align*}
	\E\left[ \left(f(X)-f(X')\right)\left(g(X)-g(X')\right)\right] 
& = 2\Big\{  \E\left[f(X)\left(g(X)-\E[g(X)|Y]\right)\right]\Big\} \\
& = 2\,\cE(f(X),g(X)|Y),
\end{align*}
where the second step follows from the identity $\cE(U,V|Y) = \E[U(V-\E[V|Y])]$.
This proves \eqref{eq:exch_1}. To prove \eqref{eq:exch_2}, write
\begin{align*}
	\E\left[ \left(f(X)-f(X')\right)\left(g(X)-g(X')\right)\right] &= \E\left[ 1_{\{f(X) > f(X')\}}\left(f(X)-f(X')\right)\left(g(X)-g(X')\right)\right] \\
& \qquad + \E\left[ 1_{\{f(X) < f(X')\}}\left(f(X)-f(X')\right)\left(g(X)-g(X')\right)\right] \\
&= 2\, \E\left[ 1_{\{f(X) > f(X')\}}\left(f(X)-f(X')\right)\left(g(X)-g(X')\right)\right] \\
&= 2\, \E\left[\left(f(X)-f(X')\right)_+\left(g(X)-g(X')\right) \right],
\end{align*}
where the second step is by exchangeability of $X$ and $X'$.
	
\end{appendix}



\begin{thebibliography}{10}
\providecommand{\url}[1]{#1}
\csname url@samestyle\endcsname
\providecommand{\newblock}{\relax}
\providecommand{\bibinfo}[2]{#2}
\providecommand{\BIBentrySTDinterwordspacing}{\spaceskip=0pt\relax}
\providecommand{\BIBentryALTinterwordstretchfactor}{4}
\providecommand{\BIBentryALTinterwordspacing}{\spaceskip=\fontdimen2\font plus
\BIBentryALTinterwordstretchfactor\fontdimen3\font minus
  \fontdimen4\font\relax}
\providecommand{\BIBforeignlanguage}[2]{{%
\expandafter\ifx\csname l@#1\endcsname\relax
\typeout{** WARNING: IEEEtran.bst: No hyphenation pattern has been}%
\typeout{** loaded for the language `#1'. Using the pattern for}%
\typeout{** the default language instead.}%
\else
\language=\csname l@#1\endcsname
\fi
#2}}
\providecommand{\BIBdecl}{\relax}
\BIBdecl

\bibitem{Ahlswede_Gacs_hypercont}
R.~Ahlswede and P.~G\'acs, ``Spreading of sets in product spaces and
  hypercontraction of the {M}arkov operator,'' \emph{Ann. Probab.}, vol.~4,
  no.~6, pp. 925--939, 1976.

\bibitem{Witsenhausen_correlation}
H.~S. Witsenhausen, ``On sequences of pairs of dependent random variables,''
  \emph{SIAM J. Appl. Math.}, vol.~28, no.~1, pp. 100--113, January 1975.

\bibitem{Wyner_common_info}
A.~D. Wyner, ``The common information of two dependent random variables,''
  \emph{IEEE Trans. Inform. Theory}, vol.~21, no.~2, pp. 163--179, March 1975.

\bibitem{Csiszar_divergence}
I.~Csisz\'ar, ``Information-type measures of difference of probability
  distributions and indirect observations,'' \emph{Stud. Sci. Math. Hung.},
  vol.~2, pp. 299--318, 1967.

\bibitem{LieseVajda06}
F.~Liese and I.~Vajda, ``On divergences and informations in statistics and
  information theory,'' \emph{IEEE Trans. Inform. Theory}, vol.~52, no.~10, pp.
  4394--4412, October 2006.

\bibitem{Cohen_etal_dataproc}
J.~E. Cohen, Y.~Iwasa, G.~Rautu, M.~B. Ruskai, E.~Seneta, and G.~Zb\v{a}ganu,
  ``Relative entropy under mappings by stochastic matrices,'' \emph{Lin.
  Algebra Appl.}, vol. 179, pp. 211--235, 1993.

\bibitem{Choi_Ruskai_Seneta}
M.~Choi, M.~B. Ruskai, and E.~Seneta, ``Equivalence of certain entropy
  contraction coefficients,'' \emph{Lin. Algebra Appl.}, vol. 208/209, pp.
  29--36, 1994.

\bibitem{Miclo_hypercontractive}
L.~Miclo, ``Remarques sur l'hypercontractivit\'e et l'\'evolution de l'entropie
  pour des cha\^ines de {M}arkov finies,'' \emph{S\'eminaire de probabilit\'es
  ({S}trasbourg)}, vol.~31, pp. 136--167, 1997.

\bibitem{Cohen_etal_book}
J.~E. Cohen, J.~H.~B. Kemperman, and G.~Zb\v{a}ganu, \emph{Comparisons of
  Stochastic Matrices, With Applications in Information Theory, Statistics,
  Economics, and Population Sciences}.\hskip 1em plus 0.5em minus 0.4em\relax
  Boston: Birkh\"auser, 1998.

\bibitem{DelMoral_contraction}
P.~{Del Moral}, M.~Ledoux, and L.~Miclo, ``On contraction properties of
  {Markov} kernels,'' \emph{Prob. Theory Rel. Fields}, vol. 126, pp. 395--420,
  2003.

\bibitem{Dobrushin_CLT_MC_1}
R.~L. Dobrushin, ``Central limit theorems for nonstationary {M}arkov chains,
  {I},'' \emph{Theory Probab. Appl.}, vol.~1, pp. 65--80, 1956.

\bibitem{Dobrushin_CLT_MC_2}
------, ``Central limit theorems for nonstationary {M}arkov chains, {II},''
  \emph{Theory Probab. Appl.}, vol.~1, pp. 365--425, 1956.

\bibitem{BoyenKoller}
X.~Boyen and D.~Koller, ``Tractable inference for complex stochastic
  processes,'' in \emph{Proc. 14th Annual Conf. on Uncertainty in Artif.
  Intel.}, Madison, WI, July 1998, pp. 33--42.

\bibitem{Kamath_Anantharam_hypercont}
S.~Kamath and V.~Anantharam, ``Non-interactive simulation of joint
  distributions: The {H}irschfeld--{G}ebelein--{R}\'enyi maximal correlation
  and the hypercontractivity ribbon,'' in \emph{Proc. 50th Annu. Allerton Conf.
  on Commun., Control, and Comput.}, Monticello, IL, October 2012.

\bibitem{Anantharam_etal_HGR}
\BIBentryALTinterwordspacing
V.~Anantharam, A.~Gohari, S.~Kamath, and C.~Nair, ``On maximal correlation,
  hypercontractivity, and the data processing inequality studied by {E}rkip and
  {C}over,'' 2013, ar{X}iv preprint. [Online]. Available:
  \url{http://arxiv.org/abs/1304.6133}
\BIBentrySTDinterwordspacing

\bibitem{Courtade_SDPI}
T.~Courtade, ``Outer bounds for multiterminal source coding via a strong data
  processing inequality,'' in \emph{Proc. Int. IEEE Symp. on Inform. Theory},
  Istanbul, Turkey, July 2013, pp. 559--563.

\bibitem{Raginsky_SDPI_ISIT}
M.~Raginsky, ``Logarithmic {S}obolev inequalities and strong data processing
  theorems for discrete channels,'' in \emph{Proc. Int. IEEE Symp. on Inform.
  Theory}, Istanbul, Turkey, July 2013, pp. 419--423.

\bibitem{YP_YW_dissipation}
Y.~Polyanskiy and Y.~Wu, ``Dissipation of information in channels with input
  constraints,'' \emph{IEEE Trans. Inform. Theory}, vol.~62, no.~1, pp. 35--55,
  January 2016.

\bibitem{Anantharam_etal_hypercont_2}
V.~Anantharam, A.~Gohari, S.~Kamath, and C.~Nair, ``On hypercontractivity and a
  data processing inequality,'' in \emph{Proc. Int. IEEE Symp. on Inform.
  Theory}, Honolulu, HI, July 2014, pp. 3022--3026.

\bibitem{Liu_etal_key_capacity}
J.~Liu, P.~Cuff, and S.~Verd\'u, ``Key capacity with limited one-way
  communication for product sources,'' in \emph{Proc. IEEE Int. Symp. Inform.
  Theory}, Honolulu, HI, July 2014, pp. 1146--1150.

\bibitem{PW_BayesSDPI}
\BIBentryALTinterwordspacing
Y.~Polyanskiy and Y.~Wu, ``Strong data-processing inequalities for channels and
  {Bayesian} networks,'' arXiv.org preprint 1508.06025. [Online]. Available:
  \url{http://arxiv.org/abs/1508.06025}
\BIBentrySTDinterwordspacing

\bibitem{MakurZheng_contraction}
\BIBentryALTinterwordspacing
A.~Makur and L.~Zheng, ``Bounds between contraction coefficients,'' 2015,
  ar{X}iv preprint 1510.01844. [Online]. Available:
  \url{http://arxiv.org/abs/1510.01844}
\BIBentrySTDinterwordspacing

\bibitem{Chafai_entropy}
D.~Chafa\"i, ``Entropies, convexity, and functional inequalities: on
  {$\Phi$}-entropies and {$\Phi$}-{S}obolev inequalities,'' \emph{J. Math.
  Kyoto Univ.}, vol.~44, no.~2, pp. 325--363, 2004.

\bibitem{Bakry_logsob_notes}
D.~Bakry, ``L'hypercontractivit\'e et son utilisation en th\'eorie des
  semigroupes,'' in \emph{Lectures on Probability Theory}.\hskip 1em plus 0.5em
  minus 0.4em\relax Springer, 1994, vol. 1581, pp. 1--114.

\bibitem{Diaconis_Saloff_logsob}
P.~Diaconis and L.~Saloff-Coste, ``Logarithmic {S}obolev inequalities for
  finite {M}arkov chains,'' \emph{Ann. Appl. Probab.}, vol.~6, no.~3, pp.
  695--750, 1996.

\bibitem{Bobkov_Tetali_logsob}
S.~G. Bobkov and P.~Tetali, ``Modified logarithmic {S}obolev inequalities in
  discrete settings,'' \emph{J. Theor. Prob.}, vol.~19, no.~2, pp. 289--336,
  2006.

\bibitem{Mossel_reverse_hypercont}
E.~Mossel, K.~Oleszkiewicz, and A.~Sen, ``On reverse hypercontractivity,''
  \emph{Geom. Funct. Anal.}, vol.~23, no.~3, pp. 1062--1097, 2013.

\bibitem{Diaconis_stoch_alt_proj}
P.~Diaconis, K.~Khare, and L.~Saloff-Coste, ``Stochastic alternating
  projections,'' \emph{Illinois J. Math.}, vol.~54, no.~3, pp. 963--979, 2010.

\bibitem{Gilks_etal_MCMC}
W.~R. Gilks, S.~Richardson, and D.~Spiegelhalter, Eds., \emph{Markov Chain
  Monte Carlo in Practice}.\hskip 1em plus 0.5em minus 0.4em\relax Chapman \&
  Hall, 1996.

\bibitem{Robert_Casella_MCMC}
C.~P. Robert and G.~Casella, \emph{Monte Carlo Statistical Methods},
  2nd~ed.\hskip 1em plus 0.5em minus 0.4em\relax Springer, 2004.

\bibitem{Boucheron_etal_concentration_book}
S.~Boucheron, G.~Lugosi, and P.~Massart, \emph{Concentration Inequalities: A
  Nonasymptotic Theory of Independence}.\hskip 1em plus 0.5em minus 0.4em\relax
  Oxford Univ. Press, 2013.

\bibitem{Nguyen_etal_surrogate_loss}
X.~Nguyen, M.~J. Wainwright, and M.~I. Jordan, ``On surrogate loss functions
  and $f$-divergences,'' \emph{Ann. Statist.}, vol.~37, no.~2, pp. 876--904,
  2009.

\bibitem{DeGroot62}
M.~H. De{G}root, ``Uncertainty, information, and sequential experiments,''
  \emph{Ann. Math. Statist.}, vol.~33, no.~2, pp. 404--419, 1962.

\bibitem{LeCamBook}
L.~{Le Cam}, \emph{Asymptotic Methods in Statistical Decision Theory}.\hskip
  1em plus 0.5em minus 0.4em\relax Springer, 1986.

\bibitem{GyorfiVajda}
L.~Gy\"orfi and I.~Vajda, ``A class of modified {P}earson and {N}eyman
  statistics,'' \emph{Statistics and Decisions}, vol.~19, no.~3, pp. 239--252,
  2001.

\bibitem{Sason_Verdu_fdiv}
\BIBentryALTinterwordspacing
I.~Sason and S.~Verd{\'{u}}, ``$f$-divergence inequalities,'' 2015, ar{X}iv
  preprint 1508.00335. [Online]. Available:
  \url{http://arxiv.org/abs/1508.00335}
\BIBentrySTDinterwordspacing

\bibitem{Efron_Stein}
B.~Efron and C.~Stein, ``The jackknife estimate of variance,'' \emph{Ann.
  Statist.}, vol.~9, pp. 586--596, 1981.

\bibitem{Steele}
J.~M. Steele, ``An {E}fron--{S}tein inequality for nonsymmetric statistics,''
  \emph{Ann. Statist.}, vol.~14, pp. 753--758, 1986.

\bibitem{LO_Sobolev_Poincare}
R.~{Lata{\l}a} and K.~Oleszkiewicz, ``Between {Sobolev} and {Poincar\'e},'' in
  \emph{Geometric Aspects of Functional Analysis}, ser. Lecture Notes in
  Mathematics.\hskip 1em plus 0.5em minus 0.4em\relax Springer, 2000, vol.
  1745, pp. 147--168.

\bibitem{Boucheron_etal_moment_inequalities}
S.~Boucheron, O.~Bousquet, G.~Lugosi, and P.~Massart, ``Moment inequalities for
  functions of independent random variables,'' \emph{Ann. Probab.}, vol.~33,
  no.~2, pp. 514--560, 2005.

\bibitem{Csiszar_Shields_FnT}
I.~Csisz\'ar and P.~C. Shields, ``Information theory and statistics: A
  tutorial,'' \emph{Foundations and Trends in Communications and Information
  Theory}, vol.~1, no.~4, pp. 417--528, 2004.

\bibitem{Hiriart_book}
J.~{Hiriart-Urruty} and C.~Lemar\'echal, \emph{Fundamentals of Convex
  Analysis}.\hskip 1em plus 0.5em minus 0.4em\relax Berlin: Springer, 2001.

\bibitem{Cappe_etal_HMM}
O.~Capp\'e, E.~Moulines, and T.~Ryd\'en, \emph{Inference in Hidden Markov
  Models}.\hskip 1em plus 0.5em minus 0.4em\relax Springer, 2005.

\bibitem{Sarmanov_maxcorr}
O.~V. Sarmanov, ``Maximal coefficient of correlation (nonsymmetric case),''
  \emph{Doklady Akad. Nauk {SSSR}}, vol. 121, no.~1, pp. 52--55, 1958.

\bibitem{Bhatia_Matrix_Analysis}
R.~Bhatia, \emph{Matrix Analysis}.\hskip 1em plus 0.5em minus 0.4em\relax New
  York: Springer, 1997.

\bibitem{Montenegro_Tetali}
R.~Montenegro and P.~Tetali, ``Mathematical aspects of mixing times in {Markov}
  chains,'' \emph{Foundations and Trends in Theoretical Computer Science},
  vol.~1, no.~3, pp. 237--354, 2006.

\bibitem{Levin_Peres_Wilmer}
D.~A. Levin, Y.~Peres, and E.~L. Wilmer, \emph{Markov Chains and Mixing
  Times}.\hskip 1em plus 0.5em minus 0.4em\relax Amer. Math. Soc., 2008.

\bibitem{Hansen_Loewner_thm}
F.~Hansen, ``The fast track to {L}oewner's theorem,'' \emph{Lin. Algebra
  Appl.}, vol. 438, pp. 4557--4571, 2013.

\bibitem{Marton_blowup}
K.~Marton, ``A simple proof of the blowing up lemma,'' \emph{IEEE Trans.
  Inform. Theory}, vol.~32, no.~3, pp. 445--446, 1986.

\bibitem{Marton_dbar}
------, ``Bounding $\bar{d}$-distance by informational divergence: a method to
  prove measure concentration,'' \emph{Ann. Probab.}, vol.~24, no.~2, pp.
  857--866, 1996.

\bibitem{Villani_topics}
C.~Villani, \emph{Topics in Optimal Transportation}, ser. Graduate Studies in
  Mathematics.\hskip 1em plus 0.5em minus 0.4em\relax Providence, RI: Amer.
  Math. Soc., 2003, vol.~58.

\bibitem{Bobkov_Goetze}
S.~G. Bobkov and F.~G\"otze, ``Exponential integrability and transportation
  cost related to logarithmic {Sobolev} inequalities,'' \emph{J. Funct. Anal.},
  vol. 163, pp. 1--28, 1999.

\bibitem{Ordentlich_Weinberger_Pinsker}
E.~Ordentlich and M.~J. Weinberger, ``A distribution dependent refinement of
  {Pinsker}'s inequality,'' \emph{IEEE Trans. Inform. Theory}, vol.~51, no.~5,
  pp. 1836--1840, May 2005.

\bibitem{Bobkov_Houdre_Tetali_subgaussian}
S.~G. Bobkov, C.~Houdr\'e, and P.~Tetali, ``The subgaussian constant and
  concentration inequalities,'' \emph{Israel J. Math.}, vol. 156, no.~1, pp.
  255--283, December 2006.

\bibitem{Weitz_thesis}
D.~Weitz, ``Mixing in time and space for discrete spin systems,'' Ph.D.
  dissertation, University of California, Berkeley, 2004.

\bibitem{Martinelli_etal_Glauber_trees}
F.~Martinelli, A.~Sinclair, and D.~Weitz, ``Glauber dynamics on trees: boundary
  conditions and mixing time,'' \emph{Commun. Math. Phys.}, vol. 250, pp.
  301--334, 2004.

\bibitem{Ledoux_conc_book}
M.~Ledoux, \emph{The Concentration of Measure Phenomenon}.\hskip 1em plus 0.5em
  minus 0.4em\relax Amer. Math. Soc., 2001.

\bibitem{Raginsky_Sason_FnT_2nd}
M.~Raginsky and I.~Sason, \emph{Concentration of Measure Inequalities in
  Information Theory, Communications, and Coding}, 2nd~ed.\hskip 1em plus 0.5em
  minus 0.4em\relax Now Publishers, 2014.

\bibitem{Houdre_Tetali_product_graphs}
C.~Houdr\'e and P.~Tetali, ``Concentration of measure for products of {Markov}
  kernels and graph products via functional inequalities,'' \emph{Comb. Probab.
  Comput.}, vol.~10, pp. 1--28, 2001.

\bibitem{Erkip_Cover}
E.~Erkip and T.~M. Cover, ``The efficiency of investment information,''
  \emph{IEEE Trans. Inform. Theory}, vol.~44, no.~3, pp. 1026--1040, May 1998.

\bibitem{Palomar_Verdu_lautum}
D.~P. Palomar and S.~Verd\'u, ``Lautum information,'' \emph{IEEE Trans. Inform.
  Theory}, vol.~54, no.~3, pp. 964--975, March 2008.

\bibitem{Boyd_etal_FMMC}
S.~Boyd, P.~Diaconis, and L.~Xiao, ``Fastest mixing {Markov} chain on a
  graph,'' \emph{SIAM Review}, vol.~46, no.~4, pp. 667--689, 2004.

\bibitem{Jerrum_book}
M.~Jerrum, \emph{Counting, Sampling, and Integrating: Algorithms and
  Complexity}.\hskip 1em plus 0.5em minus 0.4em\relax Birkh\"auser, 2003.

\bibitem{Winkler_MCMC_book}
G.~Winkler, \emph{Image Analysis, Random Fields, and Markov Chain Monte Carlo
  Methods: A Mathematical Introduction}, 2nd~ed.\hskip 1em plus 0.5em minus
  0.4em\relax Springer, 2003.

\bibitem{Grimmett_RC}
G.~Grimmett, \emph{The Random Cluster Model}.\hskip 1em plus 0.5em minus
  0.4em\relax Berlin: Springer, 2006.

\bibitem{Borgs_etal_torpid_SW}
C.~Borgs, J.~T. Chayes, and P.~Tetali, ``Tight bounds for mixing of the
  {Swendsen--Wang} algorithm at the {Potts} transition point,'' \emph{Prob.
  Theory Rel. Fields}, vol. 152, pp. 509--557, 2012.

\bibitem{Ullrich_SW_vs_HB}
M.~Ullrich, ``Comparison of {Swendsen-Wang} and heat-bath dynamics,''
  \emph{Random Struct. Alg.}, vol.~42, pp. 520--535, 2012.

\bibitem{Wainwright_Jordan_graphical}
M.~J. Wainwright and M.~I. Jordan, ``Graphical models, exponential families,
  and variational inference,'' \emph{Foundations and Trends in Machine
  Learning}, vol.~1, no. 1-2, pp. 1--305, December 2008.

\bibitem{Montanari_Gerschenfeld_reconstruction}
A.~Montanari and N.~Gerschenfeld, ``Reconstruction for models on random
  graphs,'' in \emph{Proc. 48th IEEE Symp. on Foundations of Comp. Sci.}, 2007,
  pp. 194--204.

\bibitem{Bhatnagar_reconstruction}
N.~Bhatnagar, J.~Vera, E.~Vigoda, and D.~Weitz, ``Reconstruction for colorings
  on trees,'' \emph{SIAM J. Discrete Math.}, vol.~25, no.~2, pp. 809--826,
  2011.

\bibitem{Dobrushin_Shlosman_CA}
R.~L. Dobrushin and S.~B. Shlosman, ``Completely analytical interactions:
  constructive description,'' \emph{J. Stat. Phys.}, vol.~46, no. 5/6, pp.
  983--1014, 1987.

\end{thebibliography}
\end{document}